%% file: main.tex
\documentclass[11pt]{article}
\usepackage[margin=1in]{geometry}

\usepackage{times}
\usepackage{color}
\usepackage{soul}
\usepackage{url}
\usepackage[colorlinks,allcolors=blue]{hyperref}
\usepackage[utf8]{inputenc}
\usepackage[small]{caption}
\usepackage{float}
\usepackage{graphicx}
\usepackage{mathtools}
\usepackage{amsmath}
\usepackage{amsthm}
\usepackage{booktabs}

\usepackage{natbib}
\usepackage{amssymb}
\usepackage{amsbsy}
\usepackage{amstext}

\usepackage{algpseudocode}

\algrenewcommand\algorithmicindent{1em}
\algtext*{EndWhile}
\algtext*{EndIf}
\algtext*{EndFor}
\algtext*{EndFunction}

\makeatletter

\floatstyle{ruled}
\newfloat{algorithm}{tbp}{loa}
\providecommand{\algorithmname}{Algorithm}
\floatname{algorithm}{\protect\algorithmname}

\makeatother

\providecommand{\definitionname}{Definition}
\providecommand{\lemmaname}{Lemma}
\providecommand{\theoremname}{Theorem}

\theoremstyle{plain}
\newtheorem{thm}{\protect\theoremname}[section]
\theoremstyle{definition}
\newtheorem{defn}[thm]{\protect\definitionname}
\theoremstyle{plain}
\newtheorem{lem}[thm]{\protect\lemmaname}

\global\long\def\Reals{\mathbb{R}}%
\global\long\def\Var{\mathrm{Var}}%
\global\long\def\E{\mathbb{E}}%
\global\long\def\opt{\mathrm{OPT}}%
\global\long\def\optguess{\widetilde{\opt}}%
\global\long\def\Reach{\mathrm{Reach}}%
\global\long\def\thresh{\boldsymbol{\theta}}%
\global\long\def\Active{\boldsymbol{A}}%
\global\long\def\epsest{\epsilon_{\mathrm{est}}}%
\global\long\def\cest{c_{\mathrm{est}}}%
\global\long\def\prest{\delta_{\mathrm{est}}}%
\global\long\def\mtot{m_{\mathrm{tot}}}%

\title{Efficient Algorithms for Influence Maximization in \\ General
Models and Observed Cascades}

\author{
Fabian Spaeh\thanks{Department of Computer Science, Boston University. \href{mailto:fspaeh@bu.edu}{\texttt{fspaeh@bu.edu}}} \and
Themistoklis Haris\thanks{Department of Computer Science, Boston University. \href{mailto:tharis@bu.edu}{\texttt{tharis@bu.edu}}} \and
Alina Ene\thanks{Department of Computer Science, Boston University. \href{mailto:aene@bu.edu}{\texttt{aene@bu.edu}}} \and
Huy L.\ Nguyen\thanks{Khoury College of Computer Sciences, Northeastern University. \href{mailto:hu.nguyen@northeastern.edu}{\texttt{hu.nguyen@northeastern.edu}}}
}

\begin{document}

\maketitle

\begin{abstract}
We study influence maximization in general stochastic models, the
observed cascades model, and the independent cascade (IC) model. For
general stochastic models with only black-box sample access, we introduce
a low-adaptivity optimization framework that improves sample complexity
and running time over Sadeh et al. (2020) and is instrumental to all
our results. We further introduce an adaptive algorithm guided by
empirical variance, avoiding pessimistic worst-case bounds. Combining our optimization framework with sketching, we obtain
the first algorithm with provable guarantees and nearly-linear running
time for influence maximization on observed cascades, optimal up to
logarithmic factors. For IC, we prove a novel tail bound replacing a
factor $n$ with $\tau$ (the number of diffusion steps) in sample complexity,
improving over prior work when $\tau$ is small, as is common due to
small-world phenomena. Experiments confirm substantial speedups while maintaining
solution quality.
\end{abstract}

\input{intro.tex}

\input{prelims.tex}

\input{algorithms.tex}

\input{experiments.tex}

\section{Conclusion}

We extend the low-adaptivity greedy framework of \citet{chen21} to work
with stochastic estimates of marginal gains, which is non-trivial since
individual samples lack submodularity. We develop new estimation techniques:
median-of-means with known variance bounds, an adaptive empirical variance
estimator that avoids pessimistic worst-case bounds, and sketch-based
estimators for observed cascades. For IC models, we prove a concentration
bound replacing $n$ with $\tau$ in sample complexity. These yield improved
complexity, faster running time, and scalability, including the first
nearly-linear algorithm for observed cascades with provable guarantees.
Experiments confirm substantial speedups while maintaining near-optimal
solution quality.

\section*{Acknowledgements}
FS was supported in part by NSF CAREER grant
CCF-1750333. TH was supported in part by an Alfred P.
Sloan Research Fellowship. AE was supported in part by NSF CAREER grant
CCF-1750333 and an Alfred P.
Sloan Research Fellowship. HN was supported in part by
NSF grant CCF-2311649. We thank Ta Duy Nguyen for helpful discussions in the preliminary stages of this work.

\bibliographystyle{plainnat}
\bibliography{references}

\clearpage

\appendix
\input{appendix.tex}

\end{document}

%% file: intro.tex
\section{Introduction}

Understanding diffusion processes in networks is a fundamental problem
in computer science, biology, sociology, economics, and other fields.
Diffusion processes model phenomena such as the spread of news on
social media and the outbreak of a disease \citep{chen13information,yao22,leskovec07patterns}.
A typical diffusion process starts with an initial set of active nodes,
called the seed set, and it proceeds in discrete time steps. In each
step, all previously activated nodes remain active and new nodes become
active according to an underlying model of diffusion \citep{kempe15}.
The influence of a seed set is the number of nodes that are active
at the end of the process. A fundamental optimization problem that
arises in many applications is to find a seed set of size at most $k$
(the budget) with maximum influence. Notable applications of influence maximization
are viral marketing \citep{richardson02} and outbreak detection in
networks \citep{leskovec07}. In viral marketing, the goal is to find
an initial seed set of ``influencers'' to advertise a product and
spread its adoption through the network. In outbreak detection, we
would like to monitor a network such as a water distribution network
using sensors, and the goal is to find the locations of the sensors
that will allow us to detect contaminations as quickly as possible.

Starting with the influential work of \citet{kempe15}, there has
been significant interest in designing algorithms for influence maximization
that achieve provable approximation guarantees in general settings.
As shown by \citet{kempe15} and subsequent work \citep{mossel07},
a very general class of models where provable approximation
guarantees are achievable in polynomial time are submodular stochastic
diffusion models, where the expected influence is submodular. Submodular models
have significant modeling power and wide-ranging applications, which
motivates the design of efficient algorithms that scale to large
networks. However, despite important developments for the independent
cascade (IC) model \citep{borgs14,guo22,tang14} and more general
submodular models \citep{sadeh20}, it remains challenging to obtain
scalable algorithms for models beyond IC. A key bottleneck is the
number of adaptive rounds: the classical Greedy algorithm requires
$k$ rounds, each evaluating $n$ candidates (where $n$ is the number
of nodes and $k$ the budget), with each evaluation potentially requiring
expensive sampling or graph traversals. Recent work on
low-adaptivity submodular optimization \citep{balkanski18,chen21}
shows that far fewer rounds suffice with exact evaluations, but extending
this to stochastic settings is non-trivial. In this work, we develop
such an extension, combining low-adaptivity optimization with tailored
estimation techniques to obtain faster algorithms with provable guarantees.

Beyond stochastic models, practitioners often work directly with observed
data when model parameters are unknown or hard to estimate \citep{narasimhan15}.
We therefore also consider the observed cascades setting, where diffusion
spreads deterministically according to observed data. Here we develop
the first algorithm with provable guarantees and nearly-linear running
time, which is optimal up to logarithmic factors.

We handle both unrestricted influence and settings with explicit limits
on the number of diffusion steps $\tau$. Restricting $\tau$ is necessary in time-critical
applications \citep{chen12,cohen14timed,du13,liu12,gomezrodriguez11},
e.g., detecting outbreaks early \citep{leskovec07}. Moreover, due
to ``small-world'' phenomena, few-step diffusion often approximates
unrestricted influence well \citep{milgram06}.

\subsection{Our contributions}

We provide new algorithms with theoretical guarantees for influence
maximization in general stochastic models, the observed cascades model,
and the independent cascade (IC) model. Our main contributions are:
\begin{itemize}
\item \textbf{Low-adaptivity optimization framework:} We adapt the
low-adaptivity Greedy algorithm of \citet{chen21} to work with stochastic
estimates. This is non-trivial: stochastic estimates are not submodular
\citep{kempe15,BalkanskiRS22}, so the original analysis fails. We
develop new techniques to handle the estimation error. This framework
is instrumental to all our results: it directly improves sample complexity
and running time over \citet{sadeh20} by a factor of (almost) $k$, and
makes variance-based estimation and sketching viable.
\item \textbf{Empirical variance estimation:} Building on our framework,
we introduce an adaptive sampling algorithm guided by empirical variance,
avoiding pessimistic worst-case variance bounds required by prior work \citep{sadeh20}.
\item \textbf{Nearly-linear time for observed cascades:} Combining our
framework with sketching \citep{cohen07}, we provide the first algorithm with provable
guarantees and nearly-linear $\widetilde{O}(m_{\mathrm{tot}})$ running time
(where $m_{\mathrm{tot}}$ is the input size) for observed cascades, improving
over the previously best achievable $\Omega(n\cdot m_{\mathrm{tot}})$.
Prior work \citep{cohen14} also uses sketching but with a different algorithmic
approach (see Section~\ref{sec:observed-cascades-alg}).
\item \textbf{New concentration bound for IC:} We prove a novel tail
bound for IC samples, improving sample complexity by replacing a factor $n$
with~$\tau$, yielding improved guarantees over \citet{borgs14,tang14,guo22}
when $\tau\ll n$.
\end{itemize}
Together, these contributions yield algorithms with provable guarantees
that substantially improve running time, memory, and sample complexity
for both restricted and unrestricted influence.

\paragraph{Practical baselines} Highly optimized practical algorithms
exist primarily for the IC model \citep{borgs14,tang14,tang15,nguyen16,guo22}. For
general stochastic models and observed cascades, the only prior algorithms
with theoretical guarantees are variants of Greedy \citep{sadeh20,cohen14},
which we significantly outperform. These settings are important in
practice when IC assumptions do not hold or model parameters are unknown.

\subsection{Related Work}

Influence maximization is well-studied with extensive prior work.
We focus on efficient algorithms with theoretical guarantees; for heuristic
methods, we refer the reader to \citet{li23survey}.

\paragraph{General Stochastic Models} \citet{kempe15} show that influence
maximization generalizes the maximum coverage problem, and thus it
is \textbf{NP}-hard to approximate to any factor better than $1-1/e$.
They also show that the influence function is submodular for the IC
and linear threshold (LT) models, and show how to obtain a nearly-optimal
$1-\frac{1}{e}-\epsilon$ approximation with high probability by implementing
the Greedy algorithm \citep{nemhauser78} using stochastic estimates
for the marginal gains obtained from simulations of the model. Even
though the running time of their approach is polynomial, it is prohibitive
in practice. \citet{kempe15} introduce a more general
stochastic diffusion model, called the General Threshold model, a
vast generalization of the IC, LT, and other diffusion models. Subsequently,
\citet{mossel07} show that the influence function is submodular in
General Threshold models with submodular activation functions. \citet{sadeh20}
study sample complexity of influence maximization in general models,
including General Threshold models, parameterized by the number of
diffusion steps, providing improved sample complexity and running~time. 

\paragraph{Independent Cascades} \citet{borgs14} introduce reverse
influence sampling which is specialized to the IC model, but offers
significantly improved running time and thus serves as the foundation
of most later works. \citet{tang14} introduce TIM+ which reduces
the number of reverse reachability samples and thus improves the overall
running time. \citet{tang15} improve the practical efficiency of
the algorithm further, and later algorithms such as SSA \citep{nguyen16,huang17,nguyen18}
or OPIM \citep{tang18} track the current solution quality to stop
early. \citet{guo22} are the first to offer wider improvements to
the asymptotic running time over the approach of \citet{borgs14}
via geometric sampling. \citet{zhang24ici} extend IC to model
invitation-aware diffusion with multi-stage behavior conversions.

\paragraph{Observed Cascades} \citet{cohen14} develop an algorithm
that uses reachability sketches to evaluate the influence efficiently.
In combination with the Greedy algorithm, they lazily populate sketches,
resulting in a more efficient optimization. Many works are dedicated
to inferring the underlying network structure from real observations
of diffusion processes \citep{narasimhan15,netrapalli12,Abadie15}.
\citet{chen21network,narayanaswamy19} provide two-stage algorithms
that first infer the model and then maximize~influence.

%% file: prelims.tex
\section{Preliminaries}

\label{sec:prelims}

\subsection{Influence maximization in stochastic diffusion models}

\label{sec:prelim-diffusion}

We study influence maximization under very general models of stochastic
diffusion in graphs. Let $G=(V,E)$ be a graph on $n=|V|$
nodes. Let $S\subseteq V$ be an initial set of nodes, called the
\textbf{seed set}. The diffusion process starts by activating all
of the nodes in $S$, and it progresses in steps. In each step $t$,
all previously activated nodes remain active and new nodes become
active according to a stochastic model (we give specific examples
below). The process stops when either no new nodes become active (this
happens after at most $n$ steps), or we reach a given limit on the
number of activation steps. Letting $\Active^{(t)}(S)$
denote the set of nodes that are active at the end of step $t$ starting
from seed set $S$, we obtain a nested sequence $\Active^{(0)}(S)=S\subseteq\Active^{(1)}(S)\subseteq\dots\subseteq\Active^{(n)}(S)$
of random sets.

The \textbf{$\tau$-step influence} of the seed set $S$ is the expected
size $\E[|\Active^{(\tau)}(S)|]$ of the active set after
$\tau$ steps of diffusion. We let 
\begin{equation}
I^{(\tau)}\colon2^{V}\to\Reals,I^{(\tau)}(S)=\E[|\Active^{(\tau)}(S)|]\quad\forall S\subseteq V\label{eq:influence-fn}
\end{equation}
 denote the $\tau$-step influence function.

\paragraph{Influence maximization} In the influence maximization problem,
we are given as input a number of steps $\tau\leq n$ and a budget
$k\leq n$, and the goal is to find a seed set $S$ with $|S|\leq k$
with maximum $\tau$-step influence, i.e., solve the optimization
problem
\begin{equation}
{\textstyle \max_{S\colon|S|\leq k}I^{(\tau)}(S)}\label{eq:influence-max}
\end{equation}
We let $S^{*}\in\arg\max_{S\colon|S|\leq k}I^{(\tau)}(S)$
be an optimal solution for the problem and we let $\opt:=I^{(\tau)}(S^{*})$
be its influence. A set $S$ with $|S|\leq k$ is a $c$-approximate
solution if $I^{(\tau)}(S)\geq c\cdot\opt$.

\paragraph{Influence estimation via sampling} Exact evaluation of the
influence of a seed set is intractable even in the simplest and most structured diffusion models: \citet{chen10scalable} showed that
evaluating the influence is \textbf{\#P}-hard in the independent cascades
model (we define this model below). However, in many important settings,
it is possible to compute a stochastic estimate of the influence function
in time that is polynomial in the size $|E|+|V|$
of the graph $G=(V,E)$, e.g., by direct simulation of
the diffusion process. In this work, we design algorithms for the
very general setting where we only have black-box access to such stochastic
estimates. A stochastic function $\widehat{f}\colon2^{V}\to\Reals$
is an unbiased estimate of the influence function $I^{(\tau)}$ if
$\E[\widehat{f}(S)]=I^{(\tau)}(S)$ for all
$S\subseteq V$. We refer to an unbiased estimate $\widehat{f}$ as
a \textbf{sample}. We assume black-box access to independent samples
that we can evaluate via a value oracle: a value oracle for a set
function $\widehat{f}$ is a black-box subroutine that, given a set
$S$ as input, returns the value $\widehat{f}(S)$. We
measure the running time as the number of samples taken, the number
of calls to the evaluation oracles of those samples, and the additional
arithmetic operations. 

\paragraph{Submodular diffusion models} Even if the influence function
can be evaluated exactly in polynomial time, it is intractable to
obtain any non-trivial approximation guarantees for the influence
maximization problem (\ref{eq:influence-max}) without further assumptions
on the influence function. In this work, we assume that the influence
function $I^{(\tau)}$ is a submodular and monotone set function.
A set function $f\colon2^{V}\to\Reals$ is submodular if $f(A)+f(B)\geq f(A\cap B)+f(A\cup B)$
for all subsets $A,B\subseteq V$. An equivalent definition is that
$f$ satisfies the following diminishing returns property: letting
$f(T\vert S):=f(T\cup S)-f(S)$ denote
the marginal gain of $T$ on top of $S$, the function $f$ is submodular
if and only if $f(\{ v\} \vert A)\geq f(\{ v\} \vert B)$
for all $A\subseteq B$ and all $v\in V\setminus B$. The function
is monotone if $f(A)\leq f(B)$ for all $A\subseteq B$.
This broad setting captures many specific models
in the literature (examples below). Note
that, although the influence function is submodular, individual samples
$\widehat{f}$ are generally not submodular~\citep{kempe15}.

The influence maximization problem captures the maximum coverage problem
as a special case, and thus it is \textbf{NP}-hard to obtain an approximation
factor better than $1-1/e$. The algorithms we propose in this work
achieve a nearly-optimal $1-1/e-\epsilon$ approximation with probability
$1-\delta$, for any $\epsilon,\delta>0$ given as input. 

\paragraph{Examples of diffusion models} A very general diffusion model
is given by the general threshold model \citep{kempe15,mossel07}.
In this model, each node $v$ is associated with a monotone activation
function, which is a set function on $v$'s neighbors, and a random
threshold. The thresholds are sampled independently and uniformly
at random from the interval $\left[0,1\right]$. The process starts
with the seed set being active, i.e., $\boldsymbol{A}^{(0)}=S$. In
each step $t$, all of the previously active nodes remain active (i.e.,
$\boldsymbol{A}^{(t-1)}\subseteq\boldsymbol{A}^{(t)}$) and each inactive
node becomes active if the value of its activation function on the
set of its active neighbors exceeds the threshold. \citet{mossel07}
showed that, if the activation functions are submodular, then the
unrestricted influence function $I^{(n)}$ is submodular. We can efficiently construct a sample $\widehat{f}$ of $I^{(\tau)}$
for a general threshold model as follows. We
sample the thresholds $\thresh$ and we implement the evaluation oracle
for $\widehat{f}$ via direct simulation of the activation process:
on input $S$, we construct the activated sets $\Active^{(t)}(S)$
based on the sampled thresholds. Individual samples are
generally not submodular~\citep{kempe15}.

The linear threshold (LT) model \citep{granovetter1978threshold}
is the special case of the general threshold model where the activation
functions are linear functions. In the independent cascades (IC) model,
each edge $e\in E$ in the graph is associated with a probability
$p_{e}\in\left[0,1\right]$. Each edge $e$ becomes live independently
at random with probability $p_{e}$. The set $\Active^{(t)}(S)$
of nodes that are active after $t$ steps is the set of all nodes
that are reachable from $S$ via live edge paths of length at most
$t$. \citet{kempe15} showed that the independent cascades model
is also a special case of the general threshold model. The $\tau$-step
influence function $I^{(\tau)}$ is submodular for all $\tau$ in
LT and IC models \citep{kempe15}. In independent cascade models,
a sample $\widehat{f}$ of the influence $I^{(\tau)}$ is obtained
by sampling the live edges and computing reachabilities in the live
edge subgraph. A sample is thus a coverage function and hence submodular
and monotone (we have $\widehat{f}(S)=|\bigcup_{v\in S}C_{v}|$
where $C_{v}$ is the set of nodes that are reachable from $v$ on
live edge paths of length at most $\tau$). In linear threshold models,
a sample constructed by direct simulation of the process as discussed
above is not submodular in general~\citep{kempe15}.

\paragraph{Further extensions} Our algorithms readily extend to the
setting where $I^{(\tau)}(S)=\E[g(\Active^{(\tau)}(S))]$
for some function $g\colon2^{V}\to\Reals_{\geq0}$, e.g., $g$ is
linear or more generally a monotone submodular function. As before,
we assume that the influence $I^{(\tau)}$ is non-negative, submodular
and monotone, and we are given black-box access to independent samples
that we can evaluate via a value oracle.

\subsection{Influence maximization on observed cascades}

\label{sec:prelim-cascades}

We also consider deterministic instances of the influence maximization
problem on observed cascades. Here we are given as input $\ell$ graphs
$G_{1}=(V_{1},E_{1}),\dots,G_{\ell}=(V_{\ell},E_{\ell})$.
We let $V=V_{1}\cup\dots\cup V_{\ell}$. We let $\Reach_{G_{i}}^{(\tau)}(S)$
denote the set of all nodes $v\in V_{i}$ such that there is a path
in $G_{i}$ of length at most $\tau$ from a node in $S\cap V_{i}$
to $v$. The $\tau$-step influence of a seed set $S\subseteq V$
is
\[
I^{(\tau)}(S)={\textstyle \frac{1}{\ell}{\textstyle \sum_{i=1}^{\ell}}|\Reach_{G_{i}}^{(\tau)}(S)|} .
\]

\medskip

\noindent
For convenience, we provide Table \ref{tab:notation} with common notation in 
in the appendix.

%% file: algorithms.tex
\section{Our algorithms}

\label{sec:algorithms}

In this section, we give an overview of our algorithms and results. We give more detailed descriptions of the algorithms with pseudocode and the analysis in the appendix. 
\subsection{Algorithmic template}

\begin{algorithm}
\caption{Low-adaptivity Greedy algorithm for influence maximization using
stochastic estimates for the marginal gains.}
\label{alg:template-idealized}
\begin{algorithmic}[1]
\Statex $\mathrm{LowAdaptiveGreedy}(\epsilon, \delta)$:
\State $S \gets \emptyset$; $\alpha \gets n$
\For{$r = 1, \dots, T_{\mathrm{outer}} = \frac{2}{\epsilon}\log n$}
    \State $U \gets V$
    \State $\alpha \gets \alpha(1 - \epsilon)$ \Comment{decrease threshold}
    \For{$t = 1, \dots, T_{\mathrm{inner}} = \frac{8}{\epsilon}\log(\frac{n}{\delta})$}
        \State $\{X(u \mid S) : u \in U\} \gets \mathrm{EstimateGains}(\{\{u\} : u \in U\}, S)$
        \State $U \gets \{v \in U : X(v \mid S) \geq \alpha\}$ \Comment{filtering}
        \If{$U = \emptyset$} \textbf{break}
        \EndIf
        \State Let $v_1, v_2, \dots, v_{|U|}$ be a random permutation of $U$
        \State $s \gets \min\{k - |S|, |U|\}$
        \State Let $T_i = \{v_1, v_2, \dots, v_i\}$ for all $1 \leq i \leq s$
        \State $\{X(T_i \mid S) : 1 \leq i \leq s\} \gets \mathrm{EstimateGains}(\{T_i : 1 \leq i \leq s\}, S)$
        \State $i^* \gets \arg\max\{1 \leq i \leq s : X(T_i \mid S) \geq \alpha i\}$
        \State $S \gets S \cup T_{i^*}$ \Comment{add to solution}
        \If{$|S| = k$} \Return $S$
        \EndIf
    \EndFor
    \If{$U \neq \emptyset$} \Return Failure
    \EndIf
\EndFor
\State \Return $S$
\end{algorithmic}
\end{algorithm}

All our optimization algorithms follow the template in Algorithm~\ref{alg:template-idealized}.
The algorithm is based on \citet{chen21} for submodular maximization
with exact evaluations. The key idea is to add multiple high-gain
elements per iteration (instead of one as in Greedy), reducing the
number of adaptive rounds from $O(n)$ to $O(\log^{2}n/\epsilon^{2})$.

Specifically, the algorithm multiplicatively decreases
a threshold $\alpha$ that controls the marginal density of sets $T\subseteq V$
which we add to the current solution $S$. The marginal density of
a set $T$ over the solution $S$ is the ratio $f(T\mid S)/|T\setminus S|$.
Instead of adding single elements as in the Greedy algorithm, we add
large sets such that we need fewer iterations. To accelerate the search
for sets with high marginal density, we use a filtering step where
we remove all elements whose marginal gain is less than the threshold
$\alpha$. We then test the marginal density on a random sequence
of prefixes $T_{1}\subseteq T_{2}\subseteq\cdots\subseteq T_{s}$
of size at most $k-|S|$ which we obtain by permuting the
elements. We add the largest set $T_{i}$ whose marginal density
exceeds the threshold $\alpha$ to the solution. Once the size of
our solution reaches the budget $k$, we are done and output $S$.
However, in order to guarantee our running time, we declare failure
if we go through all $T_{\mathrm{inner}}$ iterations of the inner
for loop without having filtered out all elements such that $U=\emptyset$. 

\paragraph{Adaptation to stochastic estimates} The algorithmic template above is due to \citet{chen21}; our contribution is extending it to work with stochastic estimates of the marginal gains, which is non-trivial. The original analysis relies on marginal
gains being deterministic and decreasing (submodularity). With stochastic
estimates, individual samples are not submodular \citep{kempe15},
and approximating a submodular function stochastically does not
generally suffice for good guarantees \citep{BalkanskiRS22}. We address this
by carefully accounting for estimation error throughout the analysis,
using fresh samples each iteration to maintain independence, and proving
that accumulated errors remain bounded. See Appendix~\ref{sec:low-adaptivity-greedy}
for the full analysis.

\paragraph{Estimation subroutine} The template uses a subroutine $\mathrm{EstimateGains}$
to estimate marginal influence gains. Intuitively, good estimates should
be close to the true values, with high probability. We formalize this:
\begin{defn}
\label{def:estimation-error} An estimation algorithm $\mathrm{EstimateGains}$
takes a collection $\mathcal{T}=\{ T_{1},T_{2},\dots\} $
of subsets and a set $S$, returning estimates $X(T_{i}\vert S)$
for the marginal gains $I^{(\tau)}(T_{i}\vert S)$. It
has \emph{multiplicative error} $\epsest$, \emph{additive error}
$\cest$, and \emph{failure probability} $\prest$ if:
\[
\Pr\big[\exists T_{i}\in\mathcal{T}\colon\big|X(T_{i}| S)-I^{(\tau)}(T_{i}| S)\big|>\epsest I^{(\tau)}(T_{i}| S)+\cest\big]\leq\prest
\]
\end{defn}
Note that $\prest$ bounds the joint failure probability (any bad estimate),
not each estimate individually, enabling a clean analysis.
We design subroutines achieving any target $\epsest,\cest,\prest$.

\paragraph{Algorithm guarantee} The following theorem states the main
guarantee we establish for the algorithm. To simplify notation, we
use the $\widetilde{O}(\cdot),\widetilde{\Omega}(\cdot),\widetilde{\Theta}(\cdot)$
notation to suppress factors that are poly-logarithmic in $n=|V|$,
i.e., $\widetilde{O}(T)=O(T\cdot(\log n)^{O(1)})$.
Since we analyze our algorithm in the setting where we only have access
to stochastic estimates to the marginal gains, our proof requires
substantial changes from the work of \citet{chen21}. The analysis
is complex and subtle due in part to the use of randomization, and
this is a challenge when trying to make it robust to errors due to
the stochastic estimations. We obtain:
\begin{thm}
\label{thm:template-ideal} Suppose we run Algorithm \ref{alg:template-idealized}
with an estimation routine $\mathrm{EstimateGains}$ with multiplicative
error $\epsest$, additive error $\cest$, and failure probability
$\prest$. Algorithm \ref{alg:template-idealized} makes $N=\widetilde{O}(\frac{1}{\epsilon^{2}}\log(\frac{1}{\delta}))$
calls to $\mathrm{EstimateGains}$ and it returns a solution $S$
with $|S|\le k$ and $I^{(\tau)}(S)\geq(1-\frac{1}{e}-\epsest-\epsilon)\opt-3k\cest$
with probability $1-N\prest-\delta$.
\end{thm}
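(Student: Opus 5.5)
Write $f=I^{(\tau)}$. I would run the whole analysis on the event $\mathcal{E}$ that every call to $\mathrm{EstimateGains}$ succeeds, i.e.\ returns estimates within $\epsest f(T\mid S)+\cest$ of the truth for every queried set. The number of calls is at most $N = T_{\mathrm{outer}}\cdot T_{\mathrm{inner}}\cdot 2 = O(\frac{1}{\epsilon^2}\log n\log\frac{n}{\delta}) = \widetilde O(\frac{1}{\epsilon^2}\log\frac1\delta)$. A union bound then gives $\Pr[\mathcal{E}]\ge 1-N\prest$. The remaining $\delta$ is charged to the event that the algorithm returns Failure, which is the only source of randomness left once $\mathcal{E}$ holds: it comes from the random permutations. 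Since fresh samples are used in every call, the failure bound of Definition~\ref{def:estimation-error} holds conditionally on the history. That is what makes the union bound legitimate even though the queried sets depend on earlier outputs.

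\textbf{Step 1: density of added blocks.} On $\mathcal{E}$, each block $T_{i^*}$ added in an outer round with threshold $\alpha$ satisfies $X(T_{i^*}\mid S)\ge \alpha i^*$. Hence $(1+\epsest) f(T_{i^*}\mid S)\ge \alpha |T_{i^*}| - \cest$, so
\[
f(T_{i^*}\mid S)\ \ge\ (1-\epsest)\alpha|T_{i^*}|-\cest .
\]

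\textbf{Step 2: a threshold invariant.} At the start of the round with threshold $\alpha$, every $v$ has $f(v\mid S)\le \alpha/(1-\epsilon)+$ (error terms). The reason is that in the previous round the filtering emptied $U$: otherwise Failure would have been returned. A discarded $v$ had $X(v\mid S')<\alpha'$, hence $f(v\mid S')\le (\alpha'+\cest)/(1-\epsest)$, and submodularity carries this bound to the larger current $S$. The initial value $\alpha=n$ bounds all singleton gains trivially.

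\textbf{Step 3: from invariant to approximation.} By submodularity and monotonicity, $\opt - f(S)\le \sum_{v\in S^*} f(v\mid S)\le k\alpha/(1-\epsilon)$ up to errors. Combining this with Step~1, each added element increases $f$ by roughly $(1-\epsilon)(1-\epsest)(\opt-f(S))/k - \cest$. The standard greedy recursion over $k$ elements then gives $(1-1/e-O(\epsilon)-O(\epsest))\opt$. The additive terms accumulate to $O(k\cest)$: one $\cest$ per block from Step~1 (at most $k$ blocks), plus $k\cest$ from summing the Step~2 errors over $S^*$, plus slack. I would track constants so that the total is $3k\cest$. If the algorithm exits after all outer rounds with $|S|<k$, then $\alpha\le n(1-\epsilon)^{T_{\mathrm{outer}}}\le 1/n$, so the remaining gap is at most $\epsilon\opt$, using $\opt\ge1$. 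Throughout I rescale $\epsilon$ as needed.

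\textbf{Step 4: bounding the failure probability (main obstacle).} I need to show that within $T_{\mathrm{inner}}$ iterations, $U$ empties or $|S|$ reaches $k$, except with probability $\delta/T_{\mathrm{outer}}$ per outer round. Following \citet{chen21}, the claim is that in each inner iteration, with constant probability, either a constant fraction of $U$ is filtered out at the next step or $S$ fills up. For exact values the argument goes as follows. Let $i$ be a prefix length beyond which a $\epsilon$-fraction of elements in the random permutation would have gain below $\alpha$ relative to $S\cup T_i$. Since $i^*$ is the largest good prefix, about an $\epsilon$-fraction of $U$ drops below threshold after adding $T_{i^*}$.

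With estimates, I would work with a two-level notion of "good". An element is good if its true gain is at least about $(1-\epsest)\alpha-\cest$, and bad if its true gain is below the value that filtering guarantees to remove, namely about $(\alpha-\cest)/(1+\epsest)$. On $\mathcal{E}$, every element surviving the filter is good. The prefix test accepts every prefix whose true density is at least $(\alpha+\cest)/(1-\epsest)$ and rejects every prefix whose true density is below $((1-\epsest)\alpha - \cest)$-ish. The random-permutation argument then shows that, with probability at least $1/2$, at least an $\epsilon$-fraction of the current $U$ becomes bad and is removed in the next filtering.

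Consequently $|U|$ shrinks by a factor $(1-\epsilon)$ in at least half of the iterations. By Chernoff, $T_{\mathrm{inner}} = \frac{8}{\epsilon}\log\frac{n}{\delta}$ iterations suffice to empty $U$ with failure probability $\delta/T_{\mathrm{outer}}$. Making this robust argument precise, in particular keeping the gap between the "good" and "bad" thresholds consistent with the acceptance rule, is where I expect the real work to be.
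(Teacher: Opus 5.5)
Step 4 has a genuine gap, and more careful bookkeeping will not close it. With the acceptance rule you analyze, the largest $i$ with $X(T_i\mid S)\ge\alpha i$ (the rule your Step 1 uses), the failure bound fails even for exact values. The random-permutation argument only certifies that $T_{i'}$ contains at most an $\epsilon$-fraction of elements whose gain has dropped below the level $\beta\approx(1-\epsest)\alpha-\cest$ guaranteed to filter survivors. So its true density is only about $(1-\epsilon)\beta<\alpha$. Your test, however, is only guaranteed to accept true densities above $(\alpha+\cest)/(1-\epsest)>\alpha$, and no choice of your two ``good''/``bad'' levels bridges a gap in this direction. Concretely, take $\epsest=\cest=0$, $k=m$, and a submodular function on $m$ disjoint duplicate pairs whose common gain equals the current threshold $\alpha$. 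The largest prefix of density $\ge\alpha$ ends just before the first completed pair, i.e.\ after $\widetilde O(\sqrt m)$ positions. Each iteration therefore adds and filters only $\widetilde O(\sqrt m)$ elements, and the surviving $U$ has the same structure. After $T_{\mathrm{inner}}$ iterations the algorithm returns Failure.

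The missing idea is a $(1-\epsilon)$ slack in the prefix test, plus slack for estimation error. The paper's full version, Algorithm~\ref{alg:optimization}, accepts the largest $i$ with $X(T_i\mid S)\ge(1-\epsilon)\bigl((1-\epsest)\alpha-\cest\bigr)i$. With the single level $\beta=(1-\epsest)\alpha-\cest$, filter survivors have true gain $\ge\beta$. Any element whose gain over $S\cup T_{i^*}$ drops below $\beta$ has estimate below $(1+\epsest)\beta+\cest\le\alpha$, so it is removed. Let $A_i$ be the elements of $U$ with gain below $\beta$ over $S\cup T_i$ (nested by submodularity), and let $i'$ be the first index with $|A_{i'}|\ge\frac{\epsilon}{2}|U|$. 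Each of the first $i'$ positions is bad with conditional probability at most $\epsilon/2$, so by Markov, with probability $\ge\frac12$, $T_{i'}$ has density $\ge(1-\epsilon)\beta$. It then passes the relaxed test (up to the estimation error on $T_{i'}$ itself, absorbed by adjusting constants), so $i^*\ge i'$ and all of $A_{i'}$ is filtered. If $i'>s$, the same argument gives $i^*=s$ and $|S|=k$.

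You must then redo Step 1 with the weaker test. Each block has true gain $\ge(1-\epsest)(1-\epsilon)\beta|T_{i^*}|-\cest$, i.e.\ density $(1-\epsilon-2\epsest)\alpha-2\cest$. Together with the $\cest$ from your Step~2 invariant, this is where $3k\cest$ comes from.

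There are also two smaller issues. The $\frac12$ success probability holds only conditionally on the history, so you need the domination lemma (Theorem~\ref{thm:chen}) before applying Chernoff. And in the case $|S|<k$ after the last round, $k\alpha_{\mathrm{final}}\le k/n$ does not follow to be at most $\epsilon\opt$ from $\opt\ge1$ alone. It does follow from $\opt\ge k$ (seeds are active) when $n\ge1/\epsilon$.
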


\subsection{Our algorithm for influence maximization in general stochastic diffusion
models}

\label{sec:general-model-alg}

We now describe our algorithm for general diffusion models. Recall
from Section \ref{sec:prelim-diffusion} that we assume black-box
access to independent samples $\widehat{f}$ for the influence $I^{(\tau)}$
represented as a value oracle that, on input $S$, returns $\widehat{f}(S)$.
Our algorithm instantiates the template Algorithm \ref{alg:template-idealized}
using an estimation subroutine that uses independent samples from
the model to estimate the marginal influence gains.

We provide two versions of the estimation subroutine, both achieving additive error $\cest$, failure probability $\prest$, and no multiplicative error (Defn.~\ref{def:estimation-error}).

\paragraph{Estimation with known variance bound} 
Our first estimation algorithm (Algorithm~\ref{alg:estimate-gains-variance}) uses
median-of-means estimation: it partitions samples into groups, computes the mean
within each group, and returns the median of these means. This standard approach
requires knowing upfront how many samples to take. To guarantee additive error
$\cest$, the required sample size depends on the variance of single-sample estimates.
Since the algorithm must work for \emph{all} sets $T$ queried during optimization,
we need an upper bound $W$ on the maximum variance $\max_{S,T}\Var[\widehat{f}(T\vert S)]$.
This bound $W$ can be pessimistic since it must account for worst-case sets even
if actual queries have lower variance. The work of \citet{sadeh20} uses the same
approach and also requires~$W$. An immediate bound is $W \leq n \cdot \opt$;
tighter bounds are given in \citet{sadeh20}.
The median-of-means estimator achieves:
\begin{thm}
\label{lem:estimate-gains-variance}\label{thm:gtm-estimate-gains-variance}Let
$W$ be an upper bound on the maximum variance $\max_{S,T\subseteq V\colon|S\cup T|\leq k}\Var[\widehat{f}(T\vert S)]$
of the marginal gain estimates obtained from a single sample $\widehat{f}$. Given $W$, $\cest$,
and $\prest$ as inputs, Algorithm \ref{alg:estimate-gains-variance}
achieves an additive error $\cest$, failure probability $\prest$,
and no multiplicative error ($\epsest=0$) using $L=\widetilde{O}(\frac{1}{\cest^{2}}W\log(\frac{1}{\prest}))$
samples and $|{\cal T}|L$ calls to the evaluation oracles
for those samples.
\end{thm}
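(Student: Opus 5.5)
The proof is a standard median-of-means argument: Chebyshev bounds each group mean, a Chernoff bound drives the failure probability of the median down exponentially, and a union bound covers all queried sets. The one subtlety is the order of sampling. I would have Algorithm~\ref{alg:estimate-gains-variance} draw a single batch of $L = m\cdot g$ independent samples $\widehat{f}_1,\dots,\widehat{f}_L$ and split them into $g$ groups of size $m$. Every set $T\in\mathcal{T}$ is then estimated with the same batch. For each $T$ and each group $j$, let $Y_j(T)$ be the group average of $\widehat{f}_i(T\cup S)-\widehat{f}_i(S)$, and output $X(T\vert S)=\mathrm{median}_j\, Y_j(T)$. This costs $O(|\mathcal{T}|L)$ oracle calls: two per sample per set, with $\widehat{f}_i(S)$ computed once and shared.

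\textbf{Group means.} Each summand $\widehat{f}_i(T\vert S)$ is unbiased for $I^{(\tau)}(T\vert S)$, because each $\widehat{f}_i$ is unbiased and expectation is linear. Its variance is at most $W$, since the query sets satisfy $|S\cup T|\le k$. I would state this as an assumption inherited from the template: queries have $|T|\le k-|S|$. The summands in a group are independent, so $\Var[Y_j(T)]\le W/m$. Chebyshev then gives
\[
\Pr\bigl[|Y_j(T)-I^{(\tau)}(T\vert S)|>\cest\bigr]\le \frac{W}{m\cest^2}.
\]
Choosing $m=\lceil 4W/\cest^2\rceil$ makes this at most $1/4$.

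\textbf{Median and union bound.} The median can deviate by more than $\cest$ only if at least $g/2$ of the independent group means are bad. Each is bad with probability at most $1/4$, so a Chernoff/Hoeffding bound gives probability at most $e^{-g/8}$. A union bound over all $T\in\mathcal{T}$ bounds the joint failure probability, in the sense of Definition~\ref{def:estimation-error}, by $|\mathcal{T}|e^{-g/8}$. Since $|\mathcal{T}|\le n$, taking $g=\lceil 8\ln(n/\prest)\rceil$ yields failure probability at most $\prest$, with additive error $\cest$ and $\epsest=0$. The total sample count is then
\[
L=mg=O\!\left(\frac{W}{\cest^2}\log\frac{n}{\prest}\right)=\widetilde{O}\!\left(\frac{W}{\cest^2}\log\frac{1}{\prest}\right).
\]

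\textbf{Main obstacle.} The delicate point is that the same batch is reused across all $T$, so the estimates for different sets are dependent. The union bound needs no independence across sets, so this is harmless. What does matter is that the samples are fresh relative to $S$ and $\mathcal{T}$. Both may depend on earlier randomness in Algorithm~\ref{alg:template-idealized}, so I would condition on $S$ and $\mathcal{T}$ and draw the new samples independently of them. The per-$T$ argument above then holds conditionally, and hence unconditionally.
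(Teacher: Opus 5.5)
Your proposal is correct and uses the same median-of-means argument as the paper: Chebyshev on each group mean with group size $\Theta(W/\cest^{2})$, an exponential tail bound over $\Theta(\log(|\mathcal{T}|/\prest))$ groups, and a union bound over $T\in\mathcal{T}$, with each fresh sample reused across all sets. The only differences are cosmetic: the paper takes per-group failure probability $1/3$ and applies a Chernoff lower tail to the good groups, while you take $1/4$ and apply Hoeffding. Your explicit remarks on $|S\cup T|\le k$ and on the samples being fresh relative to $S$ and $\mathcal{T}$ are useful clarifications that the paper leaves implicit.
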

By combining the template Algorithm \ref{alg:template-idealized}
with the estimation Algorithm \ref{alg:estimate-gains-variance},
we obtain an algorithm with the following guarantee. We give the pseudocode
of the combined algorithm in Algorithm \ref{alg:general-model-variance}, which
has the following guarantee.
\begin{thm}
\label{thm:threshold-variance}\label{thm:gtm-estimate-gains-empirical-variance}Let
$W$ be an upper bound on the maximum variance $\max_{S,T\subseteq V\colon|S\cup T|\leq k}\Var[\widehat{f}(T\vert S)]$
of the marginal gain estimates. Given $W$, $\epsilon$ and $\delta$
as inputs, Algorithm \ref{alg:general-model-variance} uses $\widetilde{O}(NL)$
samples for $L=\widetilde{O}(\frac{k^{2}}{\epsilon^{2}\opt^{2}}W\log(\frac{1}{\delta\epsilon}\log(\frac{1}{\delta})))$
and it returns a solution $S$ satisfying $I^{(\tau)}(S)\geq(1-\frac{1}{e}-\epsilon)\opt$
with probability $1-\delta$.
\end{thm}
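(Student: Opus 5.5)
The plan is to combine Theorem~\ref{thm:template-ideal} with Theorem~\ref{thm:gtm-estimate-gains-variance} and choose the parameters so that both error terms are absorbed into $\epsilon\,\opt$ and the failure probabilities sum to at most $\delta$. We run the template with accuracy $\epsilon/2$ and failure parameter $\delta/2$. Theorem~\ref{thm:template-ideal} then gives $N=\widetilde{O}(\frac{1}{\epsilon^{2}}\log(\frac{1}{\delta}))$ calls to $\mathrm{EstimateGains}$. Since the median-of-means subroutine has $\epsest=0$, the returned $S$ satisfies
\[
I^{(\tau)}(S)\geq(1-\tfrac{1}{e}-\tfrac{\epsilon}{2})\opt-3k\cest
\]
with probability $1-N\prest-\delta/2$.

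For the parameters, we set $\cest=\frac{\epsilon\,\opt}{6k}$, so that $3k\cest\leq\frac{\epsilon}{2}\opt$ and the guarantee becomes $(1-\frac{1}{e}-\epsilon)\opt$. We set $\prest=\frac{\delta}{2N}$, so that $N\prest\leq\delta/2$ and the total failure probability is at most $\delta$. Substituting into Theorem~\ref{thm:gtm-estimate-gains-variance}, each call uses
\[
L=\widetilde{O}\!\left(\frac{k^{2}}{\epsilon^{2}\opt^{2}}W\log\frac{2N}{\delta}\right)
\]
samples. Since $\log(N/\delta)=O(\log(\frac{1}{\delta\epsilon}\log\frac{1}{\delta}))$ up to polylog$(n)$ terms, this matches the stated $L$. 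Fresh samples are drawn in each call, so over $N$ calls the total is $\widetilde{O}(NL)$ samples.

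The main obstacle is that $\cest$ depends on $\opt$, which is unknown. I would first try to handle this by a guessing scheme. Note that $\opt\geq\max_{v}I^{(\tau)}(\{v\})\geq1$, since the seed is itself active, and $\opt\leq n$. We run the combined algorithm for guesses $\optguess=n,n/2,n/4,\dots$, giving $O(\log n)$ guesses, and use $\cest=\epsilon\optguess/(6k)$ for each. Each run either succeeds or declares failure, or we validate the run by estimating $I^{(\tau)}(S)$ with fresh samples. We stop at the first guess with $\optguess\leq\opt$, where the guarantee applies.

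Carrying this out requires two checks. First, overly large guesses must not cause wrong acceptance: an accepted set must have true value at least about $(1-1/e-\epsilon)\optguess$, which can be verified by an estimate with additive error $\epsilon\optguess$. Second, splitting $\delta$ across the $O(\log n)$ guesses must cost only logarithmic factors, which the $\widetilde{O}$ absorbs. Once the guess is within a factor $2$ of $\opt$, the sample count $L$ changes by only a constant factor, so the stated bounds hold.
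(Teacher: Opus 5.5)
Your proposal is correct and follows the paper's proof. Like the paper, you combine Theorem~\ref{thm:template-ideal} with the median-of-means guarantee of Theorem~\ref{lem:estimate-gains-variance}, take $\cest=\Theta(\epsilon\optguess/k)$ and $\prest=\Theta(\delta/(N\log n))$, handle the unknown $\opt$ by halving $\optguess$ from $n$ with a fresh verification estimate, and union-bound over the $O(\log n)$ guesses. The only real difference is the acceptance test: the paper breaks when $X(S\mid\emptyset)\ge\optguess$ and argues this must happen once $\optguess\le\frac{1}{4}\opt$ (so $\cest\ge\frac{\epsilon}{8k}\opt$), whereas your threshold of roughly $(1-\frac{1}{e}-O(\epsilon))\optguess$ stops by the first guess at most $\opt$; either way $\optguess=\Theta(\opt)$ at termination, so the same bound on $L$ follows.
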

\paragraph{Comparison with prior work} The state of the art for general
models is due to \citet{sadeh20}. Compared to their algorithm, ours
improves sample complexity and running time by a factor of $k/N$,
which is $\Omega(k/\log n)$ for constant $\epsilon,\delta$. This is
significant for large seed sets: in viral marketing, campaigns may target
a substantial fraction of the network as influencers \citep{richardson02};
in outbreak detection, monitoring many locations improves coverage
\citep{leskovec07}; and empirical studies \citep{ling23} use seed sets
up to 20\% of nodes. When $k=\Theta(n)$, our factor-$k$ improvement
translates to a factor-$n$ speedup over prior~work.

\paragraph{Estimation via empirical variance} A drawback shared with
\citet{sadeh20} is that worst-case variance bounds are pessimistic.
We provide Algorithm~\ref{alg:estimate-gains-empirical-variance} that
instead uses empirical variance to adaptively determine the number
of samples. The algorithm iteratively draws samples and computes the empirical variance from observations. Using empirical Bernstein bounds, it constructs confidence intervals around the current estimate and terminates once these intervals are sufficiently tight. This adaptive approach avoids oversampling when actual variance is low, yielding significant practical improvements (Sec.~\ref{sec:experiments}). 
Crucially, the algorithm adapts to the actual variance without requiring any bound $W$ as input. Let $W$ be again any upper bound on
$\max_{S, T}\Var[\widehat{f}(T\vert S)]$. We show:
\begin{thm}
\label{lem:emp-var-conf}Given as input $\cest$ and $\prest$, with
probability $1-\prest$, Algorithm \ref{alg:estimate-gains-empirical-variance}
uses $L=\widetilde{O}((\frac{W}{\cest^{2}}+\frac{n}{\cest})\log(\frac{1}{\prest}))$
samples and $|\mathcal{T}|L$ calls to the evaluation oracles
for the samples, and it achieves additive error $\cest$, failure
probability $\prest$, and no multiplicative error ($\epsest=0$).
\end{thm}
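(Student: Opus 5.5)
We will analyze Algorithm~\ref{alg:estimate-gains-empirical-variance} as a doubling (geometric) sampling scheme combined with an empirical Bernstein inequality, applied separately to each queried set and then combined by a union bound. For a fixed $T\in\mathcal{T}$, the single-sample marginal gains $Y_j=\widehat{f}_j(T\vert S)$ are i.i.d. with mean $I^{(\tau)}(T\vert S)$. We will assume they lie in a range of width $R\le n$; this holds since $0\le|\Active^{(\tau)}(\cdot)|\le n$, so $|\widehat{f}(T\vert S)|\le n$. At stage $j=0,1,2,\dots$ the algorithm holds $L_j=2^{j}L_0$ samples and computes the empirical mean $\bar Y_j$ and the empirical variance $\widehat{V}_j$. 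It then forms the empirical Bernstein radius
\[
r_j=\sqrt{\tfrac{2\widehat{V}_j\log(3/\delta_j)}{L_j}}+\tfrac{3R\log(3/\delta_j)}{L_j}
\]
(Maurer--Pontil), and stops as soon as $r_j\le\cest$. The per-stage failure probability is set to $\delta_j=\prest/(|\mathcal{T}|\cdot J)$, where $J=O(\log(n^{2}/\cest))$ bounds the number of stages.

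\textbf{Correctness.} By the empirical Bernstein bound, with probability at least $1-\delta_j$ we have $|\bar Y_j-I^{(\tau)}(T\vert S)|\le r_j$. A union bound over all stages and all $T\in\mathcal{T}$ gives failure probability at most $\prest$. On the good event, whenever the algorithm stops at stage $j$ we have $|\bar Y_j-I^{(\tau)}(T\vert S)|\le r_j\le\cest$. This is exactly additive error $\cest$ with $\epsest=0$. The union bound over stages is legitimate even though the stopping time is data dependent, because the guarantee is required to hold simultaneously for every stage up to $J$. If the loop reaches $J$ stages without stopping, we can force termination there: by the next step, the good event already guarantees termination before stage $J$.

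\textbf{Sample bound.} This is the main obstacle, because the stopping rule uses $\widehat V_j$ rather than the unknown true variance $\sigma^2\le W$. We will use the companion concentration result for the empirical standard deviation (also Maurer--Pontil): with probability $1-\delta_j$,
\[
\sqrt{\widehat V_j}\le\sigma+R\sqrt{2\log(2/\delta_j)/L_j}.
\]
We include this event in the union bound, so the failure probability only changes by a constant factor. On this event,
\[
r_j\le\sqrt{2\sigma^2\log(\cdot)/L_j}+O\bigl(R\log(\cdot)/L_j\bigr).
\]
Hence $r_j\le\cest$ once $L_j\ge C\bigl(\tfrac{W}{\cest^2}+\tfrac{n}{\cest}\bigr)\log\tfrac{|\mathcal{T}|J}{\prest}$. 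Because the $L_j$ double, the algorithm stops at the first stage exceeding this threshold, so its sample count is at most twice the threshold. The factors $\log|\mathcal{T}|\le\log(2^{n})$-free — in our template $|\mathcal{T}|\le n$ — and $\log J$ are polylogarithmic in $n$, which gives $L=\widetilde{O}\bigl((\tfrac{W}{\cest^2}+\tfrac{n}{\cest})\log\tfrac1{\prest}\bigr)$. The same samples are reused for every $T\in\mathcal{T}$; the algorithm stops when all radii are at most $\cest$. Each stage then evaluates each of the $L$ samples on $T\cup S$ and on $S$ for every $T$, which gives $O(|\mathcal{T}|L)$ oracle calls. With geometric doubling the cumulative cost is within a factor $2$ of the final stage.
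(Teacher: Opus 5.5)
Your proof has the same structure as the paper's proof. Correctness comes from empirical Bernstein plus a union bound over every possible stopping point and every $T\in\mathcal{T}$, which handles the data-dependent stopping time. The sample bound comes from concentration of the empirical variance at a fixed size $L^{*}\approx(\frac{W}{\cest^{2}}+\frac{n}{\cest})\log(1/\widetilde{\delta})$, showing the stopping rule has fired by then (Lemmas~\ref{lem:emp-var-num}--\ref{lem:emp-var-conf-failure}). Two differences are worth noting.

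First, Algorithm~\ref{alg:estimate-gains-empirical-variance} does not double: it tests $C(T)\le\cest$ after every single sample $\ell=1,\dots,L_{\max}$. Your union bound must therefore run over all $\ell\le L_{\max}$, not over $J$ stages. This is what $\widetilde{\delta}=\prest/(2nL_{\max})$ pays for, and it costs $\log L_{\max}=O(\log\frac{n}{\prest\cest})$ instead of your $\log J$. In exchange there is no factor-$2$ overshoot, since the algorithm stops at the first good $\ell\le L^{*}$. Your forced termination at stage $J$ becomes a check that $L^{*}\le L_{\max}$, which rules out the Failure return; this is done in Lemma~\ref{lem:emp-var-L} using $W\le n^{2}$. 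With these adjustments your argument applies to the algorithm as stated.

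Second, you control $\widetilde V$ through the Maurer--Pontil deviation bound for the empirical standard deviation. That turns the radius directly into $\sqrt{2\sigma^{2}\log(\cdot)/L}+O(n\log(\cdot)/L)$. The paper instead writes $\sqrt{\widetilde V}=\sqrt{\sigma^{2}+(\widetilde V-\sigma^{2})}$, splits it via $\sqrt{a+b}\le\sqrt{a}+\sqrt{b}$, and bounds $\widetilde V-\sigma^{2}$ with a one-sided Bernstein-type tail for the sample variance. Your version is shorter and makes the dependence on the range explicit. The tail inequality the paper quotes is stated for $[0,1]$-valued variables and has to be rescaled by $n^{2}$, which the paper leaves implicit. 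Both routes give the same $\widetilde{O}$ bound.

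A minor point: unless samples are monotone, $\widehat{f}(T\vert S)\in[-n,n]$ has range $2n$ rather than $n$. This only changes constants.
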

By combining the template Algorithm \ref{alg:template-idealized}
with the estimation Algorithm \ref{alg:estimate-gains-empirical-variance},
we obtain an algorithm with the following guarantee. We give the pseudocode
of the combined algorithm in Algorithm \ref{alg:general-model-empirical-variance},
which achieves: 
\begin{thm}
\label{thm:threshold-empirical-variance}Given $\epsilon$ and $\delta$
as input, with probability $1-\delta$, Algorithm \ref{alg:general-model-empirical-variance}
uses $\widetilde{O}(NL)$ samples for $L=\widetilde{O}((\frac{k^{2}W}{\epsilon^{2}\opt^{2}}+\frac{kn}{\epsilon\opt})\log(\frac{N}{\delta}))$
and it returns a solution $S$ satisfying $I^{(\tau)}(S)\geq(1-\frac{1}{e}-\epsilon)\opt$.
\end{thm}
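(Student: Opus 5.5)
The proof will combine Theorem~\ref{thm:template-ideal} with the empirical-variance estimator of Theorem~\ref{lem:emp-var-conf}. We instantiate Algorithm~\ref{alg:template-idealized} with Algorithm~\ref{alg:estimate-gains-empirical-variance} as the $\mathrm{EstimateGains}$ subroutine and choose its parameters as follows.

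\textbf{Additive error.} Since the estimator has $\epsest=0$, Theorem~\ref{thm:template-ideal} gives
\[
I^{(\tau)}(S)\geq(1-\tfrac{1}{e}-\epsilon')\opt-3k\cest .
\]
We would like $\cest=\Theta(\epsilon\opt/k)$ and $\epsilon'=\epsilon/2$, so that $3k\cest\le\frac{\epsilon}{2}\opt$.

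\textbf{Not knowing $\opt$.} The algorithm does not know $\opt$, so we cannot set $\cest$ directly. We would use a guess $\optguess$ with $\optguess\le\opt$ and $\optguess=\Omega(\opt)$. Two ways to obtain it:
\begin{itemize}
\item a coarse estimate of $\max_v I^{(\tau)}(\{v\})$, which satisfies $\opt/k\le\max_v I^{(\tau)}(\{v\})\le\opt$;
\item a doubling/halving search over $\optguess$, running the template for each guess and keeping the best solution under a final validation estimate.
\end{itemize}
The doubling search costs only logarithmically many guesses, which is absorbed into $\widetilde{O}$. I expect the details to follow the pseudocode of Algorithm~\ref{alg:general-model-empirical-variance}, and I would mirror whatever device Theorem~\ref{thm:threshold-variance} uses. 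With $\cest=\epsilon\optguess/(6k)$ we get $3k\cest\le\frac{\epsilon}{2}\opt$.

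\textbf{Failure probability.} Set $\prest=\delta/(2N)$, where $N=\widetilde{O}(\epsilon^{-2}\log(1/\delta))$ is the number of calls to $\mathrm{EstimateGains}$ from Theorem~\ref{thm:template-ideal}, and run the template with confidence $\delta/2$. The template's total failure probability is then $N\prest+\delta/2\le\delta$.

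Theorem~\ref{lem:emp-var-conf} bounds the sample count of each call only with probability $1-\prest$. A union bound over the $N$ calls shows that every call uses at most $L$ samples simultaneously, with failure probability at most $N\prest=\delta/2$. This event can be merged with the accuracy event: both are governed by the same confidence-interval event inside the estimator. So a single union bound suffices, up to rescaling $\delta$ by a constant.

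\textbf{Sample count.} Substituting into Theorem~\ref{lem:emp-var-conf},
\[
L=\widetilde{O}\Big(\big(\tfrac{W}{\cest^{2}}+\tfrac{n}{\cest}\big)\log\tfrac{1}{\prest}\Big)
=\widetilde{O}\Big(\big(\tfrac{k^{2}W}{\epsilon^{2}\opt^{2}}+\tfrac{kn}{\epsilon\opt}\big)\log\tfrac{N}{\delta}\Big),
\]
using $\optguess=\Theta(\opt)$. The total over all $N$ calls is $\widetilde{O}(NL)$ samples.

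The main obstacle is the handling of $\opt$. If the guess is too large, $\cest$ is too large and the guarantee fails. If it is too small, the sample bound is off by more than a polylogarithmic factor. A natural first attempt is to estimate the best singleton value with additive error below half of itself. This yields only a $k$-approximation of $\opt$, which would cost $k^2$ in $L$. So I would instead use geometric guesses for $\optguess$ between $\max_v I^{(\tau)}(\{v\})$ and $k$ times that value, running the template for each, and select the output via an estimator with additive error $\epsilon\optguess$. This adds $O(\log k/\epsilon)$ runs, which is absorbed into the $\widetilde{O}(NL)$ bound up to $1/\epsilon$ factors that can be charged to $N$.
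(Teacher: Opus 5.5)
Your union-bound bookkeeping and the final substitution into Theorem~\ref{lem:emp-var-conf} match the paper. The gap is in how you handle the unknown $\opt$: the scheme you commit to does not give the stated $L$.

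The cost of each estimator call depends on the guess through $\cest=\epsilon\optguess/(6k)$. You propose to run the template for \emph{every} geometric guess between $\max_v I^{(\tau)}(\{v\})$ and $k\max_v I^{(\tau)}(\{v\})$. When $\max_v I^{(\tau)}(\{v\})=\Theta(\opt/k)$ (e.g.\ $k$ disjoint seeds of equal influence), the lowest guesses are $\Theta(\opt/k)$. Those runs have $\cest=\Theta(\epsilon\opt/k^{2})$, so by Theorem~\ref{lem:emp-var-conf} each of their calls uses $\widetilde{O}\big((\frac{k^{4}W}{\epsilon^{2}\opt^{2}}+\frac{k^{2}n}{\epsilon\opt})\log\frac{N}{\delta}\big)$ samples. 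That is exactly the $k^{2}$ loss you wanted to avoid, now incurred inside your grid.

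Your sample count assumes $\optguess=\Theta(\opt)$, which holds only for the good guess. Counting the number of runs misses that the total is governed by the smallest guess you run. There are also smaller problems:
\begin{itemize}
\item Comparing candidates via estimates with error $\epsilon\optguess$ is unsound when $\optguess$ can reach $\approx k\opt$. A candidate from such a guess has a vacuous template guarantee and an estimate error of order $\epsilon k\opt$, so it can look best.
\item An extra $1/\epsilon$ factor in the number of runs is not hidden by $\widetilde{O}$.
\item The statement is about Algorithm~\ref{alg:general-model-empirical-variance}, not a modified procedure.
\end{itemize}

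The missing idea is to search from the top and stop early, so no run ever uses a guess far below $\opt$. This is the device of Algorithm~\ref{alg:general-model-empirical-variance}, and the paper's proof (identical to that of Theorem~\ref{thm:threshold-variance}) rests on it:
\begin{itemize}
\item Guesses $\optguess=n/2,n/4,\dots$ are tried in decreasing order. The loop breaks at the first one whose output passes $X(S\mid\emptyset)\ge\optguess$, with $X$ estimated to additive error $\epsilon\optguess$.
\item Theorem~\ref{thm:template-ideal} puts no restriction on the guess; it enters only through $\cest=\epsilon\optguess/k$. So once $\optguess\le\opt/4$, the output satisfies $I^{(\tau)}(S)\ge(1-\frac{1}{e}-O(\epsilon))\opt$. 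Hence $X(S\mid\emptyset)\ge\opt/4\ge\optguess$ and the loop stops.
\item Therefore every executed guess has $\optguess\ge\opt/8$, and every estimator call has $\cest\ge\epsilon\opt/(8k)$. Guesses above $\opt$ only make $\cest$ larger and calls cheaper. This yields the claimed $L$ over $O(\log n)$ guesses.
\item The same test also caps the guess at the break: $\optguess\le X(S\mid\emptyset)\le\opt+\epsilon\optguess$. So $3k\cest=O(\epsilon\opt)$ for the returned set, and you need neither a selection step nor an estimate of $\max_v I^{(\tau)}(\{v\})$.
\end{itemize}
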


\subsection{Our algorithm for observed cascades}

\label{sec:observed-cascades-alg}

We now present our algorithm for observed cascades, an important setting
where we have $\ell$ observed diffusion graphs $G_{1}=(V_1,E_1),\dots,G_{\ell}=(V_\ell,E_\ell)$
with $\mtot=\sum_{i}|E_{i}|$ total edges. We assume $|V_i|\leq|E_i|$
(isolated vertices can be removed). The influence is the average
reachability: $I^{(\tau)}(S)=\frac{1}{\ell}\sum_{i}|\Reach_{G_{i}}^{(\tau)}(S)|$.
Unlike stochastic models, here we can evaluate $I^{(\tau)}(S)$ exactly
in $O(\mtot)$ time via BFS. Greedy thus gives $1-1/e$ approximation
in $O(nk\cdot\mtot)$ time; this can be reduced to $O(n\cdot\mtot)$
at a small approximation loss \citep{MirzasoleimanBK15}. We achieve
nearly-linear $\widetilde{O}(\mtot)$ time by combining our low-adaptivity framework with reachability sketches.

\paragraph{Estimation via reachability sketches} Our estimation subroutines use bottom-$b$ min-hash sketches \citep{cohen97} to estimate reachability set sizes: Algorithm~\ref{alg:estimate-gains-sketch-unrestricted} handles unrestricted influence ($\tau=n$), while Algorithm~\ref{alg:estimate-gains-sketch-restricted} handles restricted influence ($\tau<n$). We build fresh sketches in each of
the $O(\log^{2}n/\epsilon^{2})$ adaptive rounds. The key insight is
that low-adaptivity makes this affordable: rebuilding sketches $O(\log^{2}n)$
times adds only logarithmic overhead to the total $\widetilde{O}(\mtot)$
running time. Prior work \citep{cohen14} uses sketching with
Greedy, but reuses sketches across $k$ iterations, creating correlations
that invalidate their analysis; fixing it would require rebuilding
$k$ times, which is prohibitive. We thus provide the first
nearly-linear time algorithm with provable~guarantees.
We show that our sketch-based estimators achieve multiplicative error with no additive error:
\begin{thm}
\label{thm:oc-estimation-unrestricted} Consider the setting $\tau=n$ (unrestricted influence).
Given as input $\epsest$ and $\prest$, Algorithm \ref{alg:estimate-gains-sketch-unrestricted}
runs in time $O(b\mtot)$ for $b=\widetilde{O}(\frac{1}{\epsest^{2}}\log(\frac{\ell}{\prest}))$
and it achieves a multiplicative error $\epsest$, failure probability
$\prest$, and no additive error ($\cest=0$).
\end{thm}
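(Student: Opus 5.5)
The plan is to estimate reachability-set sizes in each graph $G_i$ with bottom-$b$ min-hash sketches \citep{cohen97,cohen07}. Each marginal gain is then recovered from two union-size estimates. For every $i$ and every node $u\in V_i$, draw an independent uniform rank $r_i(u)\in[0,1]$ (fresh in every call, which is what the low-adaptivity template permits). For each node $v\in V_i$, the sketch $\mathrm{sk}_i(v)$ is the set of $b$ smallest ranks among nodes reachable from $v$ in $G_i$. For a set $A$, $\mathrm{sk}_i(A)$ is the bottom-$b$ of $\bigcup_{v\in A}\mathrm{sk}_i(v)$, which equals the bottom-$b$ of $\Reach_{G_i}(A)$, because bottom-$b$ sketches are mergeable under union.

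\textbf{Running time.} In the unrestricted case $\tau=n$, all node sketches of $G_i$ can be computed in $O(b|E_i|)$ time by the standard reverse-search construction \citep{cohen97}. We process nodes in increasing rank order and run a reverse BFS from each node $w$ on the reverse graph. The search visits $v$ and inserts $r_i(w)$ into $\mathrm{sk}_i(v)$, but prunes at any $v$ whose sketch already has $b$ entries. Each node is therefore inserted into at most $b$ times, so each edge is scanned $O(b)$ times. Summing over $i$ and using $|V_i|\le|E_i|$ gives $O(b\mtot)$.

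\textbf{Answering queries.} For a query $T_j$ on top of $S$, we need $\sum_i |\Reach_{G_i}(S\cup T_j)|-|\Reach_{G_i}(S)|$. The natural trick is to contract $S$: delete from $G_i$ all nodes in $\Reach_{G_i}(S)$ (found by one BFS in $O(\mtot)$ time) and build the sketches in the residual graph $G_i'$. Then $|\Reach_{G_i}(S\cup T)\setminus\Reach_{G_i}(S)| = |\Reach_{G_i'}(T)|$, since unrestricted reachability from $T$ avoiding already-reached nodes is exactly the new set. This turns the marginal gain into a union size in a single graph and avoids a difference of estimates, which is what makes a purely multiplicative guarantee possible.

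To aggregate over $i$ without $\ell$ separate per-graph errors, I will treat the disjoint union of the $G_i'$ as a single universe: nodes are pairs $(i,v)$, and a seed $u$ maps to all copies $(i,u)$. The marginal gain $\ell\cdot I(T\mid S)$ is then the size of one union of reachable sets in the combined graph, of total size $\mtot$. Its bottom-$b$ sketch is the merge of the per-copy sketches. For the family of nested prefixes $T_1\subseteq\dots\subseteq T_s$, and for the singletons, the merged sketches can be maintained incrementally, so the query cost is $O(b)$ per added element plus sorting. This stays within $O(b\mtot)$ overall, with $O(b\log)$ factors absorbed into $\widetilde O$. The estimator is the standard $(b-1)/\tau_b$, where $\tau_b$ is the $b$-th smallest rank; if fewer than $b$ ranks are present, we count them exactly, giving zero error in that case.

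\textbf{Main obstacle: accuracy.} Accuracy is the crux. For a fixed set of cardinality $m$, $\Pr\bigl[|(b-1)/\tau_b - m|>\epsest m\bigr]\le 2\exp(-\Omega(\epsest^2 b))$ by a Chernoff bound on the number of ranks below $(1\pm\epsest)b/m$ \citep{cohen97}. We need a union bound over all queried sets in $\mathcal{T}$, of which there are at most $2n$ per call. The sets also depend on the random permutation, but not on the ranks, so each fixed query is a fixed set given the permutation. Hence $b=O(\epsest^{-2}\log(n\ell/\prest))$ suffices, which is $\widetilde O(\epsest^{-2}\log(\ell/\prest))$. The subtle point I expect to need care is exactly this independence: the queried sets must not depend on the ranks. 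That holds because $S$ and the permutation are fixed before fresh ranks are drawn in each call. Since the error is relative to the true marginal gain and no differences are taken, we get $\cest=0$.
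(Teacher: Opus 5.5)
Your argument is correct for the routine you describe. However, that routine is not Algorithm~\ref{alg:estimate-gains-sketch-unrestricted}, so strictly you have proved the guarantee for a different estimator. The ingredients are the same as the paper's and transfer directly: mergeability of bottom-$b$ sketches, closure of $\Reach_{G_i}^{(n)}(S)$ under reachability, a Chernoff bound for a fixed set, and a union bound over $\mathcal{T}$. The paper's routine differs from yours in two places.

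\emph{Construction.} The paper contracts strongly connected components and sets $Y_{i,C}=\emptyset$ for components inside $\Reach_{G_i}^{(n)}(S)$. This is valid because each SCC lies entirely inside or outside this closed set, the same fact behind your deletion of it. It then fills in $R_{i,C}$ by dynamic programming over the condensation DAG in reverse topological order, with one $O(b)$ merge of sorted sketches per DAG edge. You instead use Cohen's rank-ordered reverse search with pruning at full sketches. Your $O(b|E_i|)$ bound comes from each node receiving at most $b$ insertions and having its in-edges scanned only on insertion. Yours avoids computing SCCs; the paper's does a fixed amount of work per edge and never processes nodes in rank order.

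\emph{Aggregation.} The paper evaluates a separate estimate $X_i(T\mid S)$ per graph via Lemma~\ref{lem:min-hash} and averages them. It therefore must union-bound over up to $\ell n$ (graph, set) pairs, which is exactly the source of $b=\frac{6}{\epsest^{2}}\log(2\ell n/\prest)$ and of the $\ell$ in the statement. Your single sketch over the disjoint union of the residual graphs needs a union bound only over $|\mathcal{T}|\le n$ sets. So $b=O(\epsest^{-2}\log(n/\prest))$ would already suffice for you, and you get one estimate per query rather than an average of $\ell$ per-graph estimates. To prove the statement literally as written, run your ingredients on the paper's construction; this is routine.

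One accounting correction. Merging in a new element $u$ costs $O(b\,c_u)$ with $c_u=|\{i:u\in V_i\}|$, not $O(b)$. This still sums to $O(b\sum_i|V_i|)\le O(b\mtot)$. Since $b\ge\log n$, the rank sorting also fits in this budget, so you get $O(b\mtot)$ without needing any $\widetilde O$ slack.
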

\begin{thm}
\label{thm:oc-estimation-restricted} Consider the setting $\tau<n$ (restricted influence).
Given as input $\epsest$ and $\prest$, Algorithm \ref{alg:estimate-gains-sketch-restricted}
runs in expected time $\widetilde{O}(b\mtot\log b)$ for
$b=\widetilde{O}(\frac{1}{\epsest^{2}}\log(\frac{\ell}{\prest}))$
and it achieves a multiplicative error $\epsest$, failure probability
$\prest$, and no additive error ($\cest=0$).
\end{thm}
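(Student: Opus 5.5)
Theorem~\ref{thm:oc-estimation-restricted} concerns the restricted case $\tau<n$. The plan is to reduce to estimating sizes of unions of reachability sets with bottom-$b$ min-hash sketches, as in Theorem~\ref{thm:oc-estimation-unrestricted}. The only new ingredient is building distance-bounded sketches, in the spirit of \citet{cohen14timed}.

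\textbf{Reduction to union sizes.} For each query $T_j\in\mathcal{T}$ and current solution $S$, the marginal gain is
\[
I^{(\tau)}(T_j\vert S)=\tfrac{1}{\ell}\sum_i\big(|\Reach^{(\tau)}_{G_i}(S\cup T_j)|-|\Reach^{(\tau)}_{G_i}(S)|\big).
\]
We cannot estimate the two terms separately, since that would only give additive error. Instead we estimate the gain directly as the size of the set difference. Form the disjoint union of the $\ell$ graphs, with node $(i,v)$ for each $v\in V_i$. Run one multi-source BFS per graph from $S\cap V_i$ up to depth $\tau$, in $O(\mtot)$ total time, and delete every node reached. A node outside this set that is reachable from $T_j$ within distance $\tau$ in $G_i$ may have its shortest path pass through nodes of $\Reach^{(\tau)}_{G_i}(S)$, so we must not delete those nodes from the graph. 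We only mark them as ``uncounted''. The quantity to estimate is then the number of unmarked nodes within $\tau$-distance of $T_j$, summed over $i$ and divided by $\ell$.

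\textbf{Sketch construction.} Assign each unmarked copy $(i,v)$ an independent uniform rank in $[0,1]$; marked nodes get rank $\infty$. For each node $u$ we want the bottom-$b$ ranks among unmarked nodes within distance $\tau$ of $u$. Since sketches of a union are obtained by merging, this yields for every $T_j$ the bottom-$b$ sketch of $R_j:=\bigcup_{u\in T_j}$ of these sets. These are the standard distance-bounded all-distances sketches. Following \citet{cohen97,cohen14timed}, process nodes in increasing rank order and run a pruned reverse BFS limited to depth $\tau$ from each. A node $w$ at distance $d$ accepts the rank only if fewer than $b$ previously inserted entries have distance $\le d$. The standard analysis of random rank order shows that each node's sketch is updated an expected $O(b\log n)$ times. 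This makes the total work $\widetilde{O}(b\mtot\log b)$ in expectation, with the $\log b$ coming from maintaining sorted sketches. Because the prefixes $T_1\subseteq T_2\subseteq\cdots$ are nested, all union sketches are computed incrementally by merging one node sketch at a time. Each merge costs $O(b)$ per element, or $O(nb)$ in total.

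\textbf{Accuracy and error probability.} From the bottom-$b$ sketch of $R_j$ we output the estimator $(b-1)/\rho_b$, where $\rho_b$ is the $b$-th smallest rank; we use exact counting if $|R_j|<b$. The standard analysis of \citet{cohen97} gives relative error at most $\epsest$ with failure probability $\exp(-\Omega(\epsest^2 b))$. Each estimated set size is then divided by $\ell$. Because the ranks are global across the disjoint union, the $1/\ell$ scaling is exact, and the estimate directly approximates the marginal gain. Hence the error is purely multiplicative and $\cest=0$.

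A union bound over at most $n$ queries per call requires $b=\widetilde{O}(\epsest^{-2}\log(\ell/\prest))$. The $\log n$ factors are absorbed in $\widetilde{O}$, and the $\ell$ enters through the number of copies, at most $n\ell$.

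\textbf{Main obstacle.} The main obstacle I expect is the running-time analysis of the pruned depth-$\tau$ construction. Unlike the unrestricted case, a node can be re-entered at a shorter distance after it has already been visited. So the simple ``each node's sketch receives at most $b$ entries'' argument fails and must be replaced by the expected $O(b\log n)$ update bound for distance-bounded sketches. I would first try to prove this bound by the classical argument that the $r$-th processed node in rank order enters a fixed sketch with probability at most $b/r$.
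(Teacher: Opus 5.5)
Your proposal is correct and follows essentially the same route as the paper. Both use a pruned depth-$\tau$ reverse BFS in increasing rank order over the full graph, traversing but not counting nodes of $\Reach^{(\tau)}_{G_i}(S)$. The running-time argument is the same: a $b/j$ bound gives $O(b\log n)$ expected traversal checks per node, each costing $O(\log b)$. The accuracy argument is also the same: incremental merging along prefixes, then the bottom-$b$ concentration bound (Lemma~\ref{lem:min-hash}) with a union bound.

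The only deviation is that you build a single sketch over the disjoint union of the $\ell$ graphs with global ranks, instead of per-graph sketches averaged over $i$. This is harmless and even removes the union bound over the $\ell$ graphs. One small correction: merging the copies of each node costs $O(b\sum_i |V_i|)\le O(b\mtot)$ in total, not $O(nb)$.
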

Our overall algorithm combines the template Algorithm \ref{alg:template-idealized}
with the estimation Algorithms \ref{alg:estimate-gains-sketch-unrestricted} and \ref{alg:estimate-gains-sketch-restricted}. We give the pseudocode in
Algorithm \ref{alg:observed-cascades} and show:
\begin{thm}
\label{thm:observed-cascades-main} For $\tau=n$, given as input
$\epsilon$ and $\delta$, Algorithm \ref{alg:observed-cascades}
runs in time $\widetilde{O}(Nb\mtot)$ for $N=\widetilde{O}(\frac{1}{\epsilon^{2}}\log(\frac{1}{\delta}))$
and $b=\widetilde{O}(\frac{1}{\epsilon^{2}}\log(\frac{\ell N}{\delta}))$
and it returns a solution $S$ satisfying $I^{(\tau)}(S)\geq(1-\frac{1}{e}-\epsilon)\opt$
with probability $1-\delta$. For
$\tau<n$, the running time is $\widetilde{O}(Nb\mtot\log b)$
in expectation. 
\end{thm}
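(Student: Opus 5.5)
The plan is to instantiate Theorem~\ref{thm:template-ideal} with the sketch-based estimators of Theorems~\ref{thm:oc-estimation-unrestricted} and~\ref{thm:oc-estimation-restricted}, choosing the estimation parameters so that both the approximation loss and the failure probability fit within the budgets $\epsilon$ and $\delta$. Three choices do this. First, run the template with accuracy parameter $\epsilon/2$ and failure parameter $\delta/2$. By Theorem~\ref{thm:template-ideal}, the template then makes $N=\widetilde{O}(\frac{1}{\epsilon^{2}}\log(\frac{1}{\delta}))$ calls to $\mathrm{EstimateGains}$; the constant-factor rescaling of $\epsilon,\delta$ is absorbed in the $\widetilde{O}$. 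Second, set $\epsest=\epsilon/2$. Third, set $\prest=\delta/(2N)$. Since $N$ depends only on $n,\epsilon,\delta$ and not on $\prest$, this choice is not circular: we compute the explicit bound on $N$ (the product of the outer and inner loop counts, times two calls per inner iteration) before fixing $\prest$.

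With these choices, the guarantee is immediate. Theorems~\ref{thm:oc-estimation-unrestricted} and~\ref{thm:oc-estimation-restricted} give $\cest=0$, so the additive term $3k\cest$ in Theorem~\ref{thm:template-ideal} vanishes. The theorem then yields $I^{(\tau)}(S)\ge(1-\frac1e-\frac\epsilon2-\frac\epsilon2)\opt$ with probability at least $1-N\cdot\frac{\delta}{2N}-\frac\delta2=1-\delta$.

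One subtlety deserves a remark. Definition~\ref{def:estimation-error} is a per-call guarantee, while the template's analysis conditions on past randomness. The sketches are rebuilt with fresh hash randomness in every call, so each call's failure probability is at most $\prest$ conditionally on the history. A union bound over the $N$ calls is therefore valid, and this is exactly what Theorem~\ref{thm:template-ideal} assumes.

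The running time is a sum over calls. In the unrestricted case each call costs $O(b\mtot)$ with $b=\widetilde{O}(\frac{1}{\epsest^{2}}\log(\frac{\ell}{\prest}))=\widetilde{O}(\frac{1}{\epsilon^{2}}\log(\frac{\ell N}{\delta}))$, matching the stated $b$. The remaining work per round consists of filtering, a random permutation, and a scan over prefixes. This is $O(n)$ arithmetic per call, hence $O(Nn)\le O(N\mtot)$ because $|V_i|\le|E_i|$. The total is therefore $\widetilde{O}(Nb\mtot)$. In the restricted case we sum the expected costs $\widetilde{O}(b\mtot\log b)$ by linearity of expectation. Since $N$ is a deterministic upper bound on the number of calls, the total is $\widetilde{O}(Nb\mtot\log b)$ in expectation.

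The step most likely to need care is the cost of a single call when the query collection is the prefix family $\{T_1,\dots,T_s\}$ rather than singletons. Theorems~\ref{thm:oc-estimation-unrestricted} and~\ref{thm:oc-estimation-restricted} are stated with running time independent of $|\mathcal T|$, so I will rely on them as stated. I would double-check that the nested prefixes are handled incrementally, by merging sketches in permutation order, so that the $s$ estimates cost no more than one sketch pass; otherwise an extra factor of $k$ would appear.
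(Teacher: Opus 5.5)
Your argument treats Algorithm~\ref{alg:observed-cascades} as a single call of the template. It never deals with the estimate $\optguess$, which is where the paper's proof does its work.

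In the version of the template analyzed in the appendix (Algorithm~\ref{alg:optimization}, instantiated as $\mathrm{ObservedCascadesGuess}$), the thresholds are $\alpha=\frac{1}{k}\optguess(1-\epsilon)^{r}$ for only $O(1/\epsilon)$ levels. Theorem~\ref{thm:template-ideal} is proved there only under the hypothesis $\frac{1}{2}\opt\le\optguess\le2\opt$. For a wrong guess the template guarantees nothing. For instance, if $\optguess$ is much larger than $\opt$, even the lowest threshold $\frac{1}{k}\optguess(1-\epsilon)^{2/\epsilon}$ can exceed every marginal gain, so everything is filtered and a nearly empty set is returned. Since $\opt$ is unknown, your step ``by Theorem~\ref{thm:template-ideal}, $I^{(\tau)}(S)\ge(1-\frac{1}{e}-\epsilon)\opt$ with probability $1-\delta$'' is not available for the algorithm in the statement.

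What is missing is the analysis of the halving search and its verification test. You need three things.

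(i) There are $O(\log n)$ guesses, so the failure budget must be split across them. The paper passes $\delta/\log n$ to each guess, giving $\prest=\delta/(N\log n)$, and union-bounds over all estimation calls, including the extra estimate $X(S\mid\emptyset)$ per guess. The extra $O(\log n)$ factor in running time is absorbed by $\widetilde{O}$.

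(ii) If the loop stops at a guess $\optguess\ge\opt$, the template theorem does not apply, and the test alone must certify the output: $I^{(\tau)}(S)\ge(1-\epsest)X(S\mid\emptyset)\ge(1-\epsest)(1-\frac{1}{e}-\epsilon-3\epsest)\optguess\ge(1-\frac{1}{e}-\epsilon-4\epsest)\opt$.

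(iii) Since $\opt\le n$ and the guesses halve starting from $n$, some guess lies in $[\frac{1}{2}\opt,\opt]$. There, Theorem~\ref{thm:template-ideal} plus the accuracy of $X(S\mid\emptyset)$ forces the test to pass. So the search stops no later than that guess, and the template guarantee then applies to the returned set.

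The rest of your plan matches the paper: fresh sketches justifying the union bound, $\cest=0$, per-call cost $O(b\mtot)$, and linearity of expectation for $\tau<n$. Your choice $\epsest=\epsilon/2$ instead of the algorithm's $\epsest=\epsilon$ only affects constants.

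You may instead have intended the guess-free Algorithm~\ref{alg:template-idealized}, which starts at $\alpha=n$ and uses $\frac{2}{\epsilon}\log n$ levels. That is a legitimate way to avoid guessing, but it is a different algorithm from Algorithm~\ref{alg:observed-cascades}. You would also have to verify the template analysis for it yourself, since the appendix proves only the guessed version.
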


Note that for observed cascades, the dominant memory cost is the bottom-$b$ sketches.
Since we store one sketch of size $b$ per node in each of the $\ell$ graphs,
we require $O(bn\ell)$ total space, which is almost linear in the input size.

\subsection{Our algorithm for influence maximization in independent cascade models}

\label{sec:ic-model-alg}

In this section, we consider the influence maximization in independent
cascade models. As discussed in Sec. \ref{sec:prelims}, we can obtain
a sample $\widehat{f}$ of the influence $\widehat{I^{(\tau)}}$ by
sampling a live-edge graph $G$ and letting $\widehat{f}(S)=|\Reach_{G}^{(\tau)}(S)|$
for all $S\subseteq V$. Our algorithm is simply to take $\ell$ such
samples (for an appropriate number $\ell$ that we discuss below),
and use our algorithm for observed cascades to approximately maximize
the empirical influence $\widehat{I^{(\tau)}}(S):=\frac{1}{\ell}\sum_{i=1}^{\ell}|\Reach_{G_{i}}^{(\tau)}(S)|$
where $G_{1},\dots,G_{\ell}$ are the sampled live-edge graphs.

\paragraph{Our improved sample complexity}  To ensure that the algorithm
constructs a $1-1/e-O(\epsilon)$ approximation with probability $1-O(\delta)$,
it suffices to take enough samples to ensure that the empirical influence
$\widehat{I^{(\tau)}}$ approximates the true influence $I^{(\tau)}$
on all feasible sets in the following sense. Letting $\mathcal{F}=\{ S\subseteq V\colon|S|\leq k\} $
denote the set of all feasible solutions, it suffices to ensure that
\begin{equation}
\Pr\big[\forall S\in\mathcal{F}\colon|\widehat{I^{(\tau)}}(S)-I^{(\tau)}(S)|\leq\epsilon\opt\big]\geq1-\delta\label{eq:ic-additive-error-guarantee}
\end{equation}
where the above probability is over samples from the~model.

Since we have $|\Reach_{G_{i}}^{(\tau)}(S)|\leq n$, the
Chernoff inequality implies that the number of samples needed to guarantee
(\ref{eq:ic-additive-error-guarantee}) is $\ell=O(\frac{n\log(|\mathcal{F}|/\delta)}{\epsilon^{2}})=O(\frac{nk\log(n/\delta)}{\epsilon^{2}})$.
In this work, we show the following improved sample complexity.
\begin{thm}
\label{thm:ic-sample-complexity}Let $\mathcal{F}=\{ S\subseteq V\colon|S|\leq k\} $
be the set of all feasible solutions. Let $\widehat{I^{(\tau)}}\colon2^{V}\to\Reals$,
$\widehat{I^{(\tau)}}(S)=\frac{1}{\ell}\sum_{i=1}^{\ell}|\Reach_{G_{i}}^{(\tau)}(S)|$
where $G_{1},\dots,G_{\ell}$ are live-edge graphs sampled independently
from the independent cascade model. For any values $\epsilon,\delta\in[0,1]$,
we can ensure that 
\[
\Pr\big[\forall S\in\mathcal{F}\colon|\widehat{I^{(\tau)}}(S)-I^{(\tau)}(S)|\leq\epsilon\opt\big]\geq1-\delta
\]
 using $\ell=O(\frac{M\tau\log(|\mathcal{F}|/\delta)}{\opt\epsilon^{2}})$
samples, where $M=\max_{v\in V}I^{(\tau)}(v)$ is the maximum
singleton influence and $\opt=\max_{S\colon|S|\leq k}I^{(\tau)}(S)$
is the value of the optimal solution. Since $M\leq\opt$ and $|\mathcal{F}|=O(n^{k})$,
we have $\ell\leq O(\frac{\tau k\log(n/\delta)}{\epsilon^{2}})$.
\end{thm}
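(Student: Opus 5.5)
The plan is to prove a single-set concentration inequality for one live-edge sample $X=|\Reach_{G}^{(\tau)}(S)|$: a Bernstein-type tail whose variance proxy is $\tau M\, I^{(\tau)}(S)$ and whose scale is $\tau M$, rather than the trivial range $n$. Averaging $\ell$ independent copies then gives
\[
\Pr\big[|\widehat{I^{(\tau)}}(S)-I^{(\tau)}(S)|>\epsilon\opt\big]\le 2\exp\Big(-\frac{c\,\ell\,\epsilon^{2}\opt^{2}}{\tau M\,(I^{(\tau)}(S)+\epsilon\opt)}\Big)\le 2\exp\Big(-\frac{c'\,\ell\,\epsilon^{2}\opt}{\tau M}\Big),
\]
where the last step uses $I^{(\tau)}(S)\le\opt$ and $\epsilon\le 1$. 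A union bound over $\mathcal{F}$ then yields the claimed $\ell=O(\frac{M\tau\log(|\mathcal{F}|/\delta)}{\opt\epsilon^{2}})$. The final remark follows from $M\le\opt$ and $\log|\mathcal{F}|=O(k\log n)$.

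To get the single-sample bound, I would set up a Doob martingale indexed by the $\tau$ diffusion steps. Let $\mathcal{F}_t$ reveal the live-edge status of all edges leaving the nodes of $\Active^{(t-1)}(S)$, so that $\Active^{(t)}(S)$ is $\mathcal{F}_t$-measurable. Edges leaving nodes not yet active remain unrevealed and independent. Hence
\[
\E[X\mid\mathcal{F}_t]=|\Active^{(t)}(S)|+R_t,
\]
where $R_t$ is the expected number of additional nodes reached within the remaining $\tau-t$ steps from the new frontier in the graph with active nodes deleted. By monotonicity, each frontier node contributes at most $I^{(\tau)}(v)\le M$ to $R_t$.

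The key step is to bound, conditionally on $\mathcal{F}_{t-1}$, the moment generating function of the increment $D_t=\E[X\mid\mathcal{F}_t]-\E[X\mid\mathcal{F}_{t-1}]$. I would show
\[
\E[e^{\lambda D_t}\mid\mathcal{F}_{t-1}]\le\exp\big(C\lambda^{2}M\,\mu_t\big)\qquad\text{for }|\lambda|\le 1/(CM),
\]
where $\mu_t$ is the conditional expected number of nodes newly activated at step $t$, weighted by their residual influence. The argument: given $\mathcal{F}_{t-1}$, each candidate node $w$ joins the frontier independently of the other candidates, since the incoming live edges from the current frontier are distinct edges. The value $\E[X\mid\mathcal{F}_t]$ is a submodular coverage-type function of the new frontier, and each element changes it by at most $M$. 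A Poisson/Bernstein MGF bound for sums of independent bounded indicators, extended to such submodular self-bounding functions (as in the Boucheron–Lugosi–Massart self-bounding inequality), gives the claim.

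Summing over $t$, each $\mu_t\le I^{(\tau)}(S)$ in expectation, so the total variance proxy is at most $\tau M\,I^{(\tau)}(S)$. This is exactly where $\tau$ replaces $n$: there are only $\tau$ rounds, and each round contributes variance at most $M$ times the expected mass. A Freedman-type inequality then gives the tail bound above; the same holds for the lower tail with $-\lambda$.

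I expect the main obstacle to be the per-step MGF bound. Two things make it delicate. First, $R_t$ is not a sum of independent terms, because overlapping reach sets make the coverage function submodular rather than additive. Second, the increments are not bounded by $M$ almost surely, since many nodes can be activated at once. I would first try to exploit submodularity: the single-element changes are at most $M$ and sum to at most the function value, which is precisely the $(M,0)$-self-bounding property. If that fails, I would fall back to dominating the number of activations per step by independent Bernoullis and bounding the increment by $M$ times a sum of independent indicators.
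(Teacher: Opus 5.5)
Your filtration and per-step bound are correct, and they are the ingredients the paper uses. Given $\mathcal{F}_{t-1}$, the new frontier $F_t$ is an independent random set. Moreover $Y_t:=\E[X\mid\mathcal{F}_t]=|\Active^{(t-1)}(S)|+g_t(F_t)$, where $g_t=I^{(\tau-t)}_{\mathcal{G}\setminus\Active^{(t-1)}(S)}$ is monotone submodular with singleton values at most $M$ (Lemmas~\ref{lem:model-reduction} and~\ref{lem:model-reduction-singleton}). The self-bounding MGF inequality you invoke is exactly the bound the paper imports from Vondr\'ak (Theorem~\ref{thm:submodular-mgf}). The gap is in the step you treat as routine, the aggregation over $t$. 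Your proxy $\mu_t=\E[g_t(F_t)\mid\mathcal{F}_{t-1}]=Y_{t-1}-|\Active^{(t-1)}(S)|$ is random: it is essentially the martingale itself, and you only know $\E[\mu_t]\le I^{(\tau)}(S)$. A Freedman-type inequality bounds $\Pr[\sum_t D_t\ge c\text{ and }\sum_t M\mu_t\le v]$. To remove the second event you need $\sum_t\mu_t\lesssim\tau I^{(\tau)}(S)$ outside an event of probability $\delta/|\mathcal{F}|$. Markov's inequality only gives this with a factor $|\mathcal{F}|/\delta$ in place of $\log(|\mathcal{F}|/\delta)$, which destroys the union bound over $n^{k}$ sets. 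An exponential bound on $\max_t Y_t$ is the very upper-tail statement you are trying to prove. So the argument is circular exactly where $\tau$ is supposed to replace $n$.

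The repair is to feed the random proxy back into the exponent instead of separating mean and variance; this is the paper's proof. Vondr\'ak's bound together with $\mu\le\frac{1}{M}(e^{\mu M}-1)$ gives, for every $\mu\in\Reals$,
\[
\E\bigl[e^{\mu Y_t}\mid\mathcal{F}_{t-1}\bigr]\le\exp\Bigl(\tfrac{1}{M}\bigl(e^{\mu M}-1\bigr)Y_{t-1}\Bigr).
\]
Iterating from $t=\tau$ down to $t=1$ yields $\E[e^{\lambda X}]\le\exp(A^{(\tau)}I^{(\tau)}(S))$ with $A^{(0)}=\lambda$ and $A^{(t)}=\frac{1}{M}(e^{MA^{(t-1)}}-1)$. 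This is Theorem~\ref{lem:mgf-bound}; the paper runs the same induction by peeling off the first step rather than the last. The real work is then showing that this super-exponential recursion stays tame. For $0\le\lambda\le\frac{1}{M(\tau+1)}$ one has $A^{(\tau)}\le\frac{\lambda}{1-\lambda M\tau}$, with an analogous bound for $-\frac{1}{2M}\le\lambda\le0$ (Lemma~\ref{lem:mgf-param}). This is where $\tau$ genuinely enters: the effective exponent is inflated by a factor $1+O(\lambda M)$ at each of the $\tau$ steps, which forces $\lambda\lesssim 1/(M\tau)$. The outcome is $\log\E[e^{\lambda(X-I^{(\tau)}(S))}]\le\frac{\lambda^{2}M\tau\,I^{(\tau)}(S)}{1-\lambda M\tau}$, the sub-gamma bound you wanted. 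It tensorizes over the $\ell$ samples and gives your displayed tail, hence the union bound.

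If you prefer to keep Freedman, stop the martingale the first time $Y_t$ exceeds $I^{(\tau)}(S)+c$. Before stopping, $\mu_t\le I^{(\tau)}(S)+c$ holds deterministically. However, that yields a tail bound rather than an MGF bound for a single sample. You would therefore have to run it on $\sum_i|\Reach_{G_i}^{(\tau)}(S)|$, with a filtration revealing step $t$ in all $\ell$ graphs simultaneously.
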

Compared to the sample complexity obtained via the Chernoff inequality,
our sample complexity replaces the dominant term $n=|V|$
by the number $\tau\leq n$ of propagation steps. This is beneficial
since in many applications $\tau$ is very small compared to $n$,
e.g., several applications exhibit small-world phenomena where the
restricted influence $I^{(\tau)}$ for a small $\tau$ sufficiently
approximates the unrestricted influence $I^{(n)}$ \citep{milgram06}.
Another example is in time-critical applications where $\tau$ is
explicitly set to a small value \citep{liu12,gomezrodriguez11,chen12}.

\paragraph{Novel concentration bound} Standard Chernoff bounds use $X\leq n$,
yielding $O(n)$ in sample complexity. We exploit the structure of IC:
the reachability set grows layer by layer over $\tau$ steps. We
decompose $X=|\Reach_{G}^{(\tau)}(S)|$ across propagation steps and
bound each layer's contribution, yielding $O(\tau)$ instead of $O(n)$.
Technically, we bound the moment generating function (all moments),
extending \citet{vondrak10}'s concentration for submodular functions.
Prior work \citep{sadeh20} only bounds variance, requiring the less
efficient median-of-means estimator. Our bound enables the submodular
sample average, allowing us to exploit sketching for fast running~time.

It is also instructive to compare our number of samples to the number
of samples of the reverse reachability approach used by the state
of the art algorithm of \citet{borgs14}. In the latter approach,
a single sample is generated by selecting a single vertex $v$ uniformly
at random and identifying the set of nodes that can reach $v$ in
a single live-edge graph $G$ sampled from the model. There are simple
examples showing that this sampling approach requires an $\Omega(n)$
number of samples even for $\tau=1$. Indeed, consider the case when
the graph is a matching; to estimate the ($1$-step) influence of
a single node in this instance, we need to sample its only neighbor
in the graph, which requires $\Omega(n)$ samples even
in expectation.

\paragraph{Our algorithm} Our Algorithm \ref{alg:independent-cascades}
for independent cascade models takes $\ell=O(\frac{\tau k\log(n/\delta)}{\epsilon^{2}})$
samples, and it uses our Algorithm \ref{alg:observed-cascades} for
observed cascades to approximately maximize the empirical influence
$\widehat{I^{(\tau)}}(S)=\frac{1}{\ell}\sum_{i=1}^{\ell}|\Reach_{G_{i}}^{(\tau)}(S)|$.
Using algorithms for sampling from the Geometric distribution \citep{BringmannF13},
we can construct the sampled subgraphs in expected time $O(m+\ell\overline{m})$
where $\overline{m}=\sum_{e}p_{e}$ is the expected number of edges
in a single sample (see Lemma \ref{lem:ic-sampling} in the supplement).
By combining Theorems \ref{thm:observed-cascades-main} and \ref{thm:ic-sample-complexity},
we obtain the following guarantees. 
\begin{thm}
\label{thm:ic-main} For any $\epsilon,\delta$ given as input, Algorithm
\ref{alg:independent-cascades} achieves an approximation of $1-\frac{1}{e}-\epsilon$
with probability $1-\delta$ using $\ell=O(\frac{\tau k}{\epsilon^{2}}\log(\frac{n}{\delta}))$
samples. The running time is $\widetilde{O}(m+Nb\ell\overline{m}\log b)$
for $N=\widetilde{O}(\frac{1}{\epsilon^{2}}\log(\frac{1}{\delta}))$
and $b=\widetilde{O}(\frac{1}{\epsilon^{2}}\log(\frac{\ell N}{\delta}))$.
\end{thm}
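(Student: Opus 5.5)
The proof combines Theorem~\ref{thm:ic-sample-complexity} with Theorem~\ref{thm:observed-cascades-main}, applied to the empirical instance, and then adds the cost of generating the live-edge samples.

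\textbf{Sampling and the event $\mathcal{E}$.} Take $\ell=O(\frac{\tau k}{\epsilon^{2}}\log(\frac{n}{\delta}))$ live-edge graphs $G_1,\dots,G_\ell$, running Theorem~\ref{thm:ic-sample-complexity} with accuracy $\epsilon/4$ and failure probability $\delta/2$. Let $\mathcal{E}$ be the event that
\[
|\widehat{I^{(\tau)}}(S)-I^{(\tau)}(S)|\le\tfrac{\epsilon}{4}\opt \quad\text{for all } S\in\mathcal{F}.
\]
Theorem~\ref{thm:ic-sample-complexity} gives $\Pr[\mathcal{E}]\ge 1-\delta/2$. The sampled graphs form an observed-cascades instance whose objective is exactly $\widehat{I^{(\tau)}}$. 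Write $\widehat{\opt}$ for its optimum over $\mathcal{F}$.

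\textbf{Running the observed-cascades algorithm.} Run Algorithm~\ref{alg:observed-cascades} on this instance with parameters $\epsilon/4$ and $\delta/2$. The randomness of this algorithm is independent of the samples. So, conditioned on any realization of the samples, Theorem~\ref{thm:observed-cascades-main} gives, with probability at least $1-\delta/2$, a set $S$ with $|S|\le k$ and
\[
\widehat{I^{(\tau)}}(S)\ge(1-\tfrac1e-\tfrac\epsilon4)\widehat{\opt}.
\]
A union bound over the two failure events leaves total failure probability at most $\delta$.

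\textbf{Transferring the guarantee.} On $\mathcal{E}$, the optimum $S^*$ is feasible, so
\[
\widehat{\opt}\ge\widehat{I^{(\tau)}}(S^*)\ge(1-\tfrac\epsilon4)\opt.
\]
Also on $\mathcal{E}$, $I^{(\tau)}(S)\ge\widehat{I^{(\tau)}}(S)-\frac\epsilon4\opt$. Chaining these,
\[
I^{(\tau)}(S)\ge(1-\tfrac1e-\tfrac\epsilon4)(1-\tfrac\epsilon4)\opt-\tfrac\epsilon4\opt\ge(1-\tfrac1e-\epsilon)\opt.
\]
The result does not rely on $\widehat{I^{(\tau)}}$ being submodular, but it is in fact a coverage function, so Algorithm~\ref{alg:observed-cascades} is being applied in its intended setting.

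\textbf{Running time.} Lemma~\ref{lem:ic-sampling}, via geometric skipping \citep{BringmannF13}, builds all $\ell$ graphs in expected time $O(m+\ell\overline{m})$. The total edge count $\mtot$ of the sampled instance has expectation $\ell\overline{m}$. Theorem~\ref{thm:observed-cascades-main} gives expected time $\widetilde{O}(Nb\,\mtot\log b)$ for the $\tau<n$ case, which also covers $\tau=n$. The main obstacle is that this bound is conditional on the realized $\mtot$, which is random. I would handle it by the tower rule: the expected running time conditional on the graphs is linear in $\mtot$, so taking expectations gives $\widetilde{O}(Nb\ell\overline{m}\log b)$. The dependence on $\ell$ inside $b$ contributes only a logarithmic factor, which $b$ already carries.

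The assumption $|V_i|\le|E_i|$ may fail, since a sampled graph can have isolated vertices. I would handle this by representing each $G_i$ only by its non-isolated vertices, which are the ones touched by sampled edges, and treating seeds isolated in $G_i$ as reaching only themselves (a $+1$ term). This keeps the per-graph cost at $O(|E_i|)$ plus the $O(m)$ setup.
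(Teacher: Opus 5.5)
Your proposal is correct and follows the paper's proof. Both combine, by a union bound, the uniform-approximation event from Theorem~\ref{thm:ic-sample-complexity} with the guarantee of Theorem~\ref{thm:observed-cascades-main} on the empirical instance, then run the same chain of inequalities; the algorithm as written uses $\epsilon/3$ rather than your $\epsilon/4$, and the chain closes there too. The running time is likewise the cost of Lemma~\ref{lem:ic-sampling} plus $\widetilde{O}(Nb\,\mtot\log b)$ with $\E[\mtot]=\ell\overline{m}$. Your explicit conditioning on the samples, the tower-rule step for the random $\mtot$, and the $+1$ treatment of isolated seeds (needed for the $|V_i|\le|E_i|$ assumption) are points the paper's proof passes over silently.
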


\paragraph{Comparison with prior work} State-of-the-art algorithms use
reverse influence sampling \citep{borgs14,tang14}, running in time
$O(\frac{1}{\epsilon^{2}}mk\log(n/\delta))$. Only \citet{guo22} improve
this asymptotically under certain conditions. Our algorithm is faster
by a factor of $\frac{m}{\widetilde{\Theta}(\tau\overline{m})}$ when
$m\gg\overline{m}\tau$. One common setting where this holds is when
edge probabilities scale with degrees and $\tau$ is~constant. We note
that our IC algorithm is not designed to compete with highly optimized
implementations of reverse influence sampling (RIS) such as
\citep{tang15,nguyen16,tang18} but rather offers improved asymptotic
guarantees in the regime where $\tau\ll n$.

%% file: experiments.tex
\section{\label{sec:experiments} Experimental evaluation}

\begin{figure}
\centering{}\includegraphics[width=0.6\linewidth]{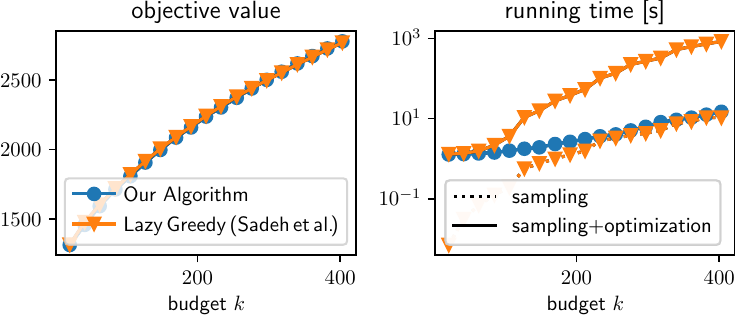}
\caption{\label{fig:lt-facebook-large-2} Influence maximization for a general
threshold model on Facebook for $\tau=5$. }

\bigskip

\centering{}\includegraphics[width=0.6\linewidth]{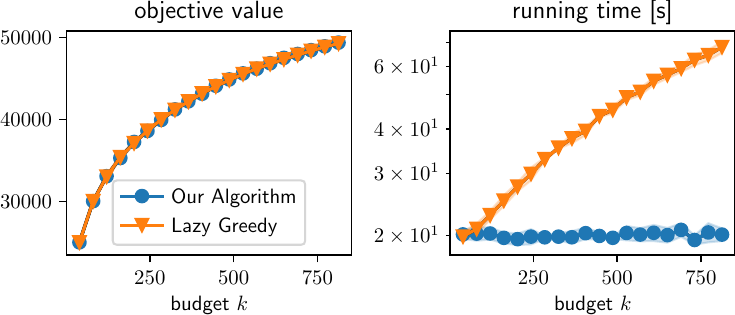}
\caption{\label{fig:lt-facebook-large-1-2} Influence maximization for $\tau=5$
steps on $\ell=100$ observed cascades from the Twitter network.}

\bigskip

\centering{}\includegraphics[width=0.6\linewidth]{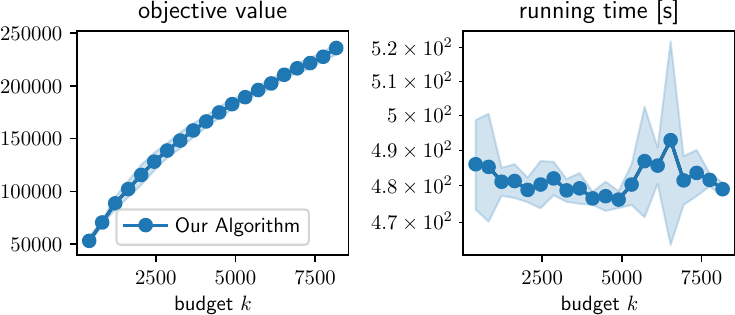}
\caption{\label{fig:lt-facebook-large-1-1-1} Influence maximization for $\tau=n$
steps on $\ell=100$ observed cascades from the Pokec network.}
\end{figure}

\begin{table}
\begin{centering}
\caption{\label{tab:datasets} Dataset Statistics}
\par\end{centering}
\centering{}
\small
\begin{tabular}{lrrr}
\hline 
\rule{0pt}{2.4ex}Dataset & $n$ & $m$ & $m_{\mathrm{tot}}$\\[0.5ex]
\hline 
\rule{0pt}{2.4ex}Facebook & $4\,039$ & $88\,234$ & $1\,529\,531\pm\phantom{1\,}757$\\
Twitter & $81\,306$ & $1\,768\,149$ & $35\,678\,239\pm1\,734$\\
Google-Plus & $107\,614$ & $13\,673\,453$ & $141\,708\,703\pm7\,858$\\
Pokec & $1\,632\,803$ & $30\,622\,564$ & $294\,689\,370\pm2\,357$\\[0.5ex]
\hline 
\end{tabular}
\end{table}

We evaluate on datasets in Table~\ref{tab:datasets} \citep{snapnets};
further details and results are in the appendix. Experiments run on
an AMD Ryzen 7900X3D with 94GB RAM; code in C++ with GCC O2. We repeat
5 times and report mean $\pm$ std. We provide our code in the supplementary~materials.

\subsection{Influence maximization in general stochastic diffusion models}

We use a general threshold model with linear weights $b_{(u,v)}=\frac{2}{\deg(v)}$,
which has no live-edge characterization (weights sum to $>1$). We compare our Algorithm~\ref{alg:general-model-empirical-variance}
with \citet{sadeh20} (Algorithm~\ref{alg:estimate-gains-variance}).
Results are in Figure~\ref{fig:lt-facebook-large-2} for $\tau=5$;
see appendix for $\tau\in\{10,n\}$. These are the largest networks
in the literature for general threshold models.

\paragraph{Discussion} Our Alg.~\ref{alg:general-model-empirical-variance}
suffers only a small decrease in objective value while taking substantially
fewer samples and running faster. For large budgets, our
algorithm uses 10x fewer samples and is 200x faster. 

\subsection{Influence maximization on observed cascades}

We obtain a single graph $G_{i}=\left(V,E_{i}\right)$ by including
$(u,v)\in E_{i}$ with probability $\frac{1}{\deg(u)}$
for each directed edge $(u,v)\in E$. To obtain a full instance for
the observed cascades model, we sample $\ell=100$ such graphs independently.
Table \ref{tab:datasets} shows the total number of edges $m_{\mathrm{tot}}$
in the instances we create. We use our Algorithm \ref{alg:observed-cascades}
and estimate the influence via Algorithm \ref{alg:estimate-gains-sketch-unrestricted}
for the unrestricted influence ($\tau=n$) and Algorithm \ref{alg:estimate-gains-sketch-restricted}
for the restricted influence ($\tau<n$). We compare our algorithm
with the state-of-the-art algorithm with theoretical guarantees which
is the greedy algorithm implemented with lazy evaluations; all other
algorithms with theoretical guarantees are for stochastic models \citep{tang15,guo22}.
Figures \ref{fig:lt-facebook-large-1-2} and \ref{fig:lt-facebook-large-1-1-1}
show results for the Twitter and Pokec networks and we provide further
results in the appendix. Lazy greedy does not scale
to Pokec and Google-Plus instances, so we exclude it there.

\paragraph{Discussion} Our algorithm vastly outperforms lazy greedy
in running time while achieving nearly the same
objective value, and easily scales to networks like Pokec with 100M
edges, requiring only a few minutes even for large~budgets.
Additional experimental results, including further datasets, parameter
settings, and diffusion step values, are provided in the appendix.

%% file: appendix.tex
\global\long\def\iter{(r,t)}%

\global\long\def\Tinner{T_{\mathrm{inner}}}%

\global\long\def\Touter{T_{\mathrm{outer}}}%

\section*{Outline of the appendix}

Our appendix is structured as follows:
\begin{itemize}
\item Appendix \ref{sec:appendix-concentration-bounds} contains concentration
bounds that we use throughout our proofs in this paper.
\item Appendix \ref{sec:appendix-greedy} contains the complete pseudocode
and proof of our low-adaptivity Greedy algorithm.
\item Appendix \ref{sec:appendix-general-models} contains the complete
pseudocode and proofs for our estimation routines in a general stochastic
model, and pseudocode and proofs for our algorithms for influence
maximization in this model.
\item Appendix \ref{sec:appendix-oc} contains the complete pseudocode and
proofs for our estimation routines in the observed cascades model,
and the pseudocode and proof for our algorithm for influence maximization
in the observed cascades model.
\item Appendix \ref{sec:concentration} contains the proof for our concentration
bound in the independent cascades model.
\item Appendix \ref{sec:analysis-live-edge} contains pseudocode and the
proof for our algorithm for influence maximization in the independent
cascades model.
\item Appendix \ref{sec:appendix-experiments} contains further experimental
details and results that we deferred from the main body.
\end{itemize}

\begin{table}[h]
\centering
\caption{Summary of notation.}
\label{tab:notation}
\begin{tabular}{ll}
\toprule
Symbol & Description \\
\midrule
$n = |V|$ & Number of nodes \\
$m = |E|$ & Number of edges \\
$k$ & Budget (maximum seed set size) \\
$\tau$ & Number of diffusion steps \\
$S$ & Seed set \\
$\mtot$ & Total edges across observed cascades: $\sum_i |E_i|$ \\[4pt]
\midrule
\rule{0pt}{2.4ex}$I^{(\tau)}(S)$ & $\tau$-step influence: $\E[|\Active^{(\tau)}(S)|]$ \\
$\Reach_G^{(\tau)}(S)$ & Nodes reachable from $S$ in $G$ within $\tau$ steps \\
$M$ & Maximum singleton influence: $\max_{v \in V} I^{(\tau)}(v)$ \\
$\widehat{f}$ & Sample of the influence function $I^{(\tau)}$ \\
$\ell$ & Number of samples or observed cascades \\
$L$ & Number of samples per call to $\mathrm{EstimateGains}$ \\
$N$ & Number of calls to $\mathrm{EstimateGains}$ \\[4pt]
\midrule
\rule{0pt}{2.4ex}$\epsilon$ & Approximation error parameter \\
$\delta$ & Failure probability parameter \\
$\epsest$ & Multiplicative estimation error \\
$\cest$ & Additive estimation error \\
$\prest$ & Estimation failure probability \\
\bottomrule
\end{tabular}
\end{table}

\section{\label{sec:appendix-concentration-bounds} Concentration bounds}

We now provide theorem statements for concentration bounds that we
use throughout the paper.
\begin{thm}
\label{thm:empirical-bernstein} (Empirical Bernstein bound: Theorem
4 in \citet{maurer09}) Let $X_{1},X_{2},\dots,X_{\ell}$ be independent
and identically distributed random variables satisfying $\left|X_{i}\right|\leq M$
for all $i\in\left\{ 1,2,\dots,\ell\right\} $. Let $\widetilde{V}=\frac{1}{\ell\left(\ell-1\right)}\sum_{1\leq i<j\leq\ell}\left(X_{i}-X_{j}\right)^{2}$
be the empirical variance of the samples $X_{1},\dots,X_{\ell}$.
Then, with probability $1-\delta$, we have 
\[
\left|\frac{1}{\ell}\sum_{i=1}^{\ell}X_{i}-\E\left[\frac{1}{\ell}\sum_{i=1}^{\ell}X_{i}\right]\right|\leq\sqrt{\frac{2\widetilde{V}\ln\left(4/\delta\right)}{\ell}}+\frac{7M\ln\left(4/\delta\right)}{3\left(\ell-1\right)}
\]
\end{thm}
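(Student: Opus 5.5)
The plan is to derive the statement from the one-sided empirical Bernstein inequality of \citet{maurer09}, plus a symmetrization step and a change of scale. We treat the $[0,1]$ version (their Theorem 4) as given. If $Z_{1},\dots,Z_{\ell}$ are i.i.d.\ with values in $[0,1]$ and $\widetilde{V}_{Z}$ is their empirical variance, then with probability at least $1-\delta'$,
\[
\E\Big[\tfrac{1}{\ell}{\textstyle\sum_{i}}Z_{i}\Big]-\tfrac{1}{\ell}{\textstyle\sum_{i}}Z_{i}\leq\sqrt{\tfrac{2\widetilde{V}_{Z}\ln(2/\delta')}{\ell}}+\tfrac{7\ln(2/\delta')}{3(\ell-1)}.
\]
Applying this to $Z_{i}$ and to $1-Z_{i}$ (which has the same empirical variance) with $\delta'=\delta/2$ each, and taking a union bound, gives the two-sided statement with $\ln(4/\delta)$ in both terms. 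This is exactly the form stated.

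Second, transfer to general $X_{i}$ by an affine map. The map $X\mapsto(X-a)/R$ sends the support into $[0,1]$, where $R$ is the length of an interval $[a,a+R]$ containing all $X_{i}$. The empirical variance is a quadratic form in the differences $X_{i}-X_{j}$, so it is translation invariant and scales by $R^{-2}$. The square-root term is therefore scale-equivariant and returns exactly $\sqrt{2\widetilde{V}\ln(4/\delta)/\ell}$ in the original units. The lower-order term becomes $\frac{7R\ln(4/\delta)}{3(\ell-1)}$.

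The main obstacle is matching the constant $7M$ under the hypothesis $|X_{i}|\leq M$ exactly as worded. That hypothesis only gives $R=2M$, and blind rescaling yields $14M/3$, which is the weakness noted earlier. So I would not rescale blindly. Instead I would redo the two ingredients of the Maurer--Pontil argument with the specific constants, and check where the range enters. Those ingredients are Bernstein's inequality with the true variance $\sigma^{2}$, and concentration of $\sqrt{\widetilde{V}}$ around $\sigma$ via the self-bounding property of the empirical variance. Bernstein contributes a term $b\ln(1/\delta)/(3\ell)$, where $b$ bounds the one-sided deviation $X-\E X$. The variance step contributes $2\,\mathrm{range}\cdot\ln(1/\delta)/(\ell-1)$ after the product $\sqrt{2\ln/\ell}\cdot\mathrm{range}\sqrt{2\ln/(\ell-1)}$ is bounded by $2\,\mathrm{range}\ln/(\ell-1)$.

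I would first try to show that both quantities can be taken to be of order $M$ rather than $2M$. For the one-sided Bernstein term, use $X-\E X\leq M-\E X$ for the upper tail and its mirror for the lower tail. For the variance step, bound the variance of $(X_{i}-X_{j})^{2}/2$ in terms of $M^{2}\sigma^{2}$ rather than the range squared. If this tightening does not close the factor of $2$, my fallback is to observe that in every application in the paper the samples $X_{i}=\widehat{f}(T\vert S)$ are marginal gains lying in an interval of length at most $n$. In that case one states and uses the theorem with $M$ read as the length of the support interval, which gives the constant $7$ exactly by the rescaling above. Since the downstream analysis only uses the bound up to constants, this suffices.
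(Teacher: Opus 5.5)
Your reduction is correct, and it is what the paper's bare citation amounts to. Theorem~4 of \citet{maurer09} is the one-sided bound for $[0,1]$-valued variables with $\ln(2/\delta)$. Applying it to $Z_i$ and $1-Z_i$ at level $\delta/2$, and then rescaling affinely, gives the two-sided bound with lower-order term $\frac{7R\ln(4/\delta)}{3(\ell-1)}$ for variables supported in an interval of length $R$. The factor-$2$ discrepancy you found is real. It is a defect of the statement as transcribed in the paper, not of your argument.

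The step that would fail is your attempted tightening, so drop it. Both $\frac{1}{\ell}\sum_{i}X_{i}-\E[\frac{1}{\ell}\sum_{i}X_{i}]$ and $\widetilde{V}$ are invariant under shifting all $X_i$ by a constant. Hence the statement with $|X_i|\le M$ and constant $7M/3$ is equivalent to the statement for arbitrary variables of range $R$ with constant $7R/6$, which is half of what Maurer and Pontil prove. No local modification of their two-step proof achieves this:
\begin{itemize}
\item The sample-variance step alone contributes $2R\ln(4/\delta)/(\ell-1)=4M\ln(4/\delta)/(\ell-1)$, which already exceeds $7M\ln(4/\delta)/(3(\ell-1))$.
\item By translation invariance of $\widetilde{V}$, the hypothesis $|X_i|\le M$ gives no information beyond range $2M$ in that step. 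Your bound $\Var[(X_i-X_j)^2/2]\le 2M^2\sigma^2$ is exactly $R^2\sigma^2/2$ with $R=2M$.
\item In the Bernstein term, $M-\E X$ can be arbitrarily close to $2M$.
\end{itemize}
So the right resolution is your fallback: state the theorem for $X_i$ in an interval of length $M$ (e.g.\ $X_i\in[0,M]$), or keep $|X_i|\le M$ and use $14M/3$.

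One caveat on your justification. The downstream use is not only ``up to constants'', because the constant is hard-coded in the stopping rule $C(T)$ of Algorithm~\ref{alg:estimate-gains-empirical-variance}. Using $n$ there is valid when each sample is monotone, so that $\widehat{f}(T\vert S)\in[0,n]$; this holds for direct simulation of a general threshold model with monotone activation functions. For an arbitrary black-box sample one only has $|\widehat{f}(T\vert S)|\le n$. In that case you must replace $n$ by $2n$ in $C(T)$, which changes $L^{*}$ and $L_{\max}$ only by constant factors.
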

\begin{thm}
\label{thm:chernoff-lower} (Chernoff Bound Lower Tail ) Let $X_{1},X_{2},\dots,X_{\ell}$
be independent and identically distributed random variables satisfying
$\left|X_{i}\right|\leq M$ for all $i\in\left\{ 1,2,\dots,\ell\right\} $.
Let $\mu=\E\left[\sum_{i=1}^{\ell}X_{i}\right]$. Then, for any $0<\delta<1$,
\[
\Pr\left[\sum_{i=1}^{\ell}X_{i}\le\left(1-\delta\right)\mu\right]\le\exp\left(-\frac{\mu\delta^{2}}{2M}\right).
\]
\end{thm}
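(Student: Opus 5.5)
This is the standard multiplicative Chernoff lower tail, so I would prove it by the exponential-moment (Cramér--Chernoff) method. Before that I would settle the range of the variables. As written, the statement only assumes $|X_i|\le M$. In the regime where the bound is used the variables are nonnegative, for example influence values, and I will assume $0\le X_i\le M$. For variables of arbitrary sign the inequality can fail: take $X_i=\pm M$ with a slight positive bias, so that $\mu$ is tiny relative to $M$. So this nonnegativity hypothesis is the one place where I would be adding to the stated assumptions, and I would state it explicitly. Finally, rescaling $Y_i=X_i/M\in[0,1]$ reduces everything to the case $M=1$: the event becomes $\sum_i Y_i\le(1-\delta)\mu/M$, and $\mu/M$ takes the place of $\mu$ in the exponent.

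For the main step, fix $\lambda>0$ and apply Markov's inequality to $e^{-\lambda\sum_i Y_i}$:
\[
\Pr\Big[\sum_i Y_i\le(1-\delta)\mu'\Big]\le e^{\lambda(1-\delta)\mu'}\prod_{i}\E\big[e^{-\lambda Y_i}\big],
\]
where $\mu'=\mu/M$. Independence gives the product. Convexity of $y\mapsto e^{-\lambda y}$ on $[0,1]$ gives $e^{-\lambda y}\le 1-y(1-e^{-\lambda})$. Together with $1+x\le e^{x}$ this yields
\[
\E\big[e^{-\lambda Y_i}\big]\le\exp\big(-(1-e^{-\lambda})\E[Y_i]\big),
\]
so the product is at most $\exp(-(1-e^{-\lambda})\mu')$. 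I would then choose $\lambda=-\ln(1-\delta)$, which is valid for $0<\delta<1$. This produces the familiar bound
\[
\Big(\frac{e^{-\delta}}{(1-\delta)^{1-\delta}}\Big)^{\mu'}.
\]

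The final step, which is the only real obstacle and is elementary, is to show that $-\delta-(1-\delta)\ln(1-\delta)\le-\delta^2/2$ on $(0,1)$. I would do this by a Taylor expansion: $(1-\delta)\ln(1-\delta)=-\delta+\sum_{j\ge2}\frac{\delta^j}{j(j-1)}\ge-\delta+\delta^2/2$. Alternatively, one can note that $g(\delta)=\delta+(1-\delta)\ln(1-\delta)-\delta^2/2$ satisfies $g(0)=0$ and $g'(\delta)=-\ln(1-\delta)-\delta\ge0$. Either way the bound simplifies to $\exp(-\mu'\delta^2/2)=\exp(-\mu\delta^2/(2M))$, which is the claim. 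The identically-distributed assumption is not needed; independence and the range bound suffice.
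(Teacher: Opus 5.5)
Your proof is correct. It is the standard Cram\'er--Chernoff derivation: Markov's inequality applied to $e^{-\lambda\sum_i Y_i}$, the convexity bound $e^{-\lambda y}\le 1-y(1-e^{-\lambda})$ on $[0,1]$, the choice $\lambda=-\ln(1-\delta)$, and the elementary inequality $\delta+(1-\delta)\ln(1-\delta)\ge\delta^2/2$. Every step checks out. The paper gives no proof of this statement; it quotes it as a standard tool in its appendix of concentration bounds, so there is no competing argument to compare against.

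Strengthening the hypothesis to $0\le X_i\le M$ is justified. With only $|X_i|\le M$ the inequality is false. Nothing is lost for the paper, because every application is to $\{0,1\}$-valued indicators: the $Y_i$ in the median-of-means lemma, the $Z_v$ in the min-hash lemma, and the Bernoulli $Y_t$ in the failure-probability argument.

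Your counterexample sketch needs one correction. If $\mu$ is tiny relative to $M$, the right-hand side is close to $1$ and no violation occurs. What you need is $M\ll\mu\ll M\sqrt{\ell}$. For instance, take $\Pr[X_i=M]=(1+\eta)/2$ and $\Pr[X_i=-M]=(1-\eta)/2$ with $\eta=\ell^{-3/4}$. Then $\mu/M=\ell^{1/4}$, so for $\delta=1/2$ the right-hand side $\exp(-\ell^{1/4}/8)$ tends to $0$. Meanwhile, by the central limit theorem, $\Pr[\sum_i X_i\le\mu/2]$ tends to $1/2$, since $\mu/2$ lies only $o(M\sqrt{\ell})$ below the mean.
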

\begin{thm}
\label{thm:chernoff-upper} (Chernoff Bound Upper Tail) Let $X_{1},X_{2},\dots,X_{\ell}$
be independent and identically distributed random variables satisfying
$\left|X_{i}\right|\leq M$ for all $i\in\left\{ 1,2,\dots,\ell\right\} $.
Let $\mu=\E\left[\sum_{i=1}^{\ell}X_{i}\right]$. Then, for any $\delta>0$,
\[
\Pr\left[\sum_{i=1}^{\ell}X_{i}\ge\left(1+\delta\right)\mu\right]\le\exp\left(-\frac{\mu\delta^{2}}{M\left(2+\delta\right)}\right).
\]
\end{thm}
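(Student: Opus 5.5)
We will prove the bound by the standard moment generating function (Cramér--Chernoff) argument, after rescaling to $[0,1]$. One caveat first: as written, the hypothesis $|X_i|\le M$ allows negative values, and then $\mu$ can be $0$ or negative and the bound is false (for instance, a symmetric $\pm M$ variable has $\mu=0$, where the right-hand side equals $1$ while the event has probability about $1/2$; with a negative mean the event $\sum_i X_i\ge(1+\delta)\mu$ has probability close to $1$ while the right-hand side exceeds $1$). So we will prove it under the assumption $0\le X_i\le M$, which is how it is used in the paper (reachability sizes and marginal influences are nonnegative). Set $Y_i=X_i/M\in[0,1]$, $p=\E[Y_i]$, and $\mu'=\mu/M=\ell p$. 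The claim becomes $\Pr[\sum_i Y_i\ge(1+\delta)\mu']\le\exp(-\mu'\delta^2/(2+\delta))$.

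\textbf{MGF bound.} For $\lambda>0$, convexity of $y\mapsto e^{\lambda y}$ on $[0,1]$ gives $e^{\lambda y}\le 1+(e^\lambda-1)y$. Taking expectations,
\[
\E[e^{\lambda Y_i}]\le 1+(e^\lambda-1)p\le\exp\bigl((e^\lambda-1)p\bigr).
\]
By independence, $\E[e^{\lambda\sum_i Y_i}]\le\exp\bigl((e^\lambda-1)\mu'\bigr)$. Markov's inequality applied to $e^{\lambda\sum_i Y_i}$ then yields
\[
\Pr\Bigl[\sum_i Y_i\ge(1+\delta)\mu'\Bigr]\le\exp\bigl((e^\lambda-1)\mu'-\lambda(1+\delta)\mu'\bigr).
\]
Choosing $\lambda=\ln(1+\delta)>0$ gives the classical bound $\bigl(e^{\delta}/(1+\delta)^{1+\delta}\bigr)^{\mu'}$.

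\textbf{Elementary inequality.} It remains to show that $(1+\delta)\ln(1+\delta)-\delta\ge\delta^2/(2+\delta)$ for all $\delta>0$. This is equivalent to $\ln(1+\delta)\ge 2\delta/(2+\delta)$. Let $h(\delta)=\ln(1+\delta)-2\delta/(2+\delta)$. Then $h(0)=0$ and
\[
h'(\delta)=\frac{1}{1+\delta}-\frac{4}{(2+\delta)^2}=\frac{\delta^2}{(1+\delta)(2+\delta)^2}\ge0,
\]
so $h\ge0$ on $[0,\infty)$. Substituting back, $(1+\delta)\ln(1+\delta)-\delta\ge\frac{2\delta(1+\delta)}{2+\delta}-\delta=\frac{\delta^2}{2+\delta}$. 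Hence the tail is at most $\exp(-\mu'\delta^2/(2+\delta))=\exp(-\mu\delta^2/(M(2+\delta)))$.

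The identically-distributed assumption is not needed; independence suffices. The only genuinely delicate points are the sign restriction discussed at the start, and checking that the derivative of $h$ factors cleanly as above, which it does.
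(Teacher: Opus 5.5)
Your proof is correct. The rescaling to $[0,1]$, the chord bound $e^{\lambda y}\le 1+(e^{\lambda}-1)y$, independence, Markov with $\lambda=\ln(1+\delta)$, and $\ln(1+\delta)\ge 2\delta/(2+\delta)$ together give exactly the stated bound, and your derivative computation for $h$ checks out. The paper gives no proof of this statement; it lists it as a standard concentration inequality. So your Cramér--Chernoff derivation is the textbook argument the paper implicitly relies on. Restricting to $0\le X_i\le M$ also matches the only place the upper tail is invoked, the indicators $Z_v$ in Lemma~\ref{lem:min-hash}.

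Your caveat needs one correction: neither of your examples is a counterexample. When $\mu\le 0$ the right-hand side $\exp(-\mu\delta^2/(M(2+\delta)))$ is at least $1$, so the inequality holds trivially. A probability of $1/2$ against a bound of $1$, or a probability near $1$ against a bound above $1$, violates nothing.

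The sign restriction is still genuinely necessary, but the failure happens with positive mean. Take $\ell=1$, let $X_1=M$ with probability $(1+p)/2$ and $X_1=-M$ with probability $(1-p)/2$, so $\mu=pM$, and set $\delta=1/p-1$. Then $(1+\delta)\mu=M$, so the event is $X_1=M$, with probability $(1+p)/2>1/2$. The claimed bound is $\exp\bigl(-(1-p)^2/(1+p)\bigr)$. This is about $0.48$ at $p=0.1$, below the probability $0.55$, and it tends to $e^{-1}$ as $p\to 0$.
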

\begin{thm}
\label{thm:submodular-mgf} (Theorem 3.1 in \citet{vondrak10}) Let
$f\colon2^{[n]}\to\Reals_{\ge0}$ be a monotone submodular function
with maximum marginal gain $\max_{v\in V}\left\{ f(v)-f(\emptyset)\right\} \leq1$.
For any $\lambda\in\Reals$ and any $p\in\left[0,1\right]^{n}$, we
have
\begin{align*}
\E_{S\sim p}\left[\exp\left(\lambda f(S)\right)\right] & \leq\exp\left(\left(e^{\lambda}-1\right)\E_{S\sim p}\left[f\left(S\right)\right]\right)
\end{align*}
\end{thm}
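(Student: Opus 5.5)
The plan is to show that $Z=f(S)$, with $S\sim p$ a product measure on $\{0,1\}^n$, is a \emph{self-bounding function} of the independent coordinates $X_1,\dots,X_n$ (where $X_i=\mathbf{1}[i\in S]$). I would then run the entropy method (modified logarithmic Sobolev inequality plus a Herbst-type argument), as in Boucheron, Lugosi and Massart, to obtain exactly $\log\E[e^{\lambda Z}]\leq(e^{\lambda}-1)\E[Z]$ for all $\lambda\in\Reals$.

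\emph{Step 1 (self-bounding property).} For each $i$ let $Z_i=f(S\setminus\{i\})$. This is a function of $X_1,\dots,X_{i-1},X_{i+1},\dots,X_n$ only, and by monotonicity it equals the infimum of $f$ over the value of $X_i$. I would check two conditions.
\begin{itemize}
\item $0\leq Z-Z_i\leq 1$. The lower bound is monotonicity. For the upper bound, if $i\in S$ then submodularity gives $Z-Z_i=f(\{i\}\mid S\setminus\{i\})\leq f(\{i\}\mid\emptyset)\leq 1$; if $i\notin S$ the difference is $0$.
\item $\sum_i(Z-Z_i)\leq Z$. Order $S=\{s_1,\dots,s_j\}$. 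Submodularity gives $f(\{s_r\}\mid S\setminus\{s_r\})\leq f(\{s_r\}\mid\{s_1,\dots,s_{r-1}\})$. Summing over $r$ and telescoping yields $\sum_i(Z-Z_i)\leq f(S)-f(\emptyset)\leq f(S)$, using $f\geq0$.
\end{itemize}

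\emph{Step 2 (entropy inequality).} Write $\phi(x)=e^{x}-x-1$. For functions of independent variables, the modified log-Sobolev inequality states
\[
\lambda\E[Ze^{\lambda Z}]-\E[e^{\lambda Z}]\log\E[e^{\lambda Z}]\leq\sum_{i=1}^{n}\E\big[e^{\lambda Z}\phi(-\lambda(Z-Z_i))\big].
\]
This follows from tensorization of entropy together with the variational bound $\mathrm{Ent}(e^{\lambda Z})\leq\E[e^{\lambda Z}\phi(-\lambda(Z-Z_i))]$ for a single coordinate. Since $x\mapsto\phi(-\lambda x)$ is convex with $\phi(0)=0$, for $x\in[0,1]$ we have $\phi(-\lambda x)\leq x\,\phi(-\lambda)$. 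Combining this with Step 1, the right-hand side is at most $\phi(-\lambda)\E[Ze^{\lambda Z}]$.

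\emph{Step 3 (differential inequality).} Set $G(\lambda)=\log\E[e^{\lambda Z}]$. Then $G'(\lambda)=\E[Ze^{\lambda Z}]/\E[e^{\lambda Z}]$, and dividing Step 2 by $\E[e^{\lambda Z}]$ gives
\[
(\lambda-\phi(-\lambda))G'(\lambda)-G(\lambda)\leq 0,\qquad\text{i.e.}\qquad (1-e^{-\lambda})G'(\lambda)\leq G(\lambda).
\]
I expect this step to be the main obstacle, since it needs care with signs when $\lambda<0$. The approach I would take is to compare with $G_0(\lambda)=(e^{\lambda}-1)\E[Z]$. This function satisfies the ODE with equality, because $(1-e^{-\lambda})e^{\lambda}=e^{\lambda}-1$. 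It also matches $G$ at the origin: $G(0)=G_0(0)=0$ and $G'(0)=G_0'(0)=\E[Z]$.

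Concretely, on each half-line $\lambda>0$ and $\lambda<0$, I would consider $H(\lambda)=G(\lambda)/(e^{\lambda}-1)$. Differentiating gives
\[
H'(\lambda)=\frac{(e^{\lambda}-1)G'(\lambda)-e^{\lambda}G(\lambda)}{(e^{\lambda}-1)^{2}}=\frac{e^{\lambda}\big[(1-e^{-\lambda})G'(\lambda)-G(\lambda)\big]}{(e^{\lambda}-1)^{2}}\leq 0,
\]
so $H$ is non-increasing on both half-lines. Its limit as $\lambda\to0$ is $G'(0)=\E[Z]$ by l'Hôpital. For $\lambda>0$ this gives $H(\lambda)\leq\E[Z]$, and multiplying by $e^{\lambda}-1>0$ yields $G(\lambda)\leq(e^{\lambda}-1)\E[Z]$. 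For $\lambda<0$ it gives $H(\lambda)\geq\E[Z]$, and multiplying by $e^{\lambda}-1<0$ again yields $G(\lambda)\leq(e^{\lambda}-1)\E[Z]$. Exponentiating proves the claimed bound.
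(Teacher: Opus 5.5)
The paper gives no proof of this statement (it is imported from \citet{vondrak10}), and your argument is exactly the route behind that result: monotonicity and submodularity make $Z=f(S)$ self-bounding with $Z_i=f(S\setminus\{i\})$, and the Boucheron--Lugosi--Massart entropy-method bound $\log\E[e^{\lambda Z}]\le(e^{\lambda}-1)\E[Z]$ for self-bounding functions then gives the claim. Your re-derivation of that bound is correct: the modified log-Sobolev inequality, $\phi(-\lambda x)\le x\,\phi(-\lambda)$ on $[0,1]$, and the monotonicity of $G(\lambda)/(e^{\lambda}-1)$ on each half-line all check out. The only slip is that the single-coordinate entropy bound must be conditional on the other coordinates, $\mathrm{Ent}^{(i)}(e^{\lambda Z})\le\E^{(i)}[e^{\lambda Z}\phi(-\lambda(Z-Z_i))]$, which is the form tensorization uses.
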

\begin{thm}
\label{thm:chen} \textup{(Lemma~5, \citet{chen21})} Suppose there
is a sequence of $n$ Bernoulli trials $X_{1},\dots,X_{n}$
where the success probability of $X_{i}$ depends on the results
of the preceding trials $X_{1},\dots,X_{i-1}$. Suppose it holds that 
\[
\Pr\left[X_{i}=1\mid X_{1}=x_{1},X_{2}=x_{2},\dots,X_{i-1}=x_{i-1}\right]\ge\eta,
\]
where $\eta>0$ is a constant and $x_{1},\dots,x_{i-1}$ are arbitrary.
Then, if $Y_{1},\dots,Y_{n}$ are independent Bernoulli trials, each
with probability $\eta$ of success, then
\[
\Pr\left[\sum_{i=1}^{n}X_{i}\le b\right]\le\Pr\left[\sum_{i=1}^{n}Y_{i}\le b\right]
\]
where $b$ is an arbitrary integer. 
\end{thm}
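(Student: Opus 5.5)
The plan is a coupling argument: build the dependent trials $X_1,\dots,X_n$ and the independent trials $Y_1,\dots,Y_n$ on one probability space so that $X_i\ge Y_i$ holds pointwise for every $i$. Then $\sum_i X_i\ge\sum_i Y_i$ everywhere, so $\{\sum_i X_i\le b\}\subseteq\{\sum_i Y_i\le b\}$, and the inequality of probabilities follows for every integer $b$. Since the statement only involves the law of $(X_1,\dots,X_n)$ and the law of $(Y_1,\dots,Y_n)$ separately, we may replace the original trials by any sequence with the same joint distribution.

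Concretely, for each $i$ and each history $x_1,\dots,x_{i-1}\in\{0,1\}$, let
\[
p_i(x_1,\dots,x_{i-1}) := \Pr\left[X_i=1\mid X_1=x_1,\dots,X_{i-1}=x_{i-1}\right].
\]
For histories of probability zero we set $p_i:=\eta$; this is consistent with the hypothesis, which holds for arbitrary $x_1,\dots,x_{i-1}$. By hypothesis $p_i\ge\eta$ always. Take $U_1,\dots,U_n$ independent and uniform on $[0,1]$, and define recursively
\[
X_i' := \mathbf{1}\left[U_i\le p_i(X_1',\dots,X_{i-1}')\right], \qquad Y_i := \mathbf{1}\left[U_i\le\eta\right].
\]
The $Y_i$ are functions of distinct independent $U_i$, so they are independent Bernoulli$(\eta)$ trials. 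Since $\eta\le p_i$, we get $Y_i=1\Rightarrow X_i'=1$, that is, $X_i'\ge Y_i$ pointwise.

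The one step that needs care is checking that $(X_1',\dots,X_n')$ has the same joint law as $(X_1,\dots,X_n)$. I would do this by induction on $i$ using the chain rule. The variable $U_i$ is independent of $(U_1,\dots,U_{i-1})$, and hence of $(X_1',\dots,X_{i-1}')$, which is a function of those. Therefore
\[
\Pr\left[X_i'=1\mid X_1'=x_1,\dots,X_{i-1}'=x_{i-1}\right]=p_i(x_1,\dots,x_{i-1}).
\]
This matches the original conditional probabilities on every history of positive probability. Multiplying these conditionals gives identical probabilities for every $0/1$ string. Histories of probability zero get probability zero under both laws, so the arbitrary choice $p_i=\eta$ there is harmless.

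Combining the two parts,
\[
\Pr\left[\sum_i X_i\le b\right]=\Pr\left[\sum_i X_i'\le b\right]\le\Pr\left[\sum_i Y_i\le b\right],
\]
which is the claim.
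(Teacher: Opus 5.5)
The paper gives no proof of this statement. It quotes it as Lemma~5 of \citet{chen21} among its concentration tools and uses it as a black box in the proof of Lemma~\ref{lem:optimization-failure}. Your coupling proof is correct and complete:
\begin{itemize}
\item Driving both sequences with the same uniforms gives $X_i'\ge Y_i$ pointwise, because $\eta\le p_i(\cdot)$.
\item The chain-rule induction correctly identifies the law of $(X_1',\dots,X_n')$ with that of $(X_1,\dots,X_n)$, because $U_i$ is independent of $(X_1',\dots,X_{i-1}')$.
\item Your convention $p_i=\eta$ on null histories is harmless, since those histories get probability zero under both laws.
\end{itemize}

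The usual alternative is induction on $n$. Condition on $X_1$ and apply the induction hypothesis to the conditional sequence $X_2,\dots,X_n$, which satisfies the same hypothesis. Let $q(x)=\Pr[\sum_{i\ge2}Y_i\le b-x]$; it is nonincreasing in $x$. Then $p\,q(1)+(1-p)\,q(0)$ is nonincreasing in $p=\Pr[X_1=1]\ge\eta$. So it is at most its value at $p=\eta$, which is $\Pr[\sum_i Y_i\le b]$.

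That route is purely computational and needs no auxiliary probability space. Yours proves something stronger than the statement: a coordinatewise domination $X'\ge Y$. Hence $\mathbb{E}[\phi(X)]\ge\mathbb{E}[\phi(Y)]$ for every coordinatewise nondecreasing $\phi$, not just for indicators of $\{\sum_i x_i>b\}$. Neither argument actually needs $\eta>0$ or $b$ to be an integer.
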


\section{\label{sec:appendix-greedy} Our low-adaptivity Greedy algorithm}

\label{sec:low-adaptivity-greedy}

In this section, we showcase the complete pseudocode for our template
low-adaptivity Greedy algorithm, which we introduced in an idealized
way in the main body as Algorithm \ref{alg:template-idealized}. We
then prove Theorem \ref{thm:template-ideal}. 

\begin{algorithm}
\begin{raggedright}
\caption{Low-adaptive algorithm for maximizing the influence.}
 \label{alg:optimization}
\par\end{raggedright}
\begin{raggedright}
$\mathrm{LowAdaptiveGreedy}\left(\epsilon,\delta,\epsest,\cest,\prest,\optguess\right)$:
\par\end{raggedright}
\begin{raggedright}
\textbf{\textcolor{blue}{Input: }}\textcolor{blue}{Approximation error
$\epsilon>0$, failure probability for the optimization $\delta>0$,
estimation error $\cest>0$, failure probability for the estimation
$\prest$, and an estimate $\optguess$ to the optimum solution value
$\opt$}
\par\end{raggedright}
\begin{raggedright}
\textbf{\textcolor{blue}{Output:}}\textcolor{blue}{{} With probability
$1-N\prest-\delta$, the algorithm returns a solution $S$ with $\left|S\right|\le k$
and $I^{(\tau)}(S)\geq\left(1-\frac{1}{e}-\epsilon-\epsest\right)\opt-2k\cest$
using $N=\frac{2}{\epsilon}\Tinner$ calls to $\mathrm{EstimateGains}$}
\par\end{raggedright}
\begin{raggedright}
$S\gets\emptyset$
\par\end{raggedright}
\begin{raggedright}
\textbf{for }$r=1,\dots,\Touter=\frac{2}{\epsilon}$ \textbf{do}
\par\end{raggedright}
\begin{raggedright}
$\qquad$$\alpha\gets\frac{1}{k}\optguess\left(1-\epsilon\right)^{r}$
\par\end{raggedright}
\begin{raggedright}
$\qquad$$U\gets V$
\par\end{raggedright}
\begin{raggedright}
$\qquad$\textbf{for} $t=1,\dots,\Tinner=\frac{8}{\epsilon}\log\left(\frac{n}{\delta}\right)$
\textbf{do}
\par\end{raggedright}
\begin{raggedright}
$\qquad\qquad$$\left\{ X\left(\left\{ u\right\} \mid S\right)\colon u\in U\right\} \gets\mathrm{EstimateGains}\left(\left\{ \left\{ u\right\} \colon u\in U\right\} ,S,\epsest,\cest,\prest\right)$
\par\end{raggedright}
\begin{raggedright}
$\qquad\qquad$$U\gets\left\{ u\in U:X\left(\left\{ u\right\} \mid S\right)\ge\alpha\right\} $\textcolor{blue}{$\hfill$/$\negmedspace$/
filtering}
\par\end{raggedright}
\begin{raggedright}
$\qquad\qquad$\textbf{if} $U=\emptyset$ \textbf{then}
\par\end{raggedright}
\begin{raggedright}
\textbf{$\qquad\qquad\qquad$break }
\par\end{raggedright}
\begin{raggedright}
$\qquad\qquad$Let $v_{1},v_{2},\dots,v_{\left|U\right|}$ be a random
permutation of the elements in $U$
\par\end{raggedright}
\begin{raggedright}
$\qquad\qquad$$s\gets\min\left\{ k-\left|S\right|,\left|U\right|\right\} $
\par\end{raggedright}
\begin{raggedright}
$\qquad\qquad$Let $T_{i}=\left\{ v_{1},v_{2},\dots,v_{i}\right\} $
for all $1\le i\le s$\textcolor{blue}{$\hfill$/$\negmedspace$/
prefixes of random permutation}
\par\end{raggedright}
\begin{raggedright}
$\qquad\qquad$$\left\{ X\left(T_{i}\mid S\right)\colon1\le i\le s\right\} \gets\mathrm{EstimateGains}\left(\left\{ T_{i}:1\le i\le s\right\} ,S,\epsest,\cest,\prest\right)$
\par\end{raggedright}
\begin{raggedright}
$\qquad\qquad$$i^{*}\gets\arg\max\left\{ 1\le i\le s:X\left(T_{i}\mid S\right)\ge\left(1-\epsilon\right)\left(\left(1-\epsest\right)\alpha-\cest\right)i\right\} $
\par\end{raggedright}
\begin{raggedright}
$\qquad\qquad$$S\gets S\cup T_{i^{*}}$\textcolor{blue}{$\hfill$/$\negmedspace$/
adding elements to the solution}
\par\end{raggedright}
\begin{raggedright}
$\qquad\qquad$\textbf{if} $\left|S\right|=k$ \textbf{then}
\par\end{raggedright}
\begin{raggedright}
$\qquad\qquad\qquad$\textbf{return} $S$
\par\end{raggedright}
\begin{raggedright}
$\qquad$\textbf{if} $U\not=\emptyset$ \textbf{then}
\par\end{raggedright}
\begin{raggedright}
$\qquad\qquad$\textbf{return} Failure
\par\end{raggedright}
\raggedright{}\textbf{return $S$}
\end{algorithm}

\begin{thm}
(Theorem \ref{thm:template-ideal}) Suppose we run Algorithm \ref{alg:optimization}
with an estimation subroutine $\mathrm{EstimateGains}$ with multiplicative
error $\epsest$, additive error $\cest$, and failure probability
$\prest$. Algorithm \ref{alg:optimization} makes $N=\widetilde{O}\left(\frac{1}{\epsilon^{2}}\log\left(\frac{1}{\delta}\right)\right)$
calls to $\mathrm{EstimateGains}$. If $\frac{1}{2}\opt\le\optguess\le2\opt$,
the solution $S$ satisfies $I^{(\tau)}\left(S\right)\ge\left(1-\frac{1}{e}-\epsest-\epsilon\right)\opt-3k\cest$
with probability $1-N\prest-\delta$. 
\end{thm}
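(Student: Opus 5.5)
We condition throughout on the good event $\mathcal{E}$ that every one of the $N$ calls to $\mathrm{EstimateGains}$ returns estimates within the error guarantee of Definition~\ref{def:estimation-error}. By a union bound over the $N=\Touter\Tinner=\frac{2}{\epsilon}\cdot\frac{8}{\epsilon}\log(\frac{n}{\delta})=\widetilde{O}(\frac{1}{\epsilon^{2}}\log\frac{1}{\delta})$ calls, $\mathcal{E}$ fails with probability at most $N\prest$. The count of calls is immediate from the loop structure (two calls per inner iteration). Since every call uses fresh samples, conditioning on the history does not affect the guarantee of the next call, so the union bound is legitimate even though the queried sets are chosen adaptively.

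On $\mathcal{E}$, the proof has two parts.

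\textbf{(i) Every added batch has high density.} Write $f=I^{(\tau)}$. When $T_{i^*}$ is added, $X(T_{i^*}\mid S)\ge(1-\epsilon)((1-\epsest)\alpha-\cest)i^*$. Combined with $X\le(1+\epsest)f(T_{i^*}\mid S)+\cest$, this gives $f(T_{i^*}\mid S)\ge(1-O(\epsilon)-O(\epsest))\alpha\, i^*-O(i^*\cest)$. Summed over additions, the additive loss totals at most about $2k\cest$, since $\sum i^*\le k$.

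\textbf{(ii) When the threshold drops, every element has small marginal gain.} A round $r$ ends without failure only when $U=\emptyset$. Every $u\in V$ was filtered at some point, i.e.\ $X(u\mid S')<\alpha$ for an earlier $S'\subseteq S$. This gives $f(u\mid S')<(\alpha+\cest)/(1-\epsest)$, and by submodularity of $f$ (not of the samples) the same bound holds for $f(u\mid S)$ for all later $S$ in the round. Hence in round $r+1$, with $\alpha'=\alpha(1-\epsilon)$, every element has gain at most $(\alpha'/(1-\epsilon)+\cest)/(1-\epsest)$. For the first round, $\optguess\le2\opt$ and $\alpha_1\approx\optguess/k$ need $\alpha_1\ge\max_u f(u)$ up to a constant. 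I would handle this either by the standard fact that $\max_u f(u)\le\opt$ with a sufficiently large initial threshold, or by noting that elements above threshold are only beneficial.

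Combining (i) and (ii): at each addition, every $o\in S^*$ has $f(o\mid S)\le\alpha(1+O(\epsilon)+O(\epsest))+O(\cest)$, so $\opt-f(S)\le\sum_{o}f(o\mid S)\le k\alpha(1+\dots)+k\cest$. Each added element therefore contributes at least a $\frac{1-O(\epsilon+\epsest)}{k}$ fraction of $(\opt-f(S))$ minus $O(\cest)$. The standard greedy recursion then gives $(1-1/e-\epsest-O(\epsilon))\opt-3k\cest$ if $|S|=k$ at return. If instead the outer loop terminates with $|S|<k$, then $\alpha_{\mathrm{final}}=\frac{\optguess}{k}(1-\epsilon)^{2/\epsilon}\le\frac{2\opt}{k}e^{-2}$. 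Then (ii) gives $\opt-f(S)\le k\alpha_{\mathrm{final}}(1+\dots)+k\cest$, which is at most $\frac{2}{e^{2}}\opt+k\cest$. This is not $\le\opt/e$-good enough unless I use $\epsilon$ small or rescale the number of outer rounds by a $\log$ factor; I would adjust constants (treating $(1-\epsilon)^{T_{\mathrm{outer}}}$ as $\le\epsilon$ after rescaling $\epsilon$) so that this case also gives $f(S)\ge(1-\epsilon)\opt-k\cest$.

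\textbf{(iii) Failure probability $\le\delta$.} This is the main obstacle, following Chen et al. In each inner iteration with $U\neq\emptyset$, let $i^\dagger$ be the random prefix length. For the next filtering step, either $|S|$ reaches $k$ or an $\epsilon$-fraction of the current $U$ drops out. This needs the claim that, over the random permutation, each position $v_{i+1}$ beyond $i^*$ has conditional probability at least $\epsilon$ of being a "bad" element (true gain $<\alpha$ given the prefix). The estimation error complicates this because the acceptance test uses noisy values. I would define goodness via true gains with thresholds slackened by $\epsest,\cest$, exactly matching the $(1-\epsilon)((1-\epsest)\alpha-\cest)$ test. Using Theorem~\ref{thm:chen} to dominate the dependent trials by independent Bernoulli$(\epsilon)$ trials, $|U|$ shrinks by a factor $(1-\epsilon)$ with constant probability per iteration. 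A Chernoff bound over $\Tinner=\frac{8}{\epsilon}\log(n/\delta)$ iterations then shows $U$ empties with probability $\ge1-\delta/\Touter$, and a union bound over the outer rounds completes the argument.
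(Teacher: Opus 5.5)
Your outline follows the paper's proof: a union bound over the estimation calls, a per-batch density bound, a small-gain bound after each completed round, the greedy recursion with two cases, and Chen et al.'s filtering argument. However, step (ii) cannot be completed in the first round, and not for lack of an argument: under the stated hypothesis $\optguess\ge\frac12\opt$ the conclusion is false.

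\textbf{The first round.} What the recursion needs in round $1$ is the aggregate bound $\opt-f(S)\le k\alpha^{(0)}(1+O(\epsest))+O(k\cest)$ with $\alpha^{(0)}=\optguess/k$, which essentially means $\optguess\ge\opt$. A per-element bound like $\alpha_1\gtrsim\max_u f(u)$ is neither needed nor generally true, and constant-factor slack here destroys the $1-\frac1e$.

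Here is a counterexample with exact estimates ($\epsest=\cest=\prest=0$). Take an IC instance with all $p_e=1$: $k$ good nodes, each with $\opt/k-1$ private out-leaves, and $m\gg k^{2}$ junk nodes, each with $\opt/(2k)-1$ private out-leaves. Run with $\optguess=\opt/2$.
\begin{itemize}
\item In round $1$, $\alpha=(1-\epsilon)\frac{\opt}{2k}$. For $\opt/k$ large the leaves are filtered and all good and junk nodes survive.
\item Every prefix has density at least $\frac{\opt}{2k}\ge(1-\epsilon)\alpha$, so $i^{*}=k$ and the output is a uniformly random $k$-subset of these nodes.
\item With probability at least $1-k^{2}/m$ it contains no good node, so $f(S)=\frac12\opt<(1-\frac1e-\epsilon)\opt$ for $\epsilon<0.13$.
\end{itemize}

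So neither of your patches works for the algorithm as given. A larger initial threshold changes Algorithm~\ref{alg:optimization}. Elements above the threshold are not ``only beneficial'', because the prefix test never prefers them.

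The paper's proof has the same hole. In the $|S|=k$ case it invokes Claim~2 with $r'=1$ for a nonexistent round $0$, i.e.\ it asserts $f(v)<(1+\epsest)\optguess/k+\cest$ for all $v$. Under $\opt\le\optguess\le2\opt$ the first-round inequality is trivial ($\opt-f(S)\le\opt\le\optguess=k\alpha^{(0)}$), and your recursion goes through.

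\textbf{The case $|S|<k$.} Your worry here is unfounded. Since $2<e$, we have $\frac{2}{e^{2}}<\frac1e$, so
\[
\opt-f(S)\le\frac{2(1+\epsest)}{e^{2}}\opt+k\cest\le\frac{1+\epsest}{e}\opt+k\cest
\]
closes it directly, as in the paper. There is no need to change $T_{\mathrm{outer}}$.

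\textbf{The failure probability, step (iii).} Your key probabilistic claim is backwards, and Theorem~\ref{thm:chen} is used at the wrong level. Nothing forces positions after $i^{*}$ to be bad with probability at least $\epsilon$. The argument (Chen et al., as in the paper) runs as follows.
\begin{itemize}
\item Let $A_i=\{v\in U: f(v\mid S\cup T_i)<(1-\epsest)\alpha-\cest\}$. By submodularity $A_0\subseteq A_1\subseteq\cdots$. The filter plus estimate accuracy give $A_0=\emptyset$, and $A_{|U|}=U$.
\item Let $i'$ be the first index with $|A_{i'}|\ge\frac{\epsilon}{2}|U|$. If $i^{*}\ge i'$, all of $A_{i'}$ is filtered in the next iteration.
\item For $i\le i'$, $v_i$ is bad ($v_i\in A_{i-1}$) with conditional probability at most $|A_{i-1}|/|U|<\frac{\epsilon}{2}$. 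This is an \emph{upper} bound.
\item By Markov, with probability at least $\frac12$ fewer than $\epsilon i'$ of $v_1,\dots,v_{i'}$ are bad. Hence $f(T_{i'}\mid S)\ge(1-\epsilon)((1-\epsest)\alpha-\cest)i'$.
\item Theorem~\ref{thm:chen} is then applied across inner iterations, each of which shrinks $U$ by a factor $1-\frac{\epsilon}{2}$ with conditional probability at least $\frac12$. This is followed by Chernoff and a union bound over rounds.
\end{itemize}

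One caution remains. If the bad threshold matches the test threshold exactly, as you propose, there is no room to pass from the bound on the \emph{true} gain of $T_{i'}$ to $X(T_{i'}\mid S)\ge(1-\epsilon)((1-\epsest)\alpha-\cest)i'$. That estimated inequality is what $i^{*}\ge i'$ actually requires. You need slack in the Markov step to absorb the estimation error, which requires $\epsest$ and $\cest/\alpha$ to be small compared with $\epsilon$. The paper's Claim~1b glosses over this by writing $I$ where the algorithm tests $X$.
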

Let us denote with superscript $(r,t)$ the value of a variable during
the $r$-th iteration of the outer loop and at the end of the $t$-th
iteration of the inner loop. We denote with $\Tinner$ the last iteration
of the inner for loop (even though we may break break the loop earlier)
and let $(r,\Tinner)=(r+1,0)$. 
\begin{lem}
\label{lem:optimization-failure} Suppose we run Algorithm \ref{alg:optimization}
conditioned on the event that the estimation subroutine\linebreak
$\mathrm{EstimateGains}$ has multiplicative error $\epsest$ and
additive error $\cest$. Then, Algorithm \ref{alg:optimization}
declares failure with probability at most $\delta$. 
\end{lem}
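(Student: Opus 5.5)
We follow the analysis of \citet{chen21}: we show that in every outer round $r$, each inner iteration in which the filtered set is nonempty removes a constant fraction of the surviving candidates with probability at least $1/2$. We then apply Theorem~\ref{thm:chen} to the sequence of inner iterations and take a union bound over the $\Touter$ outer rounds. Throughout we condition on every call to $\mathrm{EstimateGains}$ satisfying the error bound of Definition~\ref{def:estimation-error}. Hence for any queried $T$ we have $(1-\epsest)I^{(\tau)}(T\mid S)-\cest\le X(T\mid S)\le(1+\epsest)I^{(\tau)}(T\mid S)+\cest$. Failure occurs only when, in some outer round, $U\neq\emptyset$ after all $\Tinner$ inner iterations and $|S|<k$ throughout.

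\textbf{Key step.} Fix an outer round and an inner iteration $t$. Let $U$ be the set after filtering and $S$ the current solution. Every $u\in U$ has $X(\{u\}\mid S)\ge\alpha$, so $I^{(\tau)}(u\mid S)\ge(\alpha-\cest)/(1+\epsest)$. Call an element $v$ \emph{bad} for a set $S'\supseteq S$ if $I^{(\tau)}(v\mid S')<\beta$, where $\beta$ is a threshold chosen so that any element whose true gain is below $\beta$ is guaranteed to have estimate below $\alpha$ in the next filtering. Concretely, $(1+\epsest)\beta+\cest<\alpha$, e.g.\ $\beta=(\alpha-\cest)/(1+\epsest)$ with slack.

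For the prefixes $T_i$ of the random permutation, consider the first index $j$ at which at least an $\epsilon$-fraction of $U$ is bad with respect to $S\cup T_{j}$. By submodularity the bad set only grows along the prefixes. For $i<j$, each newly added element $v_{i+1}$ is uniformly random among the remaining elements, so with probability at least $1-\epsilon$ it is good and contributes at least $\beta$ to the marginal gain.

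A Chernoff or martingale argument in the style of \citet{chen21} should show that for all $i\le j$ the true density satisfies $I^{(\tau)}(T_i\mid S)\ge(1-\epsilon)\beta i$ up to lower-order terms. With the estimation error this gives $X(T_i\mid S)\ge(1-\epsest)(1-\epsilon)\beta i-\cest$. This should meet the acceptance threshold $(1-\epsilon)((1-\epsest)\alpha-\cest)i$ in the algorithm, which is precisely why that threshold was designed with these error factors. Consequently $i^*\ge j$ or $i^*=s$. If $i^*=s=k-|S|$, the algorithm returns $S$. Otherwise at least an $\epsilon$-fraction of $U$ becomes bad for the new $S$ and is removed at the next filtering step. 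In the original analysis this happens with probability at least $1/2$, via an averaging argument over the random position $j$.

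\textbf{Conclusion and main obstacle.} Let $X_t$ indicate that iteration $t$ shrinks $|U|$ by a factor $(1-\epsilon)$ or ends the round. Then $\Pr[X_t=1\mid\text{history}]\ge1/2$, because fresh samples are used in each call and the permutation is fresh. Since $|U|\le n$, about $\frac{1}{\epsilon}\log n$ successes empty $U$. Theorem~\ref{thm:chen} compares $\sum_t X_t$ with a $\mathrm{Bin}(\Tinner,1/2)$ variable; with $\Tinner=\frac{8}{\epsilon}\log\frac{n}{\delta}$, a Chernoff lower tail (Theorem~\ref{thm:chernoff-lower}) bounds the failure probability per round by $\delta\epsilon/2$, and a union bound over the $\Touter=2/\epsilon$ rounds gives $\delta$.

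The main obstacle is the key step. Individual estimates are not submodular, so the monotonicity of the ``bad'' set must be phrased using true marginal gains only. The threshold constants also need checking: if $\alpha$ is comparable to $\cest$, then $\beta$ may be nonpositive. I would handle this by noting that every element with positive estimate $\ge\alpha$ is treated consistently, or by adjusting the definition of bad so that it holds trivially in that regime.
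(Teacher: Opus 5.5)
Your skeleton is the paper's: bad sets that grow along the prefixes, a first index at which the bad set is large, bad elements being filtered once $i^*$ reaches that index, then Theorem~\ref{thm:chen}, a Chernoff bound and a union bound over outer rounds. The gap is the step you call the main obstacle. You assert it rather than prove it, and the route you sketch would not prove it.

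With $j$ defined by an $\epsilon$-fraction cutoff, each pick $v_i$ with $i\le j$ may be bad with conditional probability close to $\epsilon$. So the bound $I^{(\tau)}(T_i\mid S)\ge(1-\epsilon)\beta i$ cannot hold for all $i\le j$ with high probability: a single bad pick among the first few prefixes can already violate it. Markov's inequality at this cutoff gives nothing either.

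The missing idea is to decouple the two constants and work at a single index. The paper takes $i'=\min\{i:|A_i|\ge\frac{\epsilon}{2}|U|\}$. For $i\le i'$, the conditional probability that $v_i$ is bad is at most $|A_{i-1}|/|U|\le\frac{\epsilon}{2}$. This holds because elements of $T_{i-1}$ have gain $0$ and are counted in $A_{i-1}$, which is what lets you replace the remaining elements by all of $U$. Markov applied to the number $B$ of bad picks among $v_1,\dots,v_{i'}$ then gives $\Pr[B\ge\epsilon i']\le\frac12$. Since $i^*$ is the largest passing index, the test only has to pass at $i'$. A success shrinks $U$ only by a factor $1-\frac{\epsilon}{2}$, so $\frac{2}{\epsilon}\log n$ successes are needed, which $T_{\mathrm{inner}}=\frac{8}{\epsilon}\log\frac{n}{\delta}$ still covers.

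Second, the inequality you use to pass the acceptance test is false. With $\beta=(\alpha-\cest)/(1+\epsest)$ and $\cest=0$, your lower bound is $(1-\epsilon)\frac{1-\epsest}{1+\epsest}\alpha i$, strictly below the threshold $(1-\epsilon)(1-\epsest)\alpha i$. You lose one factor $1-\epsest$ at filtering and another on the prefix estimate, while the threshold budgets for only one. With $\epsest=0$ you still fall short by an additive $\cest$.

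The paper's proof does not supply this slack either: it states the event $i^*<i'$ as a condition on the true gain $I(T_{i'}\mid S)$ rather than on $X(T_{i'}\mid S)$. So at this point you can argue as the paper does, or make it rigorous by lowering the acceptance threshold, e.g.\ by a further factor $1-\epsest$ and an additional $-\cest$.

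Finally, the regime $\beta\le0$ cannot be handled by redefining bad. The argument needs bad elements to be filtered and already-chosen elements to count as bad. The paper's proof simply uses $(1-\epsest)\alpha>\cest$ (to get $A_{|U|}=U$), which holds in its instantiations for small $\epsilon$.
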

\begin{proof}
Let $1\le r\le\Touter$ be any iteration of the outer for loop and
$1\le t<\Tinner$ an iteration of the inner for loop such that $\left|S^{(r,t)}\right|<k$.
We first show:

Claim 1: $\Pr\left[\left|U^{(r,t+1)}\right|\le\left(1-\frac{1}{2}\epsilon\right)\left|U^{(r,t)}\right|\right]\ge\frac{1}{2}$,
i.e. we filter out at least an $\frac{1}{2}\epsilon$-fraction of
the elements with probability at least $\frac{1}{2}$.

We now prove Claim 1. Let $\left\{ v_{1},v_{2},\dots\right\} =U^{(r,t)}$
be the elements that were not filtered, in the ordering according
to the random permutation of iteration $t$. We define the sets 
\[
A_{i}=\left\{ v\in U^{(r,t)}:I\left(v\mid S^{(r,t-1)}\cup T_{i}^{(r,t)}\right)<\left(1-\epsest\right)\alpha^{(r)}-\cest\right\} 
\]
where we define the prefix $T_{i}^{(r,t)}=\left\{ v_{1},v_{2},\dots,v_{i}\right\} $
for $0\le i\le\left|U^{(r,t)}\right|$. The sequence $A_{0},A_{1},\dots,A_{\left|U^{(r,t)}\right|}$
has the following properties:
\begin{enumerate}
\item $A_{0}=\emptyset$: For any element $v\in U^{(r,t)}$ holds $X\left(v\mid S^{(r,t-1)}\right)\ge\alpha^{(r)}$
(as otherwise, the element $v$ would have been filtered out at the
beginning of iteration $t$) and thus
\[I\left(v\mid S^{(r,t-1)}\cup T_{0}^{(r,t)}\right)=I\left(v\mid S^{(r,t-1)}\right)\ge\left(1-\epsest\right)X\left(v\mid S^{(r,t-1)}\right)-\cest\ge\left(1-\epsest\right)\alpha^{(r)}-\cest.\]
\item $A_{\left|U^{(r,t)}\right|}=U^{(r,t)}$: For all elements $v\in U^{(r,t)}$
holds for that
\[I\left(v\mid S^{(r,t-1)}\cup T_{\left|U^{(r,t)}\right|}^{(r,t)}\right)=I\left(v\mid U^{(r,t)}\right)=0<\left(1-\epsest\right)\alpha^{(r)}-\cest.\]
\item $A_{i}\subseteq A_{i+1}$: Consider an element $v\in A_{i}$. Due
to submodularity,
\[I\left(v\mid S^{(r,t-1)}\cup T_{i+1}^{(r,t)}\right)\le I\left(v\mid S^{(r,t-1)}\cup T_{i}^{(r,t)}\right)<\left(1-\epsest\right)\alpha^{(r)}-\cest\]
and therefore $v\in A_{i+1}$.
\end{enumerate}
These properties allow us to define 
\[
i'=\min\left\{ i\in\mathbb{N}:\left|A_{i}\right|\ge\frac{\epsilon}{2}\left|U^{(r,t)}\right|\right\} 
\]
as the index where $\left|A_{i'}\right|\ge\frac{\epsilon}{2}\left|U^{(r,t)}\right|$
and $\left|A_{i'-1}\right|<\frac{\epsilon}{2}\left|U^{(r,t)}\right|$.
We now need the following two sub-claims about $i'$:
\begin{itemize}
\item Claim 1a: If $i^{*}\ge i'$ then $\left|U^{(r,t+1)}\right|\le\left(1-\frac{1}{2}\epsilon\right)\left|U^{(r,t)}\right|$.
\item Claim 1b: $\Pr\left[i^{*}\ge i'\right]\ge\frac{1}{2}$.
\end{itemize}
Let us first prove Claim 1a: Assume that $i^{*}\ge i'$ and consider
an element $v\in A_{i'}$. We have
\begin{align*}
X\left(v\mid S^{(r,t)}\right) & =X\left(v\mid S^{(r,t-1)}\cup T_{i^{*}}^{(r,t)}\right)\\
 & \le\left(1+\epsest\right)I\left(v\mid S^{(r,t-1)}\cup T_{i^{*}}^{(r,t)}\right)+\cest\\
 & \le\left(1+\epsest\right)I\left(v\mid S^{(r,t-1)}\cup T_{i}^{(r,t)}\right)+\cest\\
 & \le\left(1+\epsest\right)\left(\left(1-\epsest\right)\alpha^{(r)}-\cest\right)+\cest\\
 & \le\alpha^{(r)}.
\end{align*}
where the first inequality holds because we condition on the accuracy
of the estimation routine, and the second inequality follows from
submodularity. Thus, every element in $A_{i'}$ will be filtered in
the next iteration, i.e. $A_{i'}\cap U^{(r,t+1)}=\emptyset$. Since
$A_{i'}\subseteq U^{(r,t)}$, we have $\left|U^{(r,t+1)}\right|\le\left|U^{(r,t)}\right|-\left|A_{i'}\right|\le\left(1-\frac{\epsilon}{2}\right)\left|U^{(r,t)}\right|$.
This completes the proof for Claim 1a.

We now prove Claim 1b: By description of the algorithm, $i^{*}<i'$
is equivalent to 
\[
I\left(T_{i}^{(r,t)}\mid S^{(t,r-1)}\right)<\left(1-\epsilon\right)\left(\left(1-\epsest\right)\alpha^{(r)}-\cest\right)i\qquad\text{for all }s\ge i\ge i'
\]
which implies that $I\left(T_{i'}^{(r,t)}\mid S^{(r,t-1)}\right)<\left(1-\epsilon\right)\left(\left(1-\epsest\right)\alpha^{(r)}-\cest\right)i'$.
Therefore,
\[
\Pr\left[i^{*}<i'\right]\le\Pr\left[I\left(T_{i'}^{(r,t)}\mid S^{(r,t-1)}\right)>\left(1-\epsilon\right)\left(\left(1-\epsest\right)\alpha^{(r)}-\cest\right)i'\right].
\]
We say that $v_{i}$ is bad if $I\left(v_{i}\mid S^{(r,t-1)}\cup T_{i-1}^{(r,t)}\right)<\left(1-\epsest\right)\alpha-\cest$.
With this definition,
\begin{align*}
I\left(T_{i'}^{(r,t)}\mid S^{(t-1)}\right) & =\sum_{i=1}^{i'}I\left(v_{i}\mid S^{(r,t-1)}\cup T_{i-1}^{(r,t)}\right)\\
 & \ge\left(i'-\left|\left\{ 1\le i\le i':v_{i}\text{ is bad}\right\} \right|\right)\left(\left(1-\epsest\right)\alpha-\cest\right)
\end{align*}
and thus
\[
\Pr\left[I\left(T_{i'}^{(r,t)}\mid S^{(r,t-1)}\right)<\left(1-\epsilon\right)\left(\left(1-\epsest\right)\alpha-\cest\right)i'\right]\le\Pr\left[\left|\left\{ 1\le i\le i':v_{i}\text{ is bad}\right\} \right|>\epsilon i'\right].
\]
We can model the quantity $\left|\left\{ 1\le i\le i':v_{i}\text{ is bad}\right\} \right|$
as a sum of dependent Bernoulli random variables. Specifically, let
$Y_{i}=1$ if $v_{i}$ is bad and $Y_{i}=0$, otherwise, such that
$\left|\left\{ 1\le i\le i':v_{i}\text{ is bad}\right\} \right|=\sum_{i=1}^{i'}Y_{i}$.
Recall that we randomly permute the elements in $U^{(r,t)}$ which
is equivalent to uniformly picking the $i$-th element from the set
$U^{(r,t)}\setminus T_{i-1}^{(r,t)}$. Thus, conditioned on the outcomes
of the previous $i-1$ trials, 
\begin{align*}
\Pr\left[Y_{i}=1\right] & =\frac{\left|\left\{ v\in U^{(r,t)}\setminus T_{i-1}^{(r,t)}:I\left(v\mid S^{(r,t-1)}\cup T_{i-1}^{(r,t)}\right)<\left(1-\epsest\right)\alpha-\cest\right\} \right|}{\left|U^{(r,t)}\setminus T_{i-1}^{(r,t)}\right|}\\
 & \le\frac{\left|\left\{ v\in U^{(r,t)}:I\left(v\mid S^{(r,t-1)}\cup T_{i-1}^{(r,t)}\right)<\left(1-\epsest\right)\alpha-\cest\right\} \right|}{\left|U^{(r,t)}\right|}\\
 & =\frac{\left|A_{i-1}\right|}{\left|U^{(r,t)}\right|}\le\frac{\left|A_{i'-1}\right|}{\left|U^{(r,t)}\right|}\le\frac{\epsilon}{2}.
\end{align*}
We thus have $\E\left[\sum_{i=1}^{i'}Y_{i}\right]\le\frac{\epsilon}{2}i'$.
Therefore, by Markov's inequality,
\[
\Pr\left[i^{*}<i'\right]\le\Pr\left[\sum_{i=1}^{i'}Y_{i}\ge\epsilon i'\right]\le\frac{\E\left[\sum_{i=1}^{i'}Y_{i}\right]}{\epsilon i'}=\frac{1}{2}.
\]
We have thus established Claim 1b.

Combining Claim 1a and Claim 1b immediately implies Claim 1, i.e.\ that
\[\Pr\left[\left|U^{(r,t+1)}\right|\ge\left(1-\tfrac{1}{2}\epsilon\right)\left|U^{(r,t)}\right|\right]\ge\tfrac{1}{2}\]
for any iteration $t$ in which $\left|S^{(r,t)}\right|<k$.

It remains to argue about the failure probability. We begin with showing
that the probability of declaring failure for a single iteration of
the outer for loop is at most $\frac{\epsilon\delta}{2}$. We only
declare failure if $U^{(r,\Tinner)}\not=\emptyset$, i.e. we did not
filter out all elements. To analyze this event, let us define for
each $1\le t<\Tinner$ the indicator random variable $X_{t}$ for
the event that $\left|U^{(r,t+1)}\right|\le\left(1-\frac{1}{2}\epsilon\right)\left|U^{(r,t)}\right|$.
For $X=\sum_{t=1}^{\Tinner-1}X_{t}$ we then have $\left|U^{(r,\Tinner)}\right|\le\left|U^{(r,0)}\right|\left(1-\frac{1}{2}\epsilon\right)^{X}\le n\left(1-\frac{1}{2}\epsilon\right)^{X}$.
Therefore,
\begin{align*}
\Pr\left[U^{(r,t)}\not=\emptyset\right] & =\Pr\left[\left|U^{(r,t)}\right|\ge1\right]\\
 & \le\Pr\left[n\left(1-\frac{1}{2}\epsilon\right)^{X}\ge1\right]\\
 & =\Pr\left[X\le\frac{\log\left(\frac{1}{n}\right)}{\log\left(1-\frac{1}{2}\epsilon\right)}\right]\\
 & \le\Pr\left[X\le2\frac{\log n}{\epsilon}\right]
\end{align*}
From Claim 1, we know that $\Pr\left[X_{t}=1\right]\ge\frac{1}{2}$.
For each $1\le t\le\Tinner$ let now $Y_{t}$ be a Bernoulli random
variable with $\Pr\left[Y_{t}=1\right]\ge\frac{1}{2}$ and $Y=\sum_{t=1}^{\Tinner-1}Y_{t}$.
By Theorem \ref{thm:chen} and a Chernoff bound (Theorem \ref{thm:chernoff-lower}),
we have
\[
\Pr\left[X\le2\frac{\log n}{\epsilon}\right]\le\Pr\left[Y\le2\frac{\log n}{\epsilon}\right]
\le\Pr\left[Y\le\left(1-\frac{1}{2}\right)\frac{1}{2}\left(\Tinner-1\right)\right]
\le\exp\left(-\frac{\Tinner-1}{8}\right)\le\frac{\epsilon\delta}{2}
\]
by definition of $\Tinner$ and for sufficiently small $\epsilon,\delta$.

Since there are at most $\frac{2}{\epsilon}$ iterations of the outer
loop, we obtain an overall failure probability of at most $\delta$
by a union bound.
\end{proof}

\begin{lem}
\label{lem:optimization-guarantee} Suppose we run Algorithm \ref{alg:optimization}
conditioned on the event that the estimation subroutine\linebreak
$\mathrm{EstimateGains}$ obtains multiplicative error $\epsest$
and additive error $\cest$ in every call. Let us also condition on
the event that Algorithm \ref{alg:optimization} does not declare failure.
Then, Algorithm \ref{alg:optimization} constructs a solution $S$ satisfying
$I^{(\tau)}(S)\geq(1-\frac{1}{e}-\epsilon-\epsest)\opt-3k\cest$.
\end{lem}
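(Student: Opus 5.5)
The plan is the standard threshold-greedy argument: every element added has estimated, hence true, marginal density close to the current threshold $\alpha^{(r)}$, and every optimal element left unadded has small true marginal gain. Throughout we condition on all calls to $\mathrm{EstimateGains}$ being accurate. Each time we add a block $T_{i^*}$ to $S$, the selection rule gives $X(T_{i^*}\mid S)\ge(1-\epsilon)((1-\epsest)\alpha-\cest)i^*$. By accuracy, $I(T_{i^*}\mid S)\ge \frac{1}{1+\epsest}\big(X(T_{i^*}\mid S)-\cest\big)$. Hence the average true gain per added element is at least roughly $(1-\epsilon)(1-2\epsest)\alpha-2\cest$. The extra $\cest$ is spread over $i^*\ge1$ elements, so it costs at most $\cest$ per element.

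Next I would compare $\alpha$ to the remaining gap $\opt-I(S)$. Consider the moment in outer round $r\ge2$ when a block is added. Outer round $r-1$ did not fail, so it ended with $U=\emptyset$. Every $v\in S^*$ was therefore filtered at some point in round $r-1$ with $X(v\mid S')<\alpha^{(r-1)}$ for some $S'\subseteq S$. By accuracy and submodularity,
\[
I(v\mid S)\le I(v\mid S')\le \frac{\alpha^{(r-1)}+\cest}{1-\epsest}.
\]
Summing over $v\in S^*$ and using monotonicity and submodularity gives $\opt-I(S)\le k(\alpha^{(r-1)}+\cest)/(1-\epsest)$. Since $\alpha^{(r)}=(1-\epsilon)\alpha^{(r-1)}$, we get $\alpha^{(r)}\ge(1-\epsilon)\big((1-\epsest)(\opt-I(S))/k-\cest\big)$. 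For $r=1$ the same bound holds directly, because $\alpha^{(1)}=(1-\epsilon)\optguess/k$ and $\optguess\ge\opt/2$. Here I expect to check that the initial threshold matters only up to constants, or else to argue that the first round uses $\alpha\ge(1-\epsilon)\opt/(2k)$ and the later rounds take over. Combining the two steps, each added element increases $I(S)$ by at least
\[
(1-O(\epsilon)-O(\epsest))\,\frac{\opt-I(S)}{k}-3\cest
\]
per unit. The growth along a block is not element by element, but since the block's average is bounded against a gap measured at the block's start, a block-wise version of the recursion suffices.

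Finally, I unroll the recursion $\opt-I(S_{\mathrm{new}})\le(1-\beta/k)^{j}(\opt-I(S_{\mathrm{old}}))+3j\cest$ over blocks of size $j$, with $\beta=1-O(\epsilon+\epsest)$. Using $1-x\le e^{-x}$ for blocks gives $I(S)\ge(1-e^{-\beta |S|/k})\opt-3|S|\cest$.

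Two cases remain, and this is the main obstacle. If $|S|=k$, we are done, since $e^{-\beta}\le 1/e+O(\epsilon+\epsest)$. If the algorithm ends all $2/\epsilon$ outer rounds without failure but with $|S|<k$, then every $v\in S^*$ has $I(v\mid S)\lesssim\alpha^{(T_{\mathrm{outer}})}/(1-\epsest)+\cest$. Since $(1-\epsilon)^{2/\epsilon}\le e^{-2}$, this is tiny, so $\opt-I(S)\le O(\epsilon)\opt+2k\cest$. The delicate part is bookkeeping the constants so that the multiplicative losses combine to $\epsilon+\epsest$ after rescaling $\epsilon$, and the additive losses stay at most $3k\cest$. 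Per element we lose about $2\cest$ from the selection and acceptance errors, plus about $\cest$ from the threshold-versus-gap comparison. I would track these explicitly.
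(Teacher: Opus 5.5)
Your skeleton is the paper's: a density bound for added blocks from the selection rule, a bound on residual gains from the previous round's filtering, a multiplicative recursion, and the cases $|S|=k$ and $|S|<k$. Summing \emph{true} gains block by block is actually sounder than the paper's Claim~1, which applies the accuracy guarantee to an estimate $X(S^{(r,T_{\mathrm{inner}})}\mid S^{(r,0)})$ that is never computed.

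The genuine gap is round $r=1$. While $S=\emptyset$ you need
\[
\alpha^{(1)}\ge(1-\epsilon)\bigl((1-\epsest)(\opt-I(S))/k-\cest\bigr),
\]
which is essentially $\optguess\ge\opt$. The assumption $\optguess\ge\opt/2$ gives only half of this. Neither of your fallbacks works, because round~$1$ can fill all $k$ slots, so no later round ever takes over. In fact the bound is false for $\optguess=\opt/2$. Use exact estimates, with $k$ elements of value $\opt/k$ and $n-k\gg k^{2}$ elements of value $\opt/(2k)$, all with disjoint coverage. Then nothing is filtered and every prefix passes the test, so $i^{*}=k$. The output is a uniformly random $k$-set, whose value is $\opt/2$ with probability close to~$1$.

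The paper's proof has the same hole. At $r'=1$ it invokes Claim~2 for a round~$0$ that is never executed, and the averaging inequality it then uses amounts to assuming $\optguess\ge(1-\epsest)\opt-k\cest$. So the statement needs $\opt\le\optguess\le 2\opt$ (at least the lower bound). Under that hypothesis your $r=1$ bound is immediate and the rest of your recursion goes through.

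There is a second error, in the case $|S|<k$. The quantity $(1-\epsilon)^{2/\epsilon}\approx e^{-2}$ is a constant, not $O(\epsilon)$, so $\opt-I(S)\le O(\epsilon)\opt+2k\cest$ is false. Using $\optguess\le 2\opt$, which you never invoke, you get
\[
\opt-I(S)\le\frac{2e^{-2}}{1-\epsest}\,\opt+\frac{k\cest}{1-\epsest}.
\]
The case closes only because $2e^{-2}<1/e$, which yields $I(S)\ge(1-\frac{1}{e}-\epsest)\opt-2k\cest$ for $\epsest\le\frac{1}{2}$. This is essentially the paper's argument for this case.
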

\begin{proof}
Consider an iteration $1\le r\le\Touter$ of the outer for loop. We
first prove the following two claims:
\begin{itemize}
\item Claim 1: The increase in density is at least
\[
\frac{I\left(S^{(r,\Tinner)}\mid S^{(r,0)}\right)}{\left|S^{(r,\Tinner)}\setminus S^{(r,0)}\right|}\ge\left(1-\epsilon-2\epsest\right)\alpha^{(r)}-2\cest.
\]
\item Claim 2: If $\left|S^{(r,\Tinner)}\right|<k$ then $I\left(v\mid S^{(r,\Tinner)}\right)<\left(1+\epsest\right)\alpha^{(r)}+\cest$
for all $v\in V$.
\end{itemize}
To show Claim 1, we bound
\begin{align*}
I\left(S^{(r,\Tinner)}\mid S^{(r,0)}\right) & \ge\left(1-\epsest\right)X\left(S^{(r,\Tinner)}\mid S^{(r,0)}\right)-\cest\\
 & =\left(1-\epsest\right)\sum_{t=1}^{\Tinner}X\left(T_{i^{*}}^{(r,t)}\mid S^{(r,t-1)}\right)-\cest\\
 & \ge\left(1-\epsest\right)\left(1-\epsilon\right)\left(\left(1-\epsest\right)\alpha^{(r)}-\cest\right)\sum_{t=1}^{\Tinner}\left|T_{i^{*}}^{(r,t)}\right|-\cest\\
 & =\left(1-\epsest\right)\left(1-\epsilon\right)\left(\left(1-\epsest\right)\alpha^{(r)}-\cest\right)\left|S^{(r,\Tinner)}\setminus S^{(r,0)}\right|-\cest\\
 & =\left(1-\epsest\right)\left(1-\epsilon\right)\left(\left(1-\epsest\right)\alpha^{(r)}-\cest\right)\left|S^{(r,\Tinner)}\setminus S^{(r,0)}\right|-\cest\\
 & \ge\left(\left(1-\epsilon-2\epsest\right)\alpha^{(r)}-2\cest\right)\left|S^{(r,\Tinner)}\setminus S^{(r,0)}\right|
\end{align*}
where the first inequality holds because we conditioned on the accuracy
of the estimation routine, and the second inequality is by the algorithm
design. Equivalently,
\[
\frac{I\left(S^{(r,\Tinner)}\mid S^{(r,0)}\right)}{\left|S^{(r,\Tinner)}\setminus S^{(r,0)}\right|}\ge\left(1-\epsilon-2\epsest\right)\alpha^{(r)}-2\cest.
\]

To show Claim 2, consider now the case when $\left|S^{(r,\Tinner)}\right|<k$.
Since we conditioned on the event that we do not declare failure,
this means that we did filter out all elements, i.e. that $U^{(r,\Tinner)}=\emptyset$.
Let $v\in V$ be an element and let $1\le t\le\Tinner$ be maximal
such that $v\in U^{(r,t)}$. That is, we will filter $v$ in the following
iteration. By submodularity,
\[
I\left(v\mid S^{(r,\Tinner)}\right)\le I\left(v\mid S^{(r,t)}\right)\le\left(1+\epsest\right)X\left(v\mid S^{(r,t)}\right)+\cest<\left(1+\epsest\right)\alpha^{(r)}+\cest.
\]
where the first inequality holds because we conditioned on the accuracy
of the estimation.

We now use those two claims to establish the approximation guarantee.
Let us first consider the case where Algorithm \ref{alg:optimization}
outputs a set of size $\left|S\right|<k$. By the algorithm design,
this means that we go through all $\Touter=\frac{2}{\epsilon}$ iterations
of the outer loop and exit the algorithm with a threshold
\begin{equation}
\alpha^{(\Touter)}=\frac{1}{k}\optguess\left(1-\epsilon\right)^{2/\epsilon}\le\frac{1}{k}\optguess e^{-2}\le\frac{1}{2ek}\optguess.\label{eq:21}
\end{equation}
By Claim 2, all elements $v\in V$ have marginal gain $I\left(v\mid S^{(\Touter,\Tinner)}\right)<\left(1+\epsest\right)\alpha^{(\Touter)}+\cest$.
We can thus bound
\begin{align*}
\opt-I\left(S^{(\Touter,\Tinner)}\right) & \le I\left(\opt\mid S^{(\Touter,\Tinner)}\right) & \text{(monotonicity)}\\
 & \le\sum_{v\in S^{*}}I\left(v\mid S^{(\Touter,\Tinner)}\right) & \text{(submodularity)}\\
 & \le\sum_{v\in S^{*}}\left(\left(1+\epsest\right)\alpha^{(\Touter)}+\cest\right)\\
 & <\sum_{v\in S^{*}}\left(\frac{1+\epsest}{2ek}\optguess+\cest\right) & \text{(by (\ref{eq:21}))}\\
 & =\frac{1+\epsest}{2e}\optguess+k\cest\\
 & \le\frac{1+\epsest}{e}\opt+k\cest
\end{align*}
which is equivalent to 
\[
I\left(S\right)=I\left(S^{(\Touter,\Tinner)}\right)\ge\left(1-\frac{1+\epsest}{e}\right)\opt-k\cest\ge\left(1-\frac{1}{e}-\epsest\right)\opt-k\cest.
\]
We have thus shown that if $\left|S\right|<k$, we have a solution
such that $I\left(S\right)\ge\left(1-\frac{1}{e}-\epsest\right)\opt-k\cest$.

Let us now consider the second case where $\left|S\right|=k$. Let
$1\le r\le\Touter$ be the iteration in which we return a solution
$S=S^{(r,\Tinner)}$ and let $1\le r'\le r$. Recall that we use indices
$\left(r+1,0\right)=\left(r,\Tinner\right)$. We have $\left|S^{(r'-1,\Tinner)}\right|<k$
and therefore $\left(1+\epsest\right)\alpha^{(r'-1)}+\cest>I\left(v\mid S^{(r'-1,\Tinner)}\right)$
for all $v\in V$ by Claim 2. By averaging over the elements $v\in S^{*}$,
\begin{equation}
\begin{split}
\alpha^{(r'-1)}&\ge\frac{1}{1+\epsest}\left(\frac{1}{k}\sum_{v\in S^{*}}I\left(v\mid S^{(r'-1,\Tinner)}\right)-\cest\right)\\
&\ge\frac{1-\epsest}{k}\sum_{v\in S^{*}}I\left(v\mid S^{(r'-1,\Tinner)}\right)-\cest\ge\frac{1-\epsest}{k}\left(\opt-I\left(S^{(r'-1,\Tinner)}\right)\right)
\end{split}
\label{eq:3-1}
\end{equation}
where the final inequality is due to submodularity. We can now lower
bound the function value gain
\begin{align*}
I\left(S^{(r'+1,0)}\right)-I\left(S^{(r',0)}\right) & =I\left(S^{(r',\Tinner)}\right)-I\left(S^{(r',0)}\right)\\
 & =I\left(S^{(r',\Tinner)}\mid S^{(r',0)}\right)\\
 & \ge\left|S^{(r',\Tinner)}\setminus S^{(r',0)}\right|\left(\left(1-\epsilon-2\epsest\right)\alpha^{(r')}-2\cest\right)\\
 & \ge\left|S^{(r',\Tinner)}\setminus S^{(r',0)}\right|\left(\left(1-2\epsilon-2\epsest\right)\alpha^{(r'-1)}-2\cest\right)\\
 & \ge\left|S^{(r',\Tinner)}\setminus S^{(r',0)}\right|\left(\frac{1-2\epsilon-3\epsest}{k}\left(\opt-I\left(S^{(r'-1,\Tinner)}\right)\right)-2\cest\right)\\
 & \ge\left|S^{(r',\Tinner)}\setminus S^{(r',0)}\right|\left(\frac{1-2\epsilon-3\epsest}{k}\left(\opt-I\left(S^{(r'-1,\Tinner)}\right)\right)-3\cest\right)\\
 & =\left|S^{(r'+1,0)}\setminus S^{(r',0)}\right|\left(\frac{1-2\epsilon-3\epsest}{k}\left(\opt-I\left(S^{(r',0)}\right)\right)-3\cest\right)
\end{align*}
where the first inequality is due to Claim (1), the second since $\alpha^{(r')}=\left(1-\epsilon\right)\alpha^{(r'-1)}$,
and the third inequality is due to Equation \ref{eq:3-1}. Equivalently,
\begin{equation}
\begin{split}
\opt-I\left(S^{(r'+1,0)}\right)&\le\left(1-\left|S^{(r'+1,0)}\setminus S^{(r',0)}\right|\frac{1-2\epsilon-3\epsest}{k}\right)\left(\opt-I\left(S^{(r',0)}\right)\right)\\
&\quad+3\cest\left|S^{(r'+1,0)}\setminus S^{(r',0)}\right|.
\end{split}
\label{eq:22}
\end{equation}
We use this to show by induction over the rounds $0\le r'\le r$ that
\begin{equation}
\begin{split}
\opt-I\left(S^{(r+1,0)}\right)&\le3\cest\left|S^{(r+1,0)}\setminus S^{(r'+1,0)}\right|\\
&\quad+\left(\opt-I\left(S^{(r'+1,0)}\right)\right)\prod_{j=r'+1}^{r}\left(1-\left|S^{(j+1,0)}\setminus S^{(j,0)}\right|\frac{1-2\epsilon-3\epsest}{k}\right).
\end{split}
\label{eq:20}
\end{equation}
This is clearly true for $r'=r$. Let us thus assume that (\ref{eq:20})
is true for $r$. We show that it also holds for $r'-1\ge0$:
\begin{align*}
 & \opt-I\left(S^{(r+1,0)}\right)\\
 & \le3\cest\left|S^{(r+1,0)}\setminus S^{(r'+1,0)}\right|+\left(\opt-I\left(S^{(r',0)}\right)\right)\prod_{j=r'+1}^{r}\left(1-\left|S^{(j+1,0)}\setminus S^{(j,0)}\right|\frac{1-2\epsilon-3\epsest}{k}\right)\\
 & \le3\cest\left|S^{(r+1,0)}\setminus S^{(r'+1,0)}\right|\\
 & \quad+\left(1-\left|S^{(r'+1,0)}\setminus S^{(r',0)}\right|\frac{1-2\epsilon-2\epsest}{k}+3\cest\left|S^{(r'+1,0)}\setminus S^{(r',0)}\right|\right)\\
 & \qquad\cdot\left(\opt-I\left(S^{(r'-1,0)}\right)\right)\prod_{j=r'+1}^{r}\left(1-\left|S^{(j+1,0)}\setminus S^{(j,0)}\right|\frac{1-2\epsilon-3\epsest}{k}\right)\\
 & \le3\cest\left|S^{(r+1,0)}\setminus S^{(r'+1,0)}\right|\\
 & \quad+\left(\opt-I\left(S^{(r'-1,0)}\right)\right)\prod_{j=r'}^{r}\left(1-\left|S^{(j+1,0)}\setminus S^{(j,0)}\right|\frac{1-2\epsilon-3\epsest}{k}\right)\\
 & \quad+3\cest\left|S^{(r'+1,0)}\setminus S^{(r',0)}\right|\underbrace{\prod_{j=r'+1}^{r}\left(1-\left|S^{(j+1,0)}\setminus S^{(j,0)}\right|\frac{1-2\epsilon-3\epsest}{k}\right)}_{\le1}\\
 & \le3\cest\left|S^{(r,0)}\setminus S^{(r'+1,0)}\right|+3\cest\left|S^{(r,0)}\setminus S^{(r',0)}\right|\\
 & \quad+\left(\opt-I\left(S^{(r'-1,0)}\right)\right)\prod_{j=r'}^{r}\left(1-\left|S^{(j+1,0)}\setminus S^{(j,0)}\right|\frac{1-2\epsilon-3\epsest}{k}\right)\\
 & \le3\cest\left|S^{(r,0)}\setminus S^{(r',0)}\right|+\left(\opt-I\left(S^{(r'-1,0)}\right)\right)\prod_{j=r'}^{r}\left(1-\left|S^{(j+1,0)}\setminus S^{(j,0)}\right|\frac{1-2\epsilon-3\epsest}{k}\right)
\end{align*}
We have thus established (\ref{eq:20}) by induction for all $0\le r'\le r$.
Applying (\ref{eq:20}) for $r'=0$ implies
\begin{align*}
\opt-I\left(S^{(r+1,0)}\right) & \le3\cest\left|S^{(r+1,0)}\right|+\opt\prod_{j=1}^{r}\left(1-\left|S^{(j+1,0)}\setminus S^{(j,0)}\right|\frac{1-2\epsilon-3\epsest}{k}\right)\\
 & \le3k\cest+\opt\prod_{j=1}^{r}\left(1-\left|S^{(j+1,0)}\setminus S^{(j,0)}\right|\frac{1-2\epsilon-3\epsest}{k}\right)\\
 & \le3k\cest+\opt\exp\left(-\frac{1-2\epsilon-3\epsest}{k}\sum_{j=1}^{r}\left|S^{(j+1,0)}\setminus S^{(j,0)}\right|\right)\\
 & =3k\cest+\opt\exp\left(-1+2\epsilon+3\epsest\right)\\
 & \le3k\cest+\opt\left(\frac{1}{e}+\epsilon+\epsest\right)
\end{align*}
where the last inequality holds for sufficiently small $\epsilon,\epsest$.
Therefore,
\[
I\left(S\right)=I\left(S^{(r+1,0)}\right)\ge\opt\left(1-\frac{1}{e}-\epsilon-\epsest\right)-3k\cest.
\]
\end{proof}

\begin{lem}
\label{lem:optimization-num-calls} Algorithm \ref{alg:optimization}
makes at most $N=\widetilde{O}(\frac{1}{\epsilon^{2}}\log(\frac{1}{\delta}))$
calls to the estimation subroutine\linebreak $\mathrm{EstimateGains}$.
\end{lem}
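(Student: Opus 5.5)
The bound follows from counting loop iterations in Algorithm~\ref{alg:optimization}; no probabilistic reasoning is needed. Each iteration of the inner for loop makes at most two calls to $\mathrm{EstimateGains}$: one for the singleton gains $\{X(\{u\}\mid S)\colon u\in U\}$ in the filtering step, and one for the prefixes $\{T_i\colon 1\le i\le s\}$. If $U=\emptyset$ after filtering, the loop breaks before the second call, and early returns only reduce the count. So the total number of calls is at most $2\Touter\Tinner$. This is a deterministic upper bound, valid on every execution path, including those that end in Failure.

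Next I substitute the loop lengths from Algorithm~\ref{alg:optimization}: $\Touter=\frac{2}{\epsilon}$ and $\Tinner=\frac{8}{\epsilon}\log(\frac{n}{\delta})$. This gives
\[
2\Touter\Tinner=\frac{32}{\epsilon^{2}}\log\left(\frac{n}{\delta}\right)=\frac{32}{\epsilon^{2}}\left(\log n+\log\frac{1}{\delta}\right).
\]
I will also check this against the quantity $N=\frac{2}{\epsilon}\Tinner$ written in the algorithm's output specification. That expression counts outer iterations times inner iterations; it differs from my count only by the constant factor $2$ for the two calls per inner iteration, which is absorbed in the $O$-notation.

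The only delicate point is how the $\log n$ term is absorbed into $\widetilde{O}(\frac{1}{\epsilon^{2}}\log\frac{1}{\delta})$, since $\widetilde{O}$ hides polylogarithmic factors in $n$ multiplicatively, not additively. For $\delta\le 1/2$ we have $\log\frac{1}{\delta}\ge\log 2$, so $\log n+\log\frac{1}{\delta}\le(1+\frac{\log n}{\log 2})\log\frac{1}{\delta}=O(\log n\cdot\log\frac{1}{\delta})$. The extra factor is polylogarithmic in $n$, so $N=\widetilde{O}(\frac{1}{\epsilon^{2}}\log(\frac{1}{\delta}))$. I will state the assumption that $\delta$ is bounded away from $1$ explicitly; it is harmless because $\delta$ is a failure probability.

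A possible additional wrapper, such as guessing $\optguess$ over $O(\log n)$ values, would contribute only another logarithmic factor and so would not change the bound.
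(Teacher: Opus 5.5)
Your proof is correct and is essentially the paper's argument: both multiply the number of outer iterations, the number of inner iterations, and the two calls per inner iteration, giving $O(\frac{1}{\epsilon^{2}}\log\frac{n}{\delta})$. Your explicit absorption of the additive $\log n$ term (under $\delta\le 1/2$) is more careful than the paper, which folds it into $\widetilde{O}(\cdot)$ without comment and also counts the outer loop as $\frac{1}{\epsilon}$ rather than $\frac{2}{\epsilon}$ iterations, a harmless constant-factor discrepancy.
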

\begin{proof}
There are at most $\frac{1}{\epsilon}$ iterations of the outer loop.
During each iteration of the outer loop, there are at most $T$ iterations
of the inner loop. During each iteration of the inner for loop, we
make two calls to $\mathrm{EstimateGains}$. In total, there are thus
at most $N=\frac{2}{\epsilon}T=O\left(\frac{1}{\epsilon^{2}}\log\left(\frac{n}{\delta}\right)\right)=\widetilde{O}\left(\frac{1}{\epsilon^{2}}\log\left(\frac{1}{\delta}\right)\right)$
calls to $\mathrm{EstimateGains}$. 
\end{proof}

\begin{proof}
(Theorem \ref{thm:template-ideal}) We have two failure events. The
first failure event is that any call to $\mathrm{EstimateGains}$
fails to obtain multiplicative error $\epsest$ or additive error
$\cest$ with probability at most $\prest$. Since we use a total
of $N$ calls during the execution of $\mathrm{DescendingThresholdGreedy}$
due to Lemma \ref{lem:optimization-num-calls}, we know by a union
bound that all calls to $\mathrm{EstimateGains}$ provide multiplicative
error $\epsest$ and additive error $\cest$ with probability at least
$1-N\prest$. The second failure event is that the $\mathrm{DescendingThresholdGreedy}$
returns failure. Conditioned on the first failure event, Lemma \ref{lem:optimization-failure}
shows that Algorithm \ref{alg:optimization} does not declare failure
with probability at least $1-\delta$. By another union bound, we
know that neither failure event happens with probability at least
$1-N\prest-\delta$. Conditioned on this, Lemma \ref{lem:optimization-guarantee}
states that the algorithm constructs a solution $S$ satisfying $I^{(\tau)}\left(S\right)\geq\left(1-\frac{1}{e}-\epsilon-\epsest\right)\opt-3k\cest$. 
\end{proof}

\section{\label{sec:appendix-general-models} Our algorithms for influence
maximization in general stochastic diffusion models}

We now provide two versions of our algorithm for general stochastic
diffusion models. In the first version, we assume that the algorithm
is given as input an upper bound $W$ on the variance
$\max_{S,T\subseteq V:\left|S\cup T\right|\le k}\Var[\widehat{f}(T\mid S)]$
and we estimate the marginal gains via a median-of-means estimator.
We provide a full pseudocode for this estimation subroutine. We also
provide the pseudocode for the instantiation of \ref{alg:optimization}
and we prove Theorem \ref{thm:gtm-estimate-gains-variance}. In the
second version, the algorithm is not given such an upper bound on
the variance, and instead uses the empirical variance of the samples
to determine how many samples to take. We also provide a full pseudocode
for both the estimation and optimization routine and prove Theorem
\ref{thm:gtm-estimate-gains-empirical-variance}. 

\subsection{Our algorithm using a median-of-means estimator}

\begin{algorithm}
\begin{raggedright}
\caption{\label{alg:general-model-variance} Our algorithm for general models
that instantiates the template algorithm \ref{alg:optimization} using
the estimation algorithm \ref{alg:estimate-gains-variance}.}
\par\end{raggedright}
\begin{raggedright}
$\mathrm{GeneralModel}\left(\epsilon,\delta,W\right):$
\par\end{raggedright}
\begin{raggedright}
\textbf{\textcolor{blue}{Input:}}\textcolor{blue}{{} Error parameters
$\epsilon$ and $\delta$, upper bound $W$ on the maximum variance
$\max_{S,T\subseteq V\colon\left|S\cup T\right|\leq k}\Var\left[\widehat{f}\left(T\vert S\right)\right]$
of the marginal gain estimates obtained from a single sample $\widehat{f}$}
\par\end{raggedright}
\begin{raggedright}
\textbf{\textcolor{blue}{Output:}}\textcolor{blue}{{} With probability
$1-\delta$, we obtain a solution $S\subseteq V$ with $\left|S\right|\le k$
such that $I^{(\tau)}\left(S\right)\ge\left(1-\frac{1}{e}-\epsilon\right)\opt$}
\par\end{raggedright}
\begin{raggedright}
$\optguess=n$
\par\end{raggedright}
\begin{raggedright}
\textbf{while} $\optguess\ge1$ \textbf{do}
\par\end{raggedright}
\begin{raggedright}
$\qquad$$\optguess\gets\frac{1}{2}\optguess$
\par\end{raggedright}
\begin{raggedright}
$\qquad$$S\gets\mathrm{GeneralModelGuess}\left(\epsilon,\frac{\delta}{\log n},\optguess\right)$
\par\end{raggedright}
\begin{raggedright}
$\qquad$$\left\{ X\left(S\mid\emptyset\right)\right\} \gets\mathrm{EstimateGainsVariance}\left(\left\{ S\right\} ,\emptyset,\epsilon\optguess,\frac{\delta}{\log n}\right)$
\par\end{raggedright}
\begin{raggedright}
$\qquad$\textbf{if} $X\left(S\mid\emptyset\right)\ge\optguess$ \textbf{then}
\par\end{raggedright}
\begin{raggedright}
$\qquad\qquad$\textbf{break}
\par\end{raggedright}
\begin{raggedright}
\textbf{return} $S$
\par\end{raggedright}
\begin{raggedright}
$\mathrm{GeneralModelGuess}\left(\epsilon,\delta',\optguess\right)$:
\par\end{raggedright}
\begin{raggedright}
$S\gets\emptyset$
\par\end{raggedright}
\begin{raggedright}
$\Tinner\gets\frac{2}{\epsilon}$
\par\end{raggedright}
\begin{raggedright}
$\Touter\gets\frac{8}{\epsilon}\log\left(\frac{n}{\delta'}\right)$
\par\end{raggedright}
\begin{raggedright}
$N\gets\Tinner\cdot\Touter$
\par\end{raggedright}
\begin{raggedright}
$\prest=\frac{\delta'}{N}$
\par\end{raggedright}
\begin{raggedright}
$\cest=\frac{\epsilon}{k}\optguess$
\par\end{raggedright}
\begin{raggedright}
\textbf{for }$r=1,\dots,\Touter$ \textbf{do}
\par\end{raggedright}
\begin{raggedright}
$\qquad$$\alpha\gets\frac{1}{k}\optguess\left(1-\epsilon\right)^{r}$
\par\end{raggedright}
\begin{raggedright}
$\qquad$$U\gets V$
\par\end{raggedright}
\begin{raggedright}
$\qquad$\textbf{for} $t=1,\dots,\Tinner$ \textbf{do}
\par\end{raggedright}
\begin{raggedright}
$\qquad\qquad$$\left\{ X\left(\left\{ u\right\} \mid S\right)\colon u\in U\right\} \gets\mathrm{EstimateGainsVariance}\left(\left\{ \left\{ u\right\} \colon u\in U\right\} ,S,\cest,\prest\right)$
\par\end{raggedright}
\begin{raggedright}
$\qquad\qquad$$U\gets\left\{ u\in U:X\left(\left\{ u\right\} \mid S\right)\ge\alpha\right\} $\textcolor{blue}{$\hfill$/$\negmedspace$/
filtering}
\par\end{raggedright}
\begin{raggedright}
$\qquad\qquad$\textbf{if} $U=\emptyset$ \textbf{then}
\par\end{raggedright}
\begin{raggedright}
\textbf{$\qquad\qquad\qquad$break }
\par\end{raggedright}
\begin{raggedright}
$\qquad\qquad$Let $v_{1},v_{2},\dots,v_{\left|U\right|}$ be a random
permutation of the elements in $U$
\par\end{raggedright}
\begin{raggedright}
$\qquad\qquad$$s\gets\min\left\{ k-\left|S\right|,\left|U\right|\right\} $
\par\end{raggedright}
\begin{raggedright}
$\qquad\qquad$Let $T_{i}=\left\{ v_{1},v_{2},\dots,v_{i}\right\} $
for all $1\le i\le s$\textcolor{blue}{$\hfill$/$\negmedspace$/
prefixes of random permutation}
\par\end{raggedright}
\begin{raggedright}
$\qquad\qquad$$\left\{ X\left(T_{i}\mid S\right)\colon1\le i\le s\right\} \gets\mathrm{EstimateGainsVariance}\left(\left\{ T_{i}:1\le i\le s\right\} ,S,\cest,\prest\right)$
\par\end{raggedright}
\begin{raggedright}
$\qquad\qquad$$i^{*}\gets\arg\max\left\{ 1\le i\le s:X\left(T_{i}\mid S\right)\ge\left(1-\epsilon\right)\left(\alpha-\cest\right)i\right\} $
\par\end{raggedright}
\begin{raggedright}
$\qquad\qquad$$S\gets S\cup T_{i^{*}}$\textcolor{blue}{$\hfill$/$\negmedspace$/
adding elements to the solution}
\par\end{raggedright}
\begin{raggedright}
$\qquad\qquad$\textbf{if} $\left|S\right|=k$ \textbf{then}
\par\end{raggedright}
\begin{raggedright}
$\qquad\qquad\qquad$\textbf{return} $S$
\par\end{raggedright}
\begin{raggedright}
$\qquad$\textbf{if} $U\not=\emptyset$ \textbf{then}
\par\end{raggedright}
\begin{raggedright}
$\qquad\qquad$\textbf{return} Failure
\par\end{raggedright}
\raggedright{}\textbf{return $S$}
\end{algorithm}
 
\begin{algorithm}
\begin{raggedright}
\caption{\label{alg:estimate-gains-variance} Estimation algorithm for the
marginal influence gain $I^{(\tau)}\left(T\mid S\right)$ for sets
$T\in{\cal T}$ with $T\subseteq V$ over a set $S\subseteq V$.}
\par\end{raggedright}
\begin{raggedright}
$\mathrm{EstimateGainsVariance}\left({\cal T},S,\cest,\prest,W\right):$
\par\end{raggedright}
\begin{raggedright}
\textbf{\textcolor{blue}{Input:}}\textcolor{blue}{{} Sets $T\subseteq V$
for $T\in{\cal T}$ and $S\subseteq V$, additive error $\cest>0$,
failure probability $\prest>0$, upper bound $W$ on the maximum variance
$\max_{S,T\subseteq V\colon\left|S\cup T\right|\leq k}\Var\left[\widehat{f}\left(T\vert S\right)\right]$
of the marginal gain estimates obtained from a single sample $\widehat{f}$}
\par\end{raggedright}
\begin{raggedright}
\textbf{\textcolor{blue}{Output:}}\textcolor{blue}{{} The algorithm
succeeds with probability $1-\prest$. If the algorithm succeeds,
it constructs estimates $I^{(\tau)}\left(T\mid S\right)$ such that
$\left|X\left(T\mid S\right)-I^{(\tau)}\left(T\mid S\right)\right|\le\cest$
for all $T\in{\cal T}$.}
\par\end{raggedright}
\begin{raggedright}
Let $n_{p}=48\log\left(\left|{\cal T}\right|/\prest\right)$ and $n_{s}=\frac{3}{\cest^{2}}W$
\par\end{raggedright}
\begin{raggedright}
\textbf{for }$i=1,\dots,n_{p}$ \textbf{do}
\par\end{raggedright}
\begin{raggedright}
\textbf{$\qquad$for} $j=1,\dots,n_{s}$ \textbf{do}
\par\end{raggedright}
\begin{raggedright}
$\qquad\qquad$Obtain a fresh sample $\widehat{f}\colon2^{V}\to\Reals$
\par\end{raggedright}
\begin{raggedright}
$\qquad\qquad$Let $X_{i,j}\left(T\mid S\right)\gets\widehat{f}\left(T\mid S\right)$
for all $T\in{\cal T}$
\par\end{raggedright}
\begin{raggedright}
$\qquad$Let $X_{i}\left(T\mid S\right)=\frac{1}{n_{s}}\sum_{j=1}^{n_{s}}X_{i,j}\left(T\mid S\right)$
for all $T\in{\cal T}$
\par\end{raggedright}
\begin{raggedright}
Let $X\left(T\mid S\right)$ be the median element among $\left\{ X_{i}\left(T\mid S\right)\right\} _{1\le i\le n_{p}}$
for all $T\in{\cal T}$
\par\end{raggedright}
\raggedright{}\textbf{return} $\left\{ X\left(T\mid S\right)\colon T\in{\cal T}\right\} $
\end{algorithm}

\begin{lem}
(Theorem \ref{thm:gtm-estimate-gains-variance}) Let $W$ be an upper
bound on the maximum variance
\[\max_{S,T\subseteq V:\left|S\cup T\right|\le k}\Var\left[\widehat{f}\left(T\mid S\right)\right].\]
Given the value $W$ as input as well as $\cest$ and a family of
sets ${\cal T}$ with $\left|{\cal T}\right|\le n$, Algorithm \ref{alg:estimate-gains-variance}
uses
$L=\widetilde{O}\left(\frac{1}{\cest^{2}}W\log\left(\frac{1}{\prest}\right)\right)$
samples, and it achieves additive error $\cest$, failure probability
$\prest$, and no multiplicative error.
\end{lem}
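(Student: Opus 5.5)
The plan is the standard median-of-means argument: Chebyshev inside each group, a Chernoff bound across groups, and a union bound over $\mathcal{T}$.

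\emph{Step 1 (one group mean).} Fix a set $T\in\mathcal{T}$ and a group index $i$. The samples $X_{i,1}(T\mid S),\dots,X_{i,n_s}(T\mid S)$ are i.i.d. because each comes from a fresh sample $\widehat{f}$. Each is unbiased: $\E[\widehat{f}(T\mid S)]=\E[\widehat{f}(T\cup S)]-\E[\widehat{f}(S)]=I^{(\tau)}(T\mid S)$ by linearity. Each also has variance at most $W$, because the queried sets satisfy $|S\cup T|\le k$. The group mean $X_i(T\mid S)$ therefore has variance at most $W/n_s=\cest^2/3$. Chebyshev then gives
\[
\Pr\big[|X_i(T\mid S)-I^{(\tau)}(T\mid S)|>\cest\big]\le \tfrac{1}{3}.
\]

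\emph{Step 2 (the median).} The median $X(T\mid S)$ deviates by more than $\cest$ only if at least $n_p/2$ of the $n_p$ groups deviate. The groups use disjoint fresh samples, so the indicators $Z_i$ of ``group $i$ deviates'' are independent Bernoulli variables with mean at most $1/3$. By monotone coupling, I may assume each has mean exactly $1/3$, so $\mu=n_p/3$. Theorem~\ref{thm:chernoff-upper} with $M=1$ and $1+\delta=3/2$ (i.e.\ $\delta=1/2$) then bounds the bad event by
\[
\exp\!\Big(-\frac{\mu\cdot(1/4)}{5/2}\Big)=\exp(-n_p/30).
\]
With $n_p=48\log(|\mathcal{T}|/\prest)$ this is at most $\prest/|\mathcal{T}|$. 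The constant 48 is comfortably large, so I would only check that the exponent is at least $\log(|\mathcal{T}|/\prest)$. Hoeffding is an alternative here and gives $\exp(-2n_p/36)$, which also suffices.

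\emph{Step 3 (union bound and counting).} A union bound over the $|\mathcal{T}|$ sets gives joint failure probability at most $\prest$, which is exactly the condition in Definition~\ref{def:estimation-error} with $\epsest=0$. The total number of samples is
\[
L=n_pn_s=O\Big(\frac{W}{\cest^2}\log\big(|\mathcal{T}|/\prest\big)\Big).
\]
Since $|\mathcal{T}|\le n$, the factor $\log|\mathcal{T}|$ is polylogarithmic and is absorbed by the $\widetilde{O}$, giving the claimed $L$. Each sample is evaluated on every $T\in\mathcal{T}$; each marginal costs two oracle calls, and $\widehat{f}(S)$ can be shared across $T$. This gives $O(|\mathcal{T}|L)$ oracle calls.

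\emph{Main obstacle.} The only delicate points are independence and unbiasedness. The estimates for different $T$ are correlated because they share samples, so the argument must bound the failure probability per set and then take a union bound rather than rely on any independence across sets. The individual samples are not submodular, but only unbiasedness and the variance bound are used, so this causes no trouble. Rounding $n_s$ and $n_p$ up to integers affects only constants.
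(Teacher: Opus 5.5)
Your proof is correct and follows the same route as the paper: Chebyshev within each group using $n_s = 3W/\cest^2$, then a Chernoff bound across the $n_p$ independent groups, then a union bound over $\mathcal{T}$. The only difference is cosmetic: you apply the upper tail to the bad-group indicators (coupled up to mean $1/3$, giving $e^{-n_p/30}$), whereas the paper applies the lower tail (Theorem~\ref{thm:chernoff-lower}) to the good-group indicators with mean at least $2n_p/3$ (giving $e^{-n_p/48}$), and both are at most $\prest/|\mathcal{T}|$ for the chosen $n_p$.
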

\begin{proof}
Fix a set $T\in{\cal T}$. For all $1\le i\le n_{\mathrm{p}}$, let
$Y_{i}\in\left\{ 0,1\right\} $ be an indicator random variable for
the event that $X_{i}\left(T\mid S\right)$ is a good estimate, i.e.
we set $Y_{i}=1$ if $\left|X_{i}\left(T\mid S\right)-I\left(T\mid S\right)\right|\le\cest$
and $Y_{i}=0$ otherwise. By Chebyshev's inequality and definition
of $n_{\mathrm{s}}$,
\[
\Pr\left[Y_{i}=0\right]=\Pr\left[\left|X_{i}\left(T\mid S\right)-I\left(T\mid S\right)\right|>\cest\right]\le\frac{W}{n_{\mathrm{s}}\cest^{2}}=\frac{1}{3}.
\]
Hence, $\E\left[\sum_{i=1}^{n_{\mathrm{p}}}Y_{i}\right]\ge\frac{2}{3}n_{\mathrm{p}}$.
If the median element is not a good estimate, at least half the estimates
are bad. We bound the probability of this event via a Chernoff bound
(Theorem \ref{thm:chernoff-lower}):
\begin{align*}
\Pr\left[\sum_{i=1}^{n_{\mathrm{p}}}Y_{i}\le\frac{n_{\mathrm{p}}}{2}\right] & =\Pr\left[\sum_{i=1}^{n_{\mathrm{p}}}Y_{i}\le\left(1-\frac{1}{4}\right)\frac{2}{3}n_{\mathrm{p}}\right]\\
 & \le\Pr\left[\sum_{i=1}^{n_{\mathrm{p}}}Y_{i}\le\left(1-\frac{1}{4}\right)\E\left[\sum_{i=1}^{n_{\mathrm{p}}}Y_{i}\right]\right]\\
 & \le\exp\left(-\E\left[\sum_{i=1}^{n_{\mathrm{p}}}Y_{i}\right]\frac{1}{2\cdot4^{2}}\right)\\
 & \le\exp\left(-\frac{n_{\mathrm{p}}}{48}\right)\\
 & \le\frac{\prest}{\left|{\cal T}\right|}
\end{align*}
by definition of $n_{\mathrm{p}}$. By a union bound over all sets
$T\in{\cal T}$, the algorithm achieves error $\prest$ on all sets
in ${\cal T}$. The number of samples is $L=n_{\mathrm{p}}\cdot n_{\mathrm{s}}=\widetilde{O}\left(\frac{W}{\cest^{2}}\log\left(\frac{1}{\prest}\right)\right)$.
\end{proof}

\begin{thm}
(Theorem \ref{thm:threshold-variance}) Let $W$ be an upper bound
on the maximum variance
\[\max_{S,T\subseteq V\colon\left|S\cup T\right|\leq k}\Var\left[\widehat{f}\left(T\vert S\right)\right]\]
of the marginal gain estimates obtained from a single sample $\widehat{f}$.
Given the value $W$ as input as well as $\epsilon$ and $\delta$,
Algorithm \ref{alg:general-model-variance} uses $\widetilde{O}\left(NL\right)$
samples where
\[L=\widetilde{O}\left(\frac{k^{2}}{\epsilon^{2}\opt^{2}}W\log\left(\frac{1}{\delta\epsilon}\log\left(\frac{1}{\delta}\right)\right)\right)\]
and it returns a solution $S$ satisfying $I^{(\tau)}\left(S\right)\geq\left(1-\frac{1}{e}-\epsilon\right)\opt$
with probability $1-\delta$.
\end{thm}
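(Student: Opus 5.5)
The plan is to combine Theorem \ref{thm:template-ideal} (in its appendix form, which assumes a guess $\frac{1}{2}\opt\le\optguess\le2\opt$) with Theorem \ref{thm:gtm-estimate-gains-variance}, and then handle the unknown $\opt$ through the halving loop in $\mathrm{GeneralModel}$.

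\textbf{Fixed guess.} I first analyze a single call $\mathrm{GeneralModelGuess}(\epsilon,\delta',\optguess)$ with $\optguess\in[\frac{1}{2}\opt,2\opt]$. Here $\cest=\frac{\epsilon}{k}\optguess$ and $\prest=\delta'/N$.
\begin{itemize}
\item By Theorem \ref{thm:gtm-estimate-gains-variance}, each call to $\mathrm{EstimateGainsVariance}$ has additive error $\cest$, no multiplicative error and failure probability $\prest$.
\item It uses $L=\widetilde{O}(\frac{W}{\cest^{2}}\log\frac{1}{\prest})=\widetilde{O}(\frac{k^{2}W}{\epsilon^{2}\opt^{2}}\log\frac{N}{\delta'})$ samples, since $\optguess=\Theta(\opt)$.
\item With $N=\widetilde{O}(\frac{1}{\epsilon^{2}}\log\frac{1}{\delta})$ and $\delta'=\delta/\log n$, the logarithmic factor $\log(N/\delta')$ becomes $\log(\frac{1}{\delta\epsilon}\log\frac{1}{\delta})$ up to polylog$(n)$ factors, which the $\widetilde{O}$ absorbs.
\item Theorem \ref{thm:template-ideal} then gives, with probability $1-N\prest-\delta'=1-2\delta'$, a set $S$ with $I^{(\tau)}(S)\ge(1-\frac{1}{e}-\epsilon)\opt-3k\cest\ge(1-\frac{1}{e}-7\epsilon)\opt$.
\end{itemize}
Rescaling $\epsilon$ by a constant fixes the final constant.

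\textbf{Outer halving loop.} $\optguess$ runs through $n/2,n/4,\dots$, so there are at most $\log n$ iterations. I union bound over all of them: every guess call and every final verification estimate succeeds, except with probability $O(\delta)$.
\begin{itemize}
\item The verification estimate $X(S\mid\emptyset)$ has additive error $\epsilon\optguess$.
\item \emph{No premature stop.} If $\optguess>2\opt$, then $X(S\mid\emptyset)\le I(S)+\epsilon\optguess\le\opt+\epsilon\optguess<\optguess$, so the loop does not break.
\item \emph{Eventual stop.} Once $\optguess$ first falls in $[\frac{1}{2}\opt,\opt]$ (or already in $(\opt,2\opt]$), the guess call succeeds. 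Then $X\ge I(S)-\epsilon\optguess\ge(1-\frac{1}{e}-O(\epsilon))\opt-\epsilon\optguess$.
\end{itemize}

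\textbf{Main obstacle.} This last lower bound need not exceed $\optguess$, so the break test as written may not fire at a good guess. I would repair this by comparing against a constant multiple, e.g. breaking when $X\ge(1-\frac{1}{e}-2\epsilon)\optguess$, or by arguing directly.
\begin{itemize}
\item If the loop breaks at any guess with $\optguess\le2\opt$, the set is accepted only if $I(S)\ge X-\epsilon\optguess\ge(1-\frac{1}{e}-O(\epsilon))\optguess$. This is good once $\optguess\ge\opt/2$ only after strengthening the argument, namely by requiring the guess to be within $[\frac{1}{2}\opt,2\opt]$.
\item Alternatively, keep all candidates and return the best verified one.
\end{itemize}
I expect the cleanest route is to output the candidate with the largest verified estimate. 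Its true value is within $2\epsilon\opt$ of the candidate from the good guess, which satisfies the guarantee.

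\textbf{Sample count.} This stays $\widetilde{O}(NL)$. The $\log n$ outer iterations are absorbed, and for guesses above $\opt$ the bound on $L$ only improves since $\cest$ grows. For guesses below $\opt/2$ the loop has already stopped.
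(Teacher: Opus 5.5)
\textbf{Verdict: there is a genuine gap in the outer search.} Your fixed-guess step follows the paper's: Theorem~\ref{thm:gtm-estimate-gains-variance} with $\cest=\frac{\epsilon}{k}\optguess$ and $\prest=\delta'/N$, then Theorem~\ref{thm:template-ideal} and a union bound over the guesses. You are also right that the loop cannot break while $\optguess>2\opt$.

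The statement is about Algorithm~\ref{alg:general-model-variance} with the break rule $X(S\mid\emptyset)\ge\optguess$. You never show that this rule stops at an iteration whose output is good, or stops before $\optguess$ (hence $\cest$) becomes small; you replace the rule instead. So your closing claim that ``for guesses below $\opt/2$ the loop has already stopped'' has no support for the algorithm in the statement. It also contradicts your own remark that the test may not fire at a good guess. The paper argues the other way around. It asserts that at any guess $\optguess\le\frac14\opt$ the call $\mathrm{GeneralModelGuess}$ still returns $S$ with $I^{(\tau)}(S)\ge(1-\frac1e-\epsilon)\opt$, using $\cest\le\frac{\epsilon}{4k}\opt$ there. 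Then $X(S\mid\emptyset)\ge\frac14\opt\ge\optguess$ forces a break, so $\optguess\ge\frac18\opt$ and $\cest\ge\frac{\epsilon}{8k}\opt$ in every call. The paper treats low guesses as safe but costly; you treat them as unsafe.

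\textbf{Your suspicion is well founded.} That assertion is not supplied by Theorem~\ref{thm:template-ideal}, whose appendix form assumes $\frac12\opt\le\optguess$, and it is false in general. Take $k$ disjoint stars with $B$ nodes and many more disjoint stars with $\beta B$ nodes. At $\optguess=\beta\opt$ the first threshold is about $\beta B$. Every star center survives filtering, every prefix passes the density test, and the output is $k$ essentially random centers with total influence about $\beta\opt$.
\begin{itemize}
\item With $\beta=\frac14$ this refutes the paper's step.
\item With $\beta=\frac12$ it shows the template bound already fails at $\optguess=\frac12\opt$. At $r'=1$ the proof of Lemma~\ref{lem:optimization-guarantee} implicitly needs $\optguess/k\gtrsim(1-\epsest)\opt/k$.
\item It also shows the algorithm as written can output roughly a $\frac12$-approximation. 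This happens when the verification narrowly misses at $\optguess\approx\opt$, which the estimation error permits.
\end{itemize}
So the stopping rule genuinely has to change.

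\textbf{Your repairs are not yet correct.} The rule $X\ge(1-\frac1e-2\epsilon)\optguess$ need not fire at the guess in $(\frac12\opt,\opt]$, where you only know $X\ge(1-\frac1e-5\epsilon)\optguess$. It also relies on the template bound below $\opt$. A threshold that works within the valid range $\opt\le\optguess\le2\opt$ is $c=\frac12(1-\frac1e)-5\epsilon$; start the halving at $2n$ so that some guess lies in $[\opt,2\opt]$.
\begin{itemize}
\item The test must fire by that guess, so the search never drops below $\opt$ and every call has $\cest\ge\frac{\epsilon}{k}\opt$.
\item A firing at $\optguess>2\opt$ certifies $I^{(\tau)}(S)\ge(c-\epsilon)\optguess\ge(1-\frac1e-12\epsilon)\opt$.
\end{itemize}
The ``best verified candidate'' route fails as stated. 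Verification at guess $\optguess$ has additive error $\epsilon\optguess$, which for early guesses (up to $n/2$) can exceed $\opt$. So the largest estimate may belong to a poor candidate. Moreover, without a stopping rule the search runs down to tiny guesses, where $L\propto\optguess^{-2}$ destroys the sample bound.
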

\begin{proof}
The proof follows by analyzing the outer search in $\mathrm{GeneralModel}$
which selects guesses $\optguess$ for $\opt$, in conjunction with
Theorem \ref{thm:template-ideal}. We first show that there are at
most $2\log n$ iterations of the search: If $i$ counts the search
iterations, then the current guess is $\optguess=2^{-i}n$. We exit
the loop when $1\ge\optguess=2^{-i}n$ which implies that $i\le2\log n$.
By Lemma \ref{lem:estimate-gains-variance}, any call Algorithm \ref{alg:estimate-gains-variance}
fails to be $\cest$-additively correct with probability at most $\frac{\delta}{\log n}$.
By a union bound over all iterations, we obtain that all calls are
$\cest$-additively correct with probability at least $1-\delta$.
For the remainder of the proof, let us thus condition on the event
that these calls are $\cest$-additive correct. By Theorem \ref{thm:template-ideal},
any call to $\mathrm{GeneralModelGuess}$ fails with probability at
most $\delta'=N\prest=\frac{\delta}{\log n}$. By a union bound over
all $2\log n$ iterations of the outer search, the overall algorithm
fails with probability at most $O(\delta)$. For the remainder of
the proof, let us also condition on the event that no call to $\mathrm{GeneralModelGuess}$
fails. 

In order to bound the number of samples, we need to ensure that $\optguess$
does not become too small compared to $\opt$. Let us thus assume
that we are in an iteration where $\optguess\le\frac{1}{4}\opt$.
In this case, we set $\cest=\frac{\epsilon}{k}\optguess\le\frac{\epsilon}{4k}\opt$
and our algorithm guarantees $I^{(\tau)}\left(S\right)\ge\left(1-\frac{1}{e}-\frac{\epsilon}{2}\right)\opt-2k\cest\ge\left(1-\frac{1}{e}-\epsilon\right)\opt$.
Our estimation routine guarantees that $X\left(S\mid\emptyset\right)\ge I^{(\tau)}\left(S\mid\emptyset\right)-\cest\ge\frac{1}{4}\opt\ge\optguess$.
By design of the algorithm, we will then break the loop. In particular,
this ensures that $\optguess\ge\frac{1}{8}\opt$ and thus that $\cest=\frac{\epsilon}{k}\optguess\ge\frac{\epsilon}{8k}\opt$
for all calls to Algorithm \ref{alg:estimate-gains-variance}. 

By Lemma \ref{lem:estimate-gains-variance}, each call to Algorithm
\ref{alg:estimate-gains-variance} requires
\[L=\widetilde{O}\left(\frac{1}{\cest^{2}}W\log\left(\frac{1}{\prest}\right)\right)=\widetilde{O}\left(\frac{k^{2}}{\epsilon^{2}\opt^{2}}W\log\left(\frac{1}{\delta\epsilon}\log\left(\frac{1}{\delta}\right)\right)\right)\]
samples and we make a total of $N=\widetilde{O}\left(\frac{1}{\epsilon^{2}}\log\left(\frac{1}{\delta}\right)\right)$
calls. In total, we require $\widetilde{O}\left(NL\right)$ samples. 
\end{proof}

\subsection{Our algorithm using an estimator based on the empirical variance}

\begin{algorithm}
\begin{raggedright}
\caption{\label{alg:general-model-empirical-variance} Our algorithm for general
models that instantiates the template algorithm \ref{alg:optimization}
using the estimation algorithm \ref{alg:estimate-gains-empirical-variance}.}
\par\end{raggedright}
\begin{raggedright}
$\mathrm{GeneralModelEmpiricalVariance}\left(\epsilon,\delta\right):$
\par\end{raggedright}
\begin{raggedright}
\textbf{\textcolor{blue}{Input:}}\textcolor{blue}{{} Error parameters
$\epsilon$ and $\delta$}
\par\end{raggedright}
\begin{raggedright}
\textbf{\textcolor{blue}{Output:}}\textcolor{blue}{{} With probability
$1-\delta$, we obtain a solution $S\subseteq V$ with $\left|S\right|\le k$
such that $I^{(\tau)}\left(S\right)\ge\left(1-\frac{1}{e}-\epsilon\right)\opt$.}
\par\end{raggedright}
\begin{raggedright}
$\optguess=n$
\par\end{raggedright}
\begin{raggedright}
\textbf{while} $\optguess\ge1$ \textbf{do}
\par\end{raggedright}
\begin{raggedright}
$\qquad$$\optguess\gets\frac{1}{2}\optguess$
\par\end{raggedright}
\begin{raggedright}
$\qquad$$S\gets\mathrm{GeneralModelEmpiricalVarianceGuess}\left(\epsilon,\frac{\delta}{\log n},\optguess\right)$
\par\end{raggedright}
\begin{raggedright}
$\qquad$$\left\{ X\left(S\mid\emptyset\right)\right\} \gets\mathrm{EstimateGainsVariance}\left(\left\{ S\right\} ,\emptyset,\epsilon\optguess,\frac{\delta}{\log n}\right)$
\par\end{raggedright}
\begin{raggedright}
$\qquad$\textbf{if} $X\left(S\mid\emptyset\right)\ge\optguess$ \textbf{then}
\par\end{raggedright}
\begin{raggedright}
$\qquad\qquad$\textbf{break}
\par\end{raggedright}
\begin{raggedright}
\textbf{return} $S$
\par\end{raggedright}
\begin{raggedright}
$\mathrm{GeneralModelEmpiricalVarianceGuess}\left(\epsilon,\delta',\optguess\right)$:
\par\end{raggedright}
\begin{raggedright}
$S\gets\emptyset$
\par\end{raggedright}
\begin{raggedright}
$\Tinner\gets\frac{2}{\epsilon}$
\par\end{raggedright}
\begin{raggedright}
$\Touter\gets\frac{8}{\epsilon}\log\left(\frac{n}{\delta'}\right)$
\par\end{raggedright}
\begin{raggedright}
$N\gets\Tinner\cdot\Touter$
\par\end{raggedright}
\begin{raggedright}
$\prest=\frac{\delta'}{N}$
\par\end{raggedright}
\begin{raggedright}
$\cest=\frac{\epsilon}{k}\optguess$
\par\end{raggedright}
\begin{raggedright}
\textbf{for }$r=1,\dots,\Touter$ \textbf{do}
\par\end{raggedright}
\begin{raggedright}
$\qquad$$\alpha\gets\frac{1}{k}\optguess\left(1-\epsilon\right)^{r}$
\par\end{raggedright}
\begin{raggedright}
$\qquad$$U\gets V$
\par\end{raggedright}
\begin{raggedright}
$\qquad$\textbf{for} $t=1,\dots,\Tinner$ \textbf{do}
\par\end{raggedright}
\begin{raggedright}
$\qquad\qquad$$\left\{ X\left(\left\{ u\right\} \mid S\right)\colon u\in U\right\} \gets\mathrm{EstimateGainsEmpiricalVariance}\left(\left\{ \left\{ u\right\} \colon u\in U\right\} ,S,\cest,\prest\right)$
\par\end{raggedright}
\begin{raggedright}
$\qquad\qquad$$U\gets\left\{ u\in U:X\left(\left\{ u\right\} \mid S\right)\ge\alpha\right\} $\textcolor{blue}{$\hfill$/$\negmedspace$/
filtering}
\par\end{raggedright}
\begin{raggedright}
$\qquad\qquad$\textbf{if} $U=\emptyset$ \textbf{then}
\par\end{raggedright}
\begin{raggedright}
\textbf{$\qquad\qquad\qquad$break }
\par\end{raggedright}
\begin{raggedright}
$\qquad\qquad$Let $v_{1},v_{2},\dots,v_{\left|U\right|}$ be a random
permutation of the elements in $U$
\par\end{raggedright}
\begin{raggedright}
$\qquad\qquad$$s\gets\min\left\{ k-\left|S\right|,\left|U\right|\right\} $
\par\end{raggedright}
\begin{raggedright}
$\qquad\qquad$Let $T_{i}=\left\{ v_{1},v_{2},\dots,v_{i}\right\} $
for all $1\le i\le s$\textcolor{blue}{$\hfill$/$\negmedspace$/
prefixes of random permutation}
\par\end{raggedright}
\begin{raggedright}
$\qquad\qquad$$\left\{ X\left(T_{i}\mid S\right)\colon1\le i\le s\right\} \gets\mathrm{EstimateGainsEmpiricalVariance}\left(\left\{ T_{i}:1\le i\le s\right\} ,S,\cest,\prest\right)$
\par\end{raggedright}
\begin{raggedright}
$\qquad\qquad$$i^{*}\gets\arg\max\left\{ 1\le i\le s:X\left(T_{i}\mid S\right)\ge\left(1-\epsilon\right)\left(\alpha-\cest\right)i\right\} $
\par\end{raggedright}
\begin{raggedright}
$\qquad\qquad$$S\gets S\cup T_{i^{*}}$\textcolor{blue}{$\hfill$/$\negmedspace$/
adding elements to the solution}
\par\end{raggedright}
\begin{raggedright}
$\qquad\qquad$\textbf{if} $\left|S\right|=k$ \textbf{then}
\par\end{raggedright}
\begin{raggedright}
$\qquad\qquad\qquad$\textbf{return} $S$
\par\end{raggedright}
\begin{raggedright}
$\qquad$\textbf{if} $U\not=\emptyset$ \textbf{then}
\par\end{raggedright}
\begin{raggedright}
$\qquad\qquad$\textbf{return} Failure
\par\end{raggedright}
\raggedright{}\textbf{return $S$}
\end{algorithm}
 
\begin{algorithm}
\begin{raggedright}
\caption{\label{alg:estimate-gains-empirical-variance} Estimation algorithm
for the marginal influence gain $I^{(\tau)}\left(T\mid S\right)$
for sets $T\in{\cal T}$ with $T\subseteq V$ over a set $S\subseteq V$.}
\par\end{raggedright}
\begin{raggedright}
$\mathrm{EstimateGainsEmpiricalVariance}\left({\cal T},S,\cest,\prest\right):$
\par\end{raggedright}
\begin{raggedright}
\textbf{\textcolor{blue}{Input:}}\textcolor{blue}{{} Sets $T\subseteq V$
for $T\in{\cal T}$ and $S\subseteq V$, additive error $\cest>0$,
failure probability $\prest>0$.}
\par\end{raggedright}
\begin{raggedright}
\textbf{\textcolor{blue}{Output:}}\textcolor{blue}{{} The algorithm
succeeds with probability $1-\prest$. If the algorithm succeeds,
it constructs estimates $X\left(T\mid S\right)$ such that $\left|X\left(T\mid S\right)-I^{(\tau)}\left(T\mid S\right)\right|\le\cest$
for all $T\in{\cal T}$.}
\par\end{raggedright}
\begin{raggedright}
$L_{\max}\gets\frac{90}{\cest^{2}}\left(n^{2}+n\cest\right)\log\left(\frac{n}{\prest\cest}\right)$
\par\end{raggedright}
\begin{raggedright}
$\widetilde{\delta}\gets\frac{\prest}{2nL_{\max}}$
\par\end{raggedright}
\begin{raggedright}
\textbf{for} $\ell=1,\dots,L_{\max}$ \textbf{do}
\par\end{raggedright}
\begin{raggedright}
$\qquad$Obtain a fresh sample $\widehat{f}\colon2^{V}\to\Reals$
\par\end{raggedright}
\begin{raggedright}
$\qquad$\textbf{for} $T\in{\cal T}$ \textbf{do}
\par\end{raggedright}
\begin{raggedright}
$\qquad\qquad$$X_{\ell}\left(T\mid S\right)\gets\widehat{f}\left(T\mid S\right)$
$\hfill$\textcolor{blue}{/$\negmedspace$/ evaluate sample}
\par\end{raggedright}
\begin{raggedright}
$\qquad\qquad$$\widetilde{V}\left(T\right)\gets\frac{1}{\ell\left(\ell-1\right)}\sum_{1\le i<j\le\ell}\left(X_{i}\left(T\mid S\right)-X_{j}\left(T\mid S\right)\right)^{2}$
\par\end{raggedright}
\begin{raggedright}
$\qquad\qquad$$C\left(T\right)\gets\sqrt{\frac{2\widetilde{V}\left(T\right)\log\left(4/\widetilde{\delta}\right)}{\ell}}+n\frac{7\log\left(4/\widetilde{\delta}\right)}{3\left(\ell-1\right)}$
\par\end{raggedright}
\begin{raggedright}
\textcolor{blue}{$\hfill$/$\negmedspace$/ compute observed confidence
interval}
\par\end{raggedright}
\begin{raggedright}
$\qquad$\textbf{if} $C\left(T\right)\le c$ for all $T\in{\cal T}$
\textbf{then}
\par\end{raggedright}
\begin{raggedright}
$\qquad\qquad$Let $X\left(T\mid S\right)=\frac{1}{\ell}\sum_{i=1}^{\ell}X_{i}\left(T\mid S\right)$
for each $T\in{\cal T}$
\par\end{raggedright}
\begin{raggedright}
\textcolor{blue}{$\hfill$/$\negmedspace$/ compute estimate}
\par\end{raggedright}
\begin{raggedright}
$\qquad\qquad$\textbf{return} $\left\{ X\left(T\mid S\right)\colon T\in{\cal T}\right\} $
\par\end{raggedright}
\raggedright{}\textbf{return }Failure
\end{algorithm}

\begin{lem}
\label{lem:emp-var-num} Let
\[L^{*}=\frac{18}{\cest^{2}}\left(\Var\left[\widehat{f}\left(T\mid S\right)\right]+n\cest\right)\log\left(\frac{1}{\widetilde{\delta}}\right)\]
and let
$\widetilde{V}_{L^{*}}=\frac{1}{L^{*}(L^{*}-1)}\sum_{i\le i<j\le L^{*}}(X_{i}-X_{j})^{2}$
where $X_{1},\dots,X_{L^{*}}$ are random independent samples
$X_{i}=\hat{f}(T\mid S)$ for all
$i\in\{1,\dots,L^{*}\}$ for sets $T,S\subseteq V$.
With probability $1-\widetilde{\delta}$,
\[
\sqrt{\frac{2\widetilde{V}_{L^{*}}\ln\left(4/\widetilde{\delta}\right)}{L^{*}}}+n\frac{7\ln\left(4/\widetilde{\delta}\right)}{3\left(L^{*}-1\right)}\le\cest.
\]
\end{lem}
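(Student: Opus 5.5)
The plan is to split the confidence width at $L^{*}$ into its two terms and show that each is at most $\cest/2$. The second term is handled deterministically. The first term needs a high-probability upper bound on the empirical variance $\widetilde{V}_{L^{*}}$ in terms of the true variance $\sigma^{2}:=\Var[\widehat{f}(T\mid S)]$.

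\emph{Deterministic term.} By the definition of $L^{*}$ we have $L^{*}-1\ge\frac{L^{*}}{2}\ge\frac{9n}{\cest}\log(1/\widetilde{\delta})$, since $L^{*}\ge\frac{18n}{\cest}\log(1/\widetilde{\delta})$. Hence
\[
n\frac{7\ln(4/\widetilde{\delta})}{3(L^{*}-1)}\le\frac{7\cest\ln(4/\widetilde{\delta})}{27\ln(1/\widetilde{\delta})},
\]
which is at most $\cest/2$ once $\widetilde{\delta}$ is small enough that $\ln(4/\widetilde{\delta})\le\frac{27}{14}\ln(1/\widetilde{\delta})$. This holds for the tiny $\widetilde{\delta}=\prest/(2nL_{\max})$ used in Algorithm~\ref{alg:estimate-gains-empirical-variance}; any leftover constant slack can be absorbed into the constant $18$.

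\emph{Concentration of the empirical variance.} Each $X_{i}=\widehat{f}(T\mid S)$ satisfies $|X_{i}|\le n$. This is the same range bound $M=n$ that the algorithm uses when it applies Theorem~\ref{thm:empirical-bernstein}. The statistic $\widetilde{V}_{L^{*}}$ is the unbiased U-statistic for $\sigma^{2}$. I would invoke the standard upper-tail bound on the sample standard deviation from \citet{maurer09} (their Theorem 10, the companion to Theorem~\ref{thm:empirical-bernstein}), rescaled from $[0,1]$ to range $2n$. It gives, with probability $1-\widetilde{\delta}$,
\[
\sqrt{\widetilde{V}_{L^{*}}}\le\sigma+2n\sqrt{\frac{2\ln(1/\widetilde{\delta})}{L^{*}-1}}.
\]
If I wanted to avoid citing that result, I would instead apply Bernstein/McDiarmid to $\widetilde{V}$ as a function of independent variables with bounded differences $O(n^{2}/L^{*})$ and variance proxy $O(n^{2}\sigma^{2}/L^{*})$. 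Either way I obtain $\widetilde{V}_{L^{*}}\le2\sigma^{2}+O(n^{2}\ln(1/\widetilde{\delta})/L^{*})$.

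\emph{Bounding the first term.} Substituting this bound, the first term is at most
\[
\sqrt{\frac{2\sigma^{2}\ln(4/\widetilde{\delta})}{L^{*}}}+O\!\left(\frac{n\ln(1/\widetilde{\delta})}{L^{*}}\right).
\]
Using $L^{*}\ge18\sigma^{2}\log(1/\widetilde{\delta})/\cest^{2}$, the square-root piece is at most about $\cest/3$. The additive piece is $O(\cest/18)$ because $L^{*}\ge18n\log(1/\widetilde{\delta})/\cest$. The two pieces together stay below $\cest/2$.

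The main obstacle is this last step. The range-dependent correction to $\widetilde{V}$ scales like $n\sqrt{\ln(1/\widetilde{\delta})/L^{*}}$ inside the square root, and this naturally produces a term of order $n\ln(1/\widetilde{\delta})/L^{*}$. That term is exactly what the $n\cest$ part of $L^{*}$ has to pay for, so the constants must be tracked carefully, and the constant $18$ may need to be enlarged (with $L_{\max}$ adjusted to match) for the inequality to close.
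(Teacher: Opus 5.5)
Your proposal is correct and follows essentially the paper's argument. The paper also separates three pieces: a true-variance piece, a piece controlled by Maurer--Pontil's concentration of the sample variance, and the deterministic range piece. Your square-root form $\sqrt{\widetilde V_{L^*}}\le\sigma+2n\sqrt{2\ln(1/\widetilde\delta)/(L^*-1)}$, with $\sigma^2=\Var[\widehat f(T\mid S)]$, has the advantage of carrying the range $2n$ explicitly, whereas the paper quotes the Bernstein-type tail for $\widetilde V_{L^*}-\sigma^2$ in its $[0,1]$-normalized form.

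The one correction is bookkeeping. With exact constants your first term can reach about $(\frac13+\frac29)\cest>\frac12\cest$, so your claim that it stays below $\frac12\cest$ is false. Drop the even split and add the three pieces directly: $\frac13+\frac29+\frac{7}{54}=\frac{37}{54}<1$. This uses $L^*-1\approx L^*$, and holds up to factors $\ln(4/\widetilde\delta)/\ln(1/\widetilde\delta)\to1$, the same small-$\widetilde\delta$ caveat the paper needs. So the constant $18$ does not need to be enlarged.
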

\begin{proof}
We first decompose
\begin{align*}
&\sqrt{\frac{2\widetilde{V}_{L^{*}}\log(4/\widetilde{\delta})}{L^{*}}}+n\frac{7\log(4/\widetilde{\delta})}{3(L^{*}-1)}\\
&\quad \le\sqrt{\frac{2\bigl(\Var[\widehat{f}(T\mid S)]+(\widetilde{V}_{L^{*}}-\Var[\widehat{f}(T\mid S)])\bigr)\log(4/\widetilde{\delta})}{L^{*}}}+n\frac{7\log(4/\widetilde{\delta})}{3(L^{*}-1)}\\
&\quad \le\sqrt{\frac{2}{L^{*}}\Var\bigl[\widehat{f}(T\mid S)\bigr]\log(4/\widetilde{\delta})}
+\sqrt{\frac{2}{L^{*}}\bigl(\widetilde{V}_{L^{*}}-\Var\bigl[\widehat{f}(T\mid S)\bigr]\bigr)\log(4/\widetilde{\delta})}\\
&\qquad+n\frac{7\log(4/\widetilde{\delta})}{3(L^{*}-1)}
\end{align*}
and we will show that for the given $L^{*}$, each of the three terms
are bounded by $\frac{1}{3}\cest$ with probability $1-\widetilde{\delta}$. 

We use the choice of $L^{*}$ to bound the first term
\[
\frac{2\Var\left[\widehat{f}\left(T\mid S\right)\right]\log\left(4/\widetilde{\delta}\right)}{L^{*}}\le\frac{1}{9}\cest^{2}
\]
and therefore have $\sqrt{\frac{2}{L^{*}}\Var\left[\widehat{f}\left(T\mid S\right)\right]\log\left(4/\widetilde{\delta}\right)}\le\frac{1}{3}\cest$.

The second term is the only one that is probabilistic. We rearrange
\[
\frac{2}{L^{*}}\left(\widetilde{V}_{L^{*}}-\mathrm{Var}\left[\widehat{f}\left(T\mid S\right)\right]\right)\log\left(4/\widetilde{\delta}\right)\le\cest^{2}\iff\widetilde{V}_{L^{*}}-\mathrm{Var}\left[\widehat{f}\left(T\mid S\right)\right]\le\frac{L^{*}\cest^{2}}{2\log\left(4/\widetilde{\delta}\right)}
\]
We now apply the bound \citep{maurer09}
\[
\Pr\left[\widetilde{V}_{L^{*}}-\Var\left[\widehat{f}\left(T\mid S\right)\right]>\lambda\right]\le\exp\left(-\frac{\left(L^{*}-1\right)\lambda^{2}}{2\Var\left[\widehat{f}\left(T\mid S\right)\right]+\lambda}\right)
\]
to the empirical variance. This yields a failure probability of at
most
\begin{align*}
&\Pr\biggl[\widetilde{V}_{L^{*}}-\Var\bigl[\widehat{f}(T\mid S)\bigr]\ge\frac{L^{*}\cest^{2}}{2\log(4/\widetilde{\delta})}\biggr]\\
&\quad \le\exp\biggl(-\frac{(L^{*}-1)(L^{*})^{2}\cest^{4}}{8\Var[\widehat{f}(T\mid S)]\log(4/\widetilde{\delta})^{2}+2\log(4/\widetilde{\delta})L^{*}\cest^{2}}\biggr)\\
&\quad \le\exp\bigl(-\tfrac{1}{\cest}n^{2}\log(4/\widetilde{\delta})\bigr)
 \le\widetilde{\delta}
\end{align*}
which also follows by our choice of $L^{*}$.

For the third term, we also obtain by the choice of $L^{*}$ that
\[
n\frac{7\log\left(4/\widetilde{\delta}\right)}{3\left(L^{*}-1\right)}\le\frac{1}{3}\cest
\]

Thus overall,
\[
\sqrt{\frac{2\widetilde{V}_{L^{*}}\log\left(4/\widetilde{\delta}\right)}{L^{*}}}+n\frac{7\log\left(4/\widetilde{\delta}\right)}{3\left(L^{*}-1\right)}\le\cest
\]
with probability $1-\widetilde{\delta}$.
\end{proof}

\begin{lem}
\label{lem:emp-var-L} With probability $1-\frac{1}{2}\prest$, Algorithm
\ref{alg:estimate-gains-empirical-variance} stops at iteration $L^{*}$
and does not declare failure. 
\end{lem}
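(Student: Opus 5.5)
The plan is to apply Lemma~\ref{lem:emp-var-num} to every set in the query family and then take a union bound. Write $L^{*}_T$ for the quantity $L^{*}$ of Lemma~\ref{lem:emp-var-num} built with $\Var[\widehat{f}(T\mid S)]$, and set $L^{*}=\max_{T\in\mathcal{T}}L^{*}_T$. This is the iteration at which the algorithm should stop.

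\textbf{Step 1: fit inside the budget.} I will show that $L^{*}\le L_{\max}$. Every marginal gain of a sample satisfies $|\widehat{f}(T\mid S)|\le n$, so $\Var[\widehat{f}(T\mid S)]\le n^{2}$. Moreover $\log(1/\widetilde{\delta})=\log(2nL_{\max}/\prest)$. Since $L_{\max}$ is polynomial in $n/(\prest\cest)$, this logarithm is $O(\log(n/(\prest\cest)))$, and for $n/(\prest\cest)$ large it is at most $5\log(n/(\prest\cest))$. That factor is absorbed by the gap between the constants $90$ and $18$. This gives $L^{*}_T\le L_{\max}$ for every $T$. I expect this to be the main technical nuisance: $\widetilde{\delta}$ depends on $L_{\max}$ itself, so the bound has a mildly self-referential $\log L_{\max}$ term that must be checked.

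\textbf{Step 2: the stopping condition holds at $L^{*}$.} Fix $T\in\mathcal{T}$. The confidence width $C(T)$ at iteration $\ell$ is exactly the expression appearing in Lemma~\ref{lem:emp-var-num}. Lemma~\ref{lem:emp-var-num} is stated at $L^{*}_T$, but the algorithm checks the condition at $L^{*}\ge L^{*}_T$. I will verify that the lemma's proof goes through verbatim for any $\ell\ge L^{*}_T$, in particular at $\ell=L^{*}$. The deterministic terms only shrink as $\ell$ grows. For the probabilistic term, the deviation bound of \citet{maurer09} only strengthens with more samples. So with probability $1-\widetilde{\delta}$ we get $C(T)\le\cest$ at iteration $L^{*}$. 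A union bound over the at most $n$ sets in $\mathcal{T}$ (the queried families have $|\mathcal{T}|\le n$) shows that all confidence widths are below $\cest$ at iteration $L^{*}$ with probability at least $1-n\widetilde{\delta}\ge 1-\tfrac{1}{2}\prest$. Hence the check passes no later than iteration $L^{*}\le L_{\max}$, and the algorithm returns instead of declaring failure.

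\textbf{Step 3: the interpretation of ``stops at $L^{*}$''.} The algorithm may of course stop earlier if the check passes sooner. I will read the statement as stopping \emph{by} iteration $L^{*}$, which is all the sample bound in Theorem~\ref{lem:emp-var-conf} needs. If the exact statement is required, I would instead say the algorithm stops at some $\ell\le L^{*}$ without failure. The extra slack in $\widetilde{\delta}$, which carries the factor $L_{\max}$, is reserved for the later union bound over all iterations $\ell$ in the correctness argument, where the empirical Bernstein bound (Theorem~\ref{thm:empirical-bernstein}) is applied at a data-dependent stopping time. Here it is simply unused, apart from its appearance inside the logarithm in Step~1.
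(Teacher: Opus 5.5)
This is the paper's argument. You bound $L^{*}\le L_{\max}$ via $\Var[\widehat{f}(T\mid S)]\le n^{2}$ and $\log(1/\widetilde{\delta})\le 5\log(n/(\prest\cest))$, then apply Lemma~\ref{lem:emp-var-num} with a union bound over at most $n$ sets, giving failure probability $n\widetilde{\delta}=\prest/(2L_{\max})\le\frac{1}{2}\prest$. Your explicit handling of the $T$-dependence of $L^{*}$ (taking $\max_{T}L^{*}_{T}$ and noting the argument only improves for larger $\ell$) and your ``stops by $L^{*}$'' reading make precise what the paper leaves implicit.

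One caveat, which your Step~1 inherits from the paper's ``for sufficiently large $n$''. It is true that $L_{\max}$ is polynomial in $n/(\prest\cest)$. But the prefactor $2n/\prest$ in $1/\widetilde{\delta}=2nL_{\max}/\prest$ equals $2\cest\cdot\frac{n}{\prest\cest}$, and the factor $\cest$ is not controlled by $n/(\prest\cest)$. So $1/\widetilde{\delta}\le(n/(\prest\cest))^{5}$ holds when, say, $\cest\le\sqrt{n}$. It can fail when $\cest=\Theta(n)$, for example with $k=O(1)$ at the first guess $\optguess=n/2$, because for constant $\epsilon,\delta$ the quantity $\prest=\delta'/N$ is only polylogarithmically small.
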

\begin{proof}
We show that $L^{*}\le L_{\max}$ and apply Lemma \ref{lem:emp-var-num}:
First, note that
\[
\left(\frac{n}{\prest\cest}\right)^{5}\ge\frac{2n}{\prest}\frac{90}{\cest^{2}}\left(n^{2}+n\cest\right)\log\left(\frac{n}{\prest\cest}\right)=\frac{2nL_{\max}}{\prest}=\frac{1}{\widetilde{\delta}}
\]
for sufficiently large values of $n$. Furthermore, we have $n^{2}\ge W\ge\Var\left[\widehat{f}\left(T\mid S\right)\right]$
for all $T,S\subseteq V$. Combining both, we obtain
\begin{align*}
L_{\max} & =\frac{90}{\cest^{2}}\left(n^{2}+n\cest\right)\log\left(\frac{n}{\prest\cest}\right)\\
 & \ge\frac{18}{\cest^{2}}\left(n^{2}+n\cest\right)\log\left(\left(\frac{n}{\prest\cest}\right)^{5}\right)\\
 & \ge\frac{18}{\cest^{2}}\left(n^{2}+n\cest\right)\log\left(\frac{1}{\widetilde{\delta}}\right)\\
 & \ge\frac{18}{\cest^{2}}\left(\Var\left[\widehat{f}\left(T\mid S\right)\right]+n\cest\right)\log\left(\frac{1}{\widetilde{\delta}}\right)=L^{*}
\end{align*}
Thus, by Lemma \ref{lem:emp-var-num} we do not break the loop at
iteration $L^{*}\le L_{\max}$ with probability at most $n\widetilde{\delta}$
which is $\le\frac{1}{2}\prest$.
\end{proof}

\begin{lem}
\label{lem:emp-var-conf-failure} Conditioned on the event Algorithm
\ref{alg:estimate-gains-empirical-variance} does not declare failure,
the estimates satisfy
\[\left|X\left(T\mid S\right)-I^{(\tau)}\left(T\mid S\right)\right|\le\cest\]
for all $T\in{\cal T}$ with probability at least $1-\frac{1}{2}\prest$. 
\end{lem}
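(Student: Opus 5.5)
The plan is to apply the empirical Bernstein bound (Theorem~\ref{thm:empirical-bernstein}) at every candidate stopping time and for every set $T\in{\cal T}$, and then take a union bound. This removes the dependence on the data-dependent stopping rule. Fix $T\in{\cal T}$ and an iteration index $\ell\in\{2,\dots,L_{\max}\}$. The samples $X_{1}(T\mid S),\dots,X_{\ell}(T\mid S)$ are i.i.d. because each comes from a fresh sample $\widehat{f}$. Their mean is $I^{(\tau)}(T\mid S)$, since $\widehat{f}$ is unbiased and marginal gains are differences of unbiased estimates. Each is bounded in absolute value by $n$, because $0\le\widehat{f}(\cdot)\le n$ and hence $|\widehat{f}(T\cup S)-\widehat{f}(S)|\le n$.

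Applying Theorem~\ref{thm:empirical-bernstein} with $M=n$ and confidence parameter $\widetilde{\delta}$ shows that, with probability at least $1-\widetilde{\delta}$,
\[
\Bigl|\tfrac{1}{\ell}\textstyle\sum_{i=1}^{\ell}X_{i}(T\mid S)-I^{(\tau)}(T\mid S)\Bigr|\le C_{\ell}(T),
\]
where $C_{\ell}(T)$ is exactly the confidence width computed by Algorithm~\ref{alg:estimate-gains-empirical-variance} at iteration $\ell$ (the empirical variance $\widetilde{V}(T)$ uses the first $\ell$ samples). Call this event $E_{T,\ell}$.

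Next, take a union bound over all $T\in{\cal T}$ and all $\ell\le L_{\max}$. I will use $|{\cal T}|\le n$, as in the preceding lemmas, since every call made by the template queries at most $n$ sets. The probability that some $E_{T,\ell}$ fails is then at most $nL_{\max}\widetilde{\delta}=\frac{1}{2}\prest$ by the choice $\widetilde{\delta}=\frac{\prest}{2nL_{\max}}$. On the intersection of all $E_{T,\ell}$, suppose the algorithm does not declare failure. Then it returns at some iteration $\ell$ where $C_{\ell}(T)\le\cest$ for all $T$ (the threshold $c$ in the pseudocode is $\cest$). The returned estimate $X(T\mid S)$ is the empirical mean at that $\ell$, so $|X(T\mid S)-I^{(\tau)}(T\mid S)|\le C_{\ell}(T)\le\cest$ for all $T\in{\cal T}$.

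The main subtlety is the random stopping time. Theorem~\ref{thm:empirical-bernstein} is a fixed-$\ell$ statement, and the stopping iteration depends on the same samples. The union bound over all possible $\ell$ handles this: the good event holds simultaneously for every $\ell$, so whichever $\ell$ the algorithm stops at is covered. This is why $\widetilde{\delta}$ carries the factor $L_{\max}$.

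A second point concerns the conditioning. The statement reads "conditioned on not declaring failure". I will prove the slightly cleaner claim that $\Pr[\text{no failure and some estimate is bad}]\le\frac{1}{2}\prest$. Combined with Lemma~\ref{lem:emp-var-L}, this gives the joint guarantee with total failure probability $\prest$, which is what Theorem~\ref{lem:emp-var-conf} needs. Finally, the case $\ell=1$ is vacuous, since $C_{1}(T)$ is infinite or undefined because of the $\ell-1$ denominator, so the algorithm never stops there.
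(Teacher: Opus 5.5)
Your proposal is correct and follows the paper's proof. Like the paper, you apply Theorem~\ref{thm:empirical-bernstein} with $M=n$ to every pair $(T,\ell)$, take a union bound over at most $nL_{\max}$ events with $\widetilde{\delta}=\frac{\prest}{2nL_{\max}}$, and note that at the stopping iteration the computed width $C_{\ell}(T)$ is at most $\cest$. Your two added points, the explicit handling of the data-dependent stopping time and the bound on $\Pr[\text{no failure and some estimate is bad}]$ in place of the conditional probability, are exactly what this union-bound argument establishes (the paper's proof shows the same joint statement), so nothing is missing.
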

\begin{proof}
If we do not declare failure, we exit the loop before reaching $\ell=L_{\max}$.
We use to Theorem \ref{thm:empirical-bernstein} which states that
with probability $1-\widetilde{\delta}$,
\begin{equation}
\left|\frac{1}{\ell}\sum_{i=1}^{\ell}X_{a,\ell}-\E\left[\frac{1}{\ell}\sum_{i=1}^{\ell}X_{a,\ell}\right]\right|\le\sqrt{\frac{2W_{a,\ell}\log\left(4/\widetilde{\delta}\right)}{\ell}}+n\frac{7\log\left(4/\widetilde{\delta}\right)}{3\left(\ell-1\right)}.\label{eq:2}
\end{equation}
By a union bound, this bound holds for all $1\le\ell\le L$ and $1\le a\le n$
with probability $1-\widetilde{\delta}nL_{\max}=1-\frac{1}{2}\delta_{2}$.
Conditioned on this event, if we break the loop early for $\ell<L$,
the RHS of (\ref{eq:2}) is $\le\cest$ which in turn implies that
\[
\left|\frac{1}{\ell}\sum_{i=1}^{\ell}X_{a,\ell}-\E\left[\frac{1}{\ell}\sum_{i=1}^{\ell}X_{a,\ell}\right]\right|\le\cest.
\]
\end{proof}

\begin{thm}
(Theorem \ref{lem:emp-var-conf}) Given $\cest$ and $\prest$ as
input, with probability $1-\prest$, Algorithm \ref{alg:estimate-gains-empirical-variance}
uses $L=\widetilde{O}\left(\left(\frac{W}{\cest^{2}}+\frac{n}{\cest}\right)\log\left(\frac{1}{\prest}\right)\right)$
samples, where $W=\max_{T\in{\cal T}}\Var\left[\widehat{f}\left(T\mid S\right)\right]$,
and it achieves additive error $\cest$, failure probability $\prest$,
and no multiplicative error ($\gamma=0$).
\end{thm}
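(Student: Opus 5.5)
The proof combines Lemmas \ref{lem:emp-var-num}, \ref{lem:emp-var-L} and \ref{lem:emp-var-conf-failure} by a union bound over two bad events. The first is that the algorithm does not stop by iteration $L^{*}$, or declares failure. The second is that the empirical Bernstein confidence bound (Theorem \ref{thm:empirical-bernstein}) fails for some $T\in{\cal T}$ and some round $\ell\le L_{\max}$.

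\textbf{Sample count.} Apply Lemma \ref{lem:emp-var-num} to each $T\in{\cal T}$ and take a union bound over the at most $n$ sets. With probability at least $1-n\widetilde{\delta}\ge1-\frac{1}{2}\prest$, every observed interval $C(T)$ is at most $\cest$ at round
\[
L^{*}=\max_{T\in{\cal T}}\frac{18}{\cest^{2}}\Bigl(\Var\bigl[\widehat{f}(T\mid S)\bigr]+n\cest\Bigr)\log\Bigl(\frac{1}{\widetilde{\delta}}\Bigr)\le\frac{18}{\cest^{2}}\bigl(W+n\cest\bigr)\log\Bigl(\frac{1}{\widetilde{\delta}}\Bigr).
\]
Here $W=\max_{T\in{\cal T}}\Var[\widehat{f}(T\mid S)]$. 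Lemma \ref{lem:emp-var-L} gives $L^{*}\le L_{\max}$, so the stopping test fires no later than round $L^{*}$ and no failure is declared. Since $1/\widetilde{\delta}=2nL_{\max}/\prest\le(n/(\prest\cest))^{5}$, we have $\log(1/\widetilde{\delta})=O(\log(n/\cest)+\log(1/\prest))$. The polylogarithmic part is absorbed by $\widetilde{O}$. This gives $L=\widetilde{O}((\frac{W}{\cest^{2}}+\frac{n}{\cest})\log\frac{1}{\prest})$ samples. Each sample is evaluated on every $T\in{\cal T}$, so there are $|{\cal T}|L$ oracle calls.

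\textbf{Correctness.} We cannot apply Bernstein only at the stopping round, because that round is a random time chosen by the data. Instead, condition on the event that inequality (\ref{eq:2}) holds simultaneously for all $T\in{\cal T}$ and all $\ell\le L_{\max}$. Theorem \ref{thm:empirical-bernstein} applies with $|X_i|\le n$, since a marginal gain of the influence lies in $[-n,n]$. A union bound over at most $nL_{\max}$ pairs gives failure probability at most $nL_{\max}\widetilde{\delta}=\frac{1}{2}\prest$. On this event, at the stopping round every $T$ has $C(T)\le\cest$. So the sample mean deviates from $I^{(\tau)}(T\mid S)$ by at most $\cest$, by unbiasedness of $\widehat{f}(T\mid S)$. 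This is exactly Lemma \ref{lem:emp-var-conf-failure}. The estimate is additive only, so $\epsest=0$.

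Combining the two events gives total failure probability at most $\prest$. The main subtlety is the stopping-time issue above, which the uniform union bound over all rounds handles. A second point to check is that the $n^{2}$ in $L_{\max}$ dominates any variance, i.e.\ $\Var\le n^{2}$. This holds because $|\widehat{f}(T\mid S)|\le n$, so $L^{*}\le L_{\max}$ holds in the worst case.
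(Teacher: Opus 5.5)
Your proposal is correct and follows the paper's proof. Like the paper, you take a union bound over two events, each failing with probability at most $\frac{1}{2}\prest$: the event of Lemma~\ref{lem:emp-var-L} (the stopping test fires by round $L^{*}\le L_{\max}$, so no failure is declared) and the empirical Bernstein event of Lemma~\ref{lem:emp-var-conf-failure}, taken uniformly over all rounds. You are somewhat more careful than the paper: you say the algorithm stops \emph{by} (rather than exactly at) round $L^{*}$, you take $L^{*}$ as the maximum over $T\in\mathcal{T}$ (equivalently, with $W$ in place of the individual variances), and you make the stopping-time issue explicit, but the argument is the same.
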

\begin{proof}
By Lemma \ref{lem:emp-var-L}, we break the outer for loop at iteration
$L^{*}$ with probability at least $1-\frac{1}{2}\prest$. Let us
condition on this event. We take a single sample per iteration of
the outer for loop, which makes for a total of $L=L^{*}=\widetilde{O}\left(\left(\frac{W}{\cest^{2}}+\frac{n}{\cest}\right)\log\left(\frac{1}{\prest}\right)\right)$
samples. Since we break the outer for loop at iteration $L^{*}<L_{\max}$,
we do not declare failure. Therefore, by Lemma \ref{lem:emp-var-conf-failure},
the estimates have additive error at most $\cest$ probability at
least $1-\frac{1}{2}\prest$. Overall, the estimates provided by Algorithm
\ref{alg:estimate-gains-empirical-variance} thus have additive error
at most $\cest$ with probability at least $1-\prest$. 
\end{proof}

\begin{thm}
(Theorem \ref{thm:threshold-empirical-variance}) Given $\epsilon$
and $\delta$ as input, with probability $1-\delta$, Algorithm \ref{alg:general-model-empirical-variance}
uses $\widetilde{O}\left(NL\right)$ samples for
\[L=\widetilde{O}\left(\left(\frac{k^{2}W}{\epsilon^{2}\opt^{2}}+\frac{kn}{\epsilon\opt}\right)\log\left(\frac{N}{\delta}\right)\right)\]
and it returns a solution $S$ satisfying $I^{(\tau)}\left(S\right)\geq\left(1-\frac{1}{e}-\epsilon\right)\opt$.
\end{thm}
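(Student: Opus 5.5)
The plan mirrors the proof of Theorem \ref{thm:threshold-variance}, with the median-of-means subroutine replaced by Algorithm \ref{alg:estimate-gains-empirical-variance}. The argument has three parts: the outer doubling search over $\optguess$, one call to Theorem \ref{thm:template-ideal} per guess, and the per-call sample bound of Theorem \ref{lem:emp-var-conf}.

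\emph{Counting and failure events.} The guess starts at $n$ and is halved until it falls below $1$, so there are at most $O(\log n)$ search iterations. Within each call to $\mathrm{GeneralModelEmpiricalVarianceGuess}$ we use $\prest=\delta'/N$ with $\delta'=\delta/\log n$. A union bound over the $N$ calls to $\mathrm{EstimateGainsEmpiricalVariance}$ then shows that, with probability $1-\delta'$, every call succeeds. Here ``succeeds'' means three things: it stops at or before $L^{*}$ samples (Lemma \ref{lem:emp-var-L}), it does not declare failure, and it achieves additive error $\cest$ (Lemma \ref{lem:emp-var-conf-failure}). The same holds for the verification call to $\mathrm{EstimateGainsVariance}$ on $\{S\}$. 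Conditioned on these events, Lemma \ref{lem:optimization-failure} shows that the optimization declares failure with probability at most $\delta'$. A union bound over the $O(\log n)$ guesses gives total failure probability $O(\delta)$, and rescaling $\delta$ by a constant absorbs this.

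\emph{Approximation and termination.} The break condition $X(S\mid\emptyset)\ge\optguess$ is handled exactly as in the proof of Theorem \ref{thm:threshold-variance}. As soon as $\optguess\le\frac14\opt$, Lemma \ref{lem:optimization-guarantee} with $\epsest=0$ and $\cest=\frac{\epsilon}{k}\optguess\le\frac{\epsilon}{4k}\opt$ gives
\[
I^{(\tau)}(S)\ge\bigl(1-\tfrac1e-\epsilon\bigr)\opt-3k\cest\ge\bigl(1-\tfrac1e-O(\epsilon)\bigr)\opt .
\]
The accurate verification estimate then exceeds $\optguess$, so the loop breaks. Two consequences follow. First, every guess that is ever used satisfies $\optguess\ge\frac18\opt$, hence $\cest\ge\frac{\epsilon}{8k}\opt$ in all calls. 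Second, if the loop breaks earlier, at a guess with $\optguess>\frac14\opt$, then the verification guarantees $I^{(\tau)}(S)\ge\optguess-\epsilon\optguess$. This on its own does not give the $1-1/e$ factor. Instead I would argue as in the earlier proof: at a guess in $[\frac12\opt,2\opt]$, Theorem \ref{thm:template-ideal} applies directly and yields the desired guarantee. I would also rescale $\epsilon$ by a constant so that the final factor is exactly $1-\frac1e-\epsilon$.

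\emph{Sample count.} By Theorem \ref{lem:emp-var-conf}, each call uses
\[
L=\widetilde{O}\Bigl(\bigl(\tfrac{W}{\cest^{2}}+\tfrac{n}{\cest}\bigr)\log\tfrac{1}{\prest}\Bigr).
\]
Substituting $\cest=\Theta(\frac{\epsilon}{k}\opt)$ and $\prest=\delta/(N\log n)$ gives
\[
L=\widetilde{O}\Bigl(\bigl(\tfrac{k^{2}W}{\epsilon^{2}\opt^{2}}+\tfrac{kn}{\epsilon\opt}\bigr)\log\tfrac{N}{\delta}\Bigr).
\]
Multiplying by $N$ calls per guess and $O(\log n)$ guesses gives $\widetilde{O}(NL)$. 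The verification calls use the median-of-means estimator with $\cest=\epsilon\optguess$. They cost $\widetilde{O}(W/(\epsilon\opt)^{2})$ samples each, which is dominated by the main term.

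\emph{Main obstacle.} The delicate point is that the sample count holds only on the high-probability event that the adaptive stopping happens by $L^{*}$, and $L^{*}$ depends on the unknown $\cest$. So I must argue that guesses with $\optguess<\frac18\opt$ are never reached. This requires conditioning on correctness of the verification estimate before bounding $\cest$ from below, and then folding the event of Lemma \ref{lem:emp-var-L} into the same union bound.
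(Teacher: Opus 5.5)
Your plan follows the paper's route; the paper simply declares the proof identical to that of Theorem~\ref{thm:threshold-variance}. The gap is the step that carries both the approximation guarantee and the bound $\optguess\ge\frac18\opt$: it is false. The paper's argument has the same hole, so deferring to it does not help.

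Lemma~\ref{lem:optimization-guarantee} says nothing at a guess $\optguess\le\frac14\opt$. In its case $|S|=k$, inequality~(\ref{eq:3-1}) for $r'=1$ invokes Claim~2 at a round $0$ that the algorithm never runs. This amounts to assuming roughly $\optguess\ge\opt$. Here is a concrete instance. Take a deterministic model, e.g.\ disjoint out-stars with all edge probabilities $1$, so every estimator returns exact values. It has $k$ nodes of influence $\opt/k$ and vastly more nodes of influence $h$, where $(1-\epsilon)\optguess/k\le h<\optguess/k$. At guess $\optguess$, in the first inner iteration all these nodes survive filtering and every prefix passes the density test. So $i^{*}=k$, and $S=T_k$ consists, with probability close to $1$, of small nodes only. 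Hence $I^{(\tau)}(S)=kh<\optguess$. For $\optguess\le\frac14\opt$ this contradicts your bound $(1-\frac1e-O(\epsilon))\opt$. Since $X(S\mid\emptyset)=kh<\optguess$, the loop does not break either. Stacking such classes at successive guesses, each class far larger than all previous ones, drives the search well below $\frac18\opt$. So neither the lower bound on $\optguess$ nor the bound $\cest\ge\frac{\epsilon}{8k}\opt$ used in your sample count follows.

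Your fallback, that Theorem~\ref{thm:template-ideal} applies at a guess in $[\frac12\opt,2\opt]$, does not close the gap. The algorithm returns the set from the first guess at which $X(S\mid\emptyset)\ge\optguess$ fires, and nothing forces that to be such a guess. At the guess in $[\opt,2\opt)$ the test need not fire, since the set has value at most $\opt\le\optguess$. A later firing certifies only $I^{(\tau)}(S)\ge(1-\epsilon)\optguess$. Choose $\opt$ so that some guess $\optguess$ lies in $(\frac12\opt,\frac{\opt}{2(1-\epsilon)}]$ and place the construction there. Then the test fails at every guess down to $\optguess$ and fires at $\optguess/2$, returning a set of value below $\frac{\opt}{2(1-\epsilon)}$. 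This also shows that the appendix hypothesis $\frac12\opt\le\optguess\le2\opt$ of Theorem~\ref{thm:template-ideal} is too weak. So the statement cannot be proved for the outer search as written, and the search must be changed.

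One repair that keeps $O(\log n)$ guesses works as follows. Track the largest certified lower bound $B=\max\bigl(X(S\mid\emptyset)-\epsilon\optguess\bigr)$ over processed guesses. Stop once the next guess is below $B/2$, and return the corresponding candidate. On the good event $B\le\opt$, so the guess $g^{\star}\in[\opt,2\opt)$ is always processed. There the analysis of Lemma~\ref{lem:optimization-guarantee} is valid and gives $B\ge(1-\frac1e-O(\epsilon))\opt$. Consequently only $O(1)$ further guesses run, every processed guess is $\Omega(\opt)$, and the returned set has value at least $B$. Your union bounds and the substitution $\cest=\Theta(\frac{\epsilon}{k}\opt)$ into Theorem~\ref{lem:emp-var-conf} then go through.

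Separately, the verification call as written uses Algorithm~\ref{alg:estimate-gains-variance}, which needs $W$, but this algorithm is not given $W$. Use Algorithm~\ref{alg:estimate-gains-empirical-variance} there; its cost is dominated by the main term.
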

\begin{proof}
The proof is identical to the proof of Theorem \ref{thm:threshold-variance}.
\end{proof}

\section{\label{sec:appendix-oc} Influence maximization on observed cascades}

In this section, we provide a description and pseudocode for our estimation
subroutines in the observed cascades model. We first showcase our
sketching approach for the setting where the influence is unrestricted,
i.e. we run the diffusion process to completion. We provide the pseudocode
for the estimation and prove its estimation guarantee Theorem \ref{thm:oc-estimation-unrestricted}.
We then showcase our sketching approach for the setting where the
influence is restricted to $\tau$ steps. Again, we provide the pseudocode
and estimation guarantee Theorem \ref{thm:oc-estimation-restricted}.
In the final subsection, we provide the pseudocode for our algorithm
for influence maximization on observed cascades which uses the estimation
routines. We also prove Theorem \ref{thm:observed-cascades-main}.

\global\long\def\bottom{\mathrm{bot}_{b}}%

For a finite set $Y\subseteq\Reals$, we define $\bottom\left(Y\right)$
as the set of the $b$ smallest values in $Y$, and let $\max\left(Y\right)$
denote the largest value in $Y$. To keep our presentation self-contained,
we first provide a general guarantee for bottom-$b$ sketches over
arbitrary sets $R\subseteq V$. Similar theorems can be found in the
literature \citep{cohen97}. 
\begin{lem}
\label{lem:min-hash} Given a subset $R\subseteq V$, let $y_{v}\sim\mathrm{Uniform}\left(\left[0,1\right]\right)$
for all $v\in R$ and $Y=\bottom\left(\left\{ y_{v}:v\in R\right\} \right)$
for $b=\frac{6}{\epsest^{2}}\log\left(\frac{2}{\widetilde{\delta}}\right)$.
We define the estimate $X=\left|Y\right|$ if $\left|Y\right|<b$
and $X=\frac{b}{\max\left(Y\right)}-1$ if $\left|Y\right|=b$. Then,
$X$ has less than $\epsest$-multiplicative error, i.e. $\left(1-\epsest\right)\left|R\right|\le X\le\left(1+\epsest\right)\left|R\right|$
with probability $1-\widetilde{\delta}$. 

\begin{proof}
If $\left|Y\right|<b$, we have $\left|Y\right|=\left|R\right|$ and
the estimate is exact. We thus focus on the case when $\left|Y\right|=b$
and $X=\frac{b}{\max Y}-1$. We consider the two failure events $X>\left(1+\epsest\right)\left|R\right|$
and $X<\left(1-\epsest\right)\left|R\right|$ for any $1\ge\epsest>0$.
Let us first bound the probability of the first failure event. By
definition of the estimate, this event is equivalent to
\[
\frac{b}{\max Y}-1=X>\left(1+\epsest\right)\left|R\right|\iff\max Y<\frac{b}{\left(1+\epsest\right)\left|R\right|+1}\eqqcolon\zeta^{+}
\]
which in turn is equivalent to the event that there are at least $b$
elements $v\in R$ such that $y_{v}<\zeta^{+}$. We bound the probability
of this event with a Chernoff bound (Theorem \ref{thm:chernoff-upper})
for which we define the random variables $Z_{v}=1$ if $y_{v}<\zeta^{+}$
and $Z_{v}=0$ otherwise for all $v\in R$. Since $y_{v}\sim\mathcal{U}\left(\left[0,1\right]\right)$
we have $\E\left[\sum_{v\in R}Z_{v}\right]=\left|R\right|\zeta^{+}$
and thus
\begin{align*}
\Pr\left[\sum_{v\in R}Z_{v}\ge b\right] & =\Pr\left[\sum_{v\in R}Z_{v}\ge\left(1+\frac{\epsest\left|R\right|+1}{\left|R\right|}\right)\left|R\right|\zeta^{+}\right]\\
 & \le\exp\left(-\frac{\zeta^{+}\left(\epsest\left|R\right|+1\right)^{2}}{2\left|R\right|+\epsest\left|R\right|+1}\right)\\
 & =\exp\left(-b\frac{\left(\epsest\left|R\right|+1\right)^{2}}{\left(\left(1+\epsest\right)\left|R\right|+1\right)\left(2\left|R\right|+\epsest\left|R\right|+1\right)}\right)\\
 & \le\exp\left(-\frac{1}{6}b\epsest^{2}\right)
\end{align*}
for $b=\frac{6}{\epsest^{2}}\log\left(\frac{2}{\widetilde{\delta}}\right)$. 

We proceed analogously for the second failure event. Again, by definition
of the estimate,

\[
\frac{b}{y^{(b)}}-1=X<\left(1-\epsest\right)\left|R\right|\iff y^{(b)}>\frac{b}{\left(1-\epsest\right)\left|R\right|+1}\eqqcolon\zeta^{-}
\]
Let $Z_{v}=1$ if $y_{v}<\zeta^{-}$ and $0$ otherwise for all $v\in R$.
Again, $\E\left[\sum_{v\in R}Z_{v}\right]=\left|R\right|\zeta^{-}$.
We use the same choice for $b$, which implies that $\left|R\right|\ge b\ge\frac{6}{\epsest^{2}}$
and thus $\epsest\left|R\right|\ge6$. Therefore, by a Chernoff bound
(Theorem \ref{thm:chernoff-lower}),
\begin{align*}
\Pr\left[\sum_{v\in R}Z_{v}<b\right] & =\Pr\left[\sum_{v\in R}Z_{v}<\left(1-\frac{\epsest\left|R\right|-1}{\left|R\right|}\right)\left|R\right|\zeta^{-}\right]\\
 & \le\exp\left(-\frac{\zeta^{-}\left(\epsest\left|R\right|-1\right)^{2}}{2\left|R\right|}\right)\\
 & \le\exp\left(-b\frac{\left(\epsest\left|R\right|-1\right)^{2}}{2\left|R\right|\left(\left(1-\epsest\right)\left|R\right|+1\right)}\right)\\
 & \le\exp\left(-\frac{1}{6}b\epsest^{2}\right)\\
 & \le\frac{\widetilde{\delta}}{2}.
\end{align*}
for the same choice of $b$.
\end{proof}
\end{lem}

\subsection{Our estimation routine for the unrestricted influence}

\begin{algorithm}
\begin{raggedright}
\caption{\label{alg:estimate-gains-sketch-unrestricted} Sketching algorithm
for estimating the unrestricted influence}
\par\end{raggedright}
\begin{raggedright}
$\mathrm{EstimateGainsSketch}\left({\cal T},S,\epsest,\prest\right)$:
\par\end{raggedright}
\begin{raggedright}
\textbf{\textcolor{blue}{Input:}}\textcolor{blue}{{} Sets $T\subseteq V$
for all $T\in{\cal T}$ and $S\subseteq V$, multiplicative error
$\epsest$, failure probability $\prest$.}
\par\end{raggedright}
\begin{raggedright}
\textbf{\textcolor{blue}{Output:}}\textcolor{blue}{{} With probability
$1-\prest$, the algorithm constructs estimates $X\left(T\mid S\right)$
such that $\left(1-\epsest\right)I^{(n)}\left(T\mid S\right)\le X\left(T\mid S\right)\le\left(1+\epsest\right)I^{(n)}\left(T\mid S\right)$
for all $T\in{\cal T}$.}
\par\end{raggedright}
\begin{raggedright}
Define $\bottom\left(Y\right)$ as the set of the $b$ smallest values
in a finite set $Y\subseteq\Reals$
\par\end{raggedright}
\begin{raggedright}
Define $\max\left(Y\right)$ as the largest value of a finite set
$Y\subseteq\Reals$
\par\end{raggedright}
\begin{raggedright}
Define $N_{G}^{\mathrm{out}}\left(v\right)=\left\{ v:(u,v)\in E(G)\right\} $
as the set of outgoing neighbors of $u$ in a graph $G$
\par\end{raggedright}
\begin{raggedright}
Let $b=\frac{6}{\epsest^{2}}\log\left(\frac{2\ell n}{\prest}\right)$
\par\end{raggedright}
\begin{raggedright}
\textbf{for }$i=1,\dots,\ell$ \textbf{do}
\par\end{raggedright}
\begin{raggedright}
$\qquad$Let ${\cal C}=\left\{ C_{1},C_{2},\dots\right\} $ be the
strongly connected components in $G_{i}$
\par\end{raggedright}
\begin{raggedright}
$\qquad$Contract the strongly connected components in $G_{i}$ and
let $G'_{i}$ be the resulting directed acyclic graph
\par\end{raggedright}
\begin{raggedright}
$\qquad$\textbf{for} $C\in{\cal C}$ in reverse topological order
\textbf{do}
\par\end{raggedright}
\begin{raggedright}
\textcolor{blue}{$\hfill$/$\negmedspace$/ building bottom-$b$
sketches $Y_{i,C}$ for the sets $C\setminus\Reach_{G_{i}}^{(n)}\left(S\right)$
for all components $C\in{\cal C}$}
\par\end{raggedright}
\begin{raggedright}
$\qquad\qquad$\textbf{if} $C\subseteq\Reach_{G_{i}}^{(n)}\left(S\right)$
\textbf{then}
\par\end{raggedright}
\begin{raggedright}
$\qquad\qquad\qquad$$Y_{i,C}\gets\emptyset$\textcolor{blue}{$\hfill$/$\negmedspace$/
ignore the contribution of vertices in $\Reach_{G_{i}}^{(n)}\left(S\right)$}
\par\end{raggedright}
\begin{raggedright}
$\qquad\qquad$\textbf{else}
\par\end{raggedright}
\begin{raggedright}
$\qquad\qquad\qquad$Sample $y_{i,v}\sim U\left(\left[0,1\right]\right)$
for each $v\in C$
\par\end{raggedright}
\begin{raggedright}
$\qquad\qquad\qquad$$Y_{i,C}\gets\bottom\left(\left\{ y_{i,v}:v\in C\right\} \right)$
\par\end{raggedright}
\begin{raggedright}
\textcolor{blue}{$\hfill$/$\negmedspace$/ building bottom-$b$ sketches
$R_{i,C}$ for the sets $\Reach_{G_{i}}^{(n)}\left(C\right)\setminus\Reach_{G_{i}}^{(n)}\left(S\right)$
for all components $C\in{\cal C}$}
\par\end{raggedright}
\begin{raggedright}
$\qquad$\textbf{for} $C\in{\cal C}$ in reverse topological order
\textbf{do}
\par\end{raggedright}
\begin{raggedright}
$\qquad\qquad$$R_{i,C}\gets Y_{i,C}$
\par\end{raggedright}
\begin{raggedright}
$\qquad\qquad$\textbf{for} all $C'\in N_{G'_{i}}^{\mathrm{out}}\left(C\right)$
\textbf{do}
\par\end{raggedright}
\begin{raggedright}
$\qquad\qquad\qquad$$R_{i,C}\gets\bottom\left(R_{i,C}\cup R_{i,C'}\right)$
\par\end{raggedright}
\begin{raggedright}
\textcolor{blue}{$\hfill$/$\negmedspace$/ building bottom-$b$ sketches
$R_{i,T}$ for the sets $\Reach_{G_{i}}^{(\tau)}\left(T\right)\setminus\Reach_{G_{i}}^{(\tau)}\left(S\right)$
for all $T\in{\cal T}$}
\par\end{raggedright}
\begin{raggedright}
$\qquad$\textbf{if} ${\cal T}$ is a sequence of prefixes $T_{j}=\left\{ v_{1},v_{2},\dots,v_{j}\right\} $
for all $j\in\left\{ 1,\dots,\left|{\cal T}\right|\right\} $ \textbf{\textcolor{black}{then}}
\par\end{raggedright}
\begin{raggedright}
$\qquad\qquad$for $j=2,\dots,\left|{\cal T}\right|$ do
\par\end{raggedright}
\begin{raggedright}
$\qquad\qquad\qquad$Let $C\in{\cal C}$ be the component with $v_{j}\in C$
\par\end{raggedright}
\begin{raggedright}
$\qquad\qquad\qquad$$R_{i,T_{j}}\gets\bottom\left(R_{i,T_{j-1}}\cup R_{i,C}\right)$
\par\end{raggedright}
\begin{raggedright}
$\qquad$\textbf{else}
\par\end{raggedright}
\begin{raggedright}
$\qquad\qquad$\textbf{for} $\{v\}\in{\cal T}$ \textbf{do}
\par\end{raggedright}
\begin{raggedright}
$\qquad\qquad\qquad$Let $C\in{\cal C}$ be the component with $v\in C$
\par\end{raggedright}
\begin{raggedright}
$\qquad\qquad\qquad$$R_{i,\{v\}}=R_{i,C}$
\par\end{raggedright}
\begin{raggedright}
$\qquad$\textbf{for} $T\in{\cal T}$ \textbf{do}
\par\end{raggedright}
\begin{raggedright}
\textcolor{blue}{$\hfill$/$\negmedspace$/
evaluating bottom-$b$ sketches for the sets $\Reach_{G_{i}}^{(\tau)}\left(T\right)\setminus\Reach_{G_{i}}^{(\tau)}\left(S\right)$
for all $T\in{\cal T}$}
\par\end{raggedright}
\begin{raggedright}
$\qquad\qquad$\textbf{if} $\left|R_{i,T}\right|<b$ \textbf{then}
\par\end{raggedright}
\begin{raggedright}
$\qquad\qquad\qquad X_{i}\left(T\mid S\right)\gets\left|R_{i,T}\right|$
\par\end{raggedright}
\begin{raggedright}
$\qquad\qquad$\textbf{else}
\par\end{raggedright}
\begin{raggedright}
$\qquad\qquad\qquad$$X_{i}\left(T\mid S\right)\gets\frac{b}{\max\left(R_{i,T}\right)}-1$
\par\end{raggedright}
\begin{raggedright}
Let $X\left(T\mid S\right)=\frac{1}{\ell}\sum_{i=1}^{\ell}X_{i}\left(T\mid S\right)$
for all $T\in{\cal T}$
\par\end{raggedright}
\raggedright{}\textbf{return} $\left\{ X\left(T\mid S\right):T\in{\cal T}\right\} $
\end{algorithm}

We use Algorithm \ref{alg:estimate-gains-sketch-unrestricted} to
compute and evaluate bottom-$b$ sketches for the sets $\Reach_{G_{i}}^{(n)}\left(T\right)\setminus\Reach_{G_{i}}^{(n)}\left(S\right)$
for all $T\in{\cal T}$. For each graph $G_{i}$, the algorithm goes
through the following steps: First, we compute all strongly connected
components ${\cal C}$ in $G_{i}$ and contract them, resulting in
a directed acyclic graph $G'_{i}$ on the vertex set ${\cal C}$.
We then compute the bottom-$b$ sketches $Y_{i,C}$ for all components
$C\in{\cal C}$, whereas we set $Y_{i,C}=\emptyset$ if $C$ is already
covered by $\Reach_{G_{i}}^{(n)}\left(S\right)$. We then populate
the sketches $R_{i,C}$ such that at the end, $R_{i,C}=\Reach_{G_{i}}^{(n)}\left(C\right)\setminus\Reach_{G_{i}}^{(n)}\left(S\right)$.
Since the graph $G'_{i}$ is directed and acyclic, we can construct
$R_{i,C}$ via dynamic programming by considering all components $C\in{\cal C}$
in reverse topological order. After this, we construct the sketches
$R_{i,T}$ for the sets $T\in{\cal T}$. If ${\cal T}$ is a sequence
of prefixes, we merge the corresponding sketches $R_{i,T}$ one after
another. Finally, we evaluate the bottom-$b$ sketches and output
the average over all graphs $G_{i}$. 
\begin{lem}
\label{lem:unrestricted-sketch-correctness} At the end of Algorithm
\ref{alg:estimate-gains-sketch-unrestricted}, $R_{i,T}=\bottom\left(\left\{ y_{i,v}:v\in\Reach_{G_{i}}^{(n)}\left(T\right)\setminus\Reach_{G_{i}}^{(n)}\left(S\right)\right\} \right)$
for all $T\in{\cal T}$.
\end{lem}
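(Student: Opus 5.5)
The proof is an induction that mirrors the three phases of Algorithm \ref{alg:estimate-gains-sketch-unrestricted}. Fix a graph $G_i$ and write $R=\Reach_{G_i}^{(n)}(S)$. The key structural fact is a property of bottom-$b$ sketches: for finite sets $A,B\subseteq\Reals$ of distinct values,
\[
\bottom(A\cup B)=\bottom(\bottom(A)\cup\bottom(B)).
\]
This holds because each of the $b$ smallest elements of $A\cup B$ lies among the $b$ smallest elements of whichever of $A$ or $B$ contains it. The values $y_{i,v}$ are continuous uniforms, so they are almost surely distinct, and we can apply the identity freely. Iterating it over any finite union of sets lets us merge sketches in any order.

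\textbf{Step 1: component sketches.} We first show that $Y_{i,C}=\bottom(\{y_{i,v}:v\in C\setminus R\})$ for every component $C$. Reachability is closed under strong connectivity: if some $v\in C$ is reachable from $S$, then every vertex of $C$ is. Hence $C$ is either contained in $R$ or disjoint from it. In the first case $C\setminus R=\emptyset$, which matches $Y_{i,C}=\emptyset$. In the second case $C\setminus R=C$, which matches the construction.

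\textbf{Step 2: reachability sketches of components.} Next we show, by induction along the reverse topological order of the DAG $G'_i$, that
\[
R_{i,C}=\bottom\big(\{y_{i,v}:v\in\Reach_{G_i}^{(n)}(C)\setminus R\}\big).
\]
When $C$ is processed, all of its out-neighbours $C'$ in $G'_i$ have already been finalized, and they satisfy the claim by the inductive hypothesis. We use the decomposition
\[
\Reach_{G_i}^{(n)}(C)=C\cup\bigcup_{C'\in N^{\mathrm{out}}_{G'_i}(C)}\Reach_{G_i}^{(n)}(C').
\]
Removing $R$ from both sides and applying the merge identity repeatedly then gives the claim for $C$. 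The base case is a sink of $G'_i$, which is covered by Step 1.

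\textbf{Step 3: sketches of the query sets.} For singleton queries $T=\{v\}$ with $v\in C$, we have $\Reach(\{v\})=\Reach(C)$ by strong connectivity, so the claim is immediate. For prefix families, we induct on $j$ using
\[
\Reach(T_j)=\Reach(T_{j-1})\cup\Reach(C_{v_j})
\]
together with the merge identity. The base case needs care. The pseudocode only loops from $j=2$, so $R_{i,T_1}$ must implicitly equal $R_{i,C}$ for the component $C\ni v_1$, exactly as in the singleton case. I will state this convention explicitly.

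\textbf{Main obstacle.} The only genuinely subtle points are the merge identity and the reachability decomposition over the condensation. Once both are in place, the rest is routine bookkeeping. I will also note that the pseudocode comments write $\tau$ where $n$ is meant.
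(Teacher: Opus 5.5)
Your proposal is correct and follows essentially the same route as the paper. Both arguments run an induction over the condensation DAG in reverse topological order, showing that $R_{i,C}$ is the bottom-$b$ sketch of $\Reach_{G_i}^{(n)}(C)\setminus\Reach_{G_i}^{(n)}(S)$, and then merge component sketches for the singleton and prefix queries. Your version is tighter than the paper's in four places: you prove the merge identity, you justify that each strongly connected component is either contained in or disjoint from $\Reach_{G_i}^{(n)}(S)$, you include $C$ itself in the decomposition of $\Reach_{G_i}^{(n)}(C)$ (the paper's inductive step omits it), and you make the $T_1$ base case explicit.
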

\begin{proof}
Fix a live-edge graph $G_{i}$ in the outer for loop. Since the influence
is unrestricted, we have
\[\Reach_{G_{i}}^{(n)}\left(T\cup S\right)\setminus\Reach_{G_{i}}^{(n)}\left(S\right)=\Reach_{G_{i}}^{(n)}\left(T\right)\setminus\Reach_{G_{i}}^{(n)}\left(S\right)\]
for sets $T\in{\cal T}$. To compute a set $\Reach_{G_{i}}^{(n)}\left(T\right)\setminus\Reach_{G_{i}}^{(n)}\left(S\right)$,
it is sufficient to identify the vertices that are reachable from
$T$ but not reachable from $S$. Contracting the strongly connected
components ${\cal C}$ in $G_{i}$ preserves the reachability of all
vertices in the graph and we have $\Reach_{G_{i}}^{(n)}\left(\left\{ v\right\} \right)=\bigcup\Reach_{G'_{i}}^{(n)}\left(C\right)$
if $v\in C$. We therefore begin with showing that
\[R_{i,C}=\bottom\left(\left(\bigcup\Reach_{G'_{i}}^{(n)}\left(C\right)\right)\setminus\Reach_{G_{i}}^{(n)}\left(S\right)\right)\]
for all $C\in{\cal C}$. Since we set $Y_{i,C}$ for all $C\in{\cal C}$
such that $Y_{i,C}=\bottom\left(\left\{ y_{i,v}:v\in V\right\} \right)$
if $C\not\subseteq\Reach_{G_{i}}^{(n)}\left(S\right)$ and $Y_{i,C}=\emptyset$
otherwise, this is equivalent to showing that
\[R_{i,C}=\bottom\left(\bigcup\left\{ Y_{i,C'}:C'\in\Reach_{G'_{i}}^{(n)}\left(C\right)\right\} \right).\]
To see that this is exactly what we compute, note that the contraction
yields a directed acyclic graph $G'_{i}$. Hence, we can show by induction
over all $C\in{\cal C}$ in reverse topological order that
\[R_{i,C}=\bottom\left(\bigcup\left\{ Y_{i,C'}:C'\in\Reach_{G'_{i}}^{(n)}\left(C\right)\right\} \right)\]
for all components $C\in{\cal C}$. For all leaves $C\in{\cal C}$,
we set $R_{i,C}=Y_{i,C}$ and are done. Assume now that
\[R_{i,C'}=\bottom\left(\bigcup\left\{ Y_{i,C''}:C''\in\Reach_{G'_{i}}^{(n)}\left(C'\right)\right\} \right)\]
holds true for all outgoing neighbors $C'\in N_{G'_{i}}^{\mathrm{out}}\left(C\right)$.
Since $\Reach_{G'_{i}}^{(n)}\left(C\right)=\bigcup_{C\in N_{G'_{i}}^{\mathrm{out}}\left(C\right)}\Reach_{G'_{i}}^{(n)}\left(C'\right)$,
we also have
\[R_{i,C}=\bottom\left(\bigcup_{C\in N_{G'_{i}}^{\mathrm{out}}\left(C\right)}R_{i,C'}\right)=\bottom\left(\bigcup\Reach_{G'_{i}}^{(n)}\left(C\right)\right)\]
and we are done. We have thus established that
\[\Reach_{G_{i}}^{(n)}\left(\left\{ v\right\} \right)\setminus\Reach_{G_{i}}^{(n)}\left(S\right)=R_{i,C}\]
where $C\in{\cal C}$ is such that $v\in C$. Therefore, we compute
the sketches $R_{i,T}$ correctly by merging over $R_{i,C}$ where
$C\in{\cal C}$ is such that $v\in C$. 
\end{proof}

\begin{lem}
\label{lem:unrestricted-sketch-runtime} Algorithm \ref{alg:estimate-gains-sketch-unrestricted}
runs in time $O\left(b\mtot\right)$ for $b=\widetilde{O}\left(\frac{1}{\epsest^{2}}\log\left(\frac{\ell}{\prest}\right)\right)$
and succeeds with probability $\prest$. 
\end{lem}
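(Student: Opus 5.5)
The lemma has two parts: a running-time bound and a correctness bound. I would handle them separately and combine them by a union bound. (The statement says ``succeeds with probability $\prest$''; I read this as failure probability at most $\prest$, matching the output specification of Algorithm \ref{alg:estimate-gains-sketch-unrestricted}.)

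\textbf{Running time.} Fix a graph $G_i$ and write $m_i=|E_i|$, $n_i=|V_i|\le m_i$.
\begin{itemize}
\item Computing and contracting strongly connected components (e.g.\ Tarjan), the topological order, and $\Reach_{G_i}^{(n)}(S)$ by BFS all take $O(m_i)$ time.
\item Sampling the values $y_{i,v}$ and forming the sketches $Y_{i,C}$ takes $O(n_i)$ time in total. For example, a linear-time selection of the $b$ smallest values within each component, followed by sorting, costs $O(|C|+b\log b)$ per component.
\item In the dynamic program, each DAG edge $(C,C')$ of $G'_i$ triggers one merge $\bottom(R_{i,C}\cup R_{i,C'})$. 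This merge is a linear merge of two sorted lists of length at most $b$, hence $O(b)$ time. The DAG has at most $m_i$ edges, so this step costs $O(bm_i)$. Components with no out-edges cost $O(b)$ each, and there are at most $n_i\le m_i$ of them.
\item The prefix merges and singleton lookups take $O(b)$ per set $T\in\mathcal{T}$. Here $|\mathcal{T}|\le n$ and there are at most $n_i$ relevant vertices per graph (vertices outside $V_i$ contribute empty sketches), so this is again $O(bm_i)$.
\item Evaluating each sketch takes $O(1)$ time if we store its size and maximum.
\end{itemize}
Summing over $i$ gives $O(b\sum_i m_i)=O(b\mtot)$. The choice $b=\frac{6}{\epsest^2}\log(\frac{2\ell n}{\prest})$ is $\widetilde O(\frac{1}{\epsest^2}\log(\frac{\ell}{\prest}))$, since the $\log n$ is absorbed into $\widetilde O$.

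\textbf{Correctness.} By Lemma \ref{lem:unrestricted-sketch-correctness}, each $R_{i,T}$ is exactly the bottom-$b$ sketch of the set $R=\Reach_{G_i}^{(n)}(T)\setminus\Reach_{G_i}^{(n)}(S)$ under independent uniform values. Apply Lemma \ref{lem:min-hash} with $\widetilde\delta=\prest/(\ell n)$. This choice matches $b$, and it shows that each $X_i(T\mid S)$ lies within a factor $1\pm\epsest$ of $|R|$ with probability $1-\prest/(\ell n)$.

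One subtlety is that the sketches of different $T$ share the same random values $y_{i,v}$. Lemma \ref{lem:min-hash} holds for each fixed $R$ regardless of this sharing, so the union bound is still valid: over the at most $\ell\cdot n$ pairs $(i,T)$, all estimates are accurate simultaneously with probability at least $1-\prest$.

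On this event, average over $i$. Since
\[
|R|=|\Reach_{G_i}^{(n)}(T\cup S)|-|\Reach_{G_i}^{(n)}(S)|,
\]
the average $\frac1\ell\sum_i|R|$ equals $I^{(n)}(T\mid S)$. Averaging two-sided multiplicative bounds preserves them, so $X(T\mid S)$ has multiplicative error $\epsest$ and no additive error.

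\textbf{Main obstacle.} The step I expect to be the main obstacle is the running-time bookkeeping for merges. The key observation to verify is that every merge is charged to a distinct edge of the contracted DAG or to a distinct element of $\mathcal{T}$, so the cost is $O(b)$ per edge rather than $O(b\log b)$. This requires that the sketches are kept as sorted arrays of length at most $b$.
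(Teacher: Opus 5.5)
Your proposal is correct and follows the paper. The running-time accounting is the paper's proof of this lemma: SCC contraction, one $O(b)$ merge of sorted sketches per edge of $G'_i$, restricting the sets in $\mathcal{T}$ to $V_i$, and summing $O(bm_i)$ over the $\ell$ graphs. Your probability argument (Lemma~\ref{lem:unrestricted-sketch-correctness}, then Lemma~\ref{lem:min-hash} with $\widetilde{\delta}=\prest/(\ell n)$, then a union bound over the $\ell n$ pairs $(i,T)$) is exactly what the paper does in the proof of Theorem~\ref{lem:sketch}.

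One small slip: your per-component cost does not sum to the $O(n_i)$ you claim. Building the $Y_{i,C}$ costs $O(|C|+\min(|C|,b)\log b)$ per component, i.e.\ $O(n_i\log b)$ in total, as the paper states. This is still within $O(bm_i)$, so the conclusion is unaffected.
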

\begin{proof}
Fix a live-edge graph $G_{i}$ let $n_{i}$ and $m_{i}$ be the number
of vertices edges in $G_{i}$ such that $\mtot=\sum_{i=1}^{\ell}m_{i}$.
We analyze the running time for each iteration of the outer for loop
over the live edge graphs $G_{i}$. The running time is determined
by the time needed to compute and order the strongly connected components
in reverse topological order, and to build and evaluate the bottom-$b$
sketches. Computing the strongly connected components and sorting
them topologically takes time only $O\left(n_{i}\right)$ via the
algorithm of \citet{tarjan72}. It thus remains to analyze the running
time for building the sketches. We first visit all components $C\in{\cal C}$
and create the bottom-$b$ sketch $Y_{i,C}$ for the component $C$
itself, which can be done in time $O\left(n_{i}\log b\right)$ using
a priority queue for each component. In the next step, we then merge
the bottom-$b$ sketches whereas we use one merge operation per edge
in $G'_{i}$ in order to obtain $R_{i,C}$ for all $C\in{\cal C}$.
A single merge of two bottom-$b$ sketches takes time $O\left(b\right)$
since we can store the sketches in sorted order. Hence, all merge
operations take time $O\left(b\left|E\left(G'_{i}\right)\right|\right)=O\left(bm_{i}\right)$.
Let us now analyze the time needed to build the final sketches $R_{i,T}$
for all $T\in{\cal T}$ and to evaluate them. Note here that we may
assume that in a preprocessing step at the beginning of Algorithm
\ref{alg:estimate-gains-sketch-unrestricted}, we have, for each graph
$G_{i}$, reduced the sets $T\in{\cal T}$ such that $T\subseteq V_{i}$.
We can do this in linear in $\left|{\cal T}\right|$ using a list
each vertex $v\in V$ that holds indices $i$ of graphs $G_{i}$ with
$v\in V_{i}$. As such, merging the bottom-$b$ sketches in order
to obtain $R_{i,T}$ for all $T\in{\cal T}$ also takes time only
$O\left(bn_{i}\right)$. At the end, evaluating the sketches for all
sets $T\in{\cal T}$ requires time also $O\left(n_{i}\right)$. Recall
that we may assume that $n_{i}\le m_{i}$. The total time to build
and evaluate sketches is therefore $O\left(bm_{i}\right)$. Summing
over all $\ell$ graphs, the total running time is $O\left(b\mtot\right)$.
\end{proof}

\begin{thm}
\label{lem:sketch} Assume that ${\cal T}$ consists either of singletons
or is a sequence of $k$ prefixes. Given as input $\epsest$ and $\prest$,
Algorithm \ref{alg:estimate-gains-sketch-unrestricted} runs in time
$O\left(b\mtot\right)$ for $b=\widetilde{O}\left(\frac{1}{\epsest^{2}}\log\left(\frac{\ell}{\prest}\right)\right)$
and it achieves a multiplicative error $\epsest$, failure probability
$\prest$. 
\end{thm}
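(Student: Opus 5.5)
The plan is to combine the three ingredients already established: the running-time bound of Lemma~\ref{lem:unrestricted-sketch-runtime}, the structural correctness of Lemma~\ref{lem:unrestricted-sketch-correctness}, and the single-set sketch guarantee of Lemma~\ref{lem:min-hash}, followed by a union bound.

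\textbf{Running time.} This is immediate from Lemma~\ref{lem:unrestricted-sketch-runtime}. The only extra check is that the hypothesis on ${\cal T}$ (singletons, or a chain of at most $k\le n$ prefixes) is exactly what lets the sketches $R_{i,T}$ be built with $O(|{\cal T}|)\le O(n_i)$ merges per graph. In the prefix case, each $R_{i,T_j}$ is obtained from $R_{i,T_{j-1}}$ by a single $O(b)$ merge. In the singleton case, each $R_{i,\{v\}}$ is a copy of an existing sketch. The choice $b=\frac{6}{\epsest^{2}}\log(\frac{2\ell n}{\prest})$ is $\widetilde{O}(\frac{1}{\epsest^{2}}\log(\frac{\ell}{\prest}))$, since $\log n$ is absorbed into the $\widetilde{O}$.

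\textbf{Accuracy.} Fix a graph $G_i$ and a set $T\in{\cal T}$. By Lemma~\ref{lem:unrestricted-sketch-correctness}, $R_{i,T}$ is exactly the bottom-$b$ sketch of the i.i.d.\ uniform values $\{y_{i,v}\}$ over
\[
R=\Reach_{G_i}^{(n)}(T)\setminus\Reach_{G_i}^{(n)}(S).
\]
Since the influence is unrestricted, $|R|=|\Reach_{G_i}^{(n)}(T\cup S)|-|\Reach_{G_i}^{(n)}(S)|$, which is graph $i$'s contribution to $\ell\, I^{(n)}(T\mid S)$. Applying Lemma~\ref{lem:min-hash} with $\widetilde{\delta}=\prest/(\ell n)$, and noting that the chosen $b$ matches the lemma's $b=\frac{6}{\epsest^{2}}\log(\frac{2}{\widetilde{\delta}})$, gives
\[
(1-\epsest)|R|\le X_i(T\mid S)\le(1+\epsest)|R|
\]
with probability $1-\widetilde{\delta}$. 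The value is exact when $|R|<b$, and also when $R=\emptyset$.

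\textbf{Union bound.} There are at most $\ell\cdot|{\cal T}|\le\ell n$ pairs $(i,T)$, so all the per-graph estimates are simultaneously within a $(1\pm\epsest)$ factor with probability at least $1-\prest$. Multiplicative bounds are preserved under nonnegative averaging, so averaging over $i$ gives
\[
(1-\epsest)I^{(n)}(T\mid S)\le X(T\mid S)\le(1+\epsest)I^{(n)}(T\mid S)
\]
for all $T\in{\cal T}$. This is multiplicative error $\epsest$ with $\cest=0$ in the sense of Definition~\ref{def:estimation-error}.

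\textbf{Main subtlety.} The delicate point is that the estimates for different $T$ share the same hash values $y_{i,v}$, so they are correlated. This does not hurt, because Lemma~\ref{lem:min-hash} is applied to each fixed set $R$ separately and the union bound needs no independence. It does require that, for each pair, $R$ is determined by $G_i$, $S$ and $T$ alone, and not by the hash values. That holds because ${\cal T}$ and $S$ are fixed before the $y_{i,v}$ are freshly sampled in this call. This is precisely why the framework rebuilds sketches in every round rather than reusing them.
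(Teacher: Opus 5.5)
Your proposal is correct and follows the paper's proof essentially step for step. Both take the running time from Lemma~\ref{lem:unrestricted-sketch-runtime}, get per-pair accuracy from Lemma~\ref{lem:unrestricted-sketch-correctness} combined with Lemma~\ref{lem:min-hash} at $\widetilde{\delta}=\prest/(\ell n)$, and finish with a union bound over the at most $\ell n$ pairs $(i,T)$. Your explicit remarks that averaging preserves the $(1\pm\epsest)$ bounds and that the hash values are sampled independently of $S$ and ${\cal T}$ are correct points the paper leaves implicit. Your side claim $|{\cal T}|\le n_i$ is not literally true, but nothing depends on it, since the cited runtime lemma handles this by restricting each $T$ to $V_i$.
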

\begin{proof}
By Lemma \ref{lem:unrestricted-sketch-correctness}, we compute sketches
$R_{i,T}$ such that
\[R_{i,T}=\bottom\left(\left\{ y_{i,v}:v\in\Reach_{G_{i}}^{(n)}\left(T\right)\setminus\Reach_{G_{i}}^{(n)}\left(S\right)\right\} \right).\]
Combined with Lemma \ref{lem:min-hash}, we therefore output estimates
that have multiplicative error at most $\epsest$ with probability
$\widetilde{\delta}=\frac{\prest}{\ell n}$ since
$b=\frac{6}{\epsest^{2}}\log\left(\frac{2\ell n}{\prest}\right)$.
By a union bound over the $\ell$ graphs $G_{i}$ and at most $n$
sets $T\in{\cal T}$, we thus have multiplicative error at most $\epsest$
with probability $\prest$. The running time follows from Lemma \ref{lem:unrestricted-sketch-runtime}. 
\end{proof}

\subsection{Our estimation routine for the restricted influence}

\begin{algorithm}
\begin{raggedright}
\caption{\label{alg:estimate-gains-sketch-restricted} Sketching algorithm
for estimating the restricted influence.}
\par\end{raggedright}
\begin{raggedright}
$\mathrm{EstimateGainsSketchRestricted}\left({\cal T},S,\epsest,\prest\right)$:
\par\end{raggedright}
\begin{raggedright}
\textbf{\textcolor{blue}{Input:}}\textcolor{blue}{{} Sets $T\subseteq V$
for all $T\in{\cal T}$ and $S\subseteq V$, multiplicative error
$\epsest$, failure probability $\prest$.}
\par\end{raggedright}
\begin{raggedright}
\textbf{\textcolor{blue}{Output:}}\textcolor{blue}{{} With probability
$1-\prest$, the algorithm constructs estimates $X\left(T\mid S\right)$
such that $\left(1-\epsest\right)I^{(\tau)}\left(T\mid S\right)\le X\left(T\mid S\right)\le\left(1+\epsest\right)I^{(\tau)}\left(T\mid S\right)$
for all $T\in{\cal T}$.}
\par\end{raggedright}
\begin{raggedright}
Define $\bottom\left(Y\right)$ as the set of the $b$ smallest values
in a finite set $Y\subseteq\Reals$
\par\end{raggedright}
\begin{raggedright}
Define $\max\left(Y\right)$ as the largest value of a finite set
$Y\subseteq\Reals$
\par\end{raggedright}
\begin{raggedright}
Define $N_{G}^{\mathrm{in}}\left(u\right)=\left\{ v:(v,u)\in E(G)\right\} $
as the set of incoming neighbors of $u$ in $G$
\par\end{raggedright}
\begin{raggedright}
Let $b=\frac{6}{\epsest^{2}}\log\left(\frac{2\ell n}{\prest}\right)$
\par\end{raggedright}
\begin{raggedright}
\textbf{for }$i=1,\dots,\ell$ \textbf{do}
\par\end{raggedright}
\begin{raggedright}
\textcolor{blue}{$\hfill$/$\negmedspace$/
building bottom-$b$ sketches $R_{i,\{u\}}$ for the sets $\Reach_{G_{i}}^{(\tau)}\left(\left\{ u\right\} \right)\setminus\Reach_{G_{i}}^{(\tau)}\left(S\right)$
for all $u\in V$}
\par\end{raggedright}
\begin{raggedright}
$\qquad$Sample $y_{i,v}\sim\mathrm{Uniform}\left(\left[0,1\right]\right)$
for all $v\in V$
\par\end{raggedright}
\begin{raggedright}
$\qquad$Initialize the distance multiset $D_{i,v}\gets\emptyset$
for all $v\in V$
\par\end{raggedright}
\begin{raggedright}
$\qquad$Initialize the sketch $Y_{i,v}\gets\emptyset$ for all $v\in V$
\par\end{raggedright}
\begin{raggedright}
$\qquad$\textbf{for} $v\in V\setminus\Reach_{G_{i}}^{(\tau)}\left(S\right)$
in ascending order of $y_{i,v}$ \textbf{do}
\par\end{raggedright}
\begin{raggedright}
\textcolor{blue}{$\hfill$/$\negmedspace$/ putting $y_{i,v}$ into
$R_{i,\{w\}}$ if $y_{i,v}\in\bottom\left(\left\{ y_{i,v'}:v'\in\Reach_{G_{i}}^{(\tau)}\left(\left\{ u\right\} \right)\setminus\Reach_{G_{i}}^{(\tau)}\left(S\right)\right\} \right)$
via reverse breadth-first search}
\par\end{raggedright}
\begin{raggedright}
$\qquad\qquad$$Q\gets\{v\}$
\par\end{raggedright}
\begin{raggedright}
$\qquad\qquad$$\mathrm{visited}[u]\gets\mathrm{false}$ for all $u\in V$
\par\end{raggedright}
\begin{raggedright}
$\qquad\qquad$\textbf{for} $d=0,\dots,\tau$ \textbf{do}
\par\end{raggedright}
\begin{raggedright}
$\qquad\qquad\qquad$$Q_{\mathrm{next}}\gets\emptyset$
\par\end{raggedright}
\begin{raggedright}
$\qquad\qquad\qquad$\textbf{for} $u\in Q$ \textbf{do}\textcolor{blue}{$\hfill$/$\negmedspace$/ encounter
vertex $u$}
\par\end{raggedright}
\begin{raggedright}
$\qquad\qquad\qquad\quad$$D_{i,u}\gets D_{i,u}\cup\left\{ d\right\} $
\par\end{raggedright}
\begin{raggedright}
$\qquad\qquad\qquad\quad$$\mathrm{visited}[u]\gets\mathrm{true}$
\par\end{raggedright}
\begin{raggedright}
$\qquad\qquad\qquad\quad$\textbf{if} $\left|R_{i,\{u\}}\right|<b$
\textbf{then}
\par\end{raggedright}
\begin{raggedright}
$\qquad\qquad\qquad\quad\quad$$R_{i,\{u\}}\gets R_{i,\{u\}}\cup\left\{ y_{i,v}\right\} $
\par\end{raggedright}
\begin{raggedright}
$\qquad\qquad\qquad\quad$\textbf{if} $\left|\left\{ d':d'\in D_{i,u}\land d'\le d\right\} \right|\le b$
\textbf{then}\textcolor{blue}{$\hfill$/$\negmedspace$/ traversal check for $u$}
\par\end{raggedright}
\begin{raggedright}
$\qquad\qquad\qquad\quad\quad$\textbf{for} $w\in N_{G_{i}}^{\mathrm{in}}(u)$
with $\mathrm{visited}[w]=\mathrm{false}$ \textbf{do}
\par\end{raggedright}
\begin{raggedright}
$\qquad\qquad\qquad\quad\quad\quad$$Q_{\mathrm{next}}\gets Q_{\mathrm{next}}\cup\left\{ w\right\} $
\par\end{raggedright}
\begin{raggedright}
$\qquad\qquad\qquad$$Q\gets Q_{\mathrm{next}}$
\par\end{raggedright}
\begin{raggedright}
\textcolor{blue}{$\hfill$/$\negmedspace$/ building bottom-$b$ sketches
$R_{i,T}$ for the sets $\Reach_{G_{i}}^{(\tau)}\left(T\right)\setminus\Reach_{G_{i}}^{(\tau)}\left(S\right)$
for all $T\in{\cal T}$}
\par\end{raggedright}
\begin{raggedright}
$\qquad$\textbf{if} ${\cal T}$ is a sequence of prefixes $T_{j}=\left\{ v_{1},v_{2},\dots,v_{j}\right\} $
for all $j\in\left\{ 1,\dots,\left|{\cal T}\right|\right\} $ \textbf{then}
\par\end{raggedright}
\begin{raggedright}
$\qquad\qquad$\textbf{for} $j=2,\dots,\left|{\cal T}\right|$ \textbf{do}
\par\end{raggedright}
\begin{raggedright}
$\qquad\qquad\qquad$$R_{i,T_{j}}\gets\bottom\left(R_{i,T_{j-1}}\cup R_{i,\left\{ v_{j}\right\} }\right)$
\par\end{raggedright}
\begin{raggedright}
$\qquad$\textbf{for} $T\in{\cal T}$ \textbf{do}
\par\end{raggedright}
\begin{raggedright}
\textcolor{blue}{$\hfill$/$\negmedspace$/
evaluating bottom-$b$ sketches for the sets $\Reach_{G_{i}}^{(\tau)}\left(T\right)\setminus\Reach_{G_{i}}^{(\tau)}\left(S\right)$
for all $T\in{\cal T}$}
\par\end{raggedright}
\begin{raggedright}
$\qquad\qquad$\textbf{if} $\left|R_{i,T}\right|<b$ \textbf{then}
\par\end{raggedright}
\begin{raggedright}
$\qquad\qquad\qquad X_{i}\left(T\mid S\right)\gets\left|R_{i,T}\right|$
\par\end{raggedright}
\begin{raggedright}
$\qquad\qquad$\textbf{else}
\par\end{raggedright}
\begin{raggedright}
$\qquad\qquad\qquad$$X_{i}\left(T\mid S\right)\gets\frac{b}{\max\left(R_{i,T}\right)}-1$
\par\end{raggedright}
\begin{raggedright}
$X\left(T\mid S\right)=\frac{1}{\ell}\sum_{i=1}^{\ell}X_{i}\left(T\mid S\right)$
for all $T\in{\cal T}$
\par\end{raggedright}
\raggedright{}\textbf{return} $\left\{ X\left(T\mid S\right):T\in{\cal T}\right\} $
\end{algorithm}

We use Algorithm \ref{alg:estimate-gains-sketch-restricted} to compute
and evaluate bottom-$b$ sketches for the sets $\Reach_{G_{i}}^{(\tau)}\left(T\right)\setminus\Reach_{G_{i}}^{(\tau)}\left(S\right)$
for all $T\in{\cal T}$. For each graph $G_{i}$, the algorithm goes
through the following steps: First, we start with building the bottom-$b$
sketches $\bottom\left(\Reach_{G_{i}}^{(\tau)}\left(\left\{ w\right\} \right)\setminus\Reach_{G_{i}}^{(\tau)}\left(S\right)\right)$
for all vertices $w\in V$. We use sets $R_{i,\{w\}}$ to hold these
bottom-$b$ sketches, and we populate them with elements $y_{i,v}\sim\mathrm{Uniform}\left(\left[0,1\right]\right)$.
We do this via a reverse breadth-first search starting from each $v\in V$
in increasing order of $y_{i,v}$. The reverse breadth-first search
ensures that we add $y_{i,v}$ into the bottom-$b$ sketches $R_{i,\{w\}}$
for vertices $w$ whenever $v\in\bottom\left(\Reach_{G_{i}}^{(\tau)}\left(\left\{ w\right\} \right)\setminus\Reach_{G_{i}}^{(\tau)}\left(S\right)\right)$.
We use a traversal check to stop the breadth-first traversal through
a vertex $u$ to enhance the running time. For this, we maintain a
multiset of distances $D_{i,u}$ which collects all distances at which
we performed a traversal check for $u$. If there are more than $b$
shorter distances, we know that all sets $R_{i,w}$ for vertices $w$
that are reachable on a shortest path through $u$ are already complete,
so we do not need to continue the traversal through $u$. We show
in Lemma \ref{lem:restricted-sketch-correctness} that we still obtain
the correct bottom-$b$ sketches $\bottom\left(\Reach_{G_{i}}^{(\tau)}\left(\left\{ w\right\} \right)\setminus\Reach_{G_{i}}^{(\tau)}\left(S\right)\right)=R_{i,\{w\}}$.
This algorithm and analysis is due to \citet{cohen08tighter}. In
the second step, we aggregate bottom-$b$ sketches if the input ${\cal T}$
is a sequence of prefixes. If this is the case, it suffices to merge
one sketch after another. In the last step, we then evaluate the bottom-$b$
sketches. We then output, for each set $T\in{\cal T}$, the average
sketch value over all graphs $G_{i}$. 
\begin{lem}
\label{lem:restricted-sketch-correctness} At the end of Algorithm
\ref{alg:estimate-gains-sketch-restricted}, $R_{i,T}=\bottom\left(\left\{ y_{i,v}:v\in\Reach_{G_{i}}^{(\tau)}\left(T\right)\setminus\Reach_{G_{i}}^{(\tau)}\left(S\right)\right\} \right)$
for all $T\in{\cal T}$.
\end{lem}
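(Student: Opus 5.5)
The proof has two parts. First we show that for every $w\in V$ the sketch $R_{i,\{w\}}$ equals $\bottom(\{y_{i,v}:v\in\Reach_{G_i}^{(\tau)}(\{w\})\setminus\Reach_{G_i}^{(\tau)}(S)\})$. Then the prefix case follows, because bottom-$b$ commutes with unions: $\bottom(A\cup B)=\bottom(\bottom(A)\cup\bottom(B))$. Fix a graph $G_i$ and write $R_w$ for the target set $\Reach_{G_i}^{(\tau)}(\{w\})\setminus\Reach_{G_i}^{(\tau)}(S)$.

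For the singletons, we argue by induction over the outer loop, which processes $v\in V\setminus\Reach_{G_i}^{(\tau)}(S)$ in ascending order of $y_{i,v}$. The invariant is that after processing $v$, each $R_{i,\{w\}}$ equals the bottom-$b$ of $\{y_{i,v'}: v'\in R_w,\ y_{i,v'}\le y_{i,v}\}$. Since values arrive in increasing order and are appended only while $|R_{i,\{w\}}|<b$, it suffices to show two things. First, every $w$ that is reached in the reverse BFS from $v$ satisfies $v\in R_w$, so nothing spurious is inserted. Second, every $w$ with $v\in R_w$ and $|R_{i,\{w\}}|<b$ is reached.

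The first claim is easy. The reverse BFS stops at depth $\tau$, so any visited $w$ has a path of length at most $\tau$ to $v$. Also $v\notin\Reach_{G_i}^{(\tau)}(S)$. Together these give $v\in R_w$. The visited flag only prevents repeats; it does not create false reaches.

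The second claim is the main obstacle, because the traversal check prunes the search at $u$ once $|\{d'\in D_{i,u}:d'\le d\}|>b$. Suppose $w$ has $d_G(w,v)\le\tau$ and is not reached. Since the BFS without pruning visits vertices at their exact distance, the shortest path $w\to\cdots\to u\to\cdots\to v$ must be cut at some vertex $u$ at depth $d=d_G(u,v)$. At that moment $u$ had already been encountered at depth at most $d$ in at least $b$ earlier searches, from sources $v_1,\dots,v_b$ with smaller $y$-values. This counts the current entry, so the exact off-by-one depends on the convention $\le b$ versus $<b$ in the check, and I would match it carefully.

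For each such source, $d_G(u,v_j)\le d$, so
\[d_G(w,v_j)\le d_G(w,u)+d_G(u,v_j)\le d_G(w,u)+d=d_G(w,v)\le\tau.\]
Hence $v_j\in R_w$; each $v_j$ is outside $\Reach_{G_i}^{(\tau)}(S)$ because only such sources are processed. By the induction hypothesis these $b$ values were already inserted into $R_{i,\{w\}}$, whose size is therefore at least $b$, contradicting $|R_{i,\{w\}}|<b$. To be careful, I would make the induction hypothesis stronger: every earlier source $v'$ with $v'\in R_w$ was inserted unless the sketch was already full. Then I would run a nested induction on the BFS depth within a single source, so that "$u$ encountered at depth $d'\le d$ from $v_j$" really certifies $d_G(u,v_j)\le d'$. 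This certification holds because BFS depth is always at least the true distance.

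For the prefixes, the computed $R_{i,T_j}$ is obtained by induction on $j$ using the union identity above. Distinct vertices get distinct values almost surely. This uses $\Reach^{(\tau)}(T_j)=\Reach^{(\tau)}(T_{j-1})\cup\Reach^{(\tau)}(\{v_j\})$, and removing $\Reach^{(\tau)}(S)$ distributes over the union.
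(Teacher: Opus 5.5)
Your proof is correct and has the same skeleton as the paper's. An unreached $w$ forces a failed traversal check at some vertex $u$ on a shortest $w$--$v$ path. The triangle inequality through $u$ then transfers at least $b$ earlier, closer sources from $u$ to $w$. The prefix case follows from $\mathrm{bot}_b(A\cup B)=\mathrm{bot}_b(\mathrm{bot}_b(A)\cup\mathrm{bot}_b(B))$.

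The difference lies in what justifies the middle step. The paper maintains the stronger invariant that $D_{i,u}$ contains the $b$ smallest true distances $\mathrm{dist}_{G_i}(u,v')$ over earlier sources $v'$, supported by an informal induction. You use only the soundness direction: each entry $d'\le d$ of $D_{i,u}$ was recorded by a distinct earlier source $v'\notin\Reach_{G_i}^{(\tau)}(S)$ with smaller $y$-value, and $\mathrm{dist}_{G_i}(u,v')\le d'$ because BFS depth never underestimates distance. That is all correctness needs. It also avoids controlling the exact contents of $D_{i,u}$, which is delicate because after pruning a vertex can be recorded at a depth larger than its true distance.

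You also make explicit two things the paper leaves implicit:
\begin{itemize}
\item the outer induction over sources in increasing $y$-order, which turns ``only sources in $\Reach_{G_i}^{(\tau)}(\{w\})\setminus\Reach_{G_i}^{(\tau)}(S)$ insert into $R_{i,\{w\}}$, and each of the $b$ smallest such sources reaches $w$'' into equality of sketches;
\item the propagation of exact depths along a shortest path up to the first pruned vertex.
\end{itemize}
The off-by-one you flag works out exactly. Since $d$ is added to $D_{i,u}$ before the check, a failed check leaves at least $b$ earlier entries $\le d$, which is precisely the number needed to exclude $y_{i,v}$ from the sketch of $w$.

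One caveat applies equally to your proof and the paper's. Passing from $b$ entries to $b$ distinct sources requires each search to record at most one depth per vertex. The pseudocode, however, sets $\mathrm{visited}[u]$ only when $u$ is processed, not when it is enqueued. A vertex can therefore be added to $Q_{\mathrm{next}}$ by a neighbor processed earlier in the same level, and then be processed at two consecutive levels of one search. You should state that you analyze the standard BFS that marks vertices when they are enqueued. Your exact-depth propagation argument goes through unchanged under that convention.
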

\begin{proof}
We first show that at the end of the algorithm,
\[R_{i,\{w\}}=\bottom\left(\left\{ y_{i,v}:v\in\Reach_{G_{i}}^{(\tau)}\left(T\right)\setminus\Reach_{G_{i}}^{(\tau)}\left(S\right)\right\} \right)\]
for all $w\in V$. Fix the iteration in which we consider a graph
$G_{i}$ in the first for loop and perform a reverse breadth-first
search from a vertex $v\in V$. Since we explicitly exclude vertices
$v\in\Reach_{G_{i}}^{(\tau)}\left(S\right)$ from the second for loop,
we never include $y_{i,v}$ for $v\in\Reach_{G_{i}}^{(\tau)}\left(S\right)$
into $R_{i,\{w\}}$. We show that at the end of the reverse breadth-first
search, $v\in R_{i,w}$ if
\[v\in\bottom\left(\left\{ y_{i,v}:v\in\Reach_{G_{i}}^{(\tau)}\left(w\right)\setminus\Reach_{G_{i}}^{(\tau)}\left(S\right)\right\} \right)\]
for all $w\in V$.

We perform a (reverse) breadth-first search from $v$ where we encounter
vertices $u$ when $d=\mathrm{dist}_{G_{i}}\left(u,v\right)$. The
breadth-first search is modified to contain a traversal check at vertex
$u$. Without the traversal check, we will encounter all vertices
$u$ with with $\mathrm{dist}_{G_{i}}\left(u,v\right)\le\tau$. However,
if the traversal check fails at vertex $u$, we may not encounter
vertices $u'\in V$ with $d\left(u',v\right)=d\left(u',u\right)+d\left(u,v\right)$,
which follows from basic properties of the breadth-first search. We
use this fact to show that the following invariant holds at each traversal
check for $u$:
\begin{itemize}
\item $D_{i,u}$ contains the $b$ smallest distances in the multiset $\left\{ \mathrm{dist}_{G_{i}}\left(u,v'\right):y_{v'}\le y_{v}\land\mathrm{dist}_{G_{i}}\left(u,v'\right)\le\tau\right\} $
for all $u\in V$
\end{itemize}
We thus need to argue that we include $\mathrm{dist}_{G_{i}}\left(u',v\right)$
in $D_{i,u'}$ unless there are at least $b$ vertices $v'$ with
$y_{v'}\le y_{v}$ and $\mathrm{dist}_{G_{i}}\left(u',v'\right)\le\tau$.
Assume the invariant holds at the traversal check for a vertex $u$.
If the traversal check fails, we have $\left|\left\{ d\in D_{i,u}:d\le\mathrm{dist}_{G_{i}}\left(u,v\right)\right\} \right|>b$,
which by our assumption means that there are more than $b$ vertices
$v'$ with $\mathrm{dist}_{G_{i}}\left(u,v'\right)\le\mathrm{dist}_{G_{i}}\left(u,v\right)$.
We may not encounter vertices $u'\in V$ with $d\left(u',v\right)=d\left(u',u\right)+d\left(u,v\right)$.
However, for these vertices $u'$ we have more than $b$ vertices
$v'$ with $\mathrm{dist}_{G_{i}}\left(u',v'\right)\le\mathrm{dist}_{G_{i}}\left(u',v\right)+\mathrm{dist}_{G_{i}}\left(u,v\right)\le\mathrm{dist}_{G_{i}}\left(u',v\right)+\mathrm{dist}_{G_{i}}\left(u,v\right)=\mathrm{dist}_{G_{i}}\left(u',v\right)$.
We thus do not need to add the distance to $D_{i,u'}$.

We now use this invariant to show that if the traversal check fails
at any vertex $u$ then
\[y_{i,v}\notin\bottom\left(\left\{ y_{i,v}:v\in\Reach_{G_{i}}^{(\tau)}\left(w\right)\setminus\Reach_{G_{i}}^{(\tau)}\left(S\right)\right\} \right)\]
for all $w$ with $\mathrm{dist}_{G_{i}}\left(w,v\right)=\mathrm{dist}_{G_{i}}\left(w,u\right)+\mathrm{dist}_{G_{i}}\left(u,v\right)$.
If the traversal check fails at vertex $u$, we have that
$\left|\left\{ d\in D_{i,u}:d\le\mathrm{dist}_{G_{i}}\left(u,v\right)\right\} \right|>b$,
which by our invariant means that there are more than $b$ vertices
$v'$ with $\mathrm{dist}_{G_{i}}\left(u,v'\right)\le\mathrm{dist}_{G_{i}}\left(u,v\right)$.
Hence, there are also more than $b$ vertices $v'$ with
\[\mathrm{dist}_{G_{i}}\left(w,v'\right)\le\mathrm{dist}_{G_{i}}\left(w,u\right)+\mathrm{dist}_{G_{i}}\left(u,v'\right)\le\mathrm{dist}_{G_{i}}\left(w,u\right)+\mathrm{dist}_{G_{i}}\left(u,v\right)=\mathrm{dist}_{G_{i}}\left(w,v\right).\]
As such, we also know that
\[y_{i,v}\notin\bottom\left(\left\{ y_{i,v}:v\in\Reach_{G_{i}}^{(\tau)}\left(w\right)\setminus\Reach_{G_{i}}^{(\tau)}\left(S\right)\right\} \right).\]

It remains to argue that we correctly construct $R_{i,T}$ from $R_{i,\{v\}}$
for all $T\in{\cal T}$. If the set ${\cal T}$ consists of singletons,
we are done. If ${\cal T}$ consists of a sequence of prefixes $T_{j}=\left\{ v_{1},\dots,v_{j}\right\} $
for all $j\in\left\{ 1,\dots,\left|{\cal T}\right|\right\} $, we
merge the bottom-$b$ sketches $R_{i,\{w\}}$ for singletons $w\in V$.
The correctness follows inductively since for all $j\ge2$ we have
\[\Reach_{G_{i}}^{(\tau)}\left(T_{j}\right)\setminus\Reach_{G_{i}}^{(\tau)}\left(S\right)=\left(\Reach_{G_{i}}^{(\tau)}\left(T_{j-1}\right)\setminus\Reach_{G_{i}}^{(\tau)}\left(S\right)\right)\cup\left(\Reach_{G_{i}}^{(\tau)}\left(\left\{ v_{j}\right\} \right)\setminus\Reach_{G_{i}}^{(\tau)}\left(S\right)\right)\]
and therefore we have
\begin{align*}
&\bottom\left(\Reach_{G_{i}}^{(\tau)}\left(T_{j}\right)\setminus\Reach_{G_{i}}^{(\tau)}\left(S\right)\right)\\
&\quad=\bottom\bigg(\bottom\left(\Reach_{G_{i}}^{(\tau)}\left(T_{j-1}\right)\setminus\Reach_{G_{i}}^{(\tau)}\left(S\right)\right)\cup\bottom\left(\Reach_{G_{i}}^{(\tau)}\left(\left\{ v_{j}\right\} \right)\setminus\Reach_{G_{i}}^{(\tau)}\left(S\right)\right)\bigg)\\
&\quad=\bottom\left(R_{i,T_{j-1}}\cup R_{i,\{v_{j}\}}\right).
\end{align*}
\end{proof}

\begin{lem}
\label{lem:restricted-sketch-runtime} Algorithm \ref{alg:estimate-gains-sketch-restricted}
runs in time $\widetilde{O}\left(\mtot b\log\left(b\right)\right)$
for $b=\widetilde{O}\left(\frac{1}{\epsest^{2}}\log\left(\frac{\ell n}{\prest}\right)\right)$
and succeeds with probability $\prest$. 
\end{lem}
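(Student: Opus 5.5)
The plan is to bound the work of Algorithm \ref{alg:estimate-gains-sketch-restricted} separately for each graph $G_i$ and then sum over $i$. For a fixed $G_i$ with $n_i$ vertices and $m_i$ edges, the work splits into four parts: sampling and sorting the values $y_{i,v}$; computing $\Reach_{G_i}^{(\tau)}(S)$; the reverse BFS phase; and the final merging and evaluation of sketches. Sorting costs $O(n_i\log n_i)$ and a single BFS from $S$ costs $O(m_i)$, both within $\widetilde{O}(m_i)$. For prefixes, the merging phase performs at most $|{\cal T}|\le n_i$ merges of sorted bottom-$b$ lists at $O(b)$ each, which is $O(bn_i)=O(bm_i)$ since $n_i\le m_i$. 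Evaluation is $O(1)$ per set. The success probability (the statement's ``succeeds with probability $\prest$'' should be read as failure probability $\prest$) comes from Lemma~\ref{lem:restricted-sketch-correctness} combined with Lemma~\ref{lem:min-hash}, using $\widetilde{\delta}=\prest/(\ell n)$ and a union bound over the $\ell$ graphs and at most $n$ sets. This is exactly as in Theorem~\ref{lem:sketch}.

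The main work is the reverse BFS phase. I would charge its cost to encounters: each encounter of a vertex $u$ at depth $d$ costs $O(1)$ plus a traversal check, and if the check passes, we additionally pay $|N^{\mathrm{in}}(u)|$ to scan in-edges. The key claim, following \citet{cohen08tighter}, is that every vertex $u$ passes the traversal check at most $O(b\log n)$ times in expectation, or at most $b+1$ times outright. The check only passes while $|\{d'\in D_{i,u}:d'\le d\}|\le b$.

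To bound the number of passes, I would first note that insertions into $D_{i,u}$ occur only on encounters. I would then try to show, via the invariant proved in Lemma~\ref{lem:restricted-sketch-correctness}, that $D_{i,u}$ holds the smallest distances among sources with smaller $y$. A pass at distance $d$ therefore means that $\mathrm{dist}(u,v)$ is among the $b+1$ smallest distances (within $\tau$) to sources processed so far. Because sources are processed in random order of $y$, a standard ``record'' argument applies: the $j$-th processed source is among the $b$ nearest of the first $j$ sources with probability at most $b/j$. Hence the expected number of passes is $\sum_j b/j=O(b\log n)$. This is where the randomness of $y$ and the word ``expected'' in Theorem~\ref{thm:oc-estimation-restricted} enter.

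Each pass at $u$ costs $O(\deg^{\mathrm{in}}(u))$ for the edge scan. The vertices it pushes are encountered once per push, and each encounter's multiset operation costs $O(\log b)$ with a balanced structure that keeps only the $b+1$ smallest entries. So the expected cost is $O(b\log n\cdot\log b\cdot m_i)=\widetilde{O}(bm_i\log b)$. One point needs care: resetting $\mathrm{visited}$ naively costs $O(n_i)$ per source. I would replace it with a timestamped array, so that the reset cost is proportional to the vertices actually touched.

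Summing over $i$ gives $\widetilde{O}(\mtot b\log b)$. I expect the main obstacle to be making the record argument rigorous despite ties in distances. The argument needs the pass condition to be tied to the rank of $\mathrm{dist}(u,v)$ among earlier sources, not to the possibly incomplete multiset $D_{i,u}$. I would handle this by showing that $D_{i,u}$ only ever underestimates counts for sources that are themselves blocked by $b$ closer, smaller-rank sources, which only helps the bound.
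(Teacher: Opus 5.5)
Your proposal is correct and follows essentially the paper's argument: $O(\log b)$ per traversal check via a bounded priority queue, in-edge scans charged to successful checks, and a record argument showing each vertex passes the check $O(b\log n)$ times in expectation, plus $O(bm_i)$ for merging and evaluating sketches. You order sources by rank and ask about distance rank, while the paper orders by distance and asks about rank, which is the same event; your charging of encounters to pushes, your timestamped $\mathrm{visited}$ array (the pseudocode's $O(n)$ reset per source would otherwise cost $O(n_i^2)$), and your check that $D_{i,u}$ still blocks every non-ADS source fill small gaps that the paper glosses over.

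The one thing to drop is the aside that a vertex passes ``at most $b+1$ times outright.'' That is false: if more than $b+1$ sources at distinct distances from $u$ happen to be processed in order of decreasing distance, each of them passes at $u$. Only the expected $O(b\log n)$ bound holds, which is why the statement is about expected running time.
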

\begin{proof}
We only analyze the first part of the algorithm where we compute the
sketches $R_{i,\{w\}}$ for all $1\le i\le\ell$ and $w\in V$, since
the second part where we combine and evaluate the sketches is identical
with Algorithm \ref{alg:estimate-gains-sketch-unrestricted}. The
running time for a single iteration $1\le i\le\ell$ is dominated
by the processing time for each vertex $u\in V$ in the reverse breadth-first
search. The processing time for $u$ consists of performing the traversal
check, i.e. determining the $b$-th smallest distance in $D_{i,u}$,
which takes time $O\left(\log b\right)$ using a priority queue. Additionally,
if the traversal-check succeeds we proceed with the breadth-first
traversal through $u$ and the processing time therefore includes
the time to iterate through all neighbors $w\in N_{\mathrm{in}}\left(u\right)$
and adding them to the queue. The total running time is therefore
$O\left(\sum_{u\in V}c_{u}\deg\left(u\right)\log b\right)$ where
we denote with $c_{u}$ the number of times that we execute the traversal
check for $u$. It thus remains to bound the expected value of $c_{u}$.
To this end, fix a single $u\in V$ and let $v_{1},v_{2},\dots,v_{n}$
be the vertices in $V$ in order of increasing distance $\mathrm{\mathrm{dist}}_{G_{i}}\left(u,v_{j}\right)$.
If two vertices have the the same distance, we break ties arbitrarily.
Since we start the reverse breadth-first searches from vertices $v_{j}$
with increasing $y_{i,v_{j}}$, we only succeed with the traversal
check for $v_{j}$ if there are less than $b$ vertices $v\in\left\{ v_{1},v_{2},\dots,v_{j-1}\right\} $
with $y_{i,v_{j}}\le y_{i,v}$. Let $X_{j}$ be the indicator random
variable for this event. We have that $c_{u}=\sum_{j=1}^{n}X_{j}$
and by linearity of expectation $\mathbb{E}\left[c_{u}\right]=\sum_{j=1}^{n}\Pr\left[X_{j}=1\right]$.
To compute $\Pr\left[X_{j}=1\right]$, for each $1\le b'\le b$ let
$E_{b'}$ be the event that $y_{i,v_{j}}$ is the $b'$-th smallest
value among $\left\{ y_{i,v_{1}},y_{i,v_{2}},\dots,y_{i,v_{j}}\right\} $.
By definition of these events, $\Pr\left[X_{j}=1\right]=\sum_{b'=1}^{b}\Pr\left[E_{b'}\right]$.
Since all values $y_{i,v}$ are sampled independently, $\Pr\left[E_{b'}\right]=\frac{1}{j}$
for all $1\le b'\le b$. We therefore get that $\Pr\left[X_{j}\right]=\frac{b}{j}$
and $\E\left[c_{u}\right]=\sum_{j=1}^{n}\frac{b}{n}=O\left(b\log n\right)$.
The total expected running time for the first part of the algorithm
is therefore 
\[
O\left(\log\left(b\right)\sum_{i=1}^{\ell}\sum_{u\in V}\mathbb{E}\left[c_{u}\right]\deg\left(u\right)\right)=O\left(m_{i}b\log\left(b\right)\log\left(n\right)\right).
\]
The additional time for the second part of the algorithm is $O\left(bn_{i}\right)=O\left(bm_{i}\right)$
per graph where $n_{i}=\left|V_{i}\right|$ as in Lemma \ref{lem:sketch}.
The overall expected running time is therefore $\widetilde{O}\left(\mtot b\log\left(b\right)\right)$. 
\end{proof}

\begin{thm}
\label{thm:sketch-restricted} Assume that ${\cal T}$ consists either
of only singletons or is a sequence of prefixes. Given as input $\epsest$
and $\prest$, Algorithm \ref{alg:estimate-gains-sketch-restricted}
runs in expected time $\widetilde{O}\left(\mtot b\log\left(b\right)\right)$
for $b=\widetilde{O}\left(\frac{1}{\epsest^{2}}\log\left(\frac{\ell}{\prest}\right)\right)$
and it achieves a multiplicative error $\epsest$, failure probability
$\prest$, and no additive error ($\cest=0$).
\end{thm}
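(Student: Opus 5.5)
The proof will mirror the proof of Theorem \ref{lem:sketch} for the unrestricted case. It combines a correctness statement for the sketches, the per-set concentration guarantee of Lemma \ref{lem:min-hash}, a union bound, and the running-time bound.

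\emph{Correctness of the sketches.} By Lemma \ref{lem:restricted-sketch-correctness}, for every graph $G_i$ and every $T\in{\cal T}$ the algorithm computes
\[R_{i,T}=\bottom\left(\left\{ y_{i,v}:v\in\Reach_{G_{i}}^{(\tau)}\left(T\right)\setminus\Reach_{G_{i}}^{(\tau)}\left(S\right)\right\} \right).\]
We have assumed that ${\cal T}$ consists of singletons or of a sequence of prefixes, so only the merging steps covered by that lemma are used. For a fixed graph, the values $y_{i,v}$ are i.i.d.\ uniform. The set $R=\Reach_{G_{i}}^{(\tau)}(T)\setminus\Reach_{G_{i}}^{(\tau)}(S)$ is deterministic given $S$, $T$ and $G_i$. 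Hence Lemma \ref{lem:min-hash} applies directly with $\widetilde{\delta}=\prest/(\ell n)$. This matches the choice $b=\frac{6}{\epsest^{2}}\log(\frac{2\ell n}{\prest})$, and gives
\[(1-\epsest)|R|\le X_i(T\mid S)\le(1+\epsest)|R|\]
with probability $1-\prest/(\ell n)$.

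\emph{From sketches to marginal gains.} Since $\Reach^{(\tau)}_{G_i}(T\cup S)=\Reach^{(\tau)}_{G_i}(T)\cup\Reach^{(\tau)}_{G_i}(S)$, we have $|R|=|\Reach^{(\tau)}_{G_i}(T\cup S)|-|\Reach^{(\tau)}_{G_i}(S)|$. This is exactly the $i$-th summand of $\ell\cdot I^{(\tau)}(T\mid S)$. Averaging the per-graph multiplicative bounds over $i$ therefore preserves the $\epsest$-multiplicative error and introduces no additive error, so $\cest=0$.

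\emph{Failure probability and running time.} A union bound over the $\ell$ graphs and the at most $n$ sets in ${\cal T}$ gives failure probability at most $\prest$. The expected running time $\widetilde{O}(\mtot b\log b)$ is Lemma \ref{lem:restricted-sketch-runtime}. In the bound on $b$, the $\log n$ factor inside $\log(\ell n/\prest)$ is absorbed into the $\widetilde{O}$.

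The only delicate point is checking that Lemma \ref{lem:min-hash} is applicable. The sketch must be a genuine bottom-$b$ of independent uniform hashes of a set fixed before the hashes are drawn. The set $R$ depends only on $G_i$, $S$ and $T$, and the hashes are freshly sampled inside the call, so this holds. Prefixes $T_j$ share the same hashes, but this correlation is harmless because we only use a union bound, never independence across sets.
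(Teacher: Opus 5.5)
Your proposal is correct and follows the paper's proof essentially verbatim. The paper takes the expected running time from Lemma~\ref{lem:restricted-sketch-runtime} and sketch correctness from Lemma~\ref{lem:restricted-sketch-correctness}, then applies Lemma~\ref{lem:min-hash} with $\widetilde{\delta}=\prest/(\ell n)$ and a union bound over the $\ell$ graphs and at most $n$ sets in ${\cal T}$; your explicit check that $|\Reach_{G_i}^{(\tau)}(T)\setminus\Reach_{G_i}^{(\tau)}(S)|$ is the $i$-th summand of $\ell\, I^{(\tau)}(T\mid S)$ (so averaging preserves the multiplicative error with no additive term) only fills in a step the paper leaves implicit.
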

\begin{proof}
Lemma \ref{lem:restricted-sketch-runtime} shows the expected running
time for Algorithm \ref{alg:estimate-gains-sketch-restricted}. By
Lemma \ref{lem:restricted-sketch-correctness}, Algorithm \ref{alg:estimate-gains-sketch-restricted}
computes the bottom-$b$ sketch for all sets $T\in{\cal T}$. Let
$\widetilde{\delta}=\frac{\prest}{\ell n}$ be the allowed failure
probability for a single sketch for a single graph. We choose $b=\frac{6}{\epsest^{2}}\log\left(\frac{2}{\widetilde{\delta}}\right)$,
so by Lemma \ref{lem:min-hash} and a union bound over the at most
$n$ sets $T\in{\cal T}$ and $\ell$ graphs $G_{i}$, we obtain that
all estimates have $\epsest$-multiplicative error with probability
at least $1-\prest$.
\end{proof}

\subsection{Our algorithm for influence maximization on observed cascades}

\begin{algorithm}
\begin{raggedright}
\caption{\label{alg:observed-cascades} Our algorithm for observed cascades
that instantiates the template algorithm \ref{alg:optimization} using
the estimation via Algorithm \ref{alg:estimate-gains-sketch-unrestricted}.}
\par\end{raggedright}
\begin{raggedright}
$\mathrm{ObservedCascades}\left(\epsilon,\delta\right):$
\par\end{raggedright}
\begin{raggedright}
\textbf{\textcolor{blue}{Input:}}\textcolor{blue}{{} Error parameters
$\epsilon$ and $\delta$}
\par\end{raggedright}
\begin{raggedright}
\textbf{\textcolor{blue}{Output:}}\textcolor{blue}{{} With probability
$1-\delta$, we obtain a solution $S\subseteq V$ with $\left|S\right|\le k$
such that $I^{(\tau)}\left(S\right)\ge\left(1-\frac{1}{e}-\epsilon\right)\opt$.}
\par\end{raggedright}
\begin{raggedright}
$\optguess=n$
\par\end{raggedright}
\begin{raggedright}
$\epsest=\epsilon$
\par\end{raggedright}
\begin{raggedright}
\textbf{while} $\optguess\ge1$ \textbf{do}
\par\end{raggedright}
\begin{raggedright}
$\qquad$$\optguess\gets\frac{1}{2}\optguess$
\par\end{raggedright}
\begin{raggedright}
$\qquad$$S\gets\mathrm{ObservedCascadesGuess}\left(\epsilon,\frac{\delta}{\log n},\epsest,\optguess\right)$
\par\end{raggedright}
\begin{raggedright}
$\qquad$\textbf{if} $\widehat{I^{(\tau)}}\left(S\right)\ge\left(1-\frac{1}{e}-\epsilon-3\epsest\right)\opt$
\textbf{then}
\par\end{raggedright}
\begin{raggedright}
$\qquad\qquad$\textbf{break}
\par\end{raggedright}
\begin{raggedright}
\textbf{return} $S$
\par\end{raggedright}
\begin{raggedright}
$\mathrm{ObservedCascadesGuess}\left(\epsilon,\delta,\epsest,\optguess\right):$
\par\end{raggedright}
\begin{raggedright}
$T\gets\frac{8}{\epsilon}\log\left(\frac{n}{\delta'}\right)$
\par\end{raggedright}
\begin{raggedright}
$S\gets\emptyset$
\par\end{raggedright}
\begin{raggedright}
$\prest=\frac{\delta}{N}$
\par\end{raggedright}
\begin{raggedright}
$t=0$$\hfill$\textcolor{blue}{/$\negmedspace$/ for the analysis}
\par\end{raggedright}
\begin{raggedright}
\textbf{for }$r=0,1,\dots,\frac{2}{\epsilon}$ \textbf{do}
\par\end{raggedright}
\begin{raggedright}
$\qquad$$\alpha\gets\frac{1}{k}\optguess\left(1-\epsilon\right)^{r}$
\par\end{raggedright}
\begin{raggedright}
$\qquad$$U\gets V$
\par\end{raggedright}
\begin{raggedright}
$\qquad$$t_{r}\gets t$$\hfill$\textcolor{blue}{/$\negmedspace$/
for the analysis}
\par\end{raggedright}
\begin{raggedright}
$\qquad$\textbf{for} $T$ iterations \textbf{do}
\par\end{raggedright}
\begin{raggedright}
$\qquad\qquad$$t\gets t+1$$\hfill$\textcolor{blue}{/$\negmedspace$/
for the analysis}
\par\end{raggedright}
\begin{raggedright}
$\qquad\qquad$$\left\{ X\left(\left\{ u\right\} \mid S\right)\colon u\in U\right\} \gets\mathrm{EstimateGainsSketch}\left(\left\{ \left\{ u\right\} \colon u\in U\right\} ,S,\epsest,\prest\right)$
\par\end{raggedright}
\begin{raggedright}
$\qquad\qquad$$U\gets\left\{ u\in U:X\left(\left\{ u\right\} \mid S\right)\ge\alpha\right\} $\textcolor{blue}{$\hfill$/$\negmedspace$/
filtering}
\par\end{raggedright}
\begin{raggedright}
$\qquad\qquad$\textbf{if} $U=\emptyset$ \textbf{then}
\par\end{raggedright}
\begin{raggedright}
\textbf{$\qquad\qquad\qquad$break}
\par\end{raggedright}
\begin{raggedright}
$\qquad\qquad$Let $v_{1},v_{2},\dots,v_{\left|U\right|}$ be a random
permutation of the elements in $U$
\par\end{raggedright}
\begin{raggedright}
$\qquad\qquad$$s\gets\min\left\{ k-\left|S\right|,\left|U\right|\right\} $
\par\end{raggedright}
\begin{raggedright}
$\qquad\qquad$Let $T_{i}=\left\{ v_{1},v_{2},\dots,v_{i}\right\} $
for all $1\le i\le s$\textcolor{blue}{$\hfill$/$\negmedspace$/
prefixes of random permutation}
\par\end{raggedright}
\begin{raggedright}
$\qquad\qquad$$\left\{ X\left(T_{i}\mid S\right)\colon1\le i\le s\right\} \gets\mathrm{EstimateGainsSketch}\left(\left\{ T_{i}:1\le i\le s\right\} ,S,\epsest,\prest\right)$
\par\end{raggedright}
\begin{raggedright}
$\qquad\qquad$$i^{*}\gets\arg\max\left\{ 1\le i\le s:X\left(T_{i}\mid S\right)\ge\left(1-\epsilon\right)\left(1-\epsest\right)\alpha i\right\} $
\par\end{raggedright}
\begin{raggedright}
$\qquad\qquad$$S\gets S\cup T_{i^{*}}$\textcolor{blue}{$\hfill$/$\negmedspace$/
adding elements to the solution}
\par\end{raggedright}
\begin{raggedright}
$\qquad\qquad$\textbf{if} $\left|S\right|=k$ \textbf{then}
\par\end{raggedright}
\begin{raggedright}
$\qquad\qquad\qquad$\textbf{return} $S$
\par\end{raggedright}
\begin{raggedright}
$\qquad$\textbf{if} $U\not=\emptyset$ \textbf{then}
\par\end{raggedright}
\begin{raggedright}
$\qquad\qquad$\textbf{return} Failure
\par\end{raggedright}
\raggedright{}\textbf{return $S$}
\end{algorithm}
 
\begin{thm}
Given as input $\epsilon$ and $\delta$, Algorithm \ref{alg:observed-cascades}
runs in time $\widetilde{O}\left(Nb\mtot\right)$ for $N=\widetilde{O}\left(\frac{1}{\epsilon^{2}}\log\left(\frac{1}{\delta}\right)\right)$
and $b=\widetilde{O}\left(\frac{1}{\epsilon^{2}}\log\left(\frac{\ell N}{\delta}\right)\right)$
and it returns a solution $S$ satisfying $I^{(\tau)}\left(S\right)\geq\left(1-\frac{1}{e}-\epsilon\right)\opt$
with probability $1-\delta$. 
\end{thm}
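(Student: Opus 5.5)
The plan follows the proof of Theorem \ref{thm:threshold-variance}. We combine the template guarantee (Theorem \ref{thm:template-ideal}) with the sketch-based estimators (Theorem \ref{lem:sketch} for $\tau=n$ and Theorem \ref{thm:sketch-restricted} for $\tau<n$), and then analyze the outer doubling search over $\optguess$.

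First we fix a single call $\mathrm{ObservedCascadesGuess}$ with a guess satisfying $\frac{1}{2}\opt\le\optguess\le2\opt$. Inside it we set $\epsest=\epsilon$, $\cest=0$ and $\prest=\delta/(N\log n)$. Lemma \ref{lem:optimization-num-calls} bounds the number of estimation calls by $N=\widetilde{O}(\frac{1}{\epsilon^{2}}\log\frac{1}{\delta})$. Every query family is either singletons or prefixes of a random permutation, so the hypothesis of Theorems \ref{lem:sketch} and \ref{thm:sketch-restricted} is met. Each call therefore has multiplicative error $\epsest$ and failure probability $\prest$, and it costs $O(b\mtot)$ time (respectively $\widetilde{O}(b\mtot\log b)$ in expectation) with $b=\frac{6}{\epsilon^{2}}\log(\frac{2\ell nN\log n}{\delta})=\widetilde{O}(\frac{1}{\epsilon^{2}}\log\frac{\ell N}{\delta})$. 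Theorem \ref{thm:template-ideal} with $\cest=0$ then gives $I^{(\tau)}(S)\ge(1-\frac{1}{e}-2\epsilon)\opt$ with probability $1-N\prest-\delta/\log n$. This is the claimed bound after rescaling $\epsilon$ by a constant.

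Next we handle the doubling search, which runs at most $O(\log n)$ times. Since $I^{(\tau)}$ can be computed exactly for observed cascades, one BFS per graph evaluates $\widehat{I^{(\tau)}}(S)=I^{(\tau)}(S)$ in $O(\mtot)$ time. The stopping test has to be phrased in terms of $\optguess$ rather than the unknown $\opt$. We read it as the test $I^{(\tau)}(S)\ge(1-\frac{1}{e}-O(\epsilon))\optguess/c$ for a suitable constant $c$, mirroring the $X(S\mid\emptyset)\ge\optguess$ test in Algorithm \ref{alg:general-model-variance}.

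The key steps are as follows. Once $\optguess\le\frac{1}{4}\opt$, we want the test to fire. At that point some earlier guess lay in $[\frac{1}{2}\opt,2\opt]$, and the template guarantee for that guess forces termination no later than that round. We also need a converse: whenever the test passes, the returned $S$ is good, meaning it is within a constant factor. This converse is the main obstacle. If the test passes at an overly large guess, the test itself certifies value at least a constant times $\optguess$, and hence at least a constant times $\opt$. That only yields a constant-factor guarantee, not $1-1/e$.

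To repair this, we would keep the best set found across all guesses down to the first guess below $\frac{1}{4}\opt$. Exact evaluation makes this free. Concretely, we run the search for all $O(\log n)$ guesses, or until the guess falls below the value of the best set found so far, and return the best set. The guess in $[\frac{1}{2}\opt,2\opt]$ is then guaranteed to be tried and contributes a $(1-\frac{1}{e}-\epsilon)$-approximate set. A union bound over the $O(\log n)$ guesses gives total failure probability $O(\delta)$. The total time is $O(\log n)\cdot N\cdot O(b\mtot)$ plus $O(\mtot\log n)$ for the exact evaluations, which is $\widetilde{O}(Nb\mtot)$, and it is $\widetilde{O}(Nb\mtot\log b)$ in expectation for $\tau<n$ by linearity of expectation.
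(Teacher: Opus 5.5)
\textbf{Gap: your final argument proves the bound for a modified algorithm, not for Algorithm \ref{alg:observed-cascades}, and the obstacle that forced the modification is not real.} Your repaired argument is sound, but the variant it analyzes runs all guesses (or stops once the guess falls below the best value found) and returns the best set. The statement is about Algorithm \ref{alg:observed-cascades} and its early-stopping test. The ``main obstacle'' comes from reading that test as $I^{(\tau)}(S)\ge(1-\frac1e-O(\epsilon))\optguess/c$ with $c>1$, by analogy with the $X(S\mid\emptyset)\ge\optguess$ test of Algorithm \ref{alg:general-model-variance}. The intended test is the pseudocode's condition with $\opt$ replaced by $\optguess$, namely $X(S\mid\emptyset)\ge(1-\frac1e-\epsilon-3\epsest)\optguess$. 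In other words, $c=1$, and this calibration is the missing idea. Your conclusion that a pass at an overly large guess ``only yields a constant-factor guarantee'' is false for this test.

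\textbf{Why the algorithm as written works.} Two facts suffice.
\begin{enumerate}
\item If the test passes at any guess $\optguess\ge\opt$, it certifies $I^{(\tau)}(S)\ge(1-\epsest)(1-\frac1e-\epsilon-3\epsest)\optguess\ge(1-\frac1e-\epsilon-4\epsest)\opt$ directly.
\item The first guess with $\optguess\le\opt$ lies in $[\frac12\opt,\opt]$, because the guesses halve and $\opt\le n$. So Theorem \ref{thm:template-ideal} with $\cest=0$ gives $I^{(\tau)}(S)\ge(1-\frac1e-\epsilon-\epsest)\opt$. Hence $X(S\mid\emptyset)\ge(1-\epsest)I^{(\tau)}(S)\ge(1-\frac1e-\epsilon-3\epsest)\optguess$, and the loop breaks at that guess at the latest.
\end{enumerate}
Therefore the algorithm never goes below that guess, and whichever set it returns is good. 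No best-of-all-guesses bookkeeping is needed; this is exactly the paper's argument. With your exact BFS evaluation instead of the sketch estimate, the threshold $(1-\frac1e-\epsilon-\epsest)\optguess$ already suffices.

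\textbf{What matches the paper.} The rest of your proposal agrees with the paper: $\cest=0$, $\prest=\delta/(N\log n)$, the resulting $b$, the union bound over the $O(\log n)$ guesses, and the $\widetilde{O}(Nb\mtot)$ running-time accounting (expected $\widetilde{O}(Nb\mtot\log b)$ for $\tau<n$). Your variant is a legitimate alternative algorithm with the same asymptotic bound. As a proof of the stated theorem, however, it leaves the algorithm actually described unanalyzed.
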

\begin{proof}
As in the algorithm for general models, there are $2\log n$ iterations
of the search which determines $\optguess$. \ref{lem:estimate-gains-variance},
any call Algorithm \ref{alg:estimate-gains-sketch-unrestricted} fails
to be $\epsest$-multiplicatively correct with probability at most
$\prest=\frac{\delta'}{N}=\frac{\delta}{N\log n}$. By a union bound
over all iterations, we obtain that all calls are $\epsest$-multiplicatively
correct with probability at least $1-\delta$. For the remainder of
the proof, let us thus condition on the event that these calls are
$\epsest$-multiplicatively correct. By Theorem \ref{thm:template-ideal},
any call to $\mathrm{ObservedCascadesGuess}$ fails with probability
at most $\delta'=N\prest+\delta'=\frac{\delta}{\log n}$. By a union
bound over all $2\log n$ iterations of the outer search, the overall
algorithm fails with probability at most $O(\delta)$. For the remainder
of the proof, let us also condition on the event that no call to $\mathrm{ObservedCascadesGuess}$
fails. 

Let us now assume we are in an iteration where $\frac{1}{2}\opt\le\optguess$
and we return a solution $S$ because we satisfy the breaking condition
$X\left(S\mid\emptyset\right)\ge\left(1-\frac{1}{e}-\epsilon-3\epsest\right)\optguess$.
If $\optguess\ge\opt$ then Lemma \ref{lem:sketch} guarantees that
\begin{align*}
I^{(\tau)}\left(S\right) & \ge\left(1-\epsest\right)X\left(S\mid\emptyset\right)\\
 & \ge\left(1-\epsest\right)\left(1-\frac{1}{e}-\epsilon-3\epsest\right)\optguess\\
 & \ge\left(1-\frac{1}{e}-\epsilon-4\epsest\right)\optguess\\
 & \ge\left(1-\frac{1}{e}-\epsilon-4\epsest\right)\opt
\end{align*}
and we are done. If $\optguess\le\opt$ then Theorem \ref{thm:template-ideal}
guarantees a solution $S$ with 
\[
I^{(\tau)}\ge\left(1-\frac{1}{e}-\epsilon-2\epsest\right)\opt.
\]
Furthermore, Lemma \ref{lem:sketch} guarantees
\begin{align*}
X\left(S\mid\emptyset\right) & \ge\left(1-\epsest\right)I^{(\tau)}\\
 & \ge\left(1-\frac{1}{e}-\epsilon-3\epsest\right)\opt\\
 & \ge\left(1-\frac{1}{e}-\epsilon-3\epsest\right)\optguess.
\end{align*}
and we therefore satisfy the breaking condition and return $S$. 

The algorithm uses at most $O\left(\log n\right)=\widetilde{O}\left(1\right)$
calls to $\mathrm{ObservedCascadesGuess}$ in its outer search for
$\opt$. The running time for each call to Algorithm \ref{alg:estimate-gains-sketch-unrestricted}
is dominated by running time for the $N$ calls to the estimation
oracle Algorithm \ref{alg:estimate-gains-sketch-unrestricted}. By
Lemma \ref{lem:sketch}, each call to Algorithm \ref{alg:estimate-gains-sketch-unrestricted}
takes time $O\left(b\mtot\right)$ for $b=\widetilde{O}\left(\frac{1}{\epsest^{2}}\log\left(\frac{\ell}{\prest}\right)\right)$.
Overall, we thus require time
\[
\widetilde{O}\left(Nb\mtot\right).
\]
\end{proof}

\section{\label{sec:concentration} Concentration for independent cascade
models}

\global\long\def\G{\mathcal{G}}%

In this section, we will show a concentration bound for samples from
the IC model, which we stated as Theorem \ref{thm:ic-sample-complexity}.
The IC model has the benefit that we can defer the random sampling
of an edge $(u,v)$ until we activate vertex $u\in V$. The idea of
our proof is thus to decompose the influence propagation into the
individual steps $0\le t\le\tau$. Specifically, for some fixed $t$,
we condition on the set of vertices $\Active^{(t-1)}\left(S\right)$
activated so far. We only consider the random activation $\Active^{(t)}\left(S\right)$
in step $t$, and observe that the activation of all vertices $v\notin\Active^{(t-1)}\left(S\right)$
is independent. In expectation, the remaining steps are submodular
over the set of random activations $\Active^{(t)}\left(S\right)\setminus\Active^{(t-1)}\left(S\right)$.
This is analogous to the induction in \citet{sadeh20}. However, to
analyze the resulting distribution, we use the following theorem due
to \citet{vondrak10} that upper bounds the MGF of any monotone submodular
function over sets $S$ that are sampled by including elements independently.
For a vector $p\in\left[0,1\right]^{n}$, we write $S\sim p$ to denote
that $S$ is a random set where each $i\in[n]$ is included $S$ independently
with probability $p_{i}$.

Using Theorem \ref{thm:submodular-mgf}, we can show the following
bound on the MGF of a single sample from an IC model via a notion
of model reduction for IC models. We use ${\cal G}$ to denote the
distribution over live-edge graphs in the original IC model. We define
the model reduction ${\cal G}\setminus T$ as the distribution of
live-edge graphs where we simply remove $T$ and all edges that have
at least one endpoint in $T$ from the graph. We use identical edge
probabilities for all remaining edges. The following notion of model
reduction is due to \citet{sadeh20} and we use it to set up our induction.
\begin{lem}
\label{lem:model-reduction} Let $T$ be a fixed seed set and $S$
be a random set where we include each node $v\in V\setminus T$ independently
with probability $p_{v}=\Pr_{G\sim{\cal G}}\left[v\in\Reach_{G}^{(1)}\left(T\right)\right]$.
Then, the random variables $\Reach_{G}^{(\tau)}\left(T\right)$ for
$G\sim{\cal G}$ and $T\cup\Reach_{G'}^{(\tau-1)}\left(S\right)$
for $G'\sim{\cal G}\setminus T$ are identically distributed. 
\end{lem}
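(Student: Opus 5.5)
We prove this with the principle of deferred decisions: in the IC model the live/dead status of every edge is an independent coin. We sample the edges in two groups and show that the two constructions produce the same joint law of live edges restricted to what matters for reachability.

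First, split the edge set of $G$ into three parts.
\begin{itemize}
\item $E_1$ is the set of edges leaving $T$, i.e.\ $(u,v)$ with $u\in T$.
\item $E_2$ is the set of edges with both endpoints in $V\setminus T$.
\item $E_3$ is the set of edges entering $T$ from outside.
\end{itemize}
These groups are disjoint, so their live indicators are independent.

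The key structural identity is the following, for any fixed live-edge graph $G$. Let $S(G)=\Reach_G^{(1)}(T)\setminus T$ be the set of out-neighbours of $T$ outside $T$ via live edges in $E_1$. Then
\[
\Reach_G^{(\tau)}(T)=T\cup\Reach_{G[V\setminus T]}^{(\tau-1)}(S(G)).
\]
For the inclusion ``$\supseteq$'', any vertex reached from $S(G)$ within $\tau-1$ steps in $G[V\setminus T]$ is reached from $T$ within $\tau$ steps. For ``$\subseteq$'', take a shortest live path of length at most $\tau$ from $T$ to some $w\notin T$. Replace its start by the last vertex of the path lying in $T$. After that vertex the path avoids $T$, so its next vertex is in $S(G)$. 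The remainder has length at most $\tau-1$ and lies in $G[V\setminus T]$, which uses only $E_2$. Edges of $E_3$ never matter.

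Next, compare distributions. The random set $S(G)$ depends only on $E_1$. Each $v\notin T$ lies in $S(G)$ iff at least one of the edges $(u,v)$ with $u\in T$ is live. These events involve disjoint edge sets for distinct $v$, so they are independent with probability exactly $p_v=\Pr[v\in\Reach_G^{(1)}(T)]$. Moreover $G[V\setminus T]$ restricted to live $E_2$ edges is independent of $E_1$ and has precisely the law of $G'\sim\mathcal G\setminus T$, since the edge probabilities are unchanged. Hence the pair $(S(G),G[V\setminus T])$ has the same law as the pair $(S,G')$ in the statement, with $S$ and $G'$ independent. Applying the deterministic map $(A,H)\mapsto T\cup\Reach_H^{(\tau-1)}(A)$ to both pairs gives the claimed equality in distribution.

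The main subtlety is the ``$\subseteq$'' direction of the structural identity, namely that a path may leave and re-enter $T$. Truncating at the last visit to $T$ handles this, because truncation only shortens the path. The remaining steps are routine independence bookkeeping.
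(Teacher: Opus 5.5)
Your proof is correct and uses the same coupling as the paper: draw $G$, let $S$ be the live out-neighbours of $T$ outside $T$, and obtain $G'$ by deleting $T$ from $G$. Your version is more careful than the paper's, which states the identity without the $T\cup$ term and leaves implicit both the last-visit-to-$T$ path argument and the independence of $S$ from $G'$, which follows from the disjointness of $E_1$ and $E_2$.
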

\begin{proof}
We sample all edges $(u,v)$ with probability $p_{uv}$ and let $G$
be the resulting graph. To obtain a sample $G'$ from ${\cal G}\setminus T$,
we can simply remove $T$ from $G$ and all edges that have at least
one endpoint in $T$. Let $S=\Reach_{G}^{(1)}\left(T\right)\setminus T$
the set of vertices that are reachable within a single step from $T$
in $G$. Note that nodes $v\in V$ are included independently in $S$
since all incoming edges to $v$ are sampled independently and $T$
is deterministic. It remains to show that $\Reach_{G}^{(\tau)}\left(T\right)=\Reach_{G'}^{(\tau-1)}\left(S\right)$.
To see this, note that all vertices $v\in V$ reachable from $T$
in $\tau$ steps are reachable from $S$ in $\tau-1$ steps since
$S=\Reach_{G}^{(1)}\left(T\right)$.
\end{proof}

To differentiate the influence in the original and reduced model,
we write $I_{{\cal G}}^{(\tau)}=\E_{G\sim{\cal G}}\left[\left|\Reach_{G}^{(\tau)}\left(T\right)\right|\right]$
and $I_{{\cal G}\setminus T}^{(\tau)}=\E_{G\sim{\cal G}\setminus T}\left[\left|\Reach_{G}^{(\tau)}\left(T\right)\right|\right]$. 
\begin{lem}
\label{lem:model-reduction-singleton} In the IC model, the maximum
singleton influence does not increase under model reductions, i.e.
$\max_{v\in V}I_{{\cal G}}^{(\tau)}\left(v\right)\ge\max_{v\in V\setminus T}I_{{\cal G}\setminus T}^{(\tau)}\left(v\right)$. 
\end{lem}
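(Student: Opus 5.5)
We will prove the lemma with a coupling argument. The idea is to realize the reduced model ${\cal G}\setminus T$ as a subgraph of a sample from ${\cal G}$. Fix any $v\in V\setminus T$. Sample a live-edge graph $G\sim{\cal G}$ by including each edge $(u,w)$ independently with probability $p_{uw}$. Let $G'$ be obtained from $G$ by deleting the vertices in $T$ together with all edges having at least one endpoint in $T$.

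By the definition of the model reduction, every remaining edge keeps the same probability and is still included independently. Hence $G'$ is distributed exactly according to ${\cal G}\setminus T$, which is the same observation used in the proof of Lemma~\ref{lem:model-reduction}.

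Next we compare reachable sets pointwise. $G'$ is a subgraph of $G$, so every path of length at most $\tau$ from $v$ in $G'$ is also such a path in $G$. Therefore
\[
\Reach_{G'}^{(\tau)}(\{v\})\subseteq\Reach_{G}^{(\tau)}(\{v\})
\]
holds for every realization. Taking cardinalities and then expectations under the coupling gives
\[
I_{{\cal G}\setminus T}^{(\tau)}(v)=\E\bigl[|\Reach_{G'}^{(\tau)}(\{v\})|\bigr]\le\E\bigl[|\Reach_{G}^{(\tau)}(\{v\})|\bigr]=I_{{\cal G}}^{(\tau)}(v).
\]

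Finally, maximize over $v\in V\setminus T$. The right-hand side is at most $\max_{v\in V}I_{{\cal G}}^{(\tau)}(v)$, because $V\setminus T\subseteq V$. This yields the claim.

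The only point that needs care is checking that the coupled $G'$ really has the law of ${\cal G}\setminus T$. This follows from the independence of edge inclusions in the IC model together with the definition of the reduction as "remove $T$ and its incident edges, keep all other probabilities." No submodularity or concentration is needed here; the statement is purely a monotonicity of reachability under edge and vertex deletion.
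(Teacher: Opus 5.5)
Your proof is correct and follows essentially the same route as the paper: couple $G\sim{\cal G}$ with the graph $G'$ obtained by deleting $T$ and its incident edges, use $\Reach_{G'}^{(\tau)}(v)\subseteq\Reach_{G}^{(\tau)}(v)$ pointwise, take expectations, and maximize over $v\in V\setminus T$. The only difference is that you explicitly justify why $G'$ has the law of ${\cal G}\setminus T$ (independent edge inclusions with unchanged probabilities), which the paper simply asserts.
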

\begin{proof}
Let $G\sim{\cal G}$ be a sample for ${\cal G}$ and $G'$ be the
result of removing $T$ and all incident edges from $G$, which is
a sample for ${\cal G}\setminus T$. Since we only remove vertices
from the graph but do not change the model in any other way, we necessarily
have that $\Reach_{G'}^{(\tau)}\left(v\right)\subseteq\Reach_{G}^{(\tau)}\left(v\right)$.
Therefore, we also have that in expectation $I_{{\cal G}\setminus T}^{(\tau)}\left(v\right)\le I_{{\cal G}}^{(\tau)}\left(v\right)$
which proves the claim.
\end{proof}

\begin{thm}
\label{lem:mgf-bound} Let $M=\max_{v\in V}I_{\G}^{(\tau)}(v)$. Let
$\G$ be any live-edge model. For any $\lambda\in\Reals$, we have
\begin{align*}
\E_{G\sim\G}\left[\exp\left(\lambda\cdot\left|\Reach_{G}^{(\tau)}(T)\right|\right)\right] & \leq\exp\left(A^{(\tau)}\cdot\E_{G\sim\G}\left[\left|\Reach_{G}^{(\tau)}(T)\right|\right]\right)\cdot\exp\left(-B^{(\tau)}\left|T\right|\right)\\
 & =\exp\left(A^{(\tau)}\cdot I_{\G}^{(\tau)}(T)\right)\cdot\exp\left(-B^{(\tau)}\left|T\right|\right)
\end{align*}
 where
\begin{align*}
A^{(\tau)} & =\begin{cases}
\lambda & \text{if }\tau=0\\
\frac{1}{M}\left(\exp\left(MA^{(\tau-1)}\right)-1\right) & \text{if }\tau\geq1
\end{cases}\\
B^{(\tau)} & =\begin{cases}
0 & \text{if }\tau=0\\
A^{(\tau)}+B^{(\tau-1)}-\lambda & \text{if }\tau\geq1
\end{cases}
\end{align*}
\end{thm}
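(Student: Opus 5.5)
We argue by induction on $\tau$, simultaneously over all IC models $\G$ and all seed sets $T$; this is needed because the inductive step passes to the reduced model $\G\setminus T$. The constants $A^{(\tau)},B^{(\tau)}$ depend on $M$, and $M$ can only shrink under reduction (Lemma \ref{lem:model-reduction-singleton}). So we state the inductive hypothesis for every model whose maximum singleton influence is at most $M$. Since $A^{(\tau)}$ is increasing in its argument, we handle $\lambda\ge 0$ and then check that the algebra is sign-agnostic.

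\textbf{Base case.} For $\tau=0$ we have $\Reach_G^{(0)}(T)=T$, so the left side equals $\exp(\lambda|T|)$. The right side equals $\exp(\lambda I^{(0)}(T))\exp(0)=\exp(\lambda|T|)$, so the two sides agree.

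\textbf{Inductive step.} For $\tau\ge1$, apply Lemma \ref{lem:model-reduction}. It says $|\Reach_G^{(\tau)}(T)|$ has the same distribution as $|T|+|\Reach_{G'}^{(\tau-1)}(S)|$, where $S\sim p$ is independent of $G'\sim\G\setminus T$. Hence
\[
\E\bigl[e^{\lambda|\Reach_G^{(\tau)}(T)|}\bigr]=e^{\lambda|T|}\,\E_{S\sim p}\Bigl[\E_{G'}\bigl[e^{\lambda|\Reach_{G'}^{(\tau-1)}(S)|}\bigr]\Bigr].
\]
Applying the induction hypothesis in $\G\setminus T$ with seed set $S$ bounds the inner expectation by
\[
\exp\bigl(A^{(\tau-1)}h(S)-B^{(\tau-1)}|S|\bigr),\qquad h(S):=I_{\G\setminus T}^{(\tau-1)}(S).
\]
This leaves an MGF over the independent random set $S$ of the function $A^{(\tau-1)}h(S)-B^{(\tau-1)}|S|$.

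\textbf{Applying Theorem \ref{thm:submodular-mgf}.} The function $h$ is monotone submodular, and its marginal gains are at most $\max_v I_{\G\setminus T}^{(\tau-1)}(v)\le M$ by Lemma \ref{lem:model-reduction-singleton}. Take $f=h/M$ and parameter $MA^{(\tau-1)}$. The linear term $-B^{(\tau-1)}|S|$ factors over independent coordinates, so it can be absorbed: tilt the product measure $p$ to $p'_v\propto p_v e^{-B^{(\tau-1)}}$, or alternatively fold the linear term into a modular correction of $f$ (the sum stays submodular, but monotonicity and the gain bound need care). This yields roughly
\[
\exp\Bigl(\tfrac1M\bigl(e^{MA^{(\tau-1)}}-1\bigr)\E[h(S)]\Bigr)=\exp\bigl(A^{(\tau)}\E_S[h(S)]\bigr)
\]
times a factor controlling the linear term. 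Next use
\[
\E_S[h(S)]=\E\bigl[|\Reach^{(\tau)}_G(T)|\bigr]-|T|=I^{(\tau)}_\G(T)-|T|,
\]
which again follows from Lemma \ref{lem:model-reduction}. Collecting the $|T|$ terms gives the exponent
\[
A^{(\tau)}I^{(\tau)}(T)-(A^{(\tau)}-\lambda)|T|-(\text{contribution of }B^{(\tau-1)}).
\]
The target is exactly $B^{(\tau)}=A^{(\tau)}+B^{(\tau-1)}-\lambda$ multiplying $|T|$.

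\textbf{Main obstacle.} The delicate point is the $B^{(\tau-1)}|S|$ term. The recursion suggests its contribution should become $-B^{(\tau-1)}|T|$, not something involving $|S|$. I would first try the cruder route: show $B^{(\tau-1)}\ge0$ by induction, since $A^{(\tau)}\ge A^{(\tau-1)}\ge\lambda$ because $e^x-1\ge x$. Then simply drop the factor $e^{-B^{(\tau-1)}|S|}\le1$, and account for $B^{(\tau-1)}|T|$ through the identity $\E h(S)=I(T)-|T|$ used at the previous level. If the bookkeeping does not close this way, I would strengthen the induction hypothesis so that the seed-set penalty is stated relative to seeds lying in $T$ itself. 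The reason is that the recursive unrolling charges each layer's seeds once per remaining step, and this per-step charge is precisely what the additive recursion for $B$ records.
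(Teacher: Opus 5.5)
There is a genuine gap, exactly where you flag the main obstacle, and it cannot be closed: with the additive recursion $B^{(\tau)}=A^{(\tau)}+B^{(\tau-1)}-\lambda$ the statement is false for $\tau\ge2$.

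Take any graph with no edges (equivalently all $p_e=0$), $|T|=1$ and $\tau=2$. Then $|\Reach_G^{(2)}(T)|=1$ surely, $I^{(2)}(T)=1$ and $M=1$. The recursion gives $A^{(1)}=e^{\lambda}-1$, $B^{(1)}=e^{\lambda}-1-\lambda$ and $B^{(2)}=A^{(2)}+e^{\lambda}-1-2\lambda$. The claimed inequality therefore reads
\[
e^{\lambda}\le\exp\bigl(A^{(2)}-B^{(2)}\bigr)=\exp\bigl(1+2\lambda-e^{\lambda}\bigr),
\]
that is, $e^{\lambda}\le1+\lambda$. This fails for every $\lambda\neq0$; for $\lambda=1$ the left side is $e\approx2.72$ and the right side is $e^{3-e}\approx1.33$.

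So neither charging $B^{(\tau-1)}$ to $|T|$ via $\E_S[h(S)]=I^{(\tau)}_{\mathcal{G}}(T)-|T|$, nor strengthening the induction hypothesis, can work. In this example $S=\emptyset$ almost surely, so the penalty the hypothesis actually supplies, $e^{-B^{(\tau-1)}|S|}$, is identically $1$. The paper's own proof gets the additive recursion only by writing this penalty as $\exp(-B^{(\tau-1)}|T|)$ when it applies the hypothesis to the seed set $S$ in $\mathcal{G}\setminus T$. That is precisely the invalid substitution you were suspicious of.

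Your ``crude route'' is correct, but it proves the statement with $B^{(\tau)}$ replaced by $A^{(\tau)}-\lambda$. The argument runs as follows.
\begin{itemize}
\item Since $e^{x}-1\ge x$ for all real $x$, you get $A^{(t)}\ge A^{(t-1)}\ge\lambda$ for either sign of $\lambda$, so the corrected penalty $A^{(t)}-\lambda$ is nonnegative.
\item You may therefore drop $e^{-(A^{(\tau-1)}-\lambda)|S|}\le1$.
\item The Vondr\'ak step together with $\E_S[h(S)]=I^{(\tau)}_{\mathcal{G}}(T)-|T|$ then yields
\[
\E_{G\sim\mathcal{G}}\bigl[e^{\lambda|\Reach_G^{(\tau)}(T)|}\bigr]\le\exp\bigl(A^{(\tau)}I^{(\tau)}_{\mathcal{G}}(T)-(A^{(\tau)}-\lambda)|T|\bigr).
\]
\end{itemize}
This bound holds with equality on the edgeless example.

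It still implies the weaker corollary $\E[e^{\lambda|\Reach_G^{(\tau)}(T)|}]\le\exp(A^{(\tau)}I^{(\tau)}_{\mathcal{G}}(T))$. That is all the subsequent sample-complexity argument uses, so nothing downstream breaks. Your choice to state the induction hypothesis over all reduced models whose maximum singleton influence is at most the fixed $M$ is the right way to make the induction rigorous; the paper leaves it implicit. Present the corrected bound as the provable statement rather than trying to rescue the stated recursion for $B$.
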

\begin{proof}
We proceed by induction on $\tau$. In the base case $\tau=0$, we
have $\Reach_{G}^{(0)}(T)=T$ for all $G\sim{\cal G}$ and thus we
can set $A^{(0)}=\lambda$ and $B^{(0)}=0$.

Consider now any $\tau\geq1$. Let $p_{v}=\Pr_{G\sim\G}\left[v\in\Reach_{G}^{(1)}(T)\right]$
and let $\G\setminus T$ be the reduced model with $T$ removed. By
Lemma \ref{lem:model-reduction}, the events $v\in\Reach_{G}^{(1)}\left(T\right)$
are mutually independent for all $v\in V$. We can thus decompose
the sampling process $G\sim\G$ into a two--stage process where we
first sample $S\sim p$ and then we sample $G'\sim\G\setminus T$.
By Lemma~\ref{lem:model-reduction}, $\Reach_{G}^{(\tau)}\left(T\right)$
where $G\sim\G$ has the same distribution as $T\cup\Reach_{G'}^{(\tau-1)}\left(S\right)$
where $S\sim p$ and $G'\sim\G\setminus T$. Thus,
\begin{align*}
\E_{G\sim\G}\left[\exp\left(\lambda\left|\Reach_{G}^{(\tau)}(T)\right|\right)\right] & =\E_{S\sim p}\left[\E_{G'\sim\G\setminus T}\left(\exp\left(\lambda\left|T\cup\Reach_{G'}^{(\tau-1)}\left(S\right)\right|\right)\right)\right]\\
 & =\exp\left(\lambda\left|T\right|\right)\E_{S\sim p}\left[\E_{G'\sim\G\setminus T}\left(\exp\left(\lambda\left|\Reach_{G'}^{(\tau-1)}\left(S\right)\right|\right)\right)\right]
\end{align*}
By induction,
\begin{align*}
\E_{G'\sim\G\setminus T}\left(\exp\left(\lambda\left|\Reach_{G'}^{(\tau-1)}\left(S\right)\right|\right)\right) & \leq\exp\left(-B^{(\tau-1)}\left|T\right|\right)\cdot\exp\left(A^{(\tau-1)}\cdot I_{\G\setminus T}^{(\tau-1)}(S)\right)\\
\implies & \E_{S\sim p}\left[\E_{G'\sim\G\setminus T}\left(\exp\left(\lambda\left|\Reach_{G'}^{(\tau-1)}\left(S\right)\right|\right)\right)\right]\\
 & \leq\exp\left(-B^{(\tau-1)}\left|T\right|\right)\cdot\E_{S\sim p}\left[\exp\left(A^{(\tau-1)}\cdot I_{\G\setminus T}^{(\tau-1)}(S)\right)\right]
\end{align*}
We apply Theorem \ref{thm:submodular-mgf} to $f(S)=\frac{1}{M}I_{\G\setminus T}^{(\tau-1)}(S)$.
Lemma \ref{lem:model-reduction-singleton} shows that the model reduction
ensures that the singleton influence does not increase, so we have
$\max_{v\in V\setminus T}I_{\G\setminus T}^{(\tau-1)}(v)\leq\max_{v\in V}I_{\G}^{(\tau)}(v)=M$.
Thus $\max_{v}f(v)\leq1$ and, by Theorem \ref{thm:submodular-mgf},
we have
\begin{align*}
\E_{S\sim p}\left[\exp\left(A^{(\tau-1)}\cdot I_{\G\setminus T}^{(\tau-1)}(S)\right)\right] & =\E_{S\sim p}\left[\exp\left(MA^{(\tau-1)}\cdot\frac{1}{M}I_{\G\setminus T}^{(\tau-1)}(S)\right)\right]\\
 & \leq\exp\left(\left(\exp\left(MA^{(\tau-1)}\right)-1\right)\cdot\E_{S\sim p}\left[\frac{1}{M}I_{\G\setminus T}^{(\tau-1)}(S)\right]\right)\\
 & =\exp\left(\underbrace{\frac{1}{M}\left(\exp\left(MA^{(\tau-1)}\right)-1\right)}_{=A^{(\tau)}}\left(I_{\G}^{(\tau)}(T)-\left|T\right|\right)\right)\\
 & =\exp\left(-A^{(\tau)}\left|T\right|\right)\cdot\exp\left(A^{(\tau)}I_{\G}^{(\tau)}(T)\right)
\end{align*}
Putting everything together,
\begin{align*}
\E_{G\sim\G}\left[\exp\left(\lambda\left|\Reach_{G}^{(\tau)}(T)\right|\right)\right] & =\exp\left(\lambda\left|T\right|\right)\E_{S\sim p}\left[\E_{G'\sim\G\setminus T}\left(\exp\left(\lambda\left|\Reach_{G'}^{(\tau-1)}\left(S\right)\right|\right)\right)\right]\\
 & \leq\exp\left(\lambda\left|T\right|\right)\exp\left(-B^{(\tau-1)}\left|T\right|\right)\E_{S\sim p}\left[\exp\left(A^{(\tau-1)}\cdot I_{\G\setminus T}^{(\tau-1)}(S)\right)\right]\\
 & \leq\exp\left(\lambda\left|T\right|\right)\exp\left(-B^{(\tau-1)}\left|T\right|\right)\exp\left(-A^{(\tau)}\left|T\right|\right)\exp\left(A^{(\tau)}I_{\G}^{(\tau)}(T)\right)\\
 & =\exp\Big(-\underbrace{\left(A^{(\tau)}+B^{(\tau-1)}-\lambda\right)}_{=B^{(\tau)}}\left|T\right|\Big)\exp\left(A^{(\tau)}I_{\G}^{(\tau)}(T)\right)
\end{align*}
\end{proof}

As a corollary, we obtain the following weaker bound which suffices
for our purposes:
\[
\E_{G\sim\G}\left[\exp\left(\lambda\cdot\left|\Reach_{G}^{(\tau)}(T)\right|\right)\right]\leq\exp\left(A^{(\tau)}\cdot I_{\G}^{(\tau)}(T)\right)
\]
The recursive term $A^{(\tau)}$ grows super-exponentially in $\lambda$
but surprisingly, we can control the term by choosing $\lambda$ appropriately.
Next, we show how to set $\lambda$ so that we can bound $A^{(\tau)}$. 
\begin{lem}
\label{lem:mgf-param} If $0\le\lambda\leq\frac{1}{M\left(\tau+1\right)}$
then $A^{(t)}\leq\frac{1}{\frac{1}{\lambda}-Mt}$ for all $0\leq t\leq\tau$.
Thus, for any $\eta\geq0$, setting $\lambda\le\min\left\{ \frac{\eta}{1+\eta}\cdot\frac{1}{M\tau},\frac{1}{M\left(\tau+1\right)}\right\} $
ensures $A^{(\tau)}\leq\left(1+\eta\right)\lambda$. Similarly, if
$-\frac{1}{2M}\le\lambda\le0$ then $A^{(t)}\le\frac{1}{\frac{1}{\lambda}-2Mt}$
for all $0\le t\le\tau$. Thus, for any $\eta\ge0$, setting $\lambda\ge\max\left\{ -\frac{1}{2M},\frac{-\eta}{1-\eta}\cdot\frac{1}{2\tau M}\right\} $
ensures $A^{(\tau)}\le\left(1-\eta\right)\lambda$.
\end{lem}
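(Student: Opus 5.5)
The recursion is $A^{(0)}=\lambda$ and $A^{(t)}=\frac{1}{M}(e^{MA^{(t-1)}}-1)$. I would prove the bound on $A^{(t)}$ by induction on $t$, using the elementary inequality $e^{x}-1\le \frac{x}{1-x}$ for $x<1$, and in the negative case $e^{x}-1\le \frac{x}{1-2x}$... (checked below). The right-hand side $\frac{1}{1/\lambda-Mt}=\frac{\lambda}{1-\lambda Mt}$ is a Möbius iterate. This is the natural candidate, since the map $x\mapsto \frac{x}{1-Mx}$ composed $t$ times gives exactly $\frac{\lambda}{1-\lambda M t}$.

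\textbf{Positive case.} Assume $0\le\lambda\le\frac{1}{M(\tau+1)}$ and the inductive hypothesis $A^{(t-1)}\le\frac{\lambda}{1-\lambda M(t-1)}$.
\begin{itemize}
\item With $t\le\tau$ we have $\lambda M(t-1)<1$ and $x:=MA^{(t-1)}\le\frac{\lambda M}{1-\lambda M(t-1)}\le\frac{1}{\tau+1-(t-1)}<1$.
\item Since $e^{x}-1\le\frac{x}{1-x}$ for $0\le x<1$ (from $e^{-x}\ge1-x$) and the map is monotone, we get $A^{(t)}\le\frac{1}{M}\cdot\frac{x}{1-x}$.
\item Substituting the bound on $x$ gives $A^{(t)}\le\frac{\lambda}{1-\lambda Mt}$, completing the induction.
\end{itemize}
For the consequence, $\frac{\lambda}{1-\lambda M\tau}\le(1+\eta)\lambda$ holds iff $\lambda M\tau\le\frac{\eta}{1+\eta}$, which is the stated condition.

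\textbf{Negative case.} Let $-\frac{1}{2M}\le\lambda\le0$. All $A^{(t)}$ stay in $[\lambda,0]$, because $x\mapsto\frac{1}{M}(e^{Mx}-1)$ maps $[\lambda,0]$ into itself, so $x:=MA^{(t-1)}\in[-\frac12,0]$.
\begin{itemize}
\item The needed inequality is $e^{x}-1\le\frac{x}{1-2x}$ for $x\in[-\frac12,0]$. Write $h(x)=\frac{x}{1-2x}-(e^{x}-1)$. Then $h(0)=0$ and $h'(x)=\frac{1}{(1-2x)^{2}}-e^{x}$.
\item The sign of $h'$ is that of $-2\ln(1-2x)-x$, which is $\le 0$ on this range. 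Hence $h$ is decreasing, and $h(x)\ge h(0)=0$ for $x\le0$.
\item Monotonicity of $x\mapsto\frac{x}{1-2x}$ and the inductive hypothesis $A^{(t-1)}\le\frac{\lambda}{1-2\lambda M(t-1)}$ then give $A^{(t)}\le\frac{\lambda}{1-2\lambda Mt}$. This is the claim $\frac{1}{1/\lambda-2Mt}$.
\end{itemize}
For the consequence, $\frac{\lambda}{1+2|\lambda|M\tau}\le(1-\eta)\lambda$ holds iff $2|\lambda|M\tau\le\frac{\eta}{1-\eta}$, which is the stated lower bound on $\lambda$.

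\textbf{Main obstacle.} The delicate point is the negative case. I must check that the elementary inequality holds on the whole range and that the Möbius composition identity $\phi_a\circ\phi_b=\phi_{a+b}$ is applied with correct signs, where $\phi_c(x)=\frac{x}{1-cx}$ and $c=2M$. The range of the iterates also has to stay within $[-\frac12,0]$ in the scaled variable, so that the inequality remains applicable at every step.
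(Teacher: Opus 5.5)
Your proof is correct, but it uses a different elementary inequality from the paper's. The paper bounds $e^{x}-1\le x(1+x)$ for $x\le 1$, which gives the quadratic recursion $MA^{(t)}\le MA^{(t-1)}\bigl(1+MA^{(t-1)}\bigr)$. It then compares this with the solution $\frac{1}{c-t}$ of $\dot x=x^{2}$ through one-step algebraic inequalities:
\begin{itemize}
\item $\frac{1}{a}\bigl(1+\frac{1}{a}\bigr)\le\frac{1}{a-1}$ for $a>1$ in the positive case;
\item $\frac{1}{a}\bigl(1+\frac{1}{a}\bigr)\le\frac{1}{a-2}$ for $a\le -2$ in the negative case.
\end{itemize}
That second inequality is where $\lambda\ge-\frac{1}{2M}$ comes from. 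The paper also needs $MA^{(t)}\ge-\frac12$ so that $x(1+x)$ is monotone.

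You instead use the sharper bound $e^{x}-1\le\frac{x}{1-x}$, whose right-hand side is a M\"obius map. The iterates then compose exactly, $\phi_a\circ\phi_b=\phi_{a+b}$, so you need no comparison lemma, only monotonicity.

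Your separate calculus check of $e^{x}-1\le\frac{x}{1-2x}$ on $[-\frac12,0]$ is correct but unnecessary. From $e^{-x}\ge 1-x$ you get $e^{x}-1\le\frac{x}{1-x}$ for every $x<1$, and $\frac{x}{1-x}\le\frac{x}{1-2x}$ for $x\le 0$. The same M\"obius argument therefore gives the stronger bound $A^{(t)}\le\frac{\lambda}{1-\lambda Mt}$ for every $\lambda\le 0$, with no lower bound on $\lambda$ and no factor $2$. Your route both simplifies the argument and shows that the lemma's negative-case hypotheses are stronger than needed; the stated condition on $\lambda$ still suffices for $A^{(\tau)}\le(1-\eta)\lambda$.
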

\begin{proof}
We ensure that $MA^{(t)}\leq1$ for all $t\leq\tau$ so that we can
use the inequality $e^{x}-1\leq x\left(1+x\right)$ which holds for
all $x\le1$. Applying this inequality, we obtain the recursive inequality
\[
A^{(t)}=\frac{1}{M}\left(\exp\left(MA^{(\tau-1)}\right)-1\right)\leq A^{(t-1)}\left(1+MA^{(t-1)}\right)
\]
with initial condition $A^{(0)}=\lambda$.

Let $x^{(t)}=MA^{(t)}$. Note that the above inequality implies $x^{(t)}-x^{(t-1)}\leq\left(x^{(t-1)}\right)^{2}$.
The solution to the differential equation $\frac{dx}{dt}=x^{2}$ is
$x(t)=\frac{1}{c-t}$.

We consider the cases $\lambda\ge0$ and $\lambda\le0$ separately.
Let us first assume that $\lambda\ge0$. To go from the continuous
to the discrete setting, note that
\[
\frac{1}{c-\left(t-1\right)}\left(1+\frac{1}{c-\left(t-1\right)}\right)\le\frac{1}{c-\left(t-1\right)}+\frac{1}{\left(c-\left(t-1\right)\right)\left(c-t\right)}=\frac{1}{c-t}.
\]
Using this, we can show that $MA^{(t)}\le\frac{1}{c-t}$ with $c=\frac{1}{MA^{(0)}}=\frac{1}{M\lambda}$.
The base case $t=0$ is clear. Consider $t\geq1$. Using the inductive
hypothesis and the above inequality, we obtain
\begin{align*}
MA^{(t)} & \leq MA^{(t-1)}\left(1+MA^{(t-1)}\right)\\
 & \leq\frac{1}{c-\left(t-1\right)}\left(1+\frac{1}{c-\left(t-1\right)}\right) & \left(\text{inductive hypothesis}\right)\\
 & \leq\frac{1}{c-t}.
\end{align*}
We need to ensure $MA^{(t)}\leq1$ for all $t\leq\tau$, so it suffices
to ensure
\[
\frac{1}{\frac{1}{M\lambda}-\tau}\leq1\Leftrightarrow\lambda\leq\frac{1}{M\left(\tau+1\right)}.
\]
For the second part, consider any $\eta\geq0$. To ensure $A^{(\tau)}\leq\left(1+\eta\right)\lambda$,
it suffices to ensure
\[
\frac{1}{M}\frac{1}{\frac{1}{M\lambda}-\tau}\leq\left(1+\eta\right)\lambda\Leftrightarrow\lambda\leq\frac{\eta}{1+\eta}\cdot\frac{1}{M\tau}.
\]

Let us now consider the case when $\lambda\le0$. For this case, we
will use that for any $c\le-2$ and $t\ge1$,
\begin{align*}
\frac{1}{c-2\left(t-1\right)}\left(1+\frac{1}{c-2\left(t-1\right)}\right) & =\frac{\left(c-2t\right)\left(c-2\left(t-1\right)+1\right)}{\left(c-2t\right)\left(c-2\left(t-1\right)\right)^{2}}\\
 & =\frac{\left(c-2\left(t-1\right)\right)^{2}-c+2t-4}{\left(c-2t\right)\left(c-2\left(t-1\right)\right)^{2}}\\
 & \le\frac{1}{c-2t}.
\end{align*}
We thus require that $\lambda\ge-\frac{1}{2M}$ such that $c=\frac{1}{M\lambda}\le-2$.
We can show that $MA^{(t)}\le\frac{1}{c-2t}$ with $c=\frac{1}{MA^{(0)}}=\frac{1}{M\lambda}$
and that $MA^{(t)}\ge-\frac{1}{2}$ for all $t\le\tau$. The base
case follows by definition. Consider $t\ge1$. Using the inductive
hypothesis and the above inequality, we obtain
\begin{align*}
MA^{(t)} & \leq MA^{(t-1)}\left(1+MA^{(t-1)}\right)\\
 & \leq\frac{1}{c-2\left(t-1\right)}\left(1+\frac{1}{c-2\left(t-1\right)}\right) & \left(\text{inductive hypothesis}\right)\\
 & \leq\frac{1}{c-2t}.
\end{align*}
To ensure that $A^{(\tau)}\le\left(1-\eta\right)\lambda$, it thus
suffices to choose
\[
\frac{1}{M}\frac{1}{\frac{1}{M\lambda}-2\tau}\le\left(1-\eta\right)\lambda\Leftrightarrow\lambda\ge\frac{-\eta}{1-\eta}\cdot\frac{1}{2\tau M}.
\]

\end{proof}

Via standard arguments, we use this to derive a tail bound on the
estimation
\[\widehat{I}^{(\tau)}\left(S\right)=\frac{1}{\ell}\sum_{i=1}^{\ell}\left|\Reach_{G_{i}}^{(\tau)}\left(S\right)\right|.\] 
\begin{lem}
(Theorem \ref{thm:ic-sample-complexity}) Let $\mathcal{F}=\left\{ S\subseteq V\colon\left|S\right|\leq k\right\} $
be the set of all feasible solutions. Let $\widehat{I^{(\tau)}}\colon2^{V}\to\Reals$,
$\widehat{I^{(\tau)}}\left(S\right)=\frac{1}{\ell}\sum_{i=1}^{\ell}\left|\Reach_{G_{i}}^{(\tau)}\left(S\right)\right|$
where $G_{1},\dots,G_{\ell}$ are live-edge graphs sampled independently
from the independent cascade model. For any values $\epsilon,\delta\in\left[0,1\right]$,
we can ensure that 
\[
\Pr\left[\forall S\in\mathcal{F}\colon\left|\widehat{I^{(\tau)}}\left(S\right)-I^{(\tau)}\left(S\right)\right|\leq\epsilon\opt\right]\geq1-\delta
\]
 using $\ell=O\left(\frac{M\tau\log\left(\left|\mathcal{F}\right|/\delta\right)}{\opt\epsilon^{2}}\right)$
samples, where $M=\max_{v\in V}I^{(\tau)}\left(v\right)$ is the maximum
singleton influence and $\opt=\max_{S\colon\left|S\right|\leq k}I^{(\tau)}\left(S\right)$
is the value of the optimal solution. Since $M\leq\opt$ and $\left|\mathcal{F}\right|=O\left(n^{k}\right)$,
we have $\ell\leq O\left(\frac{\tau k\log\left(n/\delta\right)}{\epsilon^{2}}\right)$.
\end{lem}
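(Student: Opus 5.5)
We will prove the statement with a Chernoff-style argument, using the MGF bound of Theorem~\ref{lem:mgf-bound} in place of the usual bounded-range estimate, and then take a union bound over $\mathcal{F}$.

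Fix $S\in\mathcal{F}$ and write $X_i=|\Reach_{G_i}^{(\tau)}(S)|$ and $\mu=I^{(\tau)}(S)\le\opt$. Set $a=\epsilon\opt$; we must bound $\Pr[\sum_i X_i\ge\ell(\mu+a)]$ and $\Pr[\sum_i X_i\le\ell(\mu-a)]$.

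\emph{Upper tail.} Take $\lambda>0$. By independence, Markov's inequality and the corollary of Theorem~\ref{lem:mgf-bound} (the factor $\exp(-B^{(\tau)}|S|)$ is dropped, or bounded if its sign is unfavorable),
\[
\Pr\Big[\textstyle\sum_i X_i\ge\ell(\mu+a)\Big]\le\exp\big(-\ell\lambda(\mu+a)+\ell A^{(\tau)}\mu\big).
\]
By Lemma~\ref{lem:mgf-param}, choosing $\lambda\le\min\{\frac{\eta}{1+\eta}\frac{1}{M\tau},\frac{1}{M(\tau+1)}\}$ gives $A^{(\tau)}\le(1+\eta)\lambda$. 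The exponent is then at most $-\ell\lambda(a-\eta\mu)$. Take $\eta=\frac{a}{2\mu}$, or $\eta=\epsilon/2$ using $\mu\le\opt$, so that $a-\eta\mu\ge a/2$. Then $\lambda=\Theta(\frac{\epsilon}{M\tau})$ is admissible, since $\frac{\eta}{1+\eta}=\Theta(\epsilon)$ for $\epsilon\le1$. The tail is therefore at most $\exp(-\Omega(\ell\frac{\epsilon}{M\tau}\epsilon\opt))=\exp(-\Omega(\frac{\ell\epsilon^2\opt}{M\tau}))$.

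\emph{Lower tail.} Take $\lambda<0$ and apply Markov's inequality to $e^{\lambda\sum_i X_i}$. This gives the bound $\exp(-\ell|\lambda|(a-\mu) \cdots)$; concretely, $\exp(\ell(-\lambda(\mu-a))+\ell A^{(\tau)}\mu)$. The second part of Lemma~\ref{lem:mgf-param} with $\lambda\ge\max\{-\frac{1}{2M},-\frac{\eta}{1-\eta}\frac{1}{2\tau M}\}$ gives $A^{(\tau)}\le(1-\eta)\lambda$. The exponent becomes $\ell\lambda(a-\eta\mu)$ with $\lambda<0$, and the same choices $\eta=\epsilon/2$ and $|\lambda|=\Theta(\frac{\epsilon}{M\tau})$ again yield $\exp(-\Omega(\frac{\ell\epsilon^2\opt}{M\tau}))$.

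\emph{Union bound.} The two tails have equal order, so each is at most $\frac{\delta}{2|\mathcal{F}|}$ once $\ell=C\frac{M\tau\log(|\mathcal{F}|/\delta)}{\opt\epsilon^2}$ for a suitable constant $C$. A union bound over $\mathcal{F}$ gives the claimed guarantee. For the simplified form, use $M\le\opt$ and $\log|\mathcal{F}|=O(k\log n)$.

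\emph{Main obstacle.} The delicate step is the parameter bookkeeping in Lemma~\ref{lem:mgf-param}. We must check that the chosen $\lambda$ satisfies both constraints simultaneously, in particular $\frac{1}{M(\tau+1)}$ versus $\frac{\epsilon}{M\tau}$; for $\epsilon\le1$ the latter essentially dominates up to constants. We must also confirm that the MGF bound is applied to a sum of i.i.d.\ samples, which it is, since the $G_i$ are independent and the bound holds for each $G_i$ with the same $M$. A secondary point is the sign of $B^{(\tau)}$ in Theorem~\ref{lem:mgf-bound}. We avoid it by appealing to the stated weaker corollary $\E[e^{\lambda X}]\le\exp(A^{(\tau)}I^{(\tau)}(S))$, which the paper asserts suffices.
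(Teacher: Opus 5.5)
Your proposal is correct and is essentially the paper's proof. Both arguments apply Markov's inequality to $e^{\lambda\sum_i X_i}$, use the corollary $\E[e^{\lambda X}]\le\exp(A^{(\tau)}I^{(\tau)}(S))$ of Theorem~\ref{lem:mgf-bound}, invoke Lemma~\ref{lem:mgf-param} for the two tails, and finish with a union bound over $\mathcal{F}$. The corollary holds for either sign of $\lambda$: $e^x-1\ge x$ gives $A^{(t)}\ge\lambda$ for all $t$, hence $B^{(\tau)}\ge 0$, so the hedge about an unfavorable sign is unnecessary.

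The only difference is cosmetic. You fix $\eta=\epsilon/2$ uniformly using $I^{(\tau)}(S)\le\opt$, whereas the paper takes the set-dependent $\eta=\epsilon\opt/(2I^{(\tau)}(S))$. Your choice has the small advantage that the side constraints $\lambda\le\frac{1}{M(\tau+1)}$ and $\lambda\ge-\frac{1}{2M}$ of Lemma~\ref{lem:mgf-param} are visibly satisfied for $\epsilon\le 1$.
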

\begin{proof}
Let $c=\epsilon\opt$. We first use a union bound which allows us
to consider one failure event per tail, for a single set $T\in{\cal F}$:

\[
\Pr\left[\left|\widehat{I^{(\tau)}}\left(T\right)-I^{(\tau)}\left(T\right)\right|\ge c\right]=\Pr\left[\widehat{I^{(\tau)}}\left(T\right)\ge I_{\G}^{(\tau)}\left(T\right)+c\right]+\Pr\left[\widehat{I^{(\tau)}}\left(T\right)\le I^{(\tau)}\left(T\right)-c\right]
\]

Let us first consider the upper tail. By Markov's inequality, independence
between the samples $G_{i}\sim\G$, and Lemma~\ref{lem:mgf-bound},

\begin{align*}
\Pr\left[\widehat{I^{(\tau)}}\left(T\right)\ge I^{(\tau)}\left(T\right)+\epsest\right] & =\Pr\left[\exp\left(\lambda\ell\widehat{I^{(\tau)}}\left(T\right)\right)\geq\exp\left(\lambda\ell\left(\E\left[I^{(\tau)}\left(T\right)\right]+\epsest\right)\right)\right]\\
 & \leq\frac{\E\left[\exp\left(\lambda\ell\widehat{I^{(\tau)}}\left(T\right)\right)\right]}{\exp\left(\lambda\ell\left(I^{(\tau)}\left(T\right)+\epsest\right)\right)}\\
 & =\frac{\left(\E_{G\sim\G}\left[\exp\left(\lambda\left|\Reach_{G}^{(\tau)}(T)\right|\right)\right]\right)^{\ell}}{\exp\left(\lambda\ell\left(I_{\G}^{(\tau)}\left(T\right)+c\right)\right)}\\
 & \leq\exp\left(A^{(\tau)}\ell I^{(\tau)}\left(T\right)-\lambda\ell\left(I^{(\tau)}\left(T\right)+c\right)\right).
\end{align*}
By setting $\lambda=O\left(\frac{c}{\tau MI^{(\tau)}\left(T\right)}\right)$,
we ensure $A^{(\tau)}\leq\lambda\left(1+\frac{c}{2I^{(\tau)}\left(T\right)}\right)$
due to Lemma~\ref{lem:mgf-param}. We obtain
\begin{align*}
\exp\left(A^{(\tau)}\ell I^{(\tau)}\left(T\right)-\lambda\ell\left(I^{(\tau)}\left(T\right)+c\right)\right) & \le\exp\left(\lambda\frac{c}{2I^{(\tau)}\left(T\right)}\ell I^{(\tau)}\left(T\right)-\lambda\ell c\right)\\
 & =\exp\left(-\frac{c}{2}\lambda\ell\right).
\end{align*}
It therefore suffices to set 
\[
\ell=\Omega\left(\frac{\log\left(1/\delta'\right)}{c\lambda}\right)=\Omega\left(\tau MI^{(\tau)}\left(T\right)\frac{\log\left(1/\delta'\right)}{c^{2}}\right)=\Omega\left(\tau\frac{\log\left(1/\delta'\right)}{\epsilon^{2}}\right)
\]
and observe that $\opt\ge M$ to obtain a failure probability of $\delta'/2$
for any $\delta'>0$.

The bound for the lower tail follows analogously. For $\lambda<0$,
we get
\begin{align*}
\Pr\left[\widehat{I^{(\tau)}}\le I_{\G}^{(\tau)}\left(T\right)-c\right] & =\Pr\left[\exp\left(\lambda\ell\widehat{I^{(\tau)}}\right)\ge\exp\left(\lambda\ell\left(\E\left[I^{(\tau)}\right]-c\right)\right)\right]\\
 & \leq\frac{\E\left[\exp\left(\lambda\ell\widehat{I^{(\tau)}}\right)\right]}{\exp\left(\lambda\ell\left(I^{(\tau)}\left(T\right)-c\right)\right)}\\
 & =\frac{\left(\E_{G\sim\G}\left[\exp\left(\lambda\left|\Reach_{G}^{(\tau)}(T)\right|\right)\right]\right)^{\ell}}{\exp\left(\lambda\ell\left(I^{(\tau)}\left(T\right)-c\right)\right)}\\
 & \leq\exp\left(A^{(\tau)}\ell I^{(\tau)}\left(T\right)-\lambda\ell\left(I^{(\tau)}\left(T\right)-c\right)\right)
\end{align*}
By setting $\lambda=\Omega\left(-\frac{c}{\tau MI^{(\tau)}\left(T\right)}\right)$
we ensure $A^{(\tau)}\le\lambda\left(1-\frac{1}{2}\frac{c}{I^{(\tau)}\left(T\right)}\right)$
due to Lemma~\ref{lem:mgf-param}. We obtain
\begin{align*}
\exp\left(A^{(\tau)}\ell I^{(\tau)}\left(T\right)-\lambda\ell\left(I^{(\tau)}\left(T\right)-c\right)\right) & \le\exp\left(-\frac{1}{2}\lambda\frac{c}{I^{(\tau)}\left(T\right)}\ell I^{(\tau)}\left(T\right)+\lambda\ell c\right)\\
 & =\exp\left(\frac{c}{2}\lambda\ell\right).
\end{align*}
The above choice of $\ell$ also suffices to obtain a failure probability
of $\delta'/2$ for the lower tail. Overall, we therefore get a failure
probability of $\delta'$ for obtaining an $c$-additive approximation
on $T$. To get a $c$-additive approximation for all sets $T$ with
$\left|T\right|\le k$ with high probability, we choose $\delta'=\delta{n \choose k}^{-1}$.
By a union bound, we then obtain the theorem statement.
\end{proof}

\section{\label{sec:analysis-live-edge} Algorithm for influence maximization
in independent cascade models}

In this section, we provide the pseudocode and analysis (Theorem \ref{thm:ic-main})
for our algorithm on the independent cascade model. 

\begin{algorithm}
\begin{raggedright}
\caption{\label{alg:independent-cascades} Our algorithm for independent cascade
(IC) models.}
\par\end{raggedright}
\begin{raggedright}
$\mathrm{IndependentCascade}\left(\epsilon,\delta\right):$
\par\end{raggedright}
\begin{raggedright}
\textbf{\textcolor{blue}{Input:}}\textcolor{blue}{{} Error parameters
$\epsilon$ and $\delta$}
\par\end{raggedright}
\begin{raggedright}
\textbf{\textcolor{blue}{Output:}}\textcolor{blue}{{} With probability
$1-\delta$, we obtain a solution $S\subseteq V$ with $\left|S\right|\le k$
such that $I^{(\tau)}\left(S\right)\ge\left(1-\frac{1}{e}-\epsilon\right)\opt$.}
\par\end{raggedright}
\begin{raggedright}
Sample $\ell=\frac{9\tau k}{\epsilon^{2}}\log\left(\frac{2}{\delta}\right)$
graphs $G_{1},G_{2},\dots,G_{\ell}$ independently from the independent
cascade model $\G$
\par\end{raggedright}
\begin{raggedright}
$S\gets\mathrm{ObservedCascades}\left(\frac{\epsilon}{3},\frac{\delta}{2}\right)$
where the estimation is for graphs $G_{1},G_{2},\dots,G_{\ell}$
\par\end{raggedright}
\begin{raggedright}
\textbf{return} $S$
\par\end{raggedright}
\raggedright{}
\end{algorithm}

\begin{lem}
\label{lem:ic-sampling} There is an algorithm that generates $\ell$
samples from an IC model in time $\widetilde{O}\left(m+\ell\overline{m}\right)$
where $m=\left|E\right|$ is the number of edges in the graph and
$\overline{m}=\sum_{e\in E}p_{e}$ is the expected number of edges
in a sample.
\end{lem}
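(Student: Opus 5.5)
We will generate each sample by going over the edges once per sample, and for each edge skipping ahead with a geometric random variable rather than flipping a separate coin for it. We start with a single $O(m)$ preprocessing pass. It groups the edges by probability scale: bucket $j$ holds the edges with $p_e\in(2^{-j-1},2^{-j}]$ for $j=0,\dots,\lceil\log_2 m\rceil$, and one extra bucket holds the edges with $p_e<1/m^2$. Within each bucket we store the edges in an array.

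To draw one live-edge graph, we treat each bucket $j$ by rejection. First we pretend every edge in the bucket has probability $q_j=2^{-j}$. Under that assumption the indices of the live edges form a Bernoulli process, so we can list them by jumping ahead with independent $\mathrm{Geometric}(q_j)$ gaps. These gaps can be sampled in expected $O(1)$ time each with the method of \citet{BringmannF13}. Each candidate edge $e$ that we land on is then kept with probability $p_e/q_j\ge 1/2$. As a result every edge is included independently with probability exactly $p_e$, which gives the correct IC distribution. The expected number of candidates in bucket $j$ is $q_j|E_j|\le 2\sum_{e\in E_j}p_e$. Over all buckets this is $O(\overline{m})$ candidates plus one terminating geometric draw per bucket, so one sample costs $O(\overline{m}+\log m)$ expected time.

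The extra bucket of very small probabilities is handled the same way with $q=1/m^2$. Its expected number of candidates is at most $1/m$, so it adds only $O(1)$ work.

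Repeating this independently for $\ell$ samples gives total expected time $O(m+\ell(\overline{m}+\log m))=\widetilde{O}(m+\ell\overline{m})$. The $\log m$ term is absorbed by the $\widetilde{O}$ if $\overline{m}\ge 1$. If $\overline{m}<1$, we can also absorb it by first drawing the total count from the exact Poisson-binomial mixture via bucket counts, but we would simply accept the $\widetilde{O}$.

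The main obstacle is that edge probabilities are heterogeneous. A single geometric skip only works for a uniform probability, and we handle this with the bucketing-plus-rejection step. The analysis has to confirm that inclusion is independent and exact: each edge sits in exactly one bucket, the candidate process marks it independently with probability $q_j$, and the thinning coins are fresh. The per-bucket overhead must also stay logarithmic, which follows because there are only $O(\log m)$ buckets.
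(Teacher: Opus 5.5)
Your argument is sound in substance, but you run the geometric skipping in the opposite direction from the paper. The paper fixes an edge $e$ and skips through the sample indices $1,\dots,\ell$ with $\mathrm{Geom}(p_e)$ gaps. Apart from one terminating draw per edge, every draw yields a live (edge, sample) pair, so there are $m+\ell\overline{m}$ draws in expectation. Each costs $O(1)$ by \citet{BringmannF13} when $p_e\ge 1/m$, and edges with $p_e\le 1/m$ are rounded to $0$. Since each skip process has a single fixed probability, heterogeneity never arises. So the bucketing-plus-thinning step you call the main obstacle is an artifact of skipping over edges inside a fixed sample. Your route buys two things: it is exact for tiny probabilities (you thin them, whereas rounding $p_e\le 1/m$ to $0$ changes the model being sampled), and it outputs each live-edge graph directly as an edge list, which is what the sketching routines consume. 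It costs an $O(\log m)$ overhead per sample.

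There are two loose ends. First, your buckets $j\le\lceil\log_2 m\rceil$ only cover $p_e>2^{-\lceil\log_2 m\rceil-1}$, a threshold exceeding $1/(4m)$, while the extra bucket only takes $p_e<1/m^2$. Edges with $p_e$ in between lie in no bucket. You can let $j$ run up to $\lceil 2\log_2 m\rceil$, or raise the extra bucket's threshold to $2^{-\lceil\log_2 m\rceil-1}$ and use that value as its $q$; its expected number of candidates per sample is then at most $1/2$. Second, your total is $O(m+\ell\overline{m}+\ell\log m)$, and the $\ell\log m$ term is not absorbed by $\widetilde{O}(\cdot)$ when $\overline{m}\ll 1/\log m$ and $\ell\gg m$. ``Accepting the $\widetilde{O}$'' does not settle this, and the Poisson-binomial remark is not worked out. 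The regime is degenerate, since writing down $\ell$ explicit samples already costs $\Omega(\ell)$. Still, to get the bound as stated, switch to the paper's direction of skipping: it pays one terminating draw per edge, $O(m)$ in total, instead of one per bucket per sample.
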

\begin{proof}
The algorithm for sampling uses following observation. Let $\mathrm{Geom}\left(p\right)$
be the Geometric distribution with success probability $p$. Consider
an edge $e$, and let $i_{1},i_{2},\dots,i_{a}$ be the indices of
the samples that contain $e$. Since $e$ becomes live independently
with probability $p_{e}$, each of the random variables $i_{1}$ and
$i_{j}-i_{j-1}$ for all $2\leq j\leq a$ are $\mathrm{Geom}\left(p_{e}\right)$
random variables. Thus we can construct the samples by considering
each edge $e$ in turn, and generating the indices of the subgraphs
that contain $e$ by repeatedly sampling values $X_{1},X_{2},\dots,X_{a}$
from the $\mathrm{Geom}\left(p_{e}\right)$ distribution until $X_{1}+X_{2}+\dots+X_{a}\geq\ell$.
The number of samples from the Geometric distribution is equal to
the total number of edges in the sampled subgraphs. We can construct
a sample from the $\mathrm{Geom}\left(p_{e}\right)$ distribution
in expected time $O\left(1+\log\left(1/p_{e}\right)/\log m\right)$
\citep{BringmannF13}, which is $O\left(1\right)$ for $p_{e}\geq1/m$;
for $p_{e}\leq1/m$, we can round down $p_{e}$ to $0$ and the influence
changes by at most $1/m$.
\end{proof}

\begin{thm}
(Theorem \ref{thm:ic-main}) For any $\epsilon,\delta$ given as input,
Algorithm \ref{alg:independent-cascades} achieves an approximation
of $1-\frac{1}{e}-\epsilon$ with probability $1-\delta$ using $\ell=O\left(\frac{\tau k}{\epsilon^{2}}\log\left(\frac{n}{\delta}\right)\right)$
samples. The expected running time of our algorithm is $\widetilde{O}\left(m+Nb\ell\overline{m}\log b\right)$
where $m=\left|E\right|$ is the number of edges in the graph, $\overline{m}=\sum_{e}p_{e}$
is the expected number of edges in the graph, $N=\widetilde{O}\left(\frac{1}{\epsilon^{2}}\log\left(\frac{1}{\delta}\right)\right)$
and $b=\widetilde{O}\left(\frac{1}{\epsilon^{2}}\log\left(\frac{\ell N}{\delta}\right)\right)$.
\end{thm}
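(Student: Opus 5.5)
The proof combines three earlier results: Theorem~\ref{thm:ic-sample-complexity} for sample complexity, Theorem~\ref{thm:observed-cascades-main} for optimization, and Lemma~\ref{lem:ic-sampling} for constructing the samples. The plan is to split the failure probability evenly between the sampling stage and the optimization stage.

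\textbf{Uniform approximation.} First I apply Theorem~\ref{thm:ic-sample-complexity} with accuracy $\epsilon/3$ and failure probability $\delta/2$. With $\ell=\frac{9\tau k}{\epsilon^{2}}\log(\frac{2}{\delta})$, adjusted by the $\log n$ factor hidden in $\log|\mathcal{F}|=O(k\log n)$ so that $\ell=O(\frac{\tau k}{\epsilon^{2}}\log\frac{n}{\delta})$, the following holds with probability $1-\delta/2$:
\[
|\widehat{I^{(\tau)}}(S)-I^{(\tau)}(S)|\le\tfrac{\epsilon}{3}\opt \quad \text{for all } |S|\le k.
\]
Call this event $\mathcal{E}$.

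\textbf{Optimization on the sampled cascades.} Conditioned on $\mathcal{E}$, the graphs $G_1,\dots,G_\ell$ form an observed-cascades instance with objective $\widehat{I^{(\tau)}}$. Write $\widehat{\opt}$ for its optimum. Applying Theorem~\ref{thm:observed-cascades-main} with parameters $(\epsilon/3,\delta/2)$ yields $S$ with $\widehat{I^{(\tau)}}(S)\ge(1-\frac1e-\frac\epsilon3)\widehat{\opt}$ with probability $1-\delta/2$. The internal randomness of the sketches and permutations is independent of the sampled graphs, so the two failure events can be union-bounded.

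\textbf{Transferring back to the true influence.} Two consequences of $\mathcal{E}$ do the work:
\begin{itemize}
\item[(i)] $\widehat{\opt}\ge\widehat{I^{(\tau)}}(S^*)\ge(1-\frac\epsilon3)\opt$, since $S^*$ is feasible;
\item[(ii)] $I^{(\tau)}(S)\ge\widehat{I^{(\tau)}}(S)-\frac\epsilon3\opt$, since $|S|\le k$.
\end{itemize}
Chaining these gives
\[
I^{(\tau)}(S)\ge\Bigl(1-\tfrac1e-\tfrac\epsilon3\Bigr)\Bigl(1-\tfrac\epsilon3\Bigr)\opt-\tfrac\epsilon3\opt\ge\Bigl(1-\tfrac1e-\epsilon\Bigr)\opt,
\]
which holds with total probability $1-\delta$. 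The only subtlety is that the inner algorithm's guarantee is relative to $\widehat{\opt}$, not $\opt$. That guarantee is deterministic given the sampled instance, so conditioning on $\mathcal{E}$ is legitimate.

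\textbf{Running time.} Lemma~\ref{lem:ic-sampling} builds the $\ell$ graphs in expected time $\widetilde{O}(m+\ell\overline{m})$. The resulting instance has $\mtot$ with $\E[\mtot]=\ell\overline{m}$.

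Theorem~\ref{thm:observed-cascades-main} (case $\tau<n$) then costs $\widetilde{O}(Nb\,\mtot\log b)$ in expectation, with $N,b$ as stated. Since $b$ depends only logarithmically on $\ell$ and not on the graph realization, I can take expectation over the samples using linearity, which gives $\widetilde{O}(Nb\ell\overline{m}\log b)$. Adding the sampling cost yields the claimed bound $\widetilde{O}(m+Nb\ell\overline{m}\log b)$.

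\textbf{Main obstacle.} The step needing most care is the bookkeeping of the edges dropped in Lemma~\ref{lem:ic-sampling}. That lemma rounds edges with $p_e<1/m$ down to probability zero, which perturbs influences by a small amount. This error should be absorbed into the $\epsilon\opt/3$ slack, using $\opt\ge1$ so that an $O(1/m)$-scale change is negligible.
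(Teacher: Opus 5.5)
Your main argument is the same as the paper's proof. You use the same $\epsilon/3$ and $\delta/2$ split, the same chain through $\widehat{I^{(\tau)}}(S^{*})$ with the uniform approximation on $\mathcal{F}$ applied to both $S$ and $S^{*}$, and the same running-time bound from Lemma~\ref{lem:ic-sampling} plus linearity of expectation over $\mtot$. You are actually more careful than the paper about the $\log n$ factor that the algorithm's choice $\ell=\frac{9\tau k}{\epsilon^{2}}\log(\frac{2}{\delta})$ omits.

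The one thing to fix is your last paragraph. The paper's proof never mentions the rounding; it uses Lemma~\ref{lem:ic-sampling} as if it produced exact IC samples. Your way of absorbing it rests on the claim, from the sketch of Lemma~\ref{lem:ic-sampling}, that zeroing every $p_e\le 1/m$ moves influences by $O(1/m)$. That claim is false. Coupling the two models by deleting the rounded edges gives only $0\le I^{(\tau)}_{\mathcal{G}}(S)-I^{(\tau)}_{\mathcal{G}'}(S)\le n\sum_{e\text{ rounded}}p_e$, and the loss can be a constant fraction of $\opt$.

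Here is a counterexample. Let a center $c$ have $m$ out-edges to leaves, each with $p_e=1/m$. Every edge is rounded away, so all samples are edgeless. With $k=1$ the algorithm then has no reason to prefer $c$ (true influence $2=\opt$) over a leaf (true influence $1=\opt/2$). This breaks the guarantee whenever $\epsilon<\frac12-\frac1e$.

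The remark becomes correct with a much smaller threshold, say rounding only $p_e\le\epsilon/(10nm)$:
\begin{itemize}
\item the perturbation is then at most $\epsilon/10\le\epsilon\opt/10$;
\item each Geometric draw still costs only $O(1+\log(nm/\epsilon)/\log m)$;
\item since rounding only deletes edges, $I^{(\tau)}_{\mathcal{G}'}\le I^{(\tau)}_{\mathcal{G}}$, so the error enters your chain only through $\widehat{I^{(\tau)}}(S^{*})$.
\end{itemize}
There it fits inside the leftover slack $\frac{\epsilon}{3}\bigl(\frac1e+\frac\epsilon3\bigr)\opt$ of your final inequality.
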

\begin{proof}
By Theorem \ref{thm:observed-cascades-main}, we obtain for the set
$S$ returned by $\mathrm{ObservedCascades}$ that
\[
\widehat{I^{(\tau)}}\left(S\right)\ge\left(1-\frac{1}{e}-\frac{\epsilon}{3}\right)\max_{S\subseteq V:\left|S\right|\le k}\widehat{I^{(\tau)}}\ge\left(1-\frac{1}{e}-\frac{\epsilon}{3}\right)\widehat{I^{(\tau)}}\left(S^{*}\right)
\]
with probability at least $1-\frac{1}{2}\delta$. By Lemma \ref{thm:ic-sample-complexity}
and our choice of $\ell$, we obtain that $\left|\widehat{I^{(\tau)}}(T)-I^{(\tau)}\left(T\right)\right|\le\frac{\epsilon}{3}\opt$
on all sets $T\subseteq$ with $\left|T\right|\le k$ with probability
at least $1-\frac{1}{2}\delta$. We condition that both events are
successful. Then,
\begin{align*}
I^{(\tau)}\left(S\right) & \ge\widehat{I^{(\tau)}}\left(S\right)-\frac{\epsilon}{3}\opt\\
 & \ge\left(1-\frac{1}{e}-\frac{\epsilon}{3}\right)\widehat{I^{(\tau)}}\left(S^{*}\right)-\frac{\epsilon}{3}\opt\\
 & \ge\left(1-\frac{1}{e}-\frac{\epsilon}{3}\right)\left(\opt-\frac{\epsilon}{3}\opt\right)-\frac{\epsilon}{3}\opt\\
 & \ge\left(1-\frac{1}{e}-\epsilon\right)\opt.
\end{align*}

By Lemma \ref{lem:ic-sampling}, the expected running time to create
the $\ell$ samples is $\tilde{O}\left(m+\ell\overline{m}\right)$.
By Theorem \ref{thm:observed-cascades-main}, the total running time
for the optimization is $\widetilde{O}\left(Nb\mtot\log b\right)$
for $N=\widetilde{O}\left(\frac{1}{\epsilon^{2}}\log\left(\frac{1}{\delta}\right)\right)$
and $b=\widetilde{O}\left(\frac{1}{\epsilon^{2}}\log\left(\frac{\ell N}{\delta}\right)\right)$.
In expectation, we therefore obtain a total running time of $\widetilde{O}\left(m+Nb\ell\overline{m}\log b\right)$
for $\ell=O\left(\frac{\tau k}{\epsilon^{2}}\log\left(\frac{n}{\delta}\right)\right)$,
$N=\widetilde{O}\left(\frac{1}{\epsilon^{2}}\log\left(\frac{1}{\delta}\right)\right)$,
and $b=\widetilde{O}\left(\frac{1}{\epsilon^{2}}\log\left(\frac{\ell N}{\delta}\right)\right)$.
\end{proof}

\section{\label{sec:appendix-experiments} Further experimental details and
results}

\subsection{\label{sec:appendix-experimental-details} Further implementation
details}

We now detail implementation details for our algorithms and the baselines.
In all of our algorithms, instead of performing a binary search to
obtain a guess $\optguess$, we estimate the maximum singleton value
$M$ and run the algorithm with $\optguess=kM$. We use a value of
$\epsilon=0.1$ across instantiations of the template Algorithm \ref{alg:template-idealized}.
We further use lazy evaluations as described in \citet{chen21}. For
general models, we use a constant value $\cest=0.01$ in Algorithm
\ref{alg:general-model-empirical-variance}. We implement the Greedy
approach of \citet{sadeh20} using a median-of-means estimator, which
we also describe in Algorithm \ref{alg:estimate-gains-variance} using
$n_{\mathrm{p}}=k\log\left(2n\right)$ and $n_{\mathrm{s}}=\frac{k^{2}\tau}{\epsilon^{2}}$.
For observed cascades, we implement the bottom-$b$ in Algorithms
\ref{alg:estimate-gains-sketch-restricted} and \ref{alg:estimate-gains-sketch-unrestricted}
sketch using a fixed value of $b=64$ as suggested in \citet{cohen14}. 

\subsection{\label{sec:appendix-omitted-results} Further experimental results}

We provide experimental results which we omitted from the main body. 

\paragraph{General threshold models}

\begin{figure*}
\centering{}\includegraphics[width=0.4\linewidth]{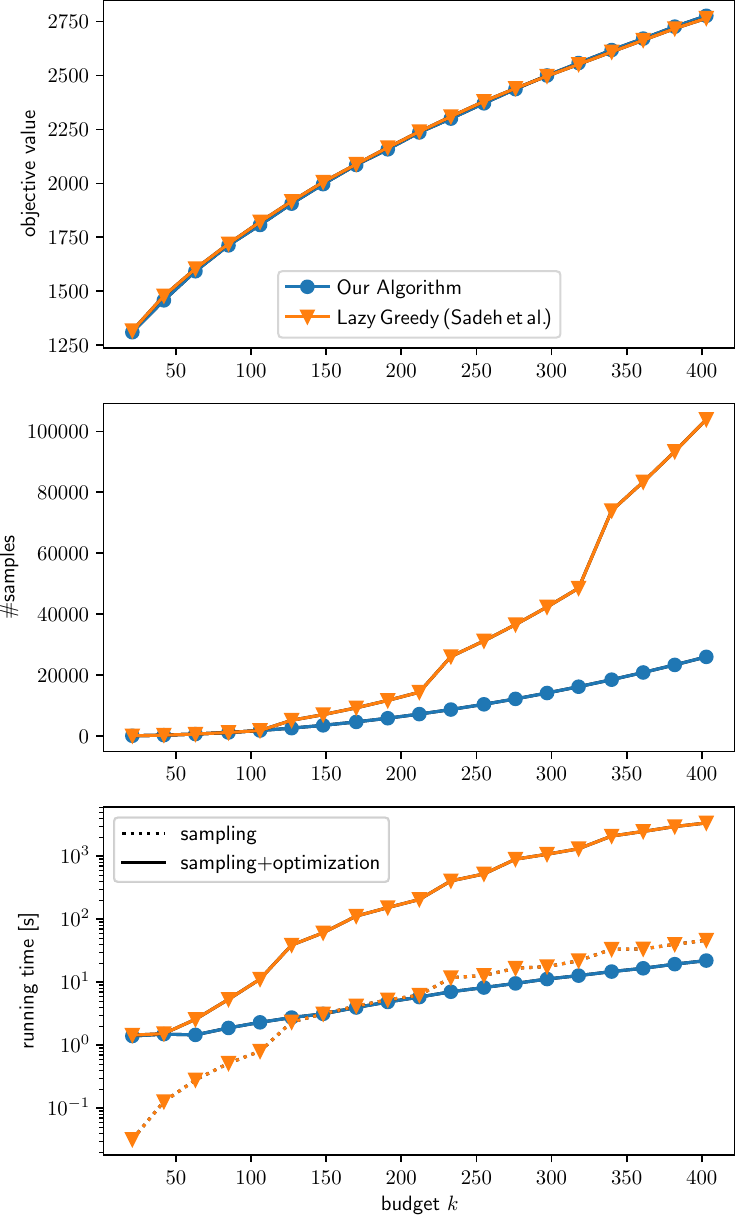}\hspace*{1cm}\includegraphics[width=0.4\linewidth]{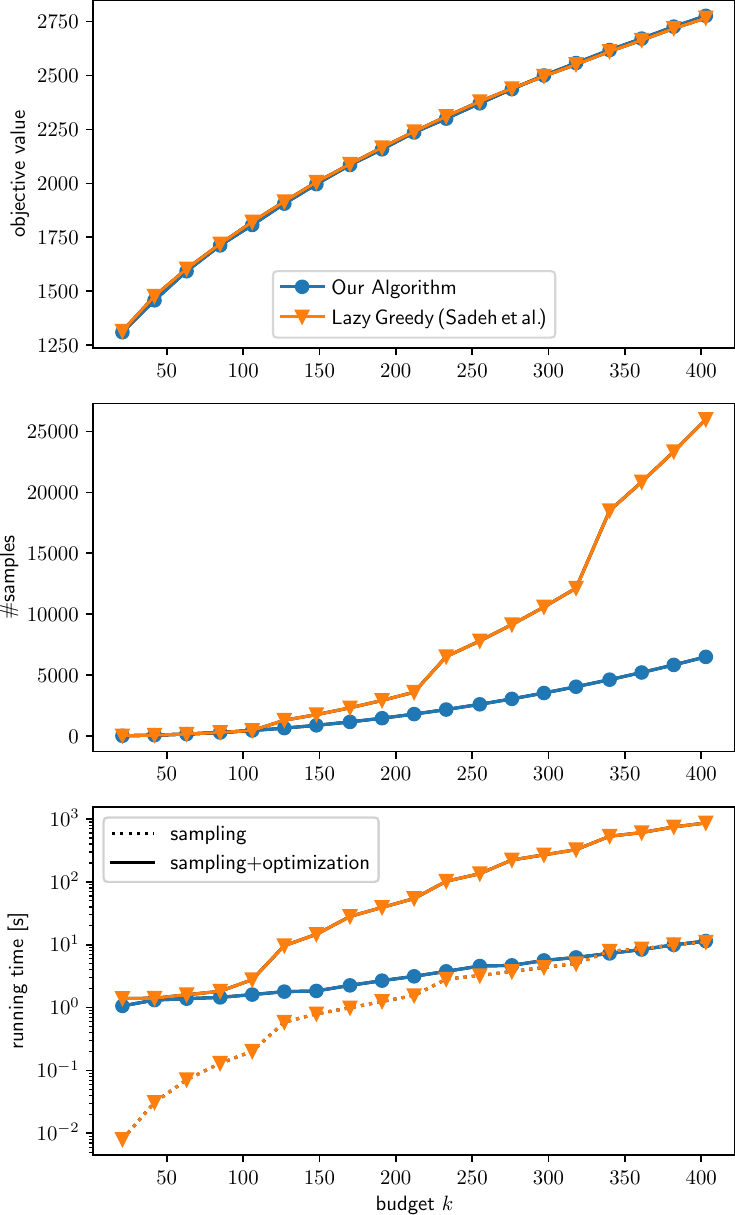}\caption{\label{fig:gtm-facebook-large} Influence maximization for a general
threshold model on the Facebook network for $\tau=n$ (left) and $\tau=10$
(right).}
\end{figure*}

Figure \ref{fig:gtm-facebook-large} shows results for general threshold
models for $\tau=10$ and $\tau=n$. The results are comparable to
$\tau=5$ in Figure \ref{fig:lt-facebook-large-2} in the main body,
and show that our adaptive method greatly reduces the number of samples
and the running time, while only suffering a slight decrease in objective
value. 

\paragraph{Observed cascades}

\begin{figure*}
\centering{}\includegraphics[width=0.27\linewidth]{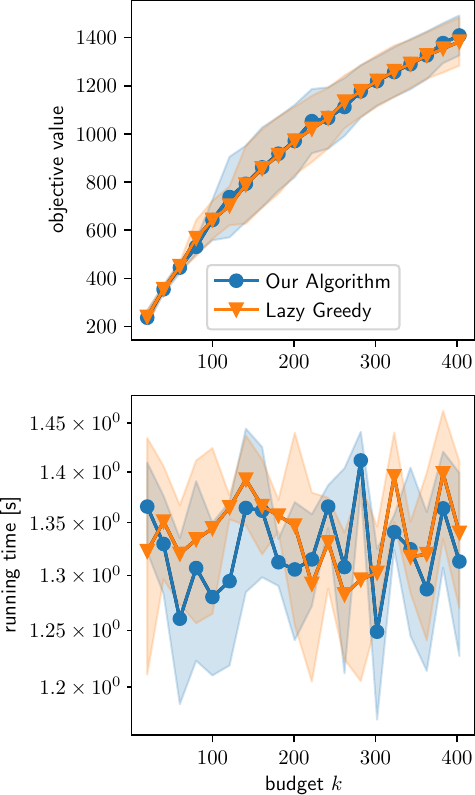}\hspace*{0.5cm}\includegraphics[width=0.27\linewidth]{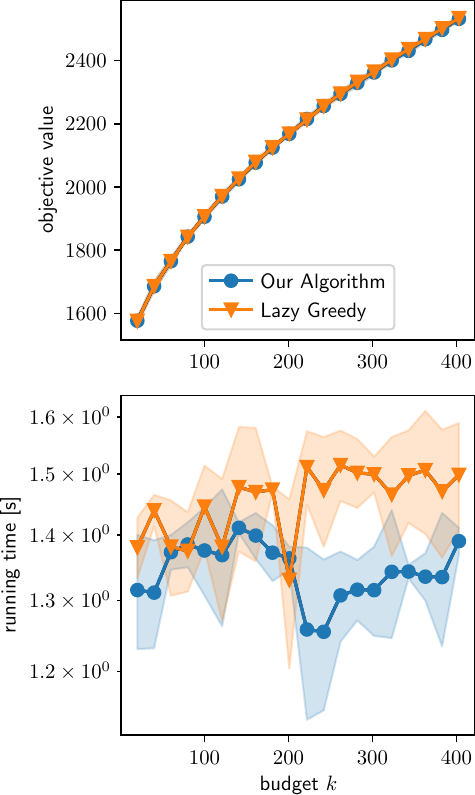}\hspace*{0.5cm}\includegraphics[width=0.27\linewidth]{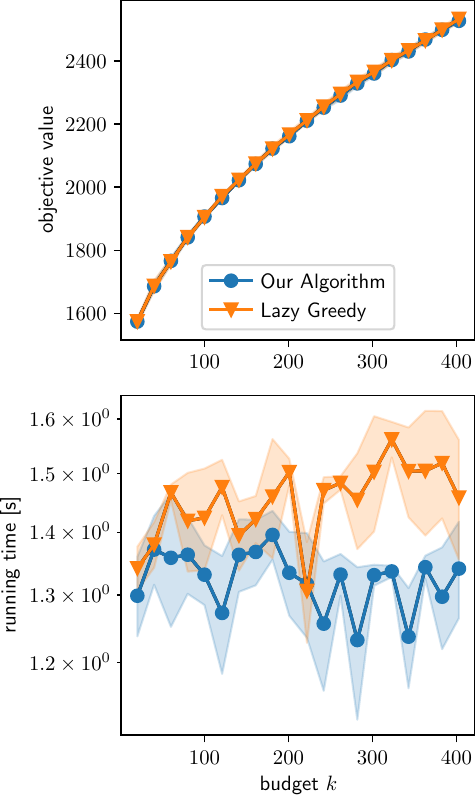}\caption{\label{fig:oc-facebook} Influence maximization in the observed cascades
model for $100$ observed cascades on the Facebook network for $\tau=n$
(left) and $\tau=10$ (right).}
\end{figure*}

\begin{figure}
\centering{}\includegraphics[width=0.4\linewidth]{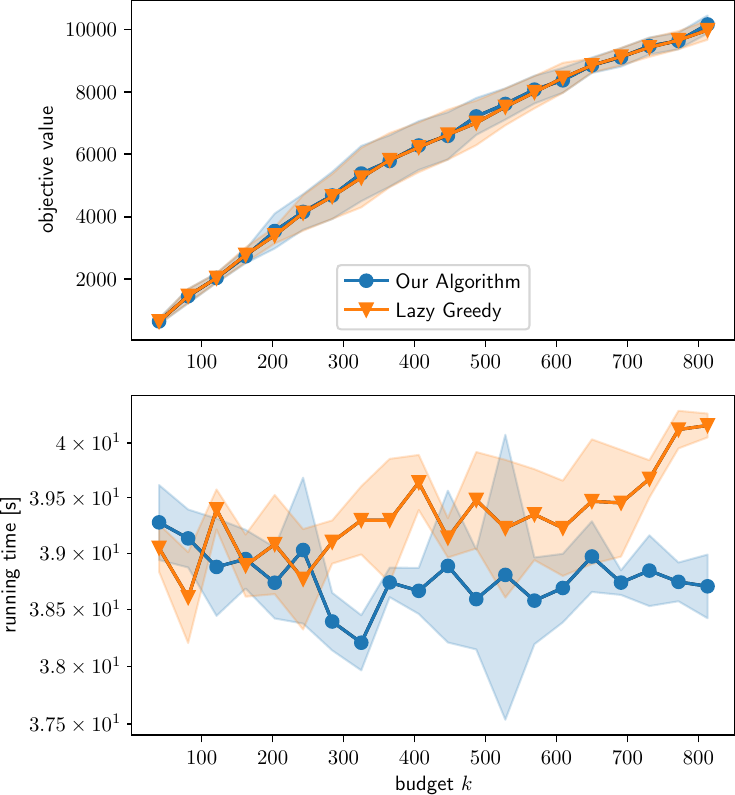}\hspace*{1cm}\includegraphics[width=0.4\linewidth]{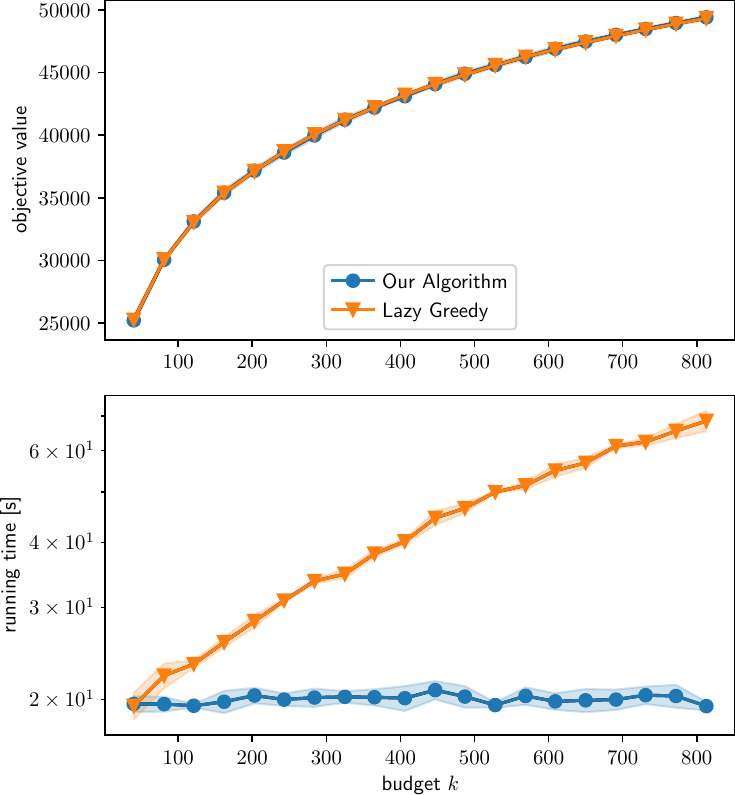}\caption{\label{fig:oc-twitter-2} Influence maximization for $100$ observed
cascades with $\tau\in\left\{ n,5,10\right\} $ on the Twitter network.}
\end{figure}

\begin{figure}
\centering{}\includegraphics[width=0.27\linewidth]{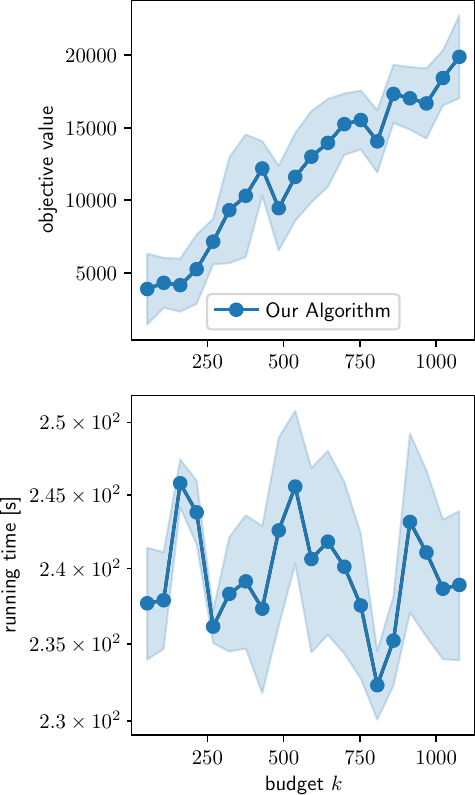}\hspace*{0.5cm}\includegraphics[width=0.27\linewidth]{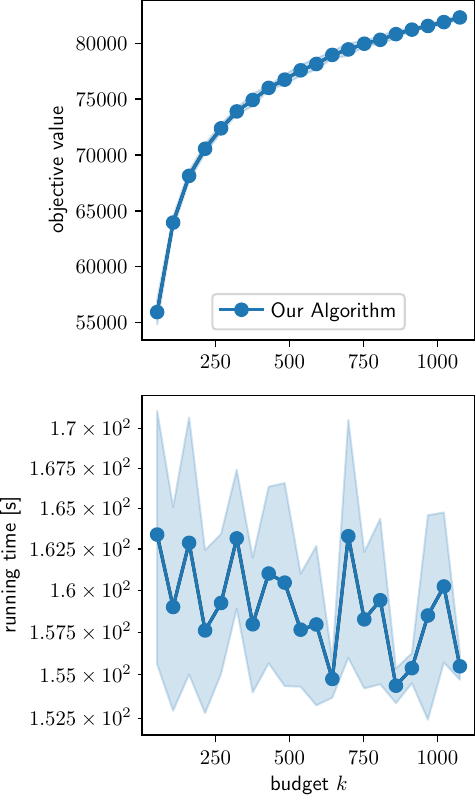}\hspace*{0.5cm}\includegraphics[width=0.27\linewidth]{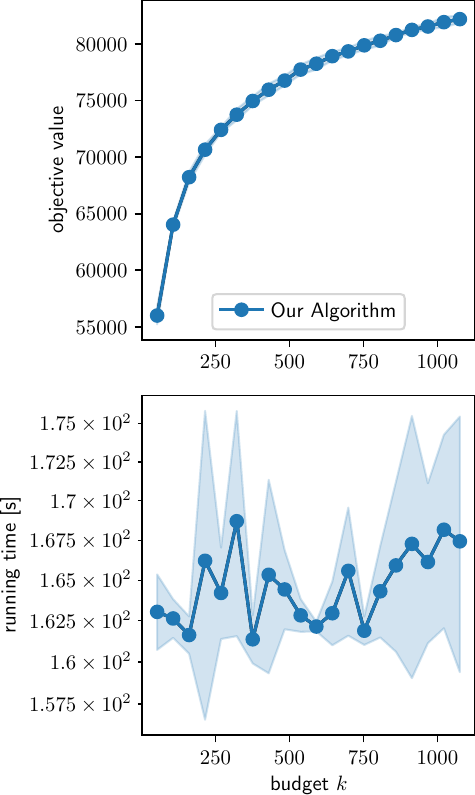}\caption{\label{fig:oc-gplus} Influence maximization for $100$ observed cascades
with $\tau\in\left\{ n,5,10\right\} $ on the Google-Plus network.}
\end{figure}

Figures \ref{fig:oc-facebook}, \ref{fig:oc-twitter-2}, and \ref{fig:oc-gplus}
shows further results for the observed cascade model. The results
on the Facebook and Twitter results confirm the findings of Figure
\ref{fig:lt-facebook-large-1-2} where we obtain an almost identical
objective value but require much less running time compared to the
lazy Greedy algorithm. At the same time, our experiments on the Google-Plus
and Pokec (cf. Figure \ref{fig:lt-facebook-large-1-1-1}) networks
show that our methods can scale to large networks, as opposed to the
lazy Greedy algorithm which takes hours even for small budgets. 

\paragraph{Independent cascade models}

\begin{figure*}
\centering{}\includegraphics[width=0.4\linewidth]{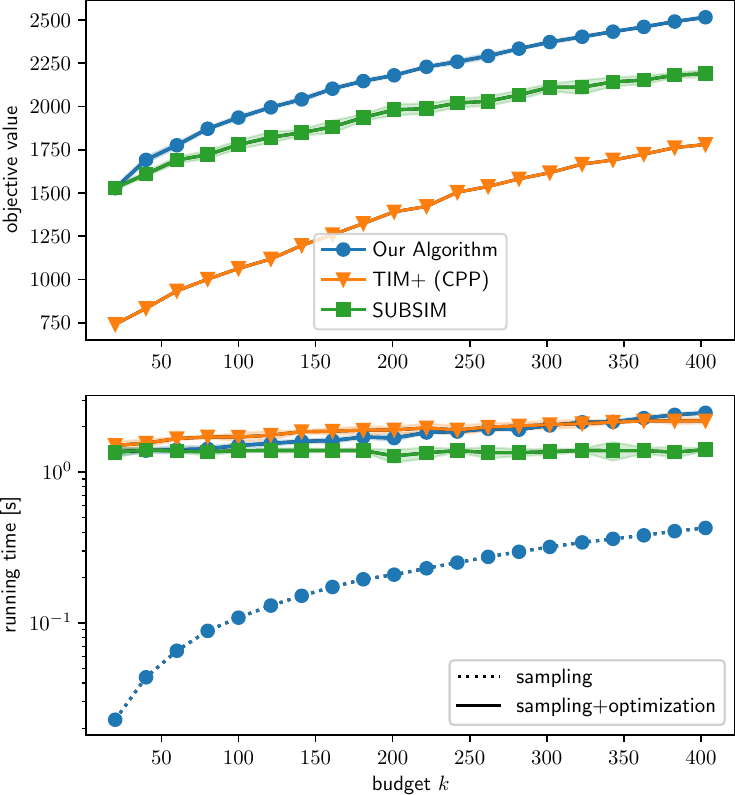}\hspace*{1cm}\includegraphics[width=0.4\linewidth]{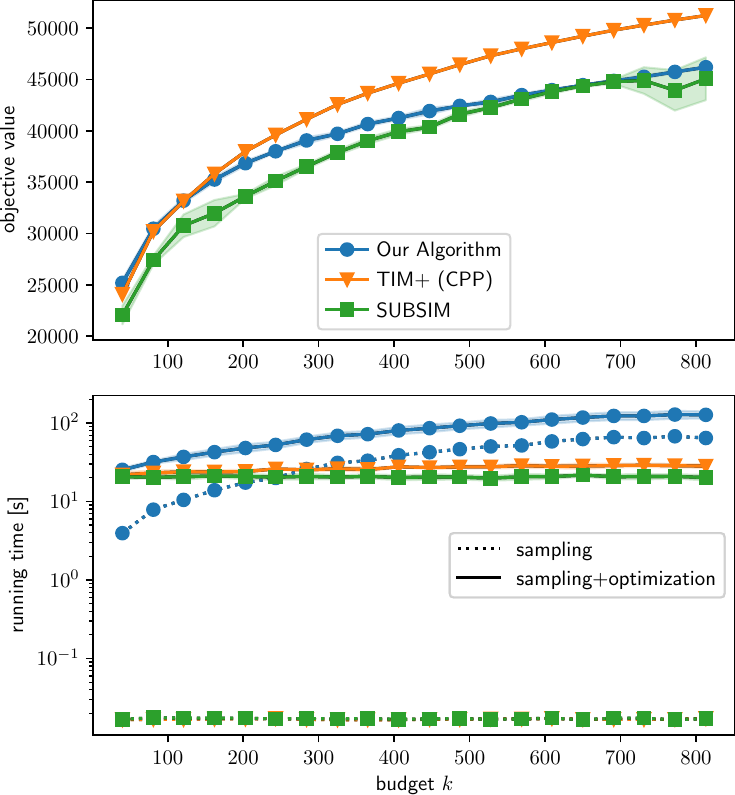}\caption{\label{fig:ic-facebook-twitter} Influence maximization in the IC
model on the Facebook and Twitter networks for $\tau=n$.}
\end{figure*}

We create an instance for the independent cascade model from each
network by assigning probabilities $p_{(u,v)}=\frac{1}{\deg(v)}$
to each directed edge $(u,v)\in E$, which reflects that a node $v$
has only limited capacity of being influenced from each of its neighbors.
We use Algorithm \ref{alg:independent-cascades} with the sketching
according to Algorithm \ref{alg:estimate-gains-sketch-unrestricted}
for $\tau=n$. We compare this with state-of-the-art algorithms for
the unrestricted influence ($\tau=n$) in the independent cascade
model, TIM+ \citep{tang14} and SUBSIM \citep{guo22}. Our results
are in Figure \ref{fig:ic-facebook-twitter} for the Facebook and
Twitter networks.

\textbf{Discussion:} We observe that the objective value of our algorithm
is competitive with state-of-the-art algorithms. However, our algorithm
is not able to compete against specialized algorithms for the IC model
in terms of running time. As shown in the comparison of Section \ref{sec:ic-model-alg},
our algorithm can outperform the prior work in its asymptotic running
time in certain parameter regimes. As such, we view our algorithm
as a theoretical contribution and leave practical improvements as
an open direction for future work.

\subsection{\label{sec:appendix-ablation} Ablation study }

\begin{figure*}
\centering{}\includegraphics[width=0.4\linewidth]{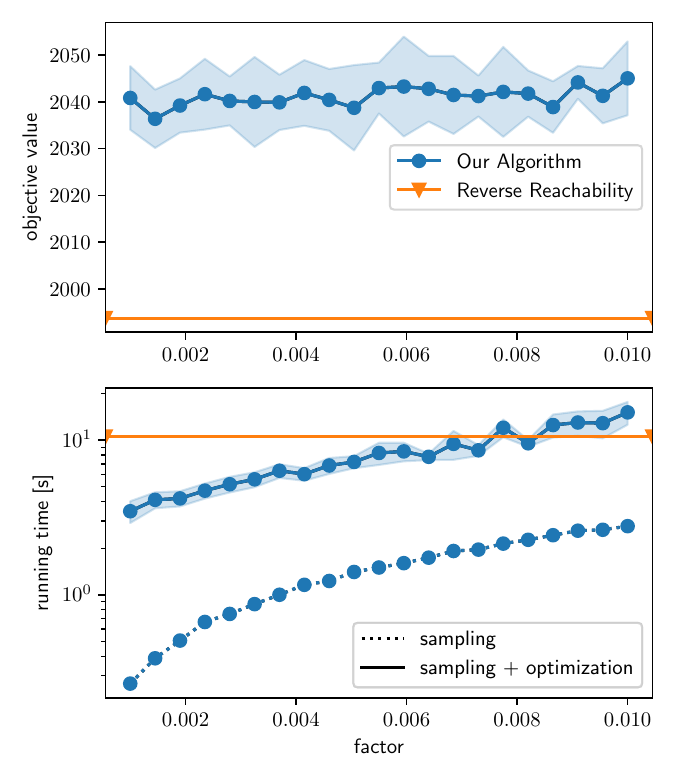}\includegraphics[width=0.4\linewidth]{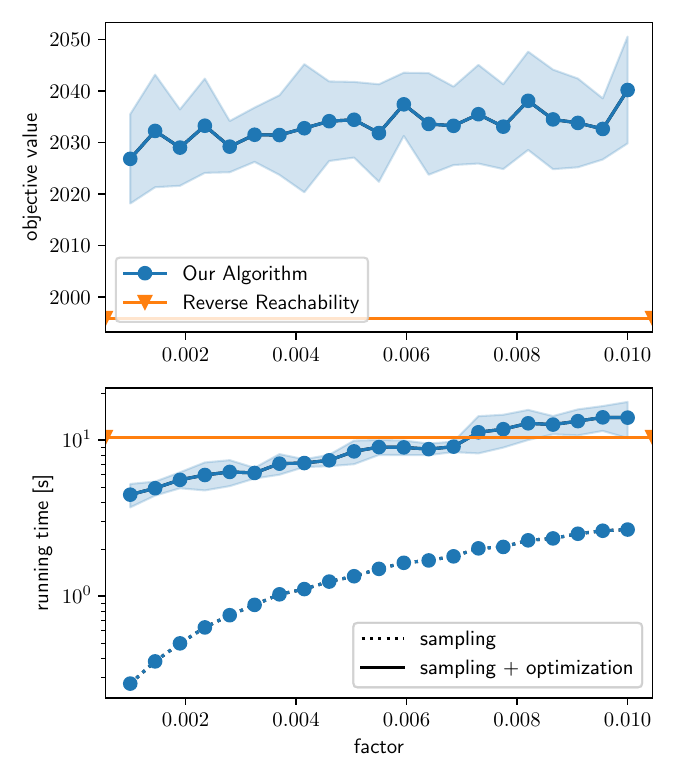}\caption{\label{fig:sample-size} Influence maximization in the independent
cascade model on the Facebook network for a budget of $k=400$ seed
nodes and $\tau=5$ (left) and $\tau=10$ (right). We vary the factor
that reduces the number of samples.}
\end{figure*}
\begin{figure*}
\centering{}\includegraphics[width=0.27\linewidth]{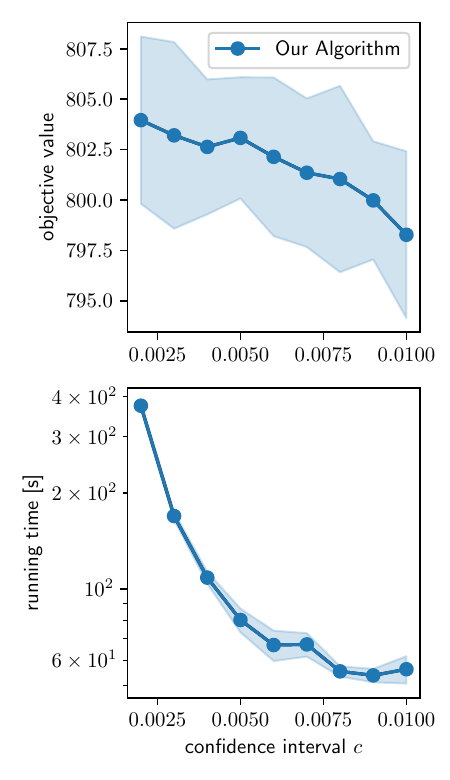}\includegraphics[width=0.27\linewidth]{plots/plot_algo_budgets-facebook_combined.txt-tau5-01-27-2025--13-46-54-044820}\includegraphics[width=0.27\linewidth]{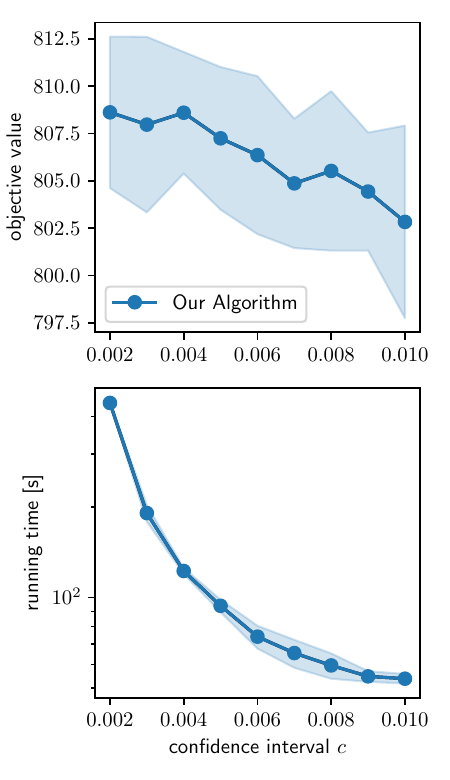}\caption{\label{fig:lt-facebook-large-3} Influence maximization for a general
threshold model on the Facebook network with a budget of $k=200$
and $\tau=n$ (left), $\tau=5$ (center), and $\tau=10$ (right).
We vary the additive estimation error $\protect\cest$.}
\end{figure*}

The number of adaptive samples depends on the the parameter $\cest>0$
of Algorithm \ref{alg:estimate-gains-empirical-variance} which is
the allowed additive estimation error for the estimated marginal gain.
In Algorithm \ref{alg:general-model-empirical-variance} we use a
theoretical choice of $\cest=\frac{\epsilon}{12k}\optguess\le\frac{\epsilon}{24k}\opt$.
Figure \ref{fig:lt-facebook-large-3} shows how different choices
of $\cest$ influence the performance of Algorithm \ref{alg:estimate-gains-empirical-variance}.
We observe that sufficiently small values of $\cest$ only slightly
decrease the objective value but allow us to drastically reduce the
running time.

In Figure \ref{fig:sample-size} we further investigate the effect
of reducing the number of samples that are required in Algorithm \ref{alg:independent-cascades}.
In particular, we reduce the number of samples by a constant factor,
and we range this factor from $1/1000$ to $1/100$. We see that even
using $1/1000$-th of the theoretically required samples does not
result in a substantial drop in objective value, but can greatly reduce
the running time of our algorithm.